\theoremstyle{plain}
\newtheorem{thm}{Theorem}
\newtheorem{lem}{Lemma}
\newcommand\numberthis{\addtocounter{equation}{1}\tag{\theequation}}
\renewcommand{\v}[1]{\ensuremath{\mathbf{#1}}} 
\newcommand{\gv}[1]{\pmb{#1}} 
\newcommand{\mc}[1]{\mathcal{#1}} 
\newcommand{\mb}[1]{\mathbb{#1}} 
\newcommand{\ms}[1]{\mathscr{#1}} 
\newcommand{\mr}[1]{\mathrm{#1}} 
\newcommand{\mf}[1]{\mathfrak{#1}}
\newcommand{\fC}{\ms{C}}
\newcommand{\cL}{\mc{L}}
\newcommand{\fL}{\ms{L}}
\newcommand{\fD}{\ms{D}}
\newcommand{\sR}{\ms{R}}
\newcommand{\fS}{\ms{S}}
\newcommand{\fF}{\ms{F}}
\newcommand{\R}{\mb{R}}
\newcommand{\C}{\mb{C}}
\newcommand{\N}{\mb{N}}
\newcommand{\F}{\mb{F}}
\newcommand{\sZ}{\ms{Z}}
\newcommand{\J}{\gv{\mc{J}}}
\newcommand{\T}{\mc{T}}
\newcommand{\I}{\mc{I}}
\newcommand{\Z}{\mc{Z}}
\newcommand{\E}{\mc{E}}
\newcommand{\X}{\mc{X}}
\newcommand{\V}{\mc{V}}
\newcommand{\U}{\mc{U}}
\newcommand{\B}{\mc{B}}
\newcommand{\D}{\mc{D}}
\newcommand{\cF}{\mc{F}}
\newcommand{\Ham}{\mc{H}}
\newcommand{\s}{\mf{s}}
\newcommand{\dd}{\mr{d}}
\newcommand{\gvsig}{\gv{\sigma}}
\newcommand{\gvalp}{\gv{\alpha}}
\newcommand{\ul}[1]{\underline{#1}}
\newcommand{\ol}[1]{\overline{#1}}
\newcommand{\BWN}[1]{\bigwedge\nolimits^{\!\!N}{\!#1}}
\DeclareMathOperator*{\esssup}{ess\,sup}
\DeclareMathOperator{\curl}{curl}
\DeclareMathOperator{\diver}{div}
\DeclareMathOperator{\re}{Re}
\DeclareMathOperator{\im}{Im}
\DeclareMathOperator{\Hproj}{\mc{P}}
\DeclareMathOperator{\dist}{dist}
\DeclareMathOperator*{\slim}{s-lim}
\DeclareMathOperator*{\supp}{supp}
\begin{document}

\newcommand{\thesisTitle}{The Maxwell-Pauli Equations}
\newcommand{\yourName}{Thomas Forrest Kieffer}
\newcommand{\yourSchool}{Mathematics}
\newcommand{\yourMonth}{May}
\newcommand{\yourYear}{2020}

\begin{titlepage}
\begin{center}

\begin{singlespacing}

\textbf{\MakeUppercase{\thesisTitle}}\\
\vspace{10\baselineskip}
A Dissertation\\
Presented to\\
The Academic Faculty\\
\vspace{3\baselineskip}
By\\
\vspace{3\baselineskip}
\yourName\\
\vspace{3\baselineskip}
In Partial Fulfillment\\
of the Requirements for the Degree\\
Doctor of Philosophy in the\\
School of \yourSchool\\
\vspace{3\baselineskip}
Georgia Institute of Technology\\
\vspace{\baselineskip}
\yourMonth{} \yourYear{}
\vfill
Copyright \copyright{} \yourName{} \yourYear{}

\end{singlespacing}

\end{center}
\end{titlepage}

\currentpdfbookmark{Title Page}{titlePage}  %

\newcommand{\committeeMemberOne}{Professor Michael Loss, Advisor}
\newcommand{\committeeMemberOneDepartment}{School of Mathematics}
\newcommand{\committeeMemberOneAffiliation}{Georgia Institute of Technology}

\newcommand{\committeeMemberTwo}{Professor Evans Harrell}
\newcommand{\committeeMemberTwoDepartment}{School of Mathematics}
\newcommand{\committeeMemberTwoAffiliation}{Georgia Institute of Technology}

\newcommand{\committeeMemberThree}{Professor Federico Bonetto}
\newcommand{\committeeMemberThreeDepartment}{School of Mathematics}
\newcommand{\committeeMemberThreeAffiliation}{Georgia Institute of Technology}

\newcommand{\committeeMemberFour}{Professor Brian Kennedy}
\newcommand{\committeeMemberFourDepartment}{School of Physics}
\newcommand{\committeeMemberFourAffiliation}{Georgia Institute of Technology}

\newcommand{\committeeMemberFive}{Professor Chongchun Zeng}
\newcommand{\committeeMemberFiveDepartment}{School of Mathematics}
\newcommand{\committeeMemberFiveAffiliation}{Georgia Institute of Technology}

\newcommand{\approvalDay}{10}
\newcommand{\approvalMonth}{March}
\newcommand{\approvalYear}{2020}

\begin{titlepage}
\begin{singlespacing}
\begin{center}

\textbf{\MakeUppercase{\thesisTitle}}\\
\vspace{10\baselineskip}

\end{center}
\vfill

\ifdefined\committeeMemberFour

Approved by:
\vspace{2\baselineskip}		%

\begin{minipage}[b]{0.4\textwidth}
	
	\committeeMemberOne\\
	\committeeMemberOneDepartment\\
	\textit{\committeeMemberOneAffiliation}\\
	
	\committeeMemberTwo\\
	\committeeMemberTwoDepartment\\
	\textit{\committeeMemberTwoAffiliation}\\
	
	\committeeMemberThree\\
	\committeeMemberThreeDepartment\\
	\textit{\committeeMemberThreeAffiliation}\\
	
	\vspace{2\baselineskip}		%
	
\end{minipage}
\hspace{0.1\textwidth}
\begin{minipage}[b]{0.4\textwidth}
	
	\committeeMemberFour\\
	\committeeMemberFourDepartment\\
	\textit{\committeeMemberFourAffiliation}\\
	
	\ifdefined\committeeMemberSix
	\committeeMemberFive\\
	\committeeMemberFiveDepartment\\
	\textit{\committeeMemberFiveAffiliation}\\
	
	\committeeMemberSix\\
	\committeeMemberSixDepartment\\
	\textit{\committeeMemberSixAffiliation}\\
	
	Date Approved: \approvalMonth{} \approvalDay, \approvalYear
	\vspace{1\baselineskip}		%
	
	\else
	
	\committeeMemberFive\\
	\committeeMemberFiveDepartment\\
	\textit{\committeeMemberFiveAffiliation}\\
		
	Date Approved: \approvalMonth{} \approvalDay, \approvalYear
	\vspace{5\baselineskip}		%
	
	\fi
	
\end{minipage}

\else

\hspace{0.6\textwidth}
\begin{minipage}[b]{0.4\textwidth}
	
	Approved by:
	\vspace{2\baselineskip}		%
	
	\committeeMemberOne\\
	\committeeMemberOneDepartment\\
	\textit{\committeeMemberOneAffiliation}\\
	
	\committeeMemberTwo\\
	\committeeMemberTwoDepartment\\
	\textit{\committeeMemberTwoAffiliation}\\
	
	\committeeMemberThree\\
	\committeeMemberThreeDepartment\\
	\textit{\committeeMemberThreeAffiliation}\\
	
	\vspace{2\baselineskip}		%
	
	Date Approved: \approvalMonth{} \approvalDay, \approvalYear
	\vspace{\baselineskip}		%
	
\end{minipage}

\fi

\end{singlespacing}
\end{titlepage}

\newcommand{\yourQuote}{"We take a piece of metal. Or a stone. When we think about it, we are astonished that this quantity
of matter should occupy so large a volume. Admittedly, the molecules are packed tightly together, and likewise the atoms within each molecule. But why are the atoms themselves so big.... Answer: only the Pauli principle, 'No two electrons in the same state.' That is why atoms are so unnecessarily big, and why metal and stone are so bulky."}
\newcommand{\yourAuthor}{- P.~Ehrenfest addressing W.~E.~Pauli in 1931 on the occasion of the Lorentz medal.}

\begin{titlepage}

\vspace*{\fill}
\textit{\yourQuote}\\

\yourAuthor
\vspace*{\fill}

\end{titlepage}

\newcommand{\yourDedication}{For Catherine}

\begin{titlepage}
\begin{center}

\vspace*{\fill}
\yourDedication\\
\vspace*{\fill}

\end{center}
\end{titlepage}

\pagenumbering{roman}
\addcontentsline{toc}{chapter}{Acknowledgments}
\setcounter{page}{5} %
\clearpage
\begin{centering}
\textbf{ACKNOWLEDGEMENTS}\\
\vspace{\baselineskip}
\end{centering}

I would like to express my deepest appreciation and gratitude to Michael Loss, without whom this work would not have been possible. I could not have wished for a better mentor and PhD advisor. Our discussions about physics, mathematics, and philosophy have been a source of great happiness during this period of my life. 

A special thank you goes to my parents, Lauren and Tom Kieffer, my sister, Courtney McKean, and my wife, Catherine Chen Kieffer. They have been a source of enduring support, financially and emotionally, during my time at Georgia Tech. I would not have succeeded in completing this PhD without them.  

I would like to thank Brian Kennedy, Federico Bonetto, Evans Harrell, and Chongchun Zeng for serving on my thesis committee. I am also grateful to the many Professors at Georgia Tech who have inspired and encouraged me. Lastly, I am grateful to the National Science Foundation for support.

\clearpage
\renewcommand{\cftdot}{}
\renewcommand{\cftchapfont}{\bfseries}  %
\renewcommand{\cftchappagefont}{}  %
\renewcommand{\cftchappresnum}{Chapter }
\renewcommand{\cftchapaftersnum}{.}
\renewcommand{\cftchapnumwidth}{6em}
\renewcommand{\cftchapafterpnum}{\vskip\baselineskip} %
\renewcommand{\cftsecafterpnum}{\vskip\baselineskip}  %
\renewcommand{\cftsubsecafterpnum}{\vskip\baselineskip} %
\renewcommand{\cftsubsubsecafterpnum}{\vskip\baselineskip} %

\titleformat{\chapter}[display]
{\normalfont\bfseries\filcenter}{\chaptertitlename\ \thechapter}{0pt}{\MakeUppercase{#1}}

\renewcommand\contentsname{Table of Contents}

\begin{singlespace}
\tableofcontents
\end{singlespace}

\currentpdfbookmark{Table of Contents}{TOC}

\clearpage

\addcontentsline{toc}{chapter}{Notation}
\clearpage
\begin{centering}
\textbf{NOTATION}\\
\vspace{\baselineskip}
\end{centering}

This thesis will often deal with the mixed time and space Sobolev spaces $L^q (\I ; W^{m,p} ( \R^d ; \F^n) )$ for some $q , p \in [ 1, \infty ]$, $m \in \R$, and $d,n \in \N$, with $\I \subset \R$ a (possibly infinite) time interval and $\F$ is either $\R$ or $\C$. For this reason we introduce the following special notations when the underlying sets $\I$, $\R^d$, $\F^n$ are easily understood from the given context.
\begin{itemize}
\item $\| \cdot \|_p \equiv \| \cdot \|_{L^p}$
\item $\| \cdot \|_{s , p} \equiv \| \cdot \|_{W^{s,p}}$
\item $\| \cdot \|_{q; s , p} \equiv \| \cdot \|_{L^q W^{s,p}}$
\item $\| \cdot \|_{q_1 ; s_1 , p_1 \oplus q_2 ; s_2 , p_2} \equiv \| \cdot \|_{L^{q_1} W^{s_1 , p_1} \oplus L^{q_2} W^{s_2 , p_2}}$
\end{itemize}
This notation comes with the understanding that $\| \cdot \|_{p} \equiv \| \cdot \|_{0 , p} \equiv \| \cdot \|_{0 ; 0 , p}$ and $\| \cdot \|_{q ; p} \equiv \| \cdot \|_{q ; 0 , p}$. For more detail regarding these and other notations see \S\ref{sec:notation}.

\addcontentsline{toc}{chapter}{Units}
\clearpage
\begin{centering}
\textbf{UNITS}\\
\vspace{\baselineskip}
\end{centering}

Let $e_0$, $m_e$, $\hbar$, and $c$ be the electron charge, electron rest mass, the reduced Plank's constant, and the speed of light, respectively. Let $\alpha = e_0^2 / (\hbar c)$ is Sommerfeld's dimensionless fine structure constant. In our universe $\alpha \simeq 1/137$, but in this thesis we will think of $\alpha$ as a parameter that can take any positive real value. In this thesis we will use the following set of "atomic units" which are very well adapted to the problem we study.
\begin{itemize}
\item The \textit{length unit} is one half the Bohr radius $\ell = \hbar^2 / (2 m_e e_0^2)$.
\item The \textit{energy unit} is $4$ Rydbergs $= 2 m_e e_0^4 / \hbar^2 = 2 m_e \alpha^2 c^2$.
\item The \textit{time unit} is $\tau = \hbar / (4 ~ \mr{Rydbergs}) = \hbar^3 / (2 m_e e_0^4)$.
\end{itemize}
The magnetic $\v{B}$ and electric $\v{E}$ fields are both in units of $e_0 / (\alpha \ell^2)$. It is useful to note that $1/(c \tau) = \alpha / \ell$. 

\clearpage

\clearpage
\begin{centering}
\textbf{SUMMARY}\\
\vspace{\baselineskip}
\end{centering}

Energetic stability of matter in quantum mechanics, which refers to the question of whether the ground state energy of a many-body quantum mechanical system is finite, has long been a deep question of mathematical physics. For a system of many non-relativistic electrons interacting with many nuclei in the absence of electromagnetic fields this question traces back to the seminal work of Freeman Dyson and Andrew Lenard in 1967/68 \cite{DL67, DL68}. In particular, Dyson and Lenard showed the ground state energy of the many-body Schr\"{o}dinger Hamiltonian is bounded below by a constant times the total particle number, regardless of the size of the nuclear charges. This situation changes dramatically when electromagnetic fields and spin interactions are present in the system. Even for a single electron interacting with a single nucleus of charge $Z > 0$ in an external magnetic field with spin-magnetic field interactions included, J\"{u}rg Fr\"{o}hlich, Elliot Lieb, and Michael Loss in 1986 showed that there is no ground state energy if $Z$ exceeds a critical charge $Z_c$ and the ground state energy exists if $Z < Z_c$. In other words, if the nuclear charge is too large, then the one-electron atom is energetically unstable.

Another notion of stability in quantum mechanics is that of dynamic stability, which refers to the question of global well-posedness for a system of partial differential equations modeling the dynamics of many electrons coupled to their self-generated electromagnetic field and interacting with many nuclei. The central motivating question of this PhD thesis is whether energetic stability has any influence over dynamic stability. Concerning this question, we study the quantum mechanical many-body problem of $N \geq 1$ non-relativistic electrons with spin interacting with their self-generated classical electromagnetic field and $K \geq 0$ static nuclei. We model the dynamics of the electrons and their self-generated electromagnetic field using the so-called many-body Maxwell-Pauli equations. The main result is the construction time global, finite-energy, weak solutions to the many-body Maxwell-Pauli equations under the assumption that the fine structure constant $\alpha$ and the nuclear charges are sufficiently small to ensure energetic stability of this system. This result represents an initial step towards understanding the relationship between energetic stability and dynamic stability.

\clearpage
\pagenumbering{arabic}
\setcounter{page}{1} %

\titleformat{\chapter}[display]
{\normalfont\filcenter}{\MakeUppercase\chaptertitlename\ \thechapter}{15pt}{\bfseries{\MakeUppercase{#1}}}  %
\titlespacing*{\chapter}
  {0pt}{0pt}{20pt}	%

\titleformat{\section}{\normalfont\bfseries\filcenter}{\thesection}{1em}{#1}

\titleformat{\subsection}{\normalfont}{\uline{\thesubsection}}{0em}{\uline{\hspace{1em}#1}}

\titleformat{\subsubsection}{\normalfont\itshape}{\thesubsection}{1em}{#1}

\chapter{Introduction}\label{chap:intro}

In the early 20th century, explaining how matter - a soup of point-like, negatively charged electrons interacting with point-like, positively charged nuclei - can exist and give structure to the world around us was one of the many outstanding problems the theory of Quantum Mechanics proposed to resolve. With only the laws of Newtonian physics and classical electromagnetism, even a simple Hydrogen atom cannot exist in a stable configuration. Indeed, if we suppose the electron is moving around the nucleus in a classical orbit, then this electron would constantly give off electromagnetic radiation and lose energy, thereby decreasing its orbital radius until it is sitting on top of the nucleus. From everyday experience we know that this cannot be the case. 
   
In principle the theory Quantum Mechanics resolves this issue. The conclusion obtained by quantum mechanical models of atoms and molecules is that electrons bound to nuclei can only reside in certain discrete energy levels, with the lowest possible energy level configuration referred to as the ground state energy. If this ground state energy exists, then we know the scenario from classical physics cannot hold true. This question of whether there exists a ground state energy for a quantum mechanical model of many electrons and nuclei has become know as the \textit{energetic stability of matter}. 

Energetic stability of matter in quantum mechanics has long been a deep question of mathematical physics. For a system of many non-relativistic electrons interacting with many nuclei in the absence of electromagnetic fields this question goes back to the seminal work of Tosio Kato in 1951 \cite{Kato51} and Freeman Dyson and Andrew Lenard in 1967-1968 \cite{DL67, DL68}. In particular, Kato showed that the ground state energy of the many-body Schr\"{o}dinger Hamiltonian is bounded below, which necessarily implies the ground state energy exists. Furthermore, Dyson and Lenard showed the sharper result that this ground state energy is bounded below by a constant times the total particle number to the first power, regardless of the size of the nuclear charges. However, this picture changes dramatically when electromagnetic fields and spin interactions are included in the problem. Indeed, already in the case of a single electron interacting with a single nucleus of charge $Z > 0$ in an external magnetic field with the spin-magnetic field coupling included there is no ground state energy if $Z$ exceeds a critical charge $Z_c$ and the ground state energy exists if $Z$ is below $Z_c$ \cite{FLL86}. In the case when there is no ground state energy we say the model is \textit{energetically unstable}.

There is another distinct, but seemingly related, notion of stability in quantum mechanics, namely that of dynamic stability. If one considers a system of partial differential equations (PDEs) which models the dynamics of many electrons coupled to their self-generated electromagnetic field and interacting with many nuclei, then a natural question one may ask is: Are these PDEs globally well-posed\footnote{Local well-posedness of a system of PDEs in a certain class of initial data will mean that unique solutions are guaranteed to exist in this class for a finite amount of time and depend continuously on the initial data. If such time local solutions in fact exist for all time, then the term globally well-posed is used.}? We say the model is \textit{dynamically stable} if we have an affirmative answer to this question. If some form of blow up in finite time occurs, then we say the model is \textit{dynamically unstable}. A natural and interesting question to consider is whether dynamic stability depends on energetic stability, and vice versa. That is, does the global well-posedness of such a system of PDEs depend on the existence of the absolute ground state energy? For example, in the case of the one-electron atom in a magnetic field mentioned above, how does the well-posedness of the corresponding dynamical equations (the electromagnetic field now being the self-generated field of the electron) depend on the size of the nuclear charge $Z$? Do we have global well-posedness when $Z < Z_c$? What happens when $Z$ exceeds $Z_c$? These and related questions are the motivation for our main topic of study this thesis, namely the many-body Maxwell-Pauli equations.

\section{The Maxwell-Pauli Equations}\label{sec:MP}

Consider, in three-dimensions, a single non-relativistic electron interacting with a single static\footnote{We consider the nucleus to be static for two reasons. First, the mass of single proton $m_p$ is three orders of magnitude larger than the electron mass $m_e$: $m_p / m_e \simeq 1.83 \times 10^3$. Second, the nuclear radius, $\sim 10^{-14}-10^{-15} ~ \mr{m}$, is much smaller than the Bohr radius, $\sim 10^{-11} ~ \mr{m}$. If the nucleus played an crucial role dynamically, atoms and molecules would look much different with electron orbital radii being much smaller than the Bohr radius.} nucleus of charge $Z > 0$ sitting at the origin in the absence of any electromagnetic fields. The electron has charge $- 1$ in our units and interacts with the positively charged nucleus via the usual Coulomb potential $- Z / |\v{x}|$. Quantum Mechanics dictates that the state of this electron at time $t \in \R$ is a normalized wavefunction $\psi (t) \in L^2 (\R^3 ; \C^2)$ and that the time evolution of this electron is governed by the Schr\"{o}dinger equation:
\begin{align*}
i \partial_t \psi = \left( \v{p}^2 - \frac{Z}{|\v{x}|} \right) \psi .
\end{align*} 
Above $\v{p} = - i \nabla$ is the quantum mechanical canonical momentum and $\v{p}^2 = - \Delta = - \sum_{j = 1}^3 \partial_j^2$ is the quantum mechanical kinetic energy operator. That we assume the wavefunction is normalized, $\| \psi (t) \|_2 = 1$, is to remain consistent with the interpretation of $|\psi (t)|^2$ as a probability density\footnote{Note $\| \psi \|_{2}$ is easily seen to be preserved under the Schr\"{o}dinger time evolution.}.

In the presence of an electromagnetic field $(\v{E} , \v{B})$ generated by electromagnetic potentials $(\varphi , \v{A})$, i.e. $(\v{E} , \v{B}) = ( - \nabla \varphi - \alpha \partial_t \v{A} , \curl{\v{A}} )$, the principal of minimal coupling dictates that the canonical momentum $\v{p}$ of the electron should be replaced by $\v{p} + \v{A}$. This leads us to the magnetic Schr\"{o}dinger equation:
\begin{align*}
i \partial_t \psi = \left( (\v{p} + \v{A})^2 - \frac{1}{\alpha} \varphi - \frac{Z}{|\v{x}|} \right) \psi .
\end{align*}
However, the magnetic Schr\"{o}dinger equation fails to take the coupling of the electron spin and the magnetic field into account. In order to do this, we replace the momentum $\v{p} + \v{A}$ by the \textbf{Pauli operator}: $\gvsig \cdot (\v{p} + \v{A})$, where $\gvsig = (\sigma^1 , \sigma^2 , \sigma^3) \in \R^3 \otimes M_{2 \times 2} (\C)$ is the vector of \textbf{Pauli matrices}\footnote{Both the Pauli operator and the Pauli matrices are named after the Austrian physicist Wolfgang Pauli (1900-1958).}. The Pauli matrices $\sigma^1$, $\sigma^2$, and $\sigma^3$ are $2 \times 2$ Hermitian matrices assumed to satisfy the commutation relations $[\sigma^j , \sigma^k] = 2 i \epsilon_{jk\ell} \sigma^{\ell}$ and anticommutation relations $\left\lbrace \sigma^j , \sigma^k \right\rbrace = 2 \delta_{jk} I$, for $j , k , \ell \in \left\lbrace 1 , 2 ,3 \right\rbrace$. The most common choice for the three Pauli matrices are
\begin{align*}
\sigma^1 = \left( \begin{array}{cc}
0 & 1 \\
1 & 0 
\end{array} \right) , ~~~~~~ \sigma^2 = \left( \begin{array}{cc}
0 & -i \\
i & 0 
\end{array} \right) , ~~~~~~ \sigma^3 = \left( \begin{array}{cc}
1 & 0 \\
0 & -1 
\end{array} \right) .
\end{align*}
However, we will never need to work with an explicit representation of these matrices in this thesis. All together, these replacements lead us to the \textbf{Pauli equation}\footnote{In the Pauli equation we are ignoring the spin-spin interactions of the electron and the nucleus. Here, we would have to add the term $\gvsig^e \cdot \gvsig^n |\v{x}|^{-3} - 3 (\gvsig^e \cdot \v{x})(\gvsig^n \cdot \v{x}) |\v{x}|^{-5}$ where $\gvsig^e$ and $\gvsig^n$ are the Pauli matrices for the electronic and nuclear spins, respectively. The lack of integrability of $|\v{x}|^{-3}$ at the origin always causes energetic instability (see Chapter \ref{chap:stability}), and for this reason we ignore such interactions. See \cite{FLL86, Lieb76} for more detail.}:
\begin{align}\label{eq:Pauli_equation}
i \partial_t \psi = \left( [\gvsig \cdot (\v{p} + \v{A})]^2 - \frac{1}{\alpha} \varphi - \frac{Z}{|\v{x}|} \right) \psi .
\end{align}

Any charge density $\rho (t) : \R^3 \rightarrow \R$ and current density $\v{J} (t) : \R^3 \rightarrow \R$, such as the ones generated by our dynamic electron considered above, will necessarily generate an electromagnetic field which we will describe classically\footnote{This is opposed to the Quantum Electrodynamics point of view where one quantizes the electromagnetic field and describes $\v{E}$ and $\v{B}$ as operators on Fock space.} by the electric field $\v{E} (t) : \R^3 \rightarrow \R^3$ and the magnetic field $\v{B} (t) : \R^3 \rightarrow \R^3$. The dynamics of the electric and magnetic fields are governed by Maxwell's equations, which in our units read
\begin{align}\label{eq:Maxwell_Equations}
\left\lbrace \begin{array}{ll}
\diver{\v{E}} = 4 \pi \rho , & \diver{\v{B}} = 0 \\
\curl{\v{E}} = - \alpha \partial_t \v{B} ,  & \curl{\v{B}} = 4 \pi \alpha \v{J} + \alpha \partial_t \v{E} .
\end{array} \right.
\end{align}
An important consequence of Maxwell's equations is the charge continuity equation: 
\begin{align}\label{eq:charge_continuity}
\diver{\v{J}} + \partial_t \rho = 0 ,
\end{align}
which is obtained by taking the divergence of $\curl{\v{B}} = 4 \pi \alpha \v{J} + \alpha \partial_t \v{E}$ and using that $\diver{\v{E}} = 4 \pi \rho$. Moreover, the total energy in this electromagnetic field $F [\v{B} , \v{E}]$ is given by
\begin{align}\label{def:field_energy_general}
F [ \v{B} , \v{E} ] (t) = \frac{1}{8 \pi \alpha^2} \int_{\R^3} \left( |\v{B} (t)|^2 + | \v{E} (t)|^2 \right) ,
\end{align}  
and the force $\v{F} (t)$ exerted on another charge density $\rho^{\ast} (t)$ and current density $\v{J}^{\ast} (t)$ by this electromagnetic field is determined by the Lorentz force law:
\begin{align}\label{eq:Lorentz_force_law}
\v{F} (t) = \int_{\R^3} \left[ \rho^{\ast} (t) \v{E} (t) + \alpha \v{J}^{\ast} (t) \wedge \v{B} (t) \right] .
\end{align}
It is important to mention that equations (\ref{eq:Maxwell_Equations}), (\ref{eq:charge_continuity}), and (\ref{eq:Lorentz_force_law}) encapsulate all the phenomena of classical electromagnetism. Lastly, using the Helmholtz Theorem for vector fields, the electromagnetic field $(\v{E} , \v{B})$ satisfying Maxwell's equations (\ref{eq:Maxwell_Equations}) may be described in terms of an electrostatic potential $\varphi (t) : \R^3 \rightarrow \R$ and a magnetic vector potential $\v{A} (t) : \R^3 \rightarrow \R^3$ via
\begin{align*}
\v{E} = - \nabla \varphi - \alpha \partial_t \v{A} \hspace{1cm} \text{and} \hspace{1cm} \v{B} = \curl{\v{A}}. 
\end{align*}

As an initial guess for a model of the time evolution of our non-relativistic electron interacting with its self-generated electromagnetic field and a static nucleus of charge $Z > 0$, we consider coupling the Pauli equation (\ref{eq:Pauli_equation}) to the Maxwell's equations (\ref{eq:Maxwell_Equations}). This yields the system
\begin{align}\label{eq:MPC_general}
\left\lbrace \begin{array}{l}
i \partial_t \psi = \left( [\gvsig \cdot (\v{p} + \v{A})]^2 - \dfrac{Z}{|\v{x}|} - \dfrac{1}{\alpha} \varphi + F[\v{B} , \v{E}] \right) \psi  \\[1.2ex]
\diver{\v{E}} = 4 \pi \rho , \hspace{5mm} \diver{\v{B}} = 0  \\
\curl{\v{E}} = - \alpha \partial_t \v{B} , \hspace{5mm} \curl{\v{B}} = 4 \pi \alpha \v{J} + \alpha \partial_t \v{E} \\
\v{E} = - \nabla \varphi - \alpha \partial_t \v{A} , \hspace{5mm} \v{B} = \curl{\v{A}} .
\end{array} \right.
\end{align}
The consideration of the system (\ref{eq:MPC_general}) then begs the question of what precisely is the probability current density generated by our dynamic electron coupled to its self-generated electromagnetic field? One way to reason the precise form of the probability current density is to demand that the total energy is conserved. But what should be the total energy? A reasonable guess would be the quantity
\begin{align*}
\| \gvsig \cdot (\v{p} + \v{A}) \psi \|_2^2 - Z \langle \psi , |\v{x}|^{-1} \psi \rangle_{L^2} - \frac{1}{\alpha} \langle \psi , \varphi \psi \rangle_{L^2} + F[\v{B} , \v{E}] ,
\end{align*}
where $(\psi , \v{E} , \v{B})$ satisfy (\ref{eq:MPC_general}). This expression is simply the expectation value of the Hamiltonian appearing on the right hand side of the first equation in (\ref{eq:MPC_general}) in a normalized state $\psi$. Differentiating the previous guess with respect to time and using the identity $\partial_t u_{\mr{EM}} + \diver{\v{S}} = - \v{E} \cdot \v{J}$, where $u_{\mr{EM}} = (|\v{B}|^2 + |\v{E}|^2)/(8 \pi)$ is the field energy density and $\v{S} = (\v{E} \wedge \v{B})/(4 \pi \alpha)$ is the Poynting vector\footnote{See \cite[Chapter 27]{F11_VolII} for a very enlightening discussion regarding energy conservation in electromagnetism.}, yields
\begin{align*}
& \frac{\dd}{\dd t} \left( \| \gvsig \cdot (\v{p} + \v{A}) \psi \|_2^2 - Z \langle \psi , |\v{x}|^{-1} \psi \rangle_{L^2} - \frac{1}{\alpha} \langle \psi , \varphi \psi \rangle_{L^2} + F[\v{B} , \v{E}]  \right)  \\
& \hspace{1cm} = \int_{\R^3} \left( \partial_t \v{A} \cdot 2 \re{ \langle \gvsig \psi , \gvsig \cdot (\v{p} + \v{A}) \psi \rangle_{\C^2} } - \frac{1}{\alpha} |\psi|^2 \partial_t \varphi - \frac{1}{\alpha^2} \v{E} \cdot \v{J} \right) . 
\end{align*}
Writing $\v{E}$ in term of potentials and using the charge continuity equation (\ref{eq:charge_continuity}) allows us to reduce the previous expression to
\begin{align*}
& \frac{\dd}{\dd t} \left( \| \gvsig \cdot (\v{p} + \v{A}) \psi \|_2^2 - Z \langle \psi , |\v{x}|^{-1} \psi \rangle_{L^2} - \frac{1}{\alpha} \langle \psi , \varphi \psi \rangle_{L^2} + F[\v{B} , \v{E}]  \right)  \\
& \hspace{1cm} = \int_{\R^3} \left( \partial_t \v{A} \cdot \left( 2 \re{ \langle \gvsig \psi , \gvsig \cdot (\v{p} + \v{A}) \psi \rangle_{\C^2} } + \frac{1}{\alpha} \v{J} \right)  - \frac{1}{\alpha} |\psi|^2 \partial_t \varphi + \frac{1}{\alpha^2} \varphi \partial_t \rho \right) .
\end{align*}
From this expression we are encouraged to conclude that the \textbf{Pauli probability current density} for our electron satisfying (\ref{eq:MPC_general}) is 
\begin{align}\label{def:Pauli_prob_current}
\v{J}_{\mr{P}} [\psi , \v{A}]  = - 2 \alpha \re{ \langle \gvsig \psi , \gvsig \cdot (\v{p} + \v{A}) \psi \rangle_{\C^2} } .
\end{align}
From the charge continuity equation (\ref{eq:charge_continuity}) we then conclude the charge density is $\rho = - \alpha |\psi|^2$. Hence, for the choice $\v{J} = \v{J}_{\mr{P}}$, the total conserved energy for the system (\ref{eq:MPC_general}) is\footnote{Note (\ref{eq:total_energy_general}) does not contain the energy corresponding to the electrostatic potential $\varphi$ in the Pauli equation.}
\begin{align}\label{eq:total_energy_general}
E [\psi , \v{B} , \v{E}] = \| \gvsig \cdot (\v{p} + \v{A}) \psi \|_2^2 - Z \langle \psi , |\v{x}|^{-1} \psi \rangle_{L^2} + F[\v{B} , \v{E}] .
\end{align}

Another way of deriving the Pauli probability current density $\v{J}_{\mr{P}} [\psi , \v{A}]$ is to fix $\psi$ and consider minimizing the sum of the Pauli kinetic energy with the magnetic field energy,
\begin{align*}
\| \gvsig \cdot (\v{p} + \v{A}) \psi \|_2^2 + \frac{1}{8 \pi \alpha^2} \int_{\R^3} | \v{B} |^2 ,
\end{align*}
with respect to all $\v{A}$'s which generate a finite-energy magnetic field $\v{B} = \curl{\v{A}}$. This is a straight forward variational calculation that yields $\curl{\v{B}} = 4 \pi \alpha \v{J}_{\mr{P}} [\psi , \v{A}]$. Yet another way to arrive at (\ref{def:Pauli_prob_current}) is by considering the non-relativistic limit of the so-called Maxwell-Dirac equations. We reserve this calculation for Chapter \ref{chap:othermodels}, \S\ref{sec:MD}. Moreover, using the relation $\sigma^i \sigma^j = \delta_{ij} I + i \epsilon_{ijk} \sigma^k$, we can rewrite $\v{J}_{\mr{P}} [\psi , \v{A}]$ as
\begin{align*}
\v{J}_{\mr{P}} [\psi , \v{A}]  = - 2 \alpha \re{ \langle \psi , (\v{p} + \v{A}) \psi \rangle_{\C^2} } - \alpha \curl{ \langle \psi , \gvsig \psi \rangle_{\C^2} } .
\end{align*}
One may recognize that the first term on the right hand side of the previous expression is the usual probability current associated with the magnetic Schr\"{o}dinger equation. The new addition to the probability current in the Pauli case is the \textbf{spin current}: $\curl{\langle \psi , \gvsig \psi \rangle_{\C^2}}$. From the point-of-view of mathematical analysis, the spin current produces unexpected difficulties when studying the well-posedness question for (\ref{eq:MPC_general}). We will elaborate on this observation in great detail in Chapter \ref{chap:othermodels}, \S\ref{sec:MS}.

As it stands, the model (\ref{eq:MPC_general}) does not appear to be physically reasonable due to the presence of the electrostatic potential $\varphi$ in the Pauli equation. Indeed, consider deriving the charge density $\rho$ as we did the Pauli probability current density by fixing $\psi$ and minimizing the sum of the electrostatic energy with the electric field energy, 
\begin{align*}
- \frac{1}{\alpha} \langle \psi , \varphi \psi \rangle_{L^2} + \frac{1}{8 \pi \alpha^2} \int_{\R^3} | \nabla \varphi |^2 , 
\end{align*}
with respect to all electrostatic potentials $\varphi$ which generate a finite-energy electric field. The result gives $- \Delta \varphi = 4 \pi \alpha |\psi |^2$. However, the charge density $\alpha | \psi |^2$ in this equation has the \textit{wrong sign}; it should be $- \alpha | \psi |^2$ for our electron of charge $- 1$. This suggests that it is physically unreasonable to consider the system (\ref{eq:MPC_general}). Another argument against the model (\ref{eq:MPC_general}) is that, from a physical point of view, it is wrong to include self-interactions in this context. Indeed, the Coulomb self-interaction $\varphi$ could be perceived as a mean field originating from the Coulomb interactions between the particles present in a full many-body system. However, when the system consists of only of a single electron there simply are no other particles to interact with. 

To arrive at a more physically reasonable model, we first rewrite Maxwell's equations (\ref{eq:Maxwell_Equations}) in terms of the electromagnetic potentials $(\varphi , \v{A})$. This yields
\begin{align}\label{eq:Maxwell_Equations_potentials}
\left\lbrace \begin{array}{l}
- \Delta \varphi - \alpha \partial_t \diver{\v{A}} = - 4 \pi \alpha |\psi|^2  \\
\square \v{A} + \nabla ( \diver{\v{A}} + \alpha \partial_t \varphi ) = 4 \pi \alpha \v{J}_{\mr{P}} [\psi , \v{A}] ,
\end{array} \right.
\end{align}
where $\square = \alpha^2 \partial_t^2 - \Delta$ is the d'Alembert wave operator. It is clear that the potential functions $(\varphi , \v{A})$ are not unique and they only determine a given electromagnetic field $(\v{E} , \v{B})$ up to a gauge transformation given by 
\begin{align*}
\varphi \longmapsto \varphi - \alpha \partial_t \zeta \hspace{1cm} \text{and} \hspace{1cm} \v{A} \longmapsto \v{A} + \nabla \zeta ,
\end{align*}
where $\zeta : \R^3 \rightarrow \R$ is some gauge function. Typically to solve (\ref{eq:Maxwell_Equations_potentials}) one fixes a gauge for $(\varphi , \v{A})$ where the choice of $\zeta$ is usually implicit by imposing a constraint on the potential functions. Some very common choices for gauge-fixing include the Lorenz gauge\footnote{The Lorenz gauge condition is often mistakenly attributed to the Dutch physicist Hendrik Lorentz (1853-1928) which the Lorentz force is named after. It is in fact named after the Danish physicist Ludvig Lorenz (1829-1891).}: $\diver{\v{A}} + \alpha \partial_t \varphi = 0$, and the \textbf{Coulomb gauge}: $\diver{\v{A}} = 0$. In particular, (\ref{eq:Maxwell_Equations_potentials}) in the Coulomb gauge reads
\begin{align}\label{eq:Maxwell_Equations_potentials_coulombgauge}
\left\lbrace \begin{array}{l}
- \Delta \varphi = - 4 \pi \alpha |\psi|^2 \\
\square \v{A} = 4 \pi \alpha \v{J}_{\mr{P}}[\psi , \v{A}] - \alpha \partial_t \nabla \varphi .
\end{array} \right.
\end{align}
Assuming $\varphi$ has appropriate decay properties at $+ \infty$, the first equation in (\ref{eq:Maxwell_Equations_potentials_coulombgauge}) can be easily solved using the Newtonian potential:
\begin{align*}
\varphi (\v{x}) = - 4 \pi \alpha ( (- \Delta)^{-1} |\psi|^2 ) (\v{x}) = - \alpha \int_{\R^3} \frac{~|\psi (\v{y}) |^2~}{|\v{x} - \v{y}|} \dd \v{y} .  
\end{align*}
Using this together with the charge continuity equation (\ref{eq:charge_continuity}), the second equation in (\ref{eq:Maxwell_Equations_potentials_coulombgauge}) can be reduced further and one finds
\begin{align}\label{eq:waveeqn_A_coulombgauge}
\square \v{A} = 4 \pi \alpha \Hproj{ \v{J}_{\mr{P}} [\psi , \v{A}] } ,
\end{align}
where 
\begin{align}\label{def:projection_op}
\Hproj = \curl{ (- \Delta)^{-1} \curl{} } = 1 + \nabla (- \Delta )^{-1} \diver{}
\end{align}
is the \textbf{Leray-Helmholtz projection} onto divergence-free vector fields. 

With equation (\ref{eq:waveeqn_A_coulombgauge}) in mind, we propose to model the time evolution of our non-relativistic electron interacting with its self-generated radiation field and a static nucleus of charge $Z > 0$ via coupling the Pauli equation (\ref{eq:Pauli_equation}), \textit{without the electrostatic potential} $\varphi$, to the wave equation (\ref{eq:waveeqn_A_coulombgauge}) for $\v{A}$. Such a coupling results in a system of nonlinear PDEs that we call the \textbf{Maxwell-Pauli-Coulomb equations} (abbr.~\textbf{MPC equations}):
\begin{align}\label{eq:MPC}
\left\lbrace \begin{array}{l}
i \partial_t \psi = \left( [\gvsig \cdot (\v{p} + \v{A})]^2 - \dfrac{Z}{|\v{x}|} \right) \psi  \\[1.2ex]
\square \v{A} = 4 \pi \alpha \Hproj{ \v{J}_{\mr{P}} [\psi , \v{A}] }  \\
\diver{\v{A}} = 0 .
\end{array} \right. 
\end{align}
where $\v{J}_{\mr{P}} [ \psi , \v{A} ]$ is given by (\ref{def:Pauli_prob_current}). An important special case of (\ref{eq:MPC}) are the \textbf{Maxwell-Pauli equations} (abbr.~\textbf{MP equations}):
\begin{align}\label{eq:MP}
\left\lbrace \begin{array}{l}
i \partial_t \psi = [\gvsig \cdot (\v{p} + \v{A})]^2 \psi  \\
\square \v{A} = 4 \pi \alpha \Hproj{ \v{J}_{\mr{P}} [\psi , \v{A}] }  \\
\diver{\v{A}} = 0  .
\end{array} \right. 
\end{align}
The MP equations serve as a model for a single non-relativistic electron interacting with its self-generated radiation field without any nuclei present. Note we do not include the electromagnetic field energy in (\ref{eq:MPC}) or (\ref{eq:MP}) since it can be removed via the position-independent gauge transformation $\psi (t) \mapsto \exp{ \left\lbrace -i \int_0^t F [\v{A} , \partial_t \v{A}] (s) \dd s \right\rbrace} \psi (t)$. 

The system that we will be primarily concerned with in this thesis is actually a many-body generalization of (\ref{eq:MPC}). Consider $N \geq 1$ non-relativistic, electrons interacting with their self-generated electromagnetic field and $K \geq 0$ static nuclei with nuclear charges $\Z = (Z_1 , \cdots , Z_K) \in [0 , \infty)^K$. The nuclei are at distinct positions labeled by $\ul{\v{R}} = (\v{R}_1 , \cdots , \v{R}_K) \in \R^{3K}$, $\v{R}_i \neq \v{R}_j$ for $i \neq j$. The $N$ electrons and $K$ nuclei interact via Coulomb forces that are described by the two-body potential interaction 
\begin{align}\label{def:totelectrostatpot}
V ( \ul{\v{R}} , \Z ) (\ul{\v{x}}) = \sum_{1 \leq i < j \leq N} \frac{1}{|\v{x}_i - \v{x}_j|} - \sum_{i = 1}^N \sum_{j = 1}^K \frac{Z_j}{| \v{x}_i - \v{R}_j |} + \sum_{1 \leq i < j \leq K} \frac{Z_i Z_j}{|\v{R}_i - \v{R}_j|} ,
\end{align}
where on the right hand side of (\ref{def:totelectrostatpot}) appears the electron-electron, electron-nuclei, and nuclei-nuclei Coulomb interactions, respectively, and $\ul{\v{x}} = (\v{x}_1 , \cdots , \v{x}_N) \in \R^{3N}$, $\v{x}_i \neq \v{x}_j$ for $i \neq j$, is the collection the electron coordinates. In this situation we are lead to a natural generalization of (\ref{eq:MPC}), referred to as the \textbf{many-body Maxwell-Pauli equations} (abbr.~\textbf{MBMP equations}), which read
\begin{align}\label{eq:MBMP} 
\left\lbrace \begin{array}{l}
i \partial_t \psi = H_{\mr{P}} (\v{A}) \psi  \\
\square \v{A} = 4 \pi \alpha \Hproj{ \J_{\!\mr{P}} [ \psi , \v{A} ] }  \\
\diver{\v{A}} = 0 .
\end{array} \right. 
\end{align}
In the first equation in (\ref{eq:MBMP}) $H_{\mr{P}} (\v{A})$ is the \textbf{many-body Pauli Hamiltonian}:
\begin{align}\label{def:manybody_Pauli_Hamiltonian}
H_{\mr{P}} (\v{A}) = \sum_{j = 1}^N \T_j (\v{A}) + V(\underbar{\v{R}} , \Z) ,
\end{align}
which acts on $\bigotimes^N L^2 (\R^3 ; \C^2)$-functions. In (\ref{def:manybody_Pauli_Hamiltonian}) $\T_j (\v{A}) = [ \gvsig_j \cdot ( \v{p}_j + \v{A}_j ) ]^2$ is the Pauli operator corresponding to the $j^{\mr{th}}$ electron\footnote{The notation for $\T_j$ comes with the understanding that $\T_j \equiv I \otimes \cdots \otimes [\gvsig \cdot (\v{p}_j + \v{A}_j)]^2 \otimes \cdots \otimes I$ where the $[\gvsig \cdot (\v{p}_j + \v{A}_j)]^2$ is appearing in the $j^{\mr{th}}$ position of the tensor product.} and $\v{A}_j (\ul{\v{x}}) = \v{A} (\v{x}_j)$ and $\v{p}_j = - i \nabla_{\v{x}_j}$. In the second equation in (\ref{eq:MBMP}) $\J_{\!\mr{P}} [\psi , \v{A}]$ is the total probability current density of the $N$ electrons and is given by
\begin{align}\label{def:prob_current_compact}
\J_{\!\mr{P}} [\psi , \v{A}] (\v{x}) = - 2 \alpha \sum_{j = 1}^N \re{ \int \langle \gvsig \psi_{\ul{\v{z}}_j'} , \gvsig \cdot (\v{p} + \v{A}) \psi_{\ul{\v{z}}_j'} \rangle_{\C^2} (\v{x}) \dd \ul{\v{z}}_j' } ,
\end{align}
where, for $j \in \left\lbrace 1 , \cdots , N \right\rbrace$, $\v{z}_j = (\v{x}_j , s_j) \in \R^3 \times \left\lbrace \uparrow , \downarrow \right\rbrace$ is the $j^{\mr{th}}$ electron's position coordinate and spin state, and $\psi_{\ul{\v{z}}_j'} : \R^3 \rightarrow \C^2$ is the spinor defined by 
\begin{align*}
\psi_{\ul{\v{z}}_j'} (\v{x} , s) = \psi (\v{z}_1 ; \cdots ; \v{x} , s ; \cdots ; \v{z}_N) ,
\end{align*}
where $s \in \left\lbrace \uparrow , \downarrow \right\rbrace$, $\ul{\v{z}}_j' = (\v{z}_1 , \cdots , \v{z}_{j-1} , \v{z}_{j+1} , \cdots , \v{z}_N)$, and $\dd \v{z}_i \equiv \sum_{s_i \in \{ \uparrow , \downarrow \} } \dd \v{x}_i$. As electrons are Fermions, we will demand that they satisfy the \textbf{Pauli exclusion principle}, meaning the electronic wavefunction $\psi$ satisfies 
\begin{align*}
\psi (\v{z}_1 ; \cdots ; \v{z}_i ; \cdots ; \v{z}_j ; \cdots \v{z}_N) = - \psi (\v{z}_1 ; \cdots ; \v{z}_j ; \cdots ; \v{z}_i ; \cdots \v{z}_N) .
\end{align*}
Loosely speaking, the Pauli exclusion principle introduces a repulsive interaction between electrons in the sense that no two particles are allowed in the same state. Mathematically we state this restriction by writing $\psi (t) \in \BWN{L^2(\R^3;\C^2)}$ where $\BWN{L^2(\R^3;\C^2)}$ is the closed subspace of $\bigotimes^N L^2 (\R^3 ; \C^2)$ consisting of completely antisymmetry functions. 

We conclude this section by highlighting the two important conserved quantities associated with the system (\ref{eq:MBMP}): the $L^2$-norm of $\psi$ and the total energy. To see that $\| \psi \|_2$ is conserved, we simply note that, formally, $H_{\mr{P}} (\v{A})$ is a self-adjoint operator and therefore
\begin{align*}
\frac{\dd }{\dd t} \| \psi \|_{2}^2 = 2 \re{ \langle \psi , - i H_{\mr{P}} (\v{A}) \psi \rangle_{L^2} } = 0 .
\end{align*}
The total Pauli energy $E_{\mr{P}} [\psi , \v{A} , \partial_t \v{A}]$ is defined as
\begin{align}\label{eq:mbmp_energy}
E_{\mr{P}} [\psi , \v{A} , \partial_t \v{A}] = T_{\mr{P}} [\psi , \v{A}] + V [\psi] + F[\v{A} , \partial_t \v{A}] ,
\end{align}
where $T_{\mr{P}} [\psi , \v{A}]$ is the total kinetic energy, given by
\begin{align}\label{def:total_kinetic}
T_{\mr{P}} [\psi , \v{A}] = \sum_{j = 1}^N \| \gvsig_j \cdot (\v{p}_j + \v{A}_{j}) \psi \|_{2}^2 ,
\end{align}
$V[\psi]$ is the total Coulomb potential energy, given by 
\begin{align}\label{def:tot_potential_energy}
V[\psi] = \langle \psi , V ( \ul{\v{R}} , \Z ) \psi \rangle_{L^2} ,
\end{align}
and $F [\v{A} , \partial_t \v{A}]$ is the total electromagnetic field energy, given by 
\begin{align}\label{def:field_energy_potentials}
F [ \v{A} , \partial_t \v{A} ] = \frac{1}{8 \pi \alpha^2} \left( \| \v{B} \|_{2}^2 + \alpha^2 \| \partial_t \v{A} \|_{2}^2 \right) .
\end{align}
We note that $E_{\mr{P}}$ is simply the expectation value of the Hamiltonian $H_{\mr{P}} (\v{A})$ plus the electromagnetic field energy. Checking that $E_{\mr{P}}$ is conserved is also straightforward. Since $\Hproj{}$ is (formally) a self-adjoint operator, we have
\begin{align*}
\frac{\dd E_{\mr{P}}}{ \dd t} & = 2 \sum_{j = 1}^N \re{ \langle \gvsig_j \cdot (\v{p}_j + \v{A}_j) \phi , ( \gvsig_j \cdot \partial_t \v{A}_j ) \phi \rangle_{L^2} } + \partial_t F [\v{A} , \partial_t \v{A}] \\
& = - \frac{1}{\alpha} \langle \J_{\!\mr{P}} [\phi , \v{A}] , \partial_t \v{A} \rangle_{L^2} + 2 \frac{1}{8 \pi \alpha^2} \langle \square \v{A} , \partial_t \v{A} \rangle_{L^2} \\
& =  - \frac{1}{\alpha} \langle \J_{\!\mr{P}} [\phi , \v{A}] , \Hproj{ \partial_t \v{A} } \rangle_{L^2} + \frac{1}{\alpha} \langle \Hproj{ \J_{\!\mr{P}} [\phi , \v{A}] } , \partial_t \v{A} \rangle_{L^2} = 0 .
\end{align*}
It will be important for our study of (\ref{eq:MBMP}) to define the absolute ground state energy associated with $E_{\mr{P}} [\psi , \v{A} , \partial_t \v{A}]$. For this we introduce the function space
\begin{align}\label{def:function_space_C}
\fC_N := \left\lbrace (\psi , \v{A}) \in \BWN{H^1(\R^3;\C^2)} \times \dot{H}^1 (\R^3 ; \R^3) ~ : ~ \| \psi \|_2 = 1 , ~ \diver{\v{A}} = 0 \right\rbrace .
\end{align}
The space $\fC_N$ should be thought of as all pairs $(\psi , \v{A})$ for which $E_{\mr{P}} [ \psi , \v{A} , \v{0} ]$ is finite. The absolute ground state energy $E_{\mr{P}}^{\mr{G}}$ associated with $E_{\mr{P}} [\psi , \v{A} , \partial_t \v{A}]$ is then defined as
\begin{align}\label{def:abs_grnd_state_energy}
E_{\mr{P}}^{\mr{G}}  = E_{\mr{P}}^{\mr{G}} (N , K, \Z , \alpha) = \inf{ \left\lbrace E_{\mr{P}} [\psi , \v{A} , \v{0}] ~ : ~ (\psi , \v{A}) \in \fC_N , ~~ \ul{\v{R}} \right\rbrace } .
\end{align}  

\section{Motivation and Main Result}\label{sec:main_result}

As mentioned in the introduction to this thesis, our study of the MBMP equations (\ref{eq:MBMP}) is primarily motivated by the question of whether there is a relationship between energetic stability of matter and the dynamic stability of matter. For the stability of the ground state energy (\ref{def:abs_grnd_state_energy}), we refer to the developments in \cite{FLL86, LL86, LY86, F95, LLS95}. In particular, Fr\"{o}hlich, Lieb, and Loss in 1986 showed that the absolute ground state energy $E_{\mr{P}}^{\mr{G}}$ in the ($N=K=1$)-case is finite when $Z$ is below a critical charge $Z_c$ and $- \infty$ when $Z$ exceeds $Z_c$ \cite{FLL86}. In other words, the one-electron atom in a magnetic field is not energetically stable when the atomic number $Z$ is too large. More generally, Lieb, Loss, and Solovej in 1995 proved that, if $\alpha \leq 0.06$ and $\alpha^2 \max{\Z} \leq 0.041$, the absolute ground state energy $E_{\mr{P}}^{\mr{G}}$ is bounded below by $- C (N+K)$, where $C$ is a constant depending \textit{only} on $\alpha$ and $\Z$ \cite{LLS95}. We note the range of $\alpha$ includes the actual physical value $\alpha \simeq 1/137 \simeq 0.007$, and for $\alpha \simeq 1/137$, the largest nuclear charge allowed is roughly $769$. (See Chapter \ref{chap:stability} for a more detailed discussion of the energetic stability of matter.)

Considering these results on energetic stability, we find it natural to ask whether finiteness of the absolute ground state energy $E_{\mr{P}}^{\mr{G}}$ has any influence on the well-posedness of the corresponding dynamical equations. More specifically, does the time interval of existence of solutions to (\ref{eq:MBMP}) depend on $Z$ in the $(N=K=1)$-case and, more generally, the size of $\max{\Z}$ and $\alpha$ in the $(\max{\left\lbrace N , K\right\rbrace} > 1)$-case? Is the time interval of existence infinite if $\alpha$ and $\max{\Z}$ are sufficiently small? Does blow-up in finite time occur if $\alpha$ or $\max{\Z}$ are large? A natural starting point to answering such questions is to develop a local well-posedness theory for solutions to (\ref{eq:MP}) (single electron, no nuclei). To our knowledge, there are no such results in the literature and, as we discuss in detail in Chapter \ref{chap:othermodels}, we do not succeed in the endeavor to prove the local well-posedness of (\ref{eq:MP}). Therefore, instead of attempting to prove local well-posedness of (\ref{eq:MP}) and, more generally, (\ref{eq:MBMP}), we've turned our attention to constructing time global \textit{weak} solutions to (\ref{eq:MBMP}). In fact, the main result of this thesis is the global existence of finite-energy weak solutions to (\ref{eq:MBMP}) under the assumption that $\alpha$ and $\alpha^2 \max{\Z}$ are small enough to ensure $E_{\mr{P}}^{\mr{G}} > - \infty$. 

\begin{thm}[Global Finite-Energy Weak Solutions]\label{thm:weak_solns_MBMP}
Suppose $\alpha$ and $\alpha^2 \max{\Z}$ are sufficiently small to ensure $E_{\mr{P}}^{\mr{G}} > - \infty$. Then, given 
\begin{align*}
(\psi_0 ,\v{a}_0 , \dot{\v{a}}_0) \in \BWN{H^1(\R^3;\C^2)} \times  H^1 (\R^3 ; \R^3) \times L^2 (\R^3 ; \R^3)
\end{align*}
with $\| \psi_0 \|_2 = 1$ and $\diver{\v{a}_0} = \diver{\dot{\v{a}}_0} = 0$, there exists at least one finite-energy weak solution
\begin{align*}
(\psi , \v{A} , \partial_t \v{A}) \in C^{\mr{w}} ( \R_+ ; \BWN{H^1(\R^3;\C^2)} \times H^1 (\R^3 ; \R^3)  \times L^2 (\R^3 ; \R^3) )
\end{align*}
to (\ref{eq:MBMP}) such that the initial conditions $(\psi (0) , \v{A} (0) , \partial_t \v{A} (0)) = (\psi_0 , \v{a}_0 , \dot{\v{a}}_0)$ are satisfied.
\end{thm}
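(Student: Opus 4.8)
The plan is to prove Theorem~\ref{thm:weak_solns_MBMP} by the energy method for global weak solutions: regularize (\ref{eq:MBMP}) in a way that preserves the algebraic structure behind the two conservation laws, solve the regularized system globally with a priori bounds uniform in the regularization parameter and in time, and then pass to the limit by a compactness argument.

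\emph{Construction of approximations.} I would mollify the magnetic potential inside the Pauli Hamiltonian, replacing $\v{A}_j$ by $(\chi_\varepsilon \ast \v{A})_j$ (and, if convenient, smoothing the Coulomb singularities in $V(\ul{\v{R}},\Z)$), and \emph{correspondingly} mollify the source of the wave equation, replacing $\Hproj{\J_{\!\mr{P}}[\psi , \v{A}]}$ by $\Hproj{\chi_\varepsilon \ast \J_{\!\mr{P}}[\psi , \chi_\varepsilon \ast \v{A}]}$, so that the regularized current stays the variational derivative of the regularized kinetic energy with respect to $\v{A}$. With this symmetric choice the formal computations of \S\ref{sec:MP} go through verbatim for the regularized system: $\| \psi_\varepsilon(t) \|_2 = 1$ and a regularized energy $E_{\mr{P}}^\varepsilon$ are exactly conserved, $\diver{\v{A}_\varepsilon} = 0$ is propagated, and $E_{\mr{P}}^\varepsilon(0) \to E_{\mr{P}}[\psi_0,\v{a}_0,\dot{\v{a}}_0]$ with the initial energies uniformly bounded. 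Since mollification renders the nonlinear maps $\v{A}\cdot\v{p}$, $|\v{A}|^2$, $\gvsig\cdot(\curl{\v{A}})$ and the quadratic current Lipschitz on energy-bounded subsets of $H^1\times H^1$, a contraction argument produces a unique local-in-time solution $(\psi_\varepsilon,\v{A}_\varepsilon,\partial_t\v{A}_\varepsilon)$ in the energy space; conservation of $E_{\mr{P}}^\varepsilon$ then extends it to all of $\R_+$.

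\emph{A priori bounds.} This is the heart of the argument and the only place energetic stability is used. When $\alpha$ and $\alpha^2\max{\Z}$ are small enough, the stability analysis of Chapter~\ref{chap:stability} yields not merely $E_{\mr{P}}^{\mr{G}} > -\infty$ but the stronger statement that a fixed fraction of the kinetic plus field energy is retained: there are $\theta\in(0,1)$ and $C = C(N,K,\Z,\alpha)$ with
\begin{align*}
E_{\mr{P}}[\psi,\v{A},\v{0}] \ \ge\ \theta\left( T_{\mr{P}}[\psi,\v{A}] + F[\v{A},\v{0}] \right) - C \qquad \text{for all }(\psi,\v{A})\in\fC_N\text{ and all }\ul{\v{R}}.
\end{align*}
Inserting this into the conserved identity $E_{\mr{P}}^\varepsilon(0) = T_{\mr{P}}[\psi_\varepsilon(t),\v{A}_\varepsilon(t)] + V[\psi_\varepsilon(t)] + F[\v{A}_\varepsilon(t),\partial_t\v{A}_\varepsilon(t)]$, using the splitting $F[\v{A},\partial_t\v{A}] = F[\v{A},\v{0}] + \tfrac{1}{8\pi}\|\partial_t\v{A}\|_2^2$ and $T_{\mr{P}},F\ge 0$, one gets
\begin{align*}
\theta\left( T_{\mr{P}}[\psi_\varepsilon(t),\v{A}_\varepsilon(t)] + F[\v{A}_\varepsilon(t),\v{0}] \right) + \tfrac{1}{8\pi}\|\partial_t\v{A}_\varepsilon(t)\|_2^2 \ \le\ E_{\mr{P}}^\varepsilon(0) + C \ \le\ C',
\end{align*}
uniformly in $t$ and $\varepsilon$. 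Together with $\|\psi_\varepsilon\|_2 = 1$, the diamagnetic inequality, the identity $[\gvsig\cdot(\v{p}+\v{A})]^2 = (\v{p}+\v{A})^2 + \gvsig\cdot(\curl{\v{A}})$, the Sobolev embedding $\dot{H}^1\hookrightarrow L^6$, and Hardy's inequality (to absorb the spin term and the Coulomb singularities), this bootstraps to uniform bounds on $\|\psi_\varepsilon\|_{L^\infty_t H^1_x}$, $\|\v{A}_\varepsilon\|_{L^\infty_t \dot{H}^1_x}$ and $\|\partial_t\v{A}_\varepsilon\|_{L^\infty_t L^2_x}$; the remaining $L^2_x$-norm of $\v{A}_\varepsilon(t)$ needed for the $H^1$ conclusion is controlled on each bounded time interval by $\|\v{a}_0\|_2 + t\,\|\partial_t\v{A}_\varepsilon\|_{L^\infty_t L^2_x}$, so $\v{A}_\varepsilon$ is bounded in $L^\infty_{\mr{loc}}(\R_+;H^1)$ uniformly in $\varepsilon$.

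\emph{Passage to the limit.} The uniform bounds give, along a subsequence, weak-$\ast$ limits $(\psi,\v{A},\partial_t\v{A})$ in the spaces of the theorem. Using the equations to bound $\partial_t\psi_\varepsilon$ and $\partial_t^2\v{A}_\varepsilon$ in $L^\infty_t H^{-1}_{\mr{loc}}$, the Aubin--Lions--Simon lemma on balls yields $\psi_\varepsilon\to\psi$ and $\v{A}_\varepsilon\to\v{A}$ strongly in $C_{\mr{loc}}(\R_+;L^2_{\mr{loc}})$, hence in $L^p_{\mr{loc}}$ for $p<6$ and a.e. This suffices to pass to the limit in $\v{A}_\varepsilon\cdot\v{p}\psi_\varepsilon$, $|\v{A}_\varepsilon|^2\psi_\varepsilon$ and $\gvsig\cdot(\curl{\v{A}_\varepsilon})\,\psi_\varepsilon$ in the Pauli equation, and in the quadratic current $\J_{\!\mr{P}}[\psi_\varepsilon,\v{A}_\varepsilon]$ in the wave equation; in particular the spin current $\curl{\langle\psi_\varepsilon,\gvsig\psi_\varepsilon\rangle_{\C^2}}$ converges in $\mc{D}'$ because $\langle\psi_\varepsilon,\gvsig\psi_\varepsilon\rangle_{\C^2}\to\langle\psi,\gvsig\psi\rangle_{\C^2}$ in $L^1_{\mr{loc}}$. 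Weak continuity in time of the limit, attainment of the initial data $(\psi_0,\v{a}_0,\dot{\v{a}}_0)$, and an energy inequality (all one expects for weak solutions, and enough to preserve the bounds above) then follow in the standard way from the $H^{-1}_{\mr{loc}}$ bounds on the time derivatives, using density of test functions. The step I expect to be the main obstacle is exactly this last one: extracting \emph{enough} strong compactness on $\psi$ to control simultaneously the derivative nonlinearity $\v{A}\cdot\v{p}\psi$ and the current --- bilinear or trilinear in $\psi$ and $\v{A}$ --- with $\v{A}$ only in $\dot{H}^1\subset L^6$, the spin current being the most delicate term since taking its curl costs a derivative, so its convergence must be argued by duality rather than from pointwise convergence.
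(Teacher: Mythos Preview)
Your overall strategy---regularize in a way that respects the variational structure, pull uniform energy-class bounds out of energetic stability, then pass to the limit by Aubin--Lions---is exactly the paper's. The regularization, however, differs in one essential respect. Besides smoothing the vector potential (the paper uses $\Lambda_\varepsilon^{-1}=(1-\varepsilon\Delta)^{-1/2}$, the Bessel analogue of your $\chi_\varepsilon\ast$), the paper replaces $i\partial_t\phi$ by $(i+\varepsilon)^{-1}\partial_t\phi$, turning the Pauli flow into a Schr\"odinger--heat flow, and adds the scalar correction $\varepsilon\bigl(T_{\mr{P}}[\phi,\tilde{\v{A}}]+V[\phi]\bigr)\phi$ so that $\|\phi(t)\|_2\equiv 1$ survives the dissipation. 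The price is that the energy is only dissipative (Theorem~\ref{thm:MBMP_epsilon_dissipation-laws}); the payoff is that the local existence proof (Theorem~\ref{thm:local_exist_MBMP_epsilon}) runs entirely off the smoothing estimates for $e^{\varepsilon t\Delta}$ (Lemma~\ref{lem:Heat-Kernel}), avoiding any construction of the two-parameter Pauli propagator. Your purely Hamiltonian mollification is conceptually cleaner---exact conservation, no ad hoc correction term---but your one-line ``a contraction argument produces a unique local-in-time solution in the energy space'' hides real work: you must build $U^{\mr{P}}_{\tilde{\v{A}}}$ via Kato's theory, bound it on $H^1$, and close the fixed point with the Coulomb singularity present (cf.\ the discussion in \S\ref{sec:MS}). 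This is doable once $\tilde{\v{A}}\in C^\infty_b$, but it is precisely the step the paper's parabolic regularization is designed to sidestep.

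For the a~priori bounds the paper does not use your coercivity inequality $E_{\mr{P}}\geq\theta(T_{\mr{P}}+F)-C$; it instead proves directly (Theorem~\ref{lem:bound_on_coulomb}, via a scaling argument that compares two nearby values of~$\alpha$) that $|V[\phi]|$ is bounded on every sublevel set of $E_{\mr{P}}$, from which the bounds on $T_{\mr{P}}$, $F$ and $\|\partial_t\v{A}\|_2$ follow. Your inequality \emph{is} true under the theorem's hypotheses---split $E_{\mr{P}}=(1-\theta)(T_{\mr{P}}+F)+\theta\bigl(T_{\mr{P}}+\theta^{-1}V+F\bigr)$ and apply stability with the slightly larger charges $\Z/\theta$---but note that this uses stability in an open neighborhood of $(\alpha,\Z)$, not merely $E_{\mr{P}}^{\mr{G}}>-\infty$ at the single point. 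The compactness step and the verification of the initial data are carried out in the paper essentially as you describe (\S\ref{sec:proof_main_result}): Banach--Alaoglu for the weak$^{\ast}$ limits, then Aubin--Lions on bounded space--time boxes to upgrade to strong $L^4_{t,x}$ convergence, which is enough to identify every nonlinear term---including the spin current, whose convergence in $\mathcal{D}'$ indeed comes from $\langle\psi_\varepsilon,\gvsig\psi_\varepsilon\rangle\to\langle\psi,\gvsig\psi\rangle$ in $L^{4/3}_{\mathrm{loc}}$.
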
  

If we are considering the MPC system (\ref{eq:MPC}), namely, the ($N = K = 1$)-case of (\ref{eq:MBMP}), then the hypothesis of Theorem \ref{thm:weak_solns_MBMP} changes to $Z < Z_c$ where $Z_c$ is the critical charge and is defined by (\ref{def:Zc}) (see Chapter \ref{chap:stability} for the context in which $Z_c$ is defined). Moreover, if we are considering just the MP equations (\ref{eq:MP}), then no additional assumptions are needed (there are no nuclear charges present and we do not need to assume $\alpha$ is sufficiently small). The solution obtained in Theorem \ref{thm:weak_solns_MBMP} is a \textit{weak} solution, but does indeed have finite energy, i.e., $(\psi , \v{A})$ belong to the class of functions $\fC_N$ and $\partial_t \v{A} \in L^2 (\R^3 ; \R^3)$ (see Chapter \ref{chap:prelim} for the precise definition of a weak solution to (\ref{eq:MBMP})). We point out that we do not manage to prove uniqueness. This is an artifact of our proof strategy discussed in the following section. 

\section{Proof Strategy and the $\varepsilon$-Modified System}\label{sec:proof_strat}

In order to prove Theorem \ref{thm:weak_solns_MBMP} concerning the existence of global weak solutions to (\ref{eq:MBMP}), we combine the contraction mapping scheme found in \cite{NW05} with ideas from the 1995 work on the so-called Maxwell-Schr\"{o}dinger equations (\ref{eq:MS}) by Guo, Nakamitsu, and Strauss \cite{guo1995}. As we discuss in Chapter \ref{chap:othermodels}, in the latter article, the authors consider an approximate system, the $\varepsilon$-modified Maxwell-Schr\"{o}dinger equations (\ref{eq:MS_epsilon}), and prove that global, finite-energy solutions to the $\varepsilon$-modified system converge, as $\varepsilon \rightarrow 0$, to global, finite-energy, weak solutions of (\ref{eq:MS}).\footnote{A similar $\varepsilon$ approximation argument appears in \cite{ADM17} for a nonlinear extension of the Maxwell-Schr\"{o}dinger system.}

The consideration of \cite{guo1995}, therefore, leads us to study an approximate system to the MBMP equations. Referred to as the \textbf{$\varepsilon$-modified MBMP equations}, this approximate system reads  
\begin{align}\label{eq:MBMP_epsilon} 
\left\lbrace \begin{array}{l}
\partial_t \phi^{\varepsilon} =  - (i + \varepsilon) \Ham^{\varepsilon} (\v{A}^{\!\varepsilon}) \phi^{\varepsilon} + \varepsilon \left( T_{\mr{P}} [ \phi^{\varepsilon} , \tilde{\v{A}}^{\!\varepsilon} ] + V [\phi^{\varepsilon} ] \right) \phi^{\varepsilon}   \\
\square \v{A}^{\!\varepsilon} =  4 \pi \alpha \Lambda^{-1}_{\varepsilon} \Hproj{ \J_{\!\mr{P}} [ \phi^{\varepsilon} , \tilde{\v{A}}^{\!\varepsilon} ] }   \\
\diver{\v{A}^{\!\varepsilon}} = 0 , ~~~~  \tilde{\v{A}}^{\!\varepsilon} = \Lambda_{\varepsilon}^{-1} \v{A}^{\!\varepsilon} ,
\end{array} \right.
\end{align}
where $\Lambda_{\varepsilon} = \sqrt{1 - \varepsilon \Delta}$, $\Ham^{\varepsilon} (\v{A}^{\!\varepsilon})$ is the $\varepsilon$-modified Hamiltonian
\begin{align}\label{def:epsilon_Hamiltonian}
\Ham^{\varepsilon} (\v{A}^{\!\varepsilon}) = \sum_{j = 1}^N \T_j (\tilde{\v{A}}^{\!\varepsilon}) + V( \ul{\v{R}} , \Z ) ,
\end{align} 
$T_{\mr{P}} [ \phi^{\varepsilon} , \tilde{\v{A}}^{\!\varepsilon} ]$ is defined by (\ref{def:total_kinetic}) and $V[\phi^{\varepsilon}]$ is defined by (\ref{def:tot_potential_energy}). We define the total energy of the $\varepsilon$-modified system as 
\begin{align}\label{def:epsilon_tot_energy}
\E [\phi^{\varepsilon} , \v{A}^{\!\varepsilon} , \partial_t \v{A}^{\!\varepsilon}] = T_{\mr{P}} [ \phi^{\varepsilon} , \tilde{\v{A}}^{\!\varepsilon} ] + V [\phi^{\varepsilon} ] + F [ \v{A}^{\!\varepsilon} , \partial_t \v{A}^{\!\varepsilon}] \| \phi^{\varepsilon} \|_2^2 ,
\end{align}
where $F [ \v{A}^{\!\varepsilon} , \partial_t \v{A}^{\!\varepsilon} ]$ is the field energy defined by (\ref{def:field_energy_potentials}). 

For the remainder of this thesis we will drop the dependence on $\varepsilon$ when it is not needed. Note that the Pauli operators $\T_j$ in the definition (\ref{def:epsilon_Hamiltonian}) of $\Ham (\v{A})$ are evaluated at the regularized vector potential $\tilde{\v{A}}$, whereas the field energy $F$ is evaluated at $(\v{A}, \partial_t \v{A})$. Similarly, note that the probability current density $\J_{\!\mr{P}}$ in (\ref{eq:MBMP_epsilon}) is evaluated at $\tilde{\v{A}}$. These choices are made so that the total energy (\ref{def:epsilon_tot_energy}) is dissipative under the time evolution of (\ref{eq:MBMP_epsilon}) (see Theorem \ref{thm:MBMP_epsilon_dissipation-laws}). Moreover, the choice of the right hand side of the first equation in (\ref{eq:MBMP_epsilon}) is made so that normalized wavefunctions remain normalized under the flow of (\ref{eq:MBMP_epsilon}). This point will be crucial for the application of the results concerning the stability of matter in magnetic fields to construct \textit{global} solutions to (\ref{eq:MBMP_epsilon}).  

The space of initial conditions we will consider for the $\varepsilon$-modified MBMP system is
\begin{align}\label{def:initconds}
\X_0^m = \left\lbrace (\psi_0 , \v{a}_0 , \dot{\v{a}}_0) \in [H^m (\R^{3N}) ]^{2^N} \oplus ( H^m \oplus H^{m-1} ) (\R^3 ; \R^3) : \diver{\v{a}_0} = \diver{\dot{\v{a}}_0} = 0 \right\rbrace .
\end{align}
Combining the regularity improving estimates of the heat kernel $e^{\varepsilon t \Delta}$ (see Lemma \ref{lem:Heat-Kernel}) with a contraction mapping scheme similar to the one in \cite{NW05}, we prove the following local well-posedness result for (\ref{eq:MBMP_epsilon}).
\begin{thm}[Local Well-posedness of the $\varepsilon$-Modified System]\label{thm:local_exist_MBMP_epsilon}
Fix $m \in [1 , 2]$ and $\varepsilon > 0$. Given initial data $(\phi_0 , \v{a}_0 , \dot{\v{a}}_0) \in \X_0^m$, there exists a maximal time interval $\I = [0 , T_{\mr{max}})$ and a unique solution
\begin{align*}
(\phi , \v{A}) \in C_{\I} [ H^m (\R^{3N}) ]^{2^N} \times [ C_{\I} H^m (\R^3 ; \R^3) \cap C^1_{\I} H^{m-1} (\R^3 ; \R^3) ]
\end{align*}
to (\ref{eq:MBMP_epsilon}) such that the initial conditions $(\phi (0) , \v{A} (0) , \partial_t \v{A} (0)) = (\phi_0 , \v{a}_0 , \dot{\v{a}}_0)$ are satisfied and the blow-up alternative holds: either $T_{\mr{max}} = \infty$ or $T_{\mr{max}} < \infty$ and 
\begin{align*}
\limsup_{t \rightarrow T_{\mr{max}}}  \| (\phi (t) , \v{A} (t) , \partial_t \v{A} (t) ) \|_{H^m \oplus H^m \oplus H^{m-1}} = \infty .
\end{align*} 
Furthermore, we can approximate lower regularity solutions by higher regularity solutions in the following sense: if $\{ (\phi_0^j , \v{a}_0^j , \dot{\v{a}}_0^j) \}_{j \geq 1} \in \X_0^m$ converges, as $j \rightarrow \infty$, to $(\phi_0 , \v{a}_0 , \dot{\v{a}}_0) \in \X_0^1$ in $H^1 \oplus H^1 \oplus L^2$, then, for each $t \in \I$, the sequence of solutions $\{ (\phi^j (t) , \v{A}^j (t) , \partial_t \v{A}^j (t)) \}_{j \geq 1}$ corresponding to the initial datum $\{ (\phi_0^j , \v{a}_0^j , \dot{\v{a}}_0^j) \}_{j \geq 1}$ converges in $H^1 \oplus H^1 \oplus L^2$ to the solution $(\phi (t) , \v{A} (t) , \partial_t \v{A} (t))$ corresponding to the initial datum $(\phi_0 , \v{a}_0 , \dot{\v{a}}_0)$.
\end{thm}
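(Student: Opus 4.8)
The plan is to reformulate the $\varepsilon$-modified system as a fixed-point problem and apply the contraction mapping principle in a suitable ball of $C_\I[H^m(\R^{3N})]^{2^N} \times C_\I H^m(\R^3;\R^3) \cap C^1_\I H^{m-1}(\R^3;\R^3)$. First I would rewrite the first equation of (\ref{eq:MBMP_epsilon}) in Duhamel form using the semigroup $e^{-(i+\varepsilon)t\Delta_{\!\ul{\v{x}}}}$ generated by $-(i+\varepsilon)\v{p}^2$ (note $\re(i+\varepsilon)=\varepsilon>0$, so this is a genuine analytic contraction semigroup on $[L^2]^{2^N}$, smoothing like the heat semigroup for positive times), treating the lower-order magnetic terms, the Coulomb potential $V(\ul{\v{R}},\Z)$, and the energy-correction terms $\varepsilon(T_{\mr{P}}[\phi,\tilde{\v{A}}]+V[\phi])\phi$ as the inhomogeneity. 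For the wave equation I would use the standard Duhamel formula with the d'Alembertian propagators $\cos(t|\nabla|/\alpha)$ and $\alpha|\nabla|^{-1}\sin(t|\nabla|/\alpha)$ acting on the initial data $(\v{a}_0,\dot{\v{a}}_0)$ and on the source $4\pi\alpha\Lambda_\varepsilon^{-1}\Hproj\J_{\!\mr{P}}[\phi,\tilde{\v{A}}^{\!\varepsilon}]$. Define the map $\Phi$ sending $(\phi,\v{A})$ to the pair of Duhamel integrals; a fixed point of $\Phi$ is the desired solution.

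The key estimates are: (i) the heat-type smoothing of $e^{-(i+\varepsilon)t\Delta}$ from Lemma~\ref{lem:Heat-Kernel}, which buys back the derivative lost in the magnetic terms $\gvsig_j\cdot(\v{p}_j+\tilde{\v{A}}_j)$ applied to $\phi$, with a singularity $t^{-1/2}$ that is integrable near $t=0$; (ii) control of the Coulomb singularities in $V(\ul{\v{R}},\Z)$ and in $\J_{\!\mr{P}}$ via Hardy's and Kato's inequalities together with the algebra property of $H^m(\R^3)$ for $m>3/2$ (here $m\ge 1$, but after the first iteration the solution lands in higher regularity, or one works directly with the product estimates $\|fg\|_{H^m}\lesssim\|f\|_{H^m}\|g\|_{H^m}+\ldots$ valid in the relevant range, using that $\tilde{\v{A}}=\Lambda_\varepsilon^{-1}\v{A}$ gains two derivatives); (iii) the fact that $\Lambda_\varepsilon^{-1}$ is bounded on every $H^s$ uniformly, and gains regularity, so that $\J_{\!\mr{P}}[\phi,\tilde{\v{A}}]\in H^{m-1}$ with the wave propagator then producing $\v{A}\in C_\I H^m\cap C^1_\I H^{m-1}$; and (iv) $\Hproj$ is bounded on $H^s(\R^3;\R^3)$ for all $s$ and preserves the divergence-free constraint, which together with $\diver\v{a}_0=\diver\dot{\v{a}}_0=0$ and $\diver e^{\varepsilon t\Delta}=e^{\varepsilon t\Delta}\diver$ keeps $\diver\v{A}=0$ for all time. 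Combining these, $\Phi$ maps a ball of radius $R$ (chosen comparable to the norm of the initial data) in $C_{[0,T]}(\cdots)$ into itself and is a contraction there, provided $T=T(R,\varepsilon,\alpha,\Z,N)$ is small enough; the $t^{-1/2}$-type bounds give a factor $T^{1/2}$ (or $T$) that can be made small. The blow-up alternative follows by the usual continuation argument: if $T_{\max}<\infty$ but the $H^m\oplus H^m\oplus H^{m-1}$ norm stayed bounded along a sequence $t_n\to T_{\max}$, one could restart the local existence from $t_n$ with a uniform existence time, contradicting maximality.

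For the approximation statement, I would establish Lipschitz dependence of the solution map on the initial data in the $H^1\oplus H^1\oplus L^2$ topology on the common existence interval: taking two solutions with initial data $(\phi_0^j,\v{a}_0^j,\dot{\v{a}}_0^j)$ and $(\phi_0,\v{a}_0,\dot{\v{a}}_0)$, subtract the Duhamel formulas, and run the same estimates on the differences, now using that the nonlinearity is locally Lipschitz (the magnetic and current terms are bilinear/multilinear in $(\phi,\v{A})$, the Coulomb terms are linear, and $\Lambda_\varepsilon^{-1}$, $\Hproj$ are bounded linear operators). A Gronwall argument on the interval $\I$ — valid because the higher-regularity solutions $\phi^j$ remain bounded in $H^1$ uniformly in $j$ on compact subintervals, a consequence of the energy dissipation in Theorem~\ref{thm:MBMP_epsilon_dissipation-laws} controlling the $H^1$ norm — yields convergence in $H^1\oplus H^1\oplus L^2$ for each fixed $t\in\I$.

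The main obstacle I expect is (ii): handling the nonlinear self-interaction through $\J_{\!\mr{P}}[\phi,\tilde{\v{A}}]$ and the spin current $\curl\langle\psi,\gvsig\psi\rangle$ at the relatively low regularity $m=1$, where $H^1(\R^3)$ is not an algebra and the quadratic expressions in $\phi$ are only barely in $L^2$-based spaces. The saving grace is the $\varepsilon$-regularization: $\Lambda_\varepsilon^{-1}$ smooths $\v{A}$ to $\tilde{\v{A}}\in H^{m+2}$ (with $\varepsilon$-dependent constants), and the parabolic smoothing of $e^{-(i+\varepsilon)t\Delta}$ upgrades $\phi$'s spatial regularity for $t>0$; carefully tracking that these two mechanisms together close the estimates for $m=1$ — while accepting that all constants blow up as $\varepsilon\to 0$ — is the technical heart of the argument, and is exactly the point where the scheme of \cite{NW05} combined with the heat-kernel lemma does the work.
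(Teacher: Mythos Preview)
Your approach is essentially the paper's: Duhamel formulation with the Schr\"odinger--heat semigroup $e^{(i+\varepsilon)t\Delta}$ and wave propagators, contraction on a ball, heat smoothing (Lemma~\ref{lem:Heat-Kernel}) recovering the lost derivatives, and $\Lambda_\varepsilon^{-1}$ plus the wave energy estimate mapping $\J_{\!\mr{P}}\in H^{m-2}$ to $\v{A}\in H^m$. One technical difference is worth flagging: the paper does \emph{not} run the contraction in the $H^m$-metric on the $H^m$-ball. Instead it equips $\X^m_T(R)$ with the weaker metric
\[
d\bigl((\phi,\v{A}),(\phi',\v{A}')\bigr)=\max\bigl\{\|\phi-\phi'\|_{\infty;1,2},\ \|\v{A}-\v{A}'\|_{\infty;1,2},\ \|\partial_t\v{A}-\partial_t\v{A}'\|_{\infty;2}\bigr\},
\]
verifies completeness via Banach--Alaoglu (Lemma~\ref{lem:completeness}), and proves the difference estimates only at the $H^1$ level (this is why (\ref{eq:Estimate-Pauli-3}) and (\ref{eq:Estimate-KG-2}) are stated only for $m=1$). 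The payoff is twofold: the Lipschitz estimates are cheaper, and the approximation statement follows \emph{directly} from the same contraction inequality with the initial-data difference added, iterated subinterval by subinterval---no Gronwall, and no appeal to Theorem~\ref{thm:MBMP_epsilon_dissipation-laws}. Your invocation of that theorem for uniform $H^1$ bounds is in fact circular (it is proved using the present theorem) and would impose extra hypotheses (normalization, antisymmetry, smallness of $\alpha$ and $\Z$) not assumed here; the uniform boundedness you need follows simply because for large $j$ the data $(\phi_0^j,\v{a}_0^j,\dot{\v{a}}_0^j)$ fit in the same $H^1$-ball as the limit data, so the contraction runs with the same $T,R$. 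Minor correction: $\Lambda_\varepsilon^{-1}=(1-\varepsilon\Delta)^{-1/2}$ gains one derivative, not two.
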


The limited range of regularity, namely $m \in [1, 2]$, in Theorem \ref{thm:local_exist_MBMP_epsilon} comes from controlling the Coulomb term $V(\ul{\v{R}} , \Z) \phi$ in (\ref{eq:MBMP_epsilon}) (see Lemma \ref{lem:Estimates-Coulomb}). We can, in fact, prove Theorem \ref{thm:local_exist_MBMP_epsilon} for $m$ up to $\frac{5}{2} - \delta$, $\delta > 0$. However, doing so seems to be an unnecessary mathematical generality and has no bearing on the validly of Theorem \ref{thm:weak_solns_MBMP}. However, we do expect this to be the maximum range of regularity for this system. Indeed, already for the Hydrogen ground state $\psi_0 (\v{x}) \propto e^{- |\v{x}| / 2}$ (see Chapter \ref{chap:stability}) one has $\| \psi_0 \|_{s,2} < \infty$ if and only if $s < 5/2$. 

With Theorem \ref{thm:local_exist_MBMP_epsilon} at our disposal, we would then like to consider the limit $\varepsilon \rightarrow 0$ of the low regularity ($m=1$) solutions to (\ref{eq:MBMP_epsilon}). However, one potential obstruction to considering the $\varepsilon \rightarrow 0$ limit is that the local time interval of existence $[0 , T_{\mr{max}})$ in Theorem \ref{thm:local_exist_MBMP_epsilon} might shrink to zero as $\varepsilon \rightarrow 0$. It is therefore necessary to prove that the low regularity $H^1 \oplus H^1 \oplus L^2$ local solutions to (\ref{eq:MBMP_epsilon}) are, in fact, global. A key ingredient that allows us to extend from local to global solutions is to prove apriori $\varepsilon , t$-independent bounds in $H^1 \oplus H^1 \oplus L^2$ on solutions $(\phi^{\varepsilon} , \v{A}^{\!\varepsilon})$ to (\ref{eq:MBMP_epsilon}). Our proof of these uniform bounds uses that the energy $\E$ (\ref{def:epsilon_tot_energy}) is dissipative together with the fact that the Coulomb energy $| V[ \phi^{\varepsilon} (t) ] |$ along a solution $(\phi^{\varepsilon} , \v{A}^{\!\varepsilon})$ is bounded, with upper bound independent of $\varepsilon$ and $t$. This latter fact is only true when the energy $\E$ is uniformly bounded below, and from the results on the stability of matter in magnetic fields (discussed in Chapter \ref{chap:stability}) we know this requires sufficiently small $\alpha$ and $\alpha^2 \max{\Z}$. We express the fact $V [ ~ \cdot ~ ]$ is a bounded functional on $\fC_N$ when $\alpha$ and $\alpha^2 \max{\Z}$ are sufficiently small and that low regularity $H^1 \oplus H^1 \oplus L^2$ local solutions to (\ref{eq:MBMP_epsilon}) are global as the following Theorems. 

\begin{thm}[Uniform Bound on the Coulomb Energy]\label{lem:bound_on_coulomb}
Let $\{ (\phi^{n} , \v{A}^{\!n}) \}_{n \geq 1} \subset  \fC_N$, where $\fC_N$ is defined by (\ref{def:function_space_C}), and assume that $E [\phi^n , \v{A}^{\!n} , \v{0}] \leq C$ where $C$ is a constant depending on $N$, $K$, $\alpha$, $\Z$, $\ul{\v{R}}$, and $ ( \phi^0 , \v{A}^{\!0} )$, but independent of $n$. Assume $\alpha$ and $\alpha^2 \max{\Z}$ are sufficiently small to ensure $E_{\mr{P}}^{\mr{G}} > - \infty$. Then the sequence of Coulomb energies $\{ V[\phi^{n}] \}_{n=1}^{\infty}$ is uniformly bounded, $\sup_n |V[\phi^n]| < \infty$.
\end{thm}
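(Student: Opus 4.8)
The plan is to split the desired conclusion into the upper estimate $V[\phi^n]\le C'$ and the lower estimate $V[\phi^n]\ge -C''$ with $C',C''$ independent of $n$, the first being essentially free and the second carrying all the content. For the upper estimate I would only use positivity of the kinetic and magnetic field energies: since $T_{\mr{P}}[\phi^n,\v{A}^n]\ge 0$ and $F[\v{A}^n,\v{0}]=(8\pi\alpha^2)^{-1}\|\v{B}^n\|_2^2\ge 0$ (here $\v{B}^n=\curl\v{A}^n$), the hypothesis $E_{\mr{P}}[\phi^n,\v{A}^n,\v{0}]\le C$ together with the decomposition $E_{\mr{P}}=T_{\mr{P}}+V+F$ from \eqref{eq:mbmp_energy} gives at once
\[
V[\phi^n]=E_{\mr{P}}[\phi^n,\v{A}^n,\v{0}]-T_{\mr{P}}[\phi^n,\v{A}^n]-F[\v{A}^n,\v{0}]\ \le\ C .
\]
(Each $E_{\mr{P}}[\phi^n,\v{A}^n,\v{0}]$ is finite for fixed $n$ because $(\phi^n,\v{A}^n)\in\fC_N$, so the manipulation is legitimate termwise.)

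For the lower estimate I would invoke from Chapter \ref{chap:stability} the \emph{quantitative} form of energetic stability valid under the smallness hypotheses, namely that those thresholds are met with room to spare: there exist constants $\theta\in(0,1)$ and $\beta\ge 0$, depending only on $N,K,\alpha,\Z,\ul{\v{R}}$, such that
\[
\theta\bigl(T_{\mr{P}}[\psi,\v{A}]+F[\v{A},\v{0}]\bigr)+V[\psi]\ \ge\ -\beta\qquad\text{for all }(\psi,\v{A})\in\fC_N .
\]
Equivalently, the Coulomb energy $V[\,\cdot\,]$ is form-bounded by $T_{\mr{P}}+F$ with relative bound strictly below one, uniformly over $\fC_N$. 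Granting this, the remainder is short algebra: writing $E_{\mr{P}}[\phi^n,\v{A}^n,\v{0}]=(1-\theta)\bigl(T_{\mr{P}}[\phi^n,\v{A}^n]+F[\v{A}^n,\v{0}]\bigr)+\bigl(\theta(T_{\mr{P}}[\phi^n,\v{A}^n]+F[\v{A}^n,\v{0}])+V[\phi^n]\bigr)$ and applying the stability inequality to the last parenthesis, the bound $E_{\mr{P}}[\phi^n,\v{A}^n,\v{0}]\le C$ forces
\[
T_{\mr{P}}[\phi^n,\v{A}^n]+F[\v{A}^n,\v{0}]\ \le\ \frac{C+\beta}{1-\theta}\qquad\text{for every }n .
\]
Substituting this uniform bound back into the stability inequality yields $V[\phi^n]\ge-\beta-\theta(C+\beta)/(1-\theta)$, and combining with the upper estimate gives $\sup_n|V[\phi^n]|\le\max\{C,\ \beta+\theta(C+\beta)/(1-\theta)\}<\infty$, with the right-hand side depending only on $C$ and on $N,K,\alpha,\Z,\ul{\v{R}}$.

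The only genuine obstacle is therefore securing the room-to-spare inequality, which is where the substantive analysis lives: the Lieb--Thirring and BKS-type inequalities together with the localization and relativistic-to-nonrelativistic reduction of Lieb, Loss and Solovej \cite{LLS95}, carried out in Chapter \ref{chap:stability}. Concretely I would record there that the map $\lambda\mapsto\inf\{\lambda T_{\mr{P}}[\psi,\v{A}]+F[\v{A},\v{0}]+V[\psi]\}$, the infimum taken over $(\psi,\v{A})\in\fC_N$ and over nuclear configurations $\ul{\v{R}}$, remains finite for some $\lambda<1$ precisely when the bounds on $\alpha$ and $\alpha^2\max\Z$ are strictly below their critical values — this is exactly the existence of $\theta<1$ above — and the single-atom case $N=K=1$ with $Z<Z_c$ is handled identically. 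Apart from this imported input, the present lemma is pure positivity and algebra; in particular the conclusion is obtained pointwise in $n$, with no compactness argument or $\varepsilon\to 0$ limit required.
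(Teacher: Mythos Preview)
Your approach is correct and takes a genuinely different route from the paper's. You reduce everything to a single imported inequality---relative form-boundedness of $V$ by $T_{\mr{P}}+F$ with constant $\theta<1$---after which the conclusion is, as you say, pure algebra. The paper instead argues by contradiction and scaling: assuming $|V_n|\to\infty$ (hence $V_n\to-\infty$), it sets $\lambda_n=1/|V_n|$, rescales $(\phi^n,\v{A}^{\!n})$ by $\lambda_n$, and derives incompatible constraints by comparing the energy inequalities at the given $\alpha$ with those at a second, \emph{smaller} value $\nu<\alpha$. The point is that $E_{\mr{P}}^{\mr{G}}(\nu)\ge E_{\mr{P}}^{\mr{G}}(\alpha)>-\infty$ is automatic, since decreasing $\alpha$ only increases the field-energy coefficient $1/(8\pi\alpha^2)$; a preliminary scaling inequality $(V+F)^2\le 4|E_{\mr{P}}^{\mr{G}}|\,T_{\mr{P}}$ then forces the contradiction.

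The trade-off: your argument is shorter and conceptually transparent once the $\theta<1$ inequality is in hand, but that inequality is not actually stated in Chapter~\ref{chap:stability} as written---Theorems~\ref{thm:one-electron_stability_estimate} and~\ref{thm:many-body_stability_estimate} assert only $E_{\mr{P}}^{\mr{G}}>-\infty$, and you would have to reopen the proofs in \cite{FLL86,LLS95} to extract the strict relative bound. The paper's route, by contrast, uses only the bare hypothesis $E_{\mr{P}}^{\mr{G}}(\alpha)>-\infty$ plus the free monotonicity in $\alpha$, so it is self-contained against the stability results exactly as quoted and works even at the edge of the parameter range where no room to spare may be available.
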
 

\begin{thm}[Dissipation of Energy and Uniform Bounds]\label{thm:MBMP_epsilon_dissipation-laws}
Fix $\varepsilon > 0$ and $m \in [1 , 2]$. Let $(\phi_0 , \v{a}_0 , \dot{\v{a}}_0) \in \X^m_0$ with $\phi_0 \in \BWN{H^m(\R^3;\C^2)}$ and $\| \phi_0 \|_2 = 1$. Let $(\phi , \v{A}) \in C_{\I} H^m \times [ C_{\I} H^m \cap C^1_{\I} H^{m-1} ]$ be the corresponding solution to (\ref{eq:MBMP_epsilon}) provided by Theorem \ref{thm:local_exist_MBMP_epsilon}. Then $\phi (t)$ remains completely antisymmetric and normalized for $t \in \I$, and, if $m = 2$,
\begin{align*}
& \E [\phi , \v{A} , \partial_t \v{A}] (t) - \E [\phi_0 , \v{a}_0 , \dot{\v{a}}_0] \\
& \hspace{1cm} = - 2 \varepsilon \int_0^t \left[ \| \Ham (\v{A} (\tau )) \phi (\tau) \|_{2}^2 - \langle \phi (\tau) , \Ham (\v{A} (\tau ))  \phi (\tau) \rangle^2_{2} \right] \dd \tau , \numberthis \label{eq:Diss_Energy} 
\end{align*}
for all $t \in \I$. Moreover, if $\alpha$ and $\alpha^2 \max{\Z}$ are sufficiently small to ensure $E^{\mr{G}}_{\mr{P}} > - \infty$, then 
\begin{align}
 \| \nabla \phi (t) \|_2 \leq C_1 , \hspace{1cm} F [\v{A} , \partial_t \v{A}] (t) \leq C_2 , \hspace{1cm} \| \v{A} (t) \|_2 \leq C_3 ( 1 +  t ) , \label{eq:uniform-bounds}  
\end{align}
for all $t \in \I$, where $C_1 , C_2 , C_3 > 0$ are constants depending on $N$, $K$, $\Z$, $\alpha$, and the initial data, but not $\varepsilon$ or $t$. As a consequence, for $m = 1$ and for each fixed $\varepsilon > 0$, the solution $(\phi , \v{A})$ exists for all $t \in \R_+$. 
\end{thm}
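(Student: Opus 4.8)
\emph{Plan and conserved structure.} The plan is to carry out the energy analysis first for the higher-regularity solutions ($m=2$), where the manipulations below are rigorous, and then descend to $m=1$ via the approximation property in Theorem~\ref{thm:local_exist_MBMP_epsilon}. The key preliminary identity is $T_{\mr{P}}[\phi,\tilde{\v A}]+V[\phi]=\langle\phi,\Ham(\v A)\phi\rangle=:h(t)\in\R$, so that the first equation of \eqref{eq:MBMP_epsilon} reads $\partial_t\phi=-(i+\varepsilon)\Ham(\v A)\phi+\varepsilon h\,\phi$. For every $m\in[1,2]$, preservation of complete antisymmetry follows from equivariance and uniqueness: each summand of $\Ham(\v A)$ commutes with permutations of the electron labels and $\varepsilon h\,\phi$ is a real scalar multiple of $\phi$, so for a transposition $\pi$ the function $\pi\phi$ solves \eqref{eq:MBMP_epsilon} with the same antisymmetric datum, and uniqueness in Theorem~\ref{thm:local_exist_MBMP_epsilon} forces $\pi\phi(t)=-\phi(t)$. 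For normalization I would pair the $\phi$-equation with $\phi$, take real parts, and use self-adjointness of $\Ham(\v A)$ and the reality of $h$ to get $\tfrac{\dd}{\dd t}\|\phi\|_2^2=2\varepsilon h(t)(\|\phi\|_2^2-1)$, a scalar linear ODE for $\|\phi\|_2^2-1$ with vanishing initial value; hence $\|\phi(t)\|_2\equiv1$ on $\I$.

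\emph{The dissipation identity for $m=2$.} With the normalization in hand, $\E[\phi,\v A,\partial_t\v A]=h(t)+F[\v A,\partial_t\v A]$. I would differentiate in time and split $\tfrac{\dd h}{\dd t}$ into the contribution of $\partial_t\phi$ and that of $\partial_t\Ham(\v A)$. Substituting the $\phi$-equation and using the reality of $h$, the first contribution equals exactly $-2\varepsilon\bigl(\|\Ham(\v A)\phi\|_2^2-h^2\bigr)$, which is the asserted right-hand side of \eqref{eq:Diss_Energy} and is nonnegative by Cauchy--Schwarz (using $\|\phi\|_2=1$). The second contribution is $2\sum_j\re\langle\gvsig_j\cdot(\v p_j+\tilde{\v A}_j)\phi,(\gvsig_j\cdot\partial_t\tilde{\v A}_j)\phi\rangle=-\tfrac{1}{\alpha}\langle\J_{\!\mr{P}}[\phi,\tilde{\v A}],\partial_t\tilde{\v A}\rangle$ by the definition \eqref{def:prob_current_compact}, whereas an integration by parts (using $\diver\v A=0$) gives $\partial_t F=\tfrac{1}{4\pi\alpha^2}\langle\square\v A,\partial_t\v A\rangle$; inserting the wave equation of \eqref{eq:MBMP_epsilon} and using $\Hproj\partial_t\v A=\partial_t\v A$, $[\Hproj,\Lambda_\varepsilon^{-1}]=0$ and $\Lambda_\varepsilon^{-1}\partial_t\v A=\partial_t\tilde{\v A}$ turns $\partial_t F$ into $+\tfrac{1}{\alpha}\langle\J_{\!\mr{P}}[\phi,\tilde{\v A}],\partial_t\tilde{\v A}\rangle$, so these two cancel. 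Integrating in time yields \eqref{eq:Diss_Energy}; every step is licit at $m=2$, where $\Ham(\v A)\phi\in L^2$ and $\partial_t^2\v A\in L^2$.

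\emph{The $\varepsilon,t$-uniform bounds.} From \eqref{eq:Diss_Energy} the energy is non-increasing, $\E[\phi,\v A,\partial_t\v A](t)\le\E[\phi_0,\v a_0,\dot{\v a}_0]$, and since $\|\Lambda_\varepsilon^{-1}\|_{L^2\to L^2}\le1$ (and $\Lambda_\varepsilon^{-1}$ commutes with $\nabla$), the right-hand side is bounded by a constant $C$ depending on $N,K,\Z,\alpha,\ul{\v R}$ and the data but not on $\varepsilon$. Using $\|\phi(t)\|_2=1$, $\diver\tilde{\v A}(t)=0$ and $\|\curl\tilde{\v A}(t)\|_2\le\|\curl\v A(t)\|_2$, one gets $E_{\mr{P}}[\phi(t),\tilde{\v A}(t),\v{0}]\le\E(t)\le C$, so the family $\{(\phi(t),\tilde{\v A}(t))\}_{t\in\I}\subset\fC_N$ satisfies the hypotheses of Theorem~\ref{lem:bound_on_coulomb}. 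That theorem --- which is precisely where energetic stability of matter, i.e.\ the smallness of $\alpha$ and $\alpha^2\max\Z$ forcing $E^{\mr{G}}_{\mr{P}}>-\infty$, enters --- gives $\sup_{t\in\I}|V[\phi(t)]|\le C$, uniformly in $\varepsilon$. Hence $T_{\mr{P}}[\phi(t),\tilde{\v A}(t)]+F[\v A(t),\partial_t\v A(t)]=\E(t)-V[\phi(t)]\le C$, and since both summands are nonnegative, each is $\le C$: the bound on $F$ is the middle estimate of \eqref{eq:uniform-bounds}, and via \eqref{def:field_energy_potentials} it also yields $\|\partial_t\v A(t)\|_2,\|\nabla\v A(t)\|_2\le C$, whence $\|\v A(t)\|_2\le\|\v a_0\|_2+\int_0^t\|\partial_t\v A(\tau)\|_2\,\dd\tau\le C(1+t)$. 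To bound $\|\nabla\phi(t)\|_2$ I would write $\|\v p_j\phi\|_2\le\|(\v p_j+\tilde{\v A}_j)\phi\|_2+\|\tilde{\v A}_j\phi\|_2$: the Pauli identity $[\gvsig\cdot(\v p+\v A)]^2=(\v p+\v A)^2+\gvsig\cdot\v B$, the diamagnetic inequality for the single-particle densities, and a Young-inequality absorption bound $\|(\v p_j+\tilde{\v A}_j)\phi\|_2$ in terms of $T_{\mr{P}}$ and $\|\curl\tilde{\v A}\|_2$, while $\|\tilde{\v A}_j\phi\|_2$ is controlled by H\"{o}lder's inequality, the Sobolev bound $\|\tilde{\v A}\|_6\lesssim\|\nabla\v A\|_2$, and the same diamagnetic inequality; summing over $j$ gives $\|\nabla\phi(t)\|_2\le C_1$, with constants depending only on $N,K,\Z,\alpha$ and the data.

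\emph{Globality for $m=1$, and the main obstacle.} Given $\X_0^1$-data with $\phi_0$ antisymmetric and $\|\phi_0\|_2=1$, I would approximate it in $H^1\oplus H^1\oplus L^2$ by $\X_0^2$-data with the same symmetry and normalization; the corresponding $m=2$ solutions $(\phi^j,\v A^j)$ satisfy \eqref{eq:Diss_Energy}, and hence \eqref{eq:uniform-bounds}, on their maximal intervals with constants uniform in $j$ and $\varepsilon$. Fixing any $t$ in the maximal interval $\I$ of the limiting ($m=1$) solution, for $j$ large $t$ lies in the interval of $(\phi^j,\v A^j)$ and $\|(\phi^j(t),\v A^j(t),\partial_t\v A^j(t))\|_{H^1\oplus H^1\oplus L^2}\le C(1+t)$; letting $j\to\infty$ and using the approximation property of Theorem~\ref{thm:local_exist_MBMP_epsilon} transfers this bound to the $m=1$ solution. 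Since it holds for every $t\in\I$ and $C(1+t)$ is finite on bounded intervals, the $H^1\oplus H^1\oplus L^2$ norm of the $m=1$ solution cannot blow up in finite time, so the blow-up alternative of Theorem~\ref{thm:local_exist_MBMP_epsilon} forces $T_{\mr{max}}=\infty$. The step I expect to be the main obstacle is the last of the uniform bounds --- controlling the non-magnetic kinetic energy $\|\nabla\phi(t)\|_2$ --- because it hinges on the uniform bound on the Coulomb energy (Theorem~\ref{lem:bound_on_coulomb}), hence on energetic stability, and on a somewhat delicate Sobolev/diamagnetic/Young interpolation argument every constant of which must be kept independent of $\varepsilon$; the estimate $\|\Lambda_\varepsilon^{-1}\|\le1$ is what makes that uniformity possible.
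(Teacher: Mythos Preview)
Your proposal is correct and follows essentially the same route as the paper: normalization via the scalar ODE, the dissipation identity for $m=2$ by cancelling the $\partial_t\tilde{\v A}$ contribution against $\partial_t F$, the uniform bounds via Theorem~\ref{lem:bound_on_coulomb}, and the passage to $m=1$ by approximation plus the blow-up alternative.

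The only notable difference is in the bound on $\|\nabla\phi(t)\|_2$. The paper takes a shorter path: it uses the algebraic identity $\|\gvsig\cdot\v p\,\phi\|_2=\|\v p\,\phi\|_2$ (from the Pauli anticommutation relations) to write directly
\[
\|\v p_1\phi\|_2\le\|\gvsig_1\cdot(\v p_1+\tilde{\v A}_1)\phi\|_2+\|\tilde{\v A}_1\phi\|_2
=\sqrt{T_{\mr P}[\phi,\tilde{\v A}]/N}+\|\tilde{\v A}_1\phi\|_2,
\]
thereby bypassing the Pauli--Zeeman correction $\gvsig\cdot\tilde{\v B}$ altogether. It then controls $\|\tilde{\v A}_1\phi\|_2$ via H\"older, $\|\tilde{\v A}\|_6\lesssim\|\nabla\v A\|_2$, and the Gagliardo--Nirenberg inequality $\|f\|_3\lesssim\|f\|_2^{1/2}\|\v p f\|_2^{1/2}$ (Lemma~\ref{lem:GN_inequality}) together with Young's inequality to produce a small multiple of $\|\v p_1\phi\|_2$ that is absorbed on the left; the diamagnetic inequality is not used. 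Your detour through $\|(\v p_j+\tilde{\v A}_j)\phi\|_2$ and the Pauli identity is valid but requires an extra step to handle the $\langle\phi,\gvsig_j\cdot\tilde{\v B}_j\phi\rangle$ term; the paper's trick makes this unnecessary.
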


The are two reasons the bounds (\ref{eq:uniform-bounds}) are important. First, and as already mentioned in the paragraph preceding Theorem \ref{thm:MBMP_epsilon_dissipation-laws}, for each fixed $\varepsilon > 0$, it is necessary to have time-independent bounds on $(\phi (t) , \v{A} (t) , \partial_t \v{A} (t))$ in $H^1 \times H^1 \times L^2$-norm in order to apply the blow-up alternative of Theorem \ref{thm:local_exist_MBMP_epsilon} and assert the $m = 1$ solutions of Theorem \ref{thm:local_exist_MBMP_epsilon} exist for all time. Second, in order to apply a compactness argument to take the $\varepsilon \rightarrow 0$ limit, we need $\varepsilon$-independent bounds on $(\phi (t) , \v{A} (t) , \partial_t \v{A} (t))$ in $H^1 \times H^1 \times L^2$-norm to apply the Banach-Alaoglu Theorem and extract a weak$^*$ converging subsequence. This weak$^*$ limit will be shown to be a finite-energy weak solution to (\ref{eq:MBMP}), thus yielding a proof of Theorem \ref{thm:weak_solns_MBMP}. We emphasize that the complete antisymmetry and normalization of $\phi (t)$ is crucial, as otherwise we cannot make use of the stability result (\ref{eq:general_stability_estimate}) to control the Coulomb energy.

This thesis is organized as follows. In Chapter \ref{chap:prelim} we introduce some notation and review several definitions that will help set the mathematical stage on which the well-posedness of (\ref{eq:MBMP}) is studied. Chapter \ref{chap:stability} is concerned with reviewing the concept of the energetic stability of matter and the results concerning it that are relevant for this thesis. Chapter \ref{chap:othermodels} is presented to further motivate our study of (\ref{eq:MBMP}) by discussing other possible models for an electron interacting with its self-generated electromagnetic field and the mathematical literature devoted to studying these models.
Chapter \ref{chap:epsilon_system} is devoted to studying the $\varepsilon$-modified MBMP equations and proving the main theorems of this thesis. We conclude with Chapter \ref{chap:open} which discusses some interesting open problems regarding the MBMP equations.

\chapter{Mathematical Preliminaries}\label{chap:prelim}

\section{Notation and Definitions}\label{sec:notation}

In this section we review the notation that will be employed throughout the rest of the thesis. In doing so we will also review the definitions of various function spaces that will appear. The majority of this material has been pulled from the standard references \cite{RS72, Adams75, RS75, Barbu10, Evans10, LL01}. Also, in the majority of the thesis we will not be concerned with keeping track of exact constants appearing in various inequalities. Therefore, if $a , b \in \R$, $a \lesssim b$ means that there exists a constant $C > 0$, independent of $a$ and $b$, such that $a \leq C b$. Occasionally, we may choose to emphasize that the implied constant in $a \lesssim b$ depends on a particular quantity, say $\varepsilon$, and for this we will write $a \lesssim_{\varepsilon} b$. We begin with some general definitions and then specialize to specific cases that we will be mostly concerned about. 

Let $( X , \| \cdot \|_X )$ be a Banach space. The dual of $X$ will be denoted by $(X^{\ast} , \| \cdot \|_{X^{\ast}})$ where 
\begin{align*}
\| \cdot \|_{X^{\ast}} := \sup{ \{ | \langle \cdot , u \rangle_{X^{\ast} , X} | ~ : ~ \|u\|_{X} \leq 1 \} } ,
\end{align*}
with $\langle \cdot , \cdot \rangle_{X^{\ast} , X}$ denoting the pairing between $X^{\ast}$ and $X$. Note we've made the "physicist choice" for the pairing between $X^{\ast}$ and $X$: $\langle \cdot , \cdot \rangle_{X^{\ast} , X}$ is anti-linear in the first argument and linear in the second. Frequently, it will be the case that $X$ is a Hilbert space and in this case we will denote the inner product on $X$ as $\langle \cdot , \cdot \rangle_X$, where to norm is given by $\| \cdot \|_X = \sqrt{ \langle \cdot , \cdot \rangle_X }$. We will let $\B (X ; Y)$ denote the set of bounded operators between one normed linear space $X$ to another $Y$, which is itself a Banach space under the usual operator norm when $Y$ is complete. When $X = Y$ we will simply write $\B (X) \equiv \B (X ; X)$. 

Let $\Omega \subset \R^d$ be a Lebesgue measurable set. For $p \in [1, \infty]$, $L^p (\Omega ; X)$ will denote the usual Lebesgue space of strongly measurable functions $f : \Omega \rightarrow X$ with the property that
\begin{align*}
\| f \|_{L^p (\Omega ; X)} = \left\lbrace \begin{array}{lcr}
\displaystyle \left( \int_{\Omega} \| f (\v{x}) \|_X^p \dd \v{x} \right)^{\frac{1}{p}} & \mr{for} & 1 \leq p < \infty \\
\displaystyle \esssup_{\v{x} \in \Omega} \| f (\v{x}) \|_X & \mr{for} & p = \infty
\end{array} \right. 
\end{align*}
is finite. An important special case is when $p = 2$ and $X$ is a Hilbert space with inner product $\langle \cdot , \cdot \rangle_{X}$. In this case $L^2 (\Omega ; X)$ is itself a Hilbert space where for $f , g \in L^{2} (\Omega ; X)$ the inner product is defined as
\begin{align*}
\langle f , g \rangle_{L^{2} (\Omega ; X)} = \int_{\Omega} \langle f (\v{x}) , g (\v{x}) \rangle_{X} \dd \v{x} . 
\end{align*}
The symbol $L^p_{\mr{loc}} ( \Omega ; X )$ will denote the space of locally integrable function $\Omega \rightarrow X$. Specifically,
\begin{align*}
L_{\mr{loc}}^p ( \Omega ; X ) = \left\lbrace f : \Omega \rightarrow X  ~ : ~ \int_K \|f (\v{x}) \|_X^p < \infty , ~ \forall K \subset \Omega ~ \mr{compact} \right\rbrace  .
\end{align*}

In a similar fashion, the usual Sobolev space $W^{m,p} (\Omega ; X)$ for $p \in [1 , \infty]$ and $m \in \N_0$ is the space of functions $f \in L^p (\Omega ; X)$ whose distributional partial derivatives $\partial_1^{k_1} \cdots \partial_d^{k_d} f \in L^p (\Omega ; X)$ for all $k_1 + \cdots + k_d \leq m$, where $k_j \in \N_0$ for $j \in \{1 , \cdots , d\}$. We equip $W^{m,p} (\Omega ; X)$ with the usual norm
\begin{align*}
\| f \|_{W^{m,p} (\Omega ; X)} = \left\lbrace \begin{array}{lcr}
\displaystyle \left( \sum_{k_1 + \cdots + k_d \leq m} \int_{\Omega} \| \partial_1^{k_1} \cdots \partial_d^{k_d} f (\v{x}) \|_X^p \dd \v{x} \right)^{\frac{1}{p}} & \mr{for} & 1 \leq p < \infty \\
\displaystyle \sum_{k_1 + \cdots + k_d \leq m} \esssup_{\v{x} \in \Omega} \| \partial_1^{k_1} \cdots \partial_d^{k_d} f (\v{x}) \|_X & \mr{for} & p = \infty .
\end{array} \right. 
\end{align*}
If $p = 2$ and $X$ is a Hilbert space, then $H^m (\Omega ; X) := W^{m,2} (\Omega; X)$ is itself a Hilbert space with the obvious inner product. We note $L^p \equiv W^{0,p}$ and $L^2 \equiv H^0$. We remind ourselves that, when $X$ is reflexive, for $m \geq 0$ and $p \in [1,\infty )$, the dual of $W^{m,p} (\Omega ; X)$ is isometrically anti-isomorphic to $W^{-m , p'} (\Omega ; X^{\ast})$, where
\begin{align*}
p' = \left\lbrace \begin{array}{ccl}
\infty & \mr{for} & p = 1 \\[1.2ex]
\dfrac{p}{p-1} & \mr{for} & p \in (1 , \infty ) \\[1.5ex]
1 & \mr{for} & p = \infty 
\end{array} \right.
\end{align*}
is the H\"{o}lder conjugate exponent to $p$. When $\Omega = [0,T]$ for some $T > 0$, we will often write $L^p_{\Omega} X = L^p (\Omega ; X)$ and $W^{m,p}_{\Omega} X = W^{m,p} (\Omega ; X)$, or simply $L^p_T X = L^p (\Omega ; X)$ and $W^{m,p}_T X = W^{m,p} (\Omega ; X)$. If $\Omega$ and $X$ are to be understood from a given context, we will simply write $L^p$ and $W^{m,p}$ in place of $L^p (\Omega ; X)$ and $W^{m,p} (\Omega ; X)$, respectively.

For $\Omega \subset \R^d$ open, we denote the space of strongly continuous mappings from $\Omega \rightarrow X$ as $C (\Omega ; X)$ and $C^1 (\Omega ; X)$ as the subspace of maps $f \in C (\Omega ; X)$ whose strong partial derivatives satisfy $\partial_1 f , \cdots , \partial_d f \in C (\Omega ; X)$. Continuing in this way one defines $C^{\infty} (\Omega ; X)$ and $C^{\infty}_c (\Omega ; X)$ as the space of infinitely differentiable and compactly supported, infinitely differentiable maps $\Omega \rightarrow X$, respectively. We let $C^{\mr{w}} (\Omega ; X)$ denote the space of weakly continuous mappings from $\Omega \rightarrow X$. In particular, $f \in C^{\mr{w}} (\Omega ; X)$ if for $\{ \v{x}_n \}_{n \geq 1} \subset \Omega$ with $\v{x}_n \rightarrow \v{x} \in \Omega$, then
\begin{align*}
\lim_{n \rightarrow \infty} | \langle L ,  f (\v{x}_n) - \phi (\v{x}) \rangle_{X^{\ast} , X} | = 0 , \hspace{1cm} \forall L \in X^{\ast}. 
\end{align*}
When $\Omega = [0,T]$ for some $T > 0$, we will often write $C_{\Omega} X = C(\Omega ; X)$, $C_{\Omega}^1 X = C^1 (\Omega ; X)$ and $C^{\mr{w}}_{\Omega} X = C^{\mr{w}} (\Omega ; X)$, or simply $C_T X = C (\Omega ; X)$, $C_T^1 X = C^1 (\Omega ; X)$ and $C^{\mr{w}}_T X = C^{\mr{w}} (\Omega ; X)$.

The space of test functions from $\Omega \rightarrow X$ will be denote by $\fD (\Omega ; X)$. In particular, $\fD (\Omega ; X)$ consists of all the functions in $C^{\infty}_c (\Omega ; X)$ equipped with the following notion of convergence. A sequence $\{ \phi^n \}_{n \geq 1} \subset C^{\infty}_c (\Omega ; X)$ converges in $\fD (\Omega ; X)$ to the function $\phi \in C^{\infty}_c (\Omega ; X)$ if and only if there is some fixed, compact $K \subset \Omega$ such that the support of $\phi^n - \phi$ is in $K$ for all $n \in \N$ and, for all $m \in \N_0$, $\|\partial_t^m \phi^n  - \partial_t^m \phi \|_{L_{\Omega}^{\infty} X} \rightarrow 0$ as $n \rightarrow \infty$. Then, $\fD' (\Omega ; X)$ denotes the space of distributions from $\Omega \rightarrow X$. Specifically, $\fD' (\Omega ; X)$ is the set of strongly continuous linear maps from $C_{c}^{\infty} (\Omega ; \C)  \rightarrow X$. When $g \in L_{\mr{loc}}^1 (\Omega ; X)$ we denote the corresponding distribution in $\fD' (\Omega ; X)$ defined via the formula
\begin{align*}
C_{c}^{\infty} (\Omega ; \C) \ni \phi \longmapsto \int_{\Omega} g (\v{x}) \phi (\v{x}) \dd \v{x} \in X
\end{align*}
by the same symbol. 

Very often $\Omega$ is all of $\R^d$ and $X$ is either $\R^n$, $\C^{m}$, a Sobolev space $W^{m,p}$, or a direct sum of Sobolev spaces $W^{m_1 , p_1} \oplus W^{m_2 , p_2}$. For this reason we introduce some special notations we employ in this setting (some of this notation is highlighted at the beginning of this thesis). First of all, when $X = \R^n$ or $\C^m$ we simply write $\| \cdot \|_X = | \cdot |$ where $| \cdot |$ is the usual Euclidean distance on $\R^n$ or $\C^m$. More importantly, we will abbreviate $\| \cdot \|_{p}$, $\| \cdot \|_{m , p}$, $\| \cdot \|_{q ; m , p}$, and $\| \cdot \|_{q_1 ; m_1 , p_1 \oplus q_2 ; m_2 , p_2}$ for the norms on $L^p$, $W^{m,p}$, $L^q W^{m,p}$, and $L^{q_1} W^{m_1 , p_1} \oplus L^{q_2} W^{m_2 , p_2}$, respectively. This notation comes with the understanding that $\| \cdot \|_{p} \equiv \| \cdot \|_{0 , p} \equiv \| \cdot \|_{0 ; 0 , p}$ and $\| \cdot \|_{q ; p} \equiv \| \cdot \|_{q ; 0 , p}$. 

When discussing many-body wave functions $\psi : \R^{3N} \rightarrow \C^{2^N}$, with $N \geq 1$, we always consider 
\begin{align*}
L^p (\R^{3N} ; \C^{2^N}) \simeq [L^p (\R^{3N})]^{2^N} \equiv \bigotimes^N [L^p (\R^3)]^2
\end{align*}
through the canonical isomorphism, and we recall that $\BWN{L^p(\R^3;\C^2)}$ denotes the closed subspace of $\bigotimes^N [L^p (\R^3)]^2$ consisting of completely antisymmetric many-body wave functions. The $L^p$-norm of a many-body wave function $\psi \in L^p (\R^{3N} ; \C^{2^N})$ will be denoted as
\begin{align*}
\| \psi \|_p^p = \int |\psi (\ul{\v{z}})|^p \dd \ul{\v{z}} \equiv \sum_{\ul{\v{s}} \in \{\uparrow , \downarrow \}^N} \int_{\R^{3N}} | \psi ( \v{x}_1 , s_1 ; \cdots ; \v{x}_N , s_N ) |^p \dd \ul{\v{x}} ,
\end{align*}
where $\ul{\v{s}} = (s_1 , \cdots , s_N)$. When considering vector fields $\v{A} \in L^p (\R^d ; \R^d)$, with components $\v{A} = (A^1 , \cdots , A^d)$, we define $\| \nabla \v{A} \|_{p}$ as
\begin{align*}
\| \nabla \v{A} \|_p^p = \sum_{j = 1}^d \| \nabla A^j \|_{p}^p  = \sum_{i,j = 1}^d \int_{\R^d} | \partial_i A^j (\v{x}) |^p  \dd \v{x} .
\end{align*}
Regarding the magnetic vector potential $\v{A} : \R^3 \rightarrow \R^3$ in particular, we will frequently use the identity $\| \curl{ \v{A} } \|_2 = \| \nabla \v{A} \|_2$, when $\diver{\v{A}} = 0$ and $\v{A} \in \dot{H}^1 (\R^3 ; \R^3)$ (see the discussion below for the definition of $\dot{H}^1 (\R^3 ; \R^3)$). This follows from the well-known vector calculus identity $\curl \curl = \nabla \diver - \Delta$. Concerning the Laplacian, unless otherwise specified, we will always use the notation $\Delta$ and assume the reader will understand what is meant from the given context. For example, when $f : \R^d \rightarrow \R$ and $g : \R^n \rightarrow \R$ we write $\Delta f$ and $\Delta g$ with the understanding that $\Delta f = \sum_{j = 1}^d \partial_j^2 f$ and $\Delta g = \sum_{j = 1}^n \partial_j^n g$. 

We will also make use of fractional order Sobolev spaces. Let $\fS (\R^d ; \C)$ denote the usual space of Schwartz functions. Specifically, $f \in \fS (\R^d ; \C)$ satisfies $f \in C^{\infty} (\R^d ; \C)$ and 
\begin{align*}
 \sup_{\v{x} \in \R^d} |x_1^{\mu_1} \cdots x_d^{\mu_d} \partial_1^{\nu_1} \cdots \partial_d^{\nu_d} f (\v{x})| < \infty 
\end{align*}
for all $\mu_1 , \cdots , \mu_d , \nu_1 , \cdots , \nu_d \in \N_0$. In words, $\fS (\R^d ; \C)$ is the space of smooth functions whose derivatives of all orders decreasing more rapidly than any polynomial. Let $\fS' (\R^d ; \C)$ denote the space of temper distributions. The Fourier transform $\cF : \fS' (\R^d ; \C) \rightarrow \fS' (\R^d ; \C)$ of $L \in \fS' (\R^d ; \C)$ is defined as $(\cF L) (f) = L (\cF f)$ for all $f \in \fS (\R^d ; \C)$, where 
\begin{align*}
(\cF f) (\v{k}) = \int_{\R^d} e^{- 2 \pi i \v{k} \cdot \v{x}} f (\v{x}) \dd \v{x} . 
\end{align*}
For $s \geq 0$ and $p \in [ 1 , \infty ]$, $W^{s,p} (\R^d ; \C)$ is the fractional order Sobolev space defined by
\begin{align*}
W^{s,p} (\R^d ; \C) = \left\lbrace f \in \fS' (\R^d ; \C) ~ : ~ \| \Lambda^s f \|_p < \infty \right\rbrace ,
\end{align*}
where $\Lambda = \sqrt{1 - \Delta}$. For $f \in \fS' (\R^d ; \C)$, the operator $\Lambda^s$ is defined via
\begin{align*}
\Lambda^s f = \cF^{-1} ( 1 + 4 \pi^2 | \cdot |^2 )^{s/2} \cF f .
\end{align*}
We equip $W^{s,p} (\R^d ; \C)$ with the norm $\| f \|_{s,p} = \| \Lambda^s f \|_{p}$, which turns $(W^{s,p} (\R^d ; \C ) , \| \cdot \|_{s,p})$ into a Banach space. For integer $s$ this definition coincides with the Sobolev spaces defined early. An important special case is $p = 2$, in which case $H^s (\R^d ; \C) := W^{s,2} ( \R^d ; \C )$ is a Hilbert space with inner product
\begin{align*}
\langle f , g \rangle_{H^s} = \langle \Lambda^s f , \Lambda^s g \rangle_{L^2} .
\end{align*}
The fractional order homogeneous Sobolev space $\dot{W}^{s , p} (\R^d ; \C)$ is the space of function $f : \R^d \rightarrow \C$ equipped with the seminorm $\| \dot{\Lambda}^s f \|_{p}$ where $\dot{\Lambda} = \sqrt{- \Delta}$. For $p = 2$ we will use the obvious notation $\dot{H}^s (\R^d ; \C) := \dot{W}^{s,2} (\R^d ; \C)$. For $p \in [ 1 , \infty )$ the dual of $W^{s,p} (\R^d ; \C)$ is (isometrically anti-isomorphic to) $W^{-s , p'} (\R^d ; \C)$. This dual space is equip with the norm
\begin{align*}
\| f \|_{-s , p'} = \sup{ \left\lbrace | \langle f , \eta \rangle_{L^2} | ~ : ~ \| \eta \|_{s,p} \leq 1 \right\rbrace } .
\end{align*}
As before, if $p = 2$, then we denote $H^{-s} (\R^d ; \C) := W^{-s , 2} (\R^d ; \C)$. We note that for $f \in H^{-s}$ and $\eta \in H^s$, $\langle f , \eta \rangle_{H^{-s} , H^s} = \langle \Lambda^{-s} f , \Lambda^s \eta \rangle_{L^2}$, and recall that if $f \in L^2 \subset H^{-s}$, then $\langle f ,  \eta \rangle_{H^{-s} , H^s} = \langle f ,  \eta \rangle_{L^2}$. 

With these notational preliminaries sorted out, we can clarify what we mean by a solution to (\ref{eq:MBMP}) and (\ref{eq:MBMP_epsilon}). The \textbf{weak} solution 
\begin{align*}
(\psi , \v{A} , \partial_t \v{A}) \in C^{\mr{w}} ( \R_+ ; \BWN{H^1(\R^3;\C^2)} \times H^1 (\R^3 ; \R^3)  \times L^2 (\R^3 ; \R^3) )
\end{align*}
to (\ref{eq:MBMP}) appearing in Theorem \ref{thm:weak_solns_MBMP} is a distributional solution $(\psi , \v{A})$ in
\begin{align*}
\fD' ( \R_+ ; \BWN{H^{-1}(\R^3;\C^2)} \times H^{-1} (\R^{3} ; \R^3) ) .
\end{align*}
In particular, the solution $(\psi , \v{A})$ satisfies
\begin{align*}
& ~ \int_0^{\infty} \langle \xi , \psi (s) i \partial_t f(s) \rangle_{L^2} \dd s + \int_0^{\infty} \sum_{i = 1}^N \langle \nabla_{\v{x}_i} \xi , f(s) \nabla_{\v{x}_i} \psi (s) \rangle_{L^2} \dd s \\
& \hspace{5cm} = - \int_0^{\infty} \langle \xi , f(s) [ \fL (\v{A} (s)) - V(\ul{\v{R}} , \Z) ] \psi (s) \rangle_{L^2} \dd s , \\
& \int_0^{\infty} \sum_{k = 1}^3 \langle \partial_k \eta , f(s) \partial_k \v{A} (s) \rangle_{L^2} \dd s - \alpha^2 \int_0^{\infty} \langle \eta , \partial_t f(s) \partial_t \v{A} (s) \rangle_{L^2} \dd s \\
& \hspace{5cm} = 4 \pi \alpha \int_0^{\infty} \langle \eta , f (s) \Hproj{ \J_{\!\mr{P}} [\psi (s) , \v{A} (s)] } \rangle_{L^2} \dd s ,
\end{align*}
for all $f \in C_c^{\infty} (\R_+)$, $\xi \in [H^1 (\R^{3N})]^{2^N}$, and $\eta \in H^1 (\R^3 ; \R^3)$, where $\fL (\v{A}) = \sum_{j = 1}^N \cL_j (\v{A})$ and $\cL_j (\v{A}) = 2 \v{A}_j \cdot \v{p}_j + | \v{A}_j |^2 + \gvsig_j \cdot \v{B}_j$.

The solutions $(\phi , \v{A}) \in C_{\I} [ H^m (\R^{3N}) ]^{2^N} \times [ C_{\I} H^m (\R^3 ; \R^3) \cap C^1_{\I} H^{m-1} (\R^3 ; \R^3) ]$, where $\I = [0,T]$, constructed in Theorem \ref{thm:local_exist_MBMP_epsilon} are considered to satisfy the integrated versions of (\ref{eq:MBMP_epsilon}): 
\begin{align*}
\left\lbrace \begin{array}{l}
\phi (t) = e^{(i + \varepsilon) t \Delta} \phi_0 + \int_0^t e^{(i + \varepsilon) (t - \tau) \Delta} f[\phi (\tau) , \tilde{\v{A}} (\tau)] \dd \tau \\
\v{A} (t) = \dot{\s}( t / \alpha ) \v{a}_0 + \alpha ~ \s(t/\alpha) \dot{\v{a}}_0 + 4 \pi \int_0^t \s ( (t-\tau) / \alpha ) \Lambda_{\varepsilon}^{-1} \Hproj{ \J_{\!\mr{P}} [ \phi (\tau) , \tilde{\v{A}} (\tau) ] } \dd \tau ,
\end{array} \right.  
\end{align*}
where
\begin{align*}
f[\phi , \tilde{\v{A}}]  = - (i + \varepsilon) \left( \fL ( \tilde{\v{A}} ) + V(\ul{\v{R}} , \Z) \right) \phi + \varepsilon \left( T_{\mr{P}} [ \phi , \tilde{\v{A}} ] + V [\phi ] \right) \phi , 
\end{align*}
and $e^{(i + \varepsilon) t \Delta}$, $\s (t) = \sin{\left(\sqrt{ - \Delta} t \right)}/\sqrt{ - \Delta}$, and $\dot{\s} (t) = \cos{\left(\sqrt{ - \Delta} t \right)}$ are all defined by their Fourier multipliers (or, equivalently, as convolutions against the respective kernels). In particular, $(\phi , \v{A})$ satisfy (\ref{eq:MBMP_epsilon}) pointwise a.e. when $m = 2$. 

\section{Fundamental Results}\label{sec:fundamental_results}

The proofs of Theorem \ref{thm:local_exist_MBMP_epsilon}, Theorem \ref{thm:MBMP_epsilon_dissipation-laws}, and, ultimately, Theorem \ref{thm:weak_solns_MBMP} will heavily rely on several inequalities involving Sobolev spaces and properties of solutions to two well-studied PDEs: the heat and wave equation. The inequalities needed include the Sobolev inequalities, the Kato-Ponce commutator estimates, the dispersive estimates for the heat kernel, and the Strichartz estimates for the wave equation. In particular, such results will be used repeatedly in \S\ref{sec:estimates} in the proofs of Lemmas \ref{lem:Estimates-Pauli}-\ref{lem:Estimates_Prob_current}, and in \S\ref{sec:metric_space_linearization}. For this reason, we devote an entire section to a detailed discussion of these facts. We claim no originality regarding the material in this section. 

Sobolev inequalities have come to mean an $L^q$-estimate of lower order derivatives of a function in terms of high order derivatives in other $L^p$ spaces. These inequalities are completely standard and used widely in the analysis of PDEs, but they are still worth recalling here. For $d \in \N$, $s \geq 0$, and $p \in [ 1 , \infty )$ satisfying $sp < d$, we define
\begin{align*}
p^* = p^* (d,s,p) = \frac{dp}{d-sp} .
\end{align*}   
In the following Lemma we state the Sobolev inequalities that will be used in this thesis. 
\begin{lem}[Sobolev Inequalities]\label{lem:sobolev}
Let $s \geq 0$, $1 \leq p \leq q$, and $f \in W^{s,p} (\R^d)$.
\begin{enumerate}
\item If $sp < d$, then $\| f \|_{q} \lesssim \| f \|_{s , p}$ when $q \leq p^*$.
\item If $sp = d$, then $\| f \|_{q} \lesssim \| f \|_{s , p}$ when $q < \infty$.
\item If $sp > d$, then $\| f \|_{\infty} \lesssim \| f \|_{s,p}$.
\end{enumerate}
\end{lem}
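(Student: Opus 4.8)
The plan is to reduce all three cases to mapping properties of the Bessel potential operator $\Lambda^{-s} = (1-\Delta)^{-s/2}$, which is convolution against the Bessel kernel $G_s$, $\cF G_s = (1 + 4\pi^2 |\cdot|^2)^{-s/2}$. Since $f = G_s * (\Lambda^s f)$, one has $\| f \|_q \le \| G_s * \,\cdot\, \|_{L^p \to L^q}\, \| \Lambda^s f \|_p = \| G_s * \,\cdot\, \|_{L^p \to L^q}\, \| f \|_{s,p}$, so it suffices to estimate the $L^p \to L^q$ operator norm of convolution by $G_s$. The elementary facts I would use (for $0 < s < d$; the cases $s = 0$ and $s \ge d$ being trivial or strictly easier) are: $G_s \ge 0$; $G_s(\v{x}) \le C |\v{x}|^{s-d}$ for $|\v{x}|$ small, with exponential decay as $|\v{x}| \to \infty$; hence $G_s \in L^r(\R^d)$ exactly for $1 \le r < d/(d-s)$; and $\int G_s = \cF G_s(0) = 1$, so $G_s \in L^1$ in particular.

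First I would dispose of the two non-subcritical cases, which follow directly from Young's inequality. When $sp > d$, a computation gives $p' < d/(d-s)$, so $G_s \in L^{p'}$ and H\"older's inequality yields $\| G_s * g \|_\infty \le \| G_s \|_{p'} \| g \|_p$; this is (3). When $sp = d$, we have $p' = d/(d-s)$, so $G_s \in L^r$ for every $r \in [1, p')$; Young's inequality then gives $\| G_s * g \|_q \lesssim \| g \|_p$ whenever $1 + q^{-1} = r^{-1} + p^{-1}$ for some such $r$, which — together with the $L^p \to L^p$ bound $\| G_s * g\|_p \le \| G_s \|_1 \| g \|_p$ — exhausts the range $q \in [p, \infty)$ and proves (2).

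For the subcritical case (1), the pointwise bound $0 \le G_s(\v{x}) \lesssim |\v{x}|^{s-d}$ reduces the endpoint $q = p^{*}$ to the Hardy--Littlewood--Sobolev inequality $\| \, |\cdot|^{s-d} * g \,\|_{p^{*}} \lesssim \| g \|_p$, $(p^{*})^{-1} = p^{-1} - s d^{-1}$, which holds for $p > 1$; interpolating this with the $L^p \to L^p$ bound above gives $\| f \|_q \lesssim \| \Lambda^s f \|_p = \| f \|_{s,p}$ for every $q \in [p, p^{*}]$, which is exactly the asserted range since we assume $p \le q$. At the endpoint $p = 1$, where Hardy--Littlewood--Sobolev degenerates to a weak-type bound, one instead appeals to the classical Gagliardo--Nirenberg argument (for integer $s$, iterating $\| f \|_{d/(d-1)} \lesssim \| \nabla f \|_1$), together with $\| f \|_1 \le \| G_s \|_1 \| \Lambda^s f \|_1$ for the lower exponents. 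I expect this $p = 1$ endpoint to be the only genuinely delicate point; the main structural obstacle throughout is that the spaces $L^p(\R^d)$ are not nested, so one cannot deduce $\| f \|_q \lesssim \| f \|_{p^{*}}$ for $q < p^{*}$ for free — this is why one must always carry the $L^p \to L^p$ boundedness of $G_s * \,\cdot\,$ alongside the critical estimate and interpolate, rather than argue by inclusion.
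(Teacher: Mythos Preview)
The paper does not give its own proof of this lemma; it merely states the result and refers the reader to standard sources (Lieb--Loss, Adams, Evans, and Di Nezza--Palatucci--Valdinoci for the fractional case), explicitly disclaiming originality for the material in \S\ref{sec:fundamental_results}. So there is nothing to compare your argument against line-by-line.

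That said, your Bessel-potential route is one of the standard proofs and is essentially correct. Writing $f = G_s * \Lambda^s f$ and then bounding convolution by $G_s$ via Young's inequality (cases (2)--(3)) or Hardy--Littlewood--Sobolev at the critical exponent followed by interpolation with the trivial $L^p \to L^p$ bound (case (1)) is exactly the argument in Stein's \emph{Singular Integrals} and in Chapter~VII of the Adams reference the paper cites. Your identification of the $p=1$ endpoint as the only delicate spot is accurate: HLS degenerates to a weak-type estimate there, and for integer $s$ one recovers the strong bound by iterating the Gagliardo--Nirenberg inequality $\|f\|_{d/(d-1)} \lesssim \|\nabla f\|_1$ as you say. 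For genuinely fractional $s$ at $p=1$ the Bessel-potential space $W^{s,1}$ is somewhat pathological and the embedding is subtler, but the paper never uses $p=1$ (the smallest Lebesgue exponent appearing is $\tfrac{5}{4}$), so this does not affect anything downstream.
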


An important special case of the Sobolev inequalities above is $d = 3$, $s = 1$, and $p = 2$. Then $p^* = 6$ and we arrive at the Sobolev inequality for gradients: $S_3 \| f \|_6^2 \leq \| \nabla f \|_2^2$ where $S_3$ is the sharp constant and is given by $S_3 = 3 \left( \pi / 2 \right)^{4/3}$. Equality is achieved if and only if $f$ is a multiple of the function $1/ \sqrt{\mu^2 + | \v{x} - \v{a} |^2}$ with $\mu > 0$ and $\v{a} \in \R^3$ arbitrary (see \cite[Theorem 8.3]{LL01} for a detailed discussion and proof). This case plays a special role in Chapter \ref{chap:stability} where it is used to demonstrate energetic stability of the Hydrogenic atom. Proofs of the \textit{sharp} Sobolev inequality for $s = \frac{1}{2} , 1$ and $p = 2$ may be found in \cite[Theorem 8.3, 8.4]{LL01}, whereas the sharp constants for the $s = 1$ and general $p,q$ case were derived in \cite{Tal76}. The proof of the more general case for integer $s \geq 1$ and $p , q \in [1 , \infty )$ may be found in, for example, \cite[Chapter IV]{Adams75} or \cite[Chapter 5]{Evans10}. For the case of non-integer $s > 0$ see, for example, \cite[Chapter VII]{Adams75} or \cite[Theorem 6.5]{DnPV12}. A generalization of the Sobolev inequalities which will also be useful to us later on are the Gagliardo–Nirenberg interpolation inequalities. We do not state these inequalities in their most general form. For this we refer the reader to \cite[Chapter 12]{Leoni17}, where one may also find a detailed proof of the special case below \cite[Theorem 12.83]{Leoni17}. 
\begin{lem}[Gagliardo–Nirenberg Inequalities]\label{lem:GN_inequality}
Let $\theta \in [0,1]$. For every $f \in L^q (\R^d) \cap \dot{W}^{1,p} (\R^d)$ with $p,q \in [1 , \infty ]$ we have the inequality 
\begin{align*}
\| f \|_r \lesssim \| f \|_q^{\theta} \| \nabla f \|_p^{1 - \theta} \hspace{5mm} \text{where} \hspace{5mm} \frac{1}{r} = \theta \frac{1}{q} + (1 - \theta) \left( \frac{1}{p} - \frac{1}{d} \right)  ,
\end{align*} 
with the exception that if $p < d$ and $q = \infty$ we assume $f$ vanishes at $\infty$ and if $p = d > 1$ we take $\theta \in (0 , 1]$.
\end{lem}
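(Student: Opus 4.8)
The plan is to derive this from the endpoint Sobolev inequality of Lemma~\ref{lem:sobolev} together with the elementary interpolation (log-convexity) of Lebesgue norms. First I would dispose of the endpoints: for $\theta = 1$ the inequality is the trivial bound $\|f\|_q \le \|f\|_q$, while for $\theta = 0$ it is the homogeneous Sobolev inequality $\|f\|_{dp/(d-p)} \lesssim \|\nabla f\|_p$ (when $p<d$) or the corresponding Morrey bound $\|f\|_\infty \lesssim \|\nabla f\|_p$ (when $p>d$), both already contained in Lemma~\ref{lem:sobolev}. So it remains to treat $\theta \in (0,1)$.

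For the main case $1 \le p < d$ I would set $p^\ast = \tfrac{dp}{d-p}$, so that $\tfrac{1}{p^\ast} = \tfrac1p - \tfrac1d$ and the hypothesis on $r$ becomes exactly $\tfrac1r = \tfrac{\theta}{q} + \tfrac{1-\theta}{p^\ast}$; in particular $r$ is sandwiched between $q$ and $p^\ast$. Hölder's inequality applied to $|f|^{\theta r}\cdot|f|^{(1-\theta)r}$ then gives the interpolation bound $\|f\|_r \le \|f\|_q^{\theta}\,\|f\|_{p^\ast}^{1-\theta}$ (this works even when $q = \infty$, in which case $(1-\theta)r = p^\ast$). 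Feeding in the homogeneous Sobolev inequality $\|f\|_{p^\ast}\lesssim\|\nabla f\|_p$ — the sharp form of Lemma~\ref{lem:sobolev}(1), see \cite{Tal76}, which is available precisely under the stated vanishing-at-infinity convention when $q=\infty$ — yields $\|f\|_r \lesssim \|f\|_q^{\theta}\|\nabla f\|_p^{1-\theta}$.

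The remaining cases are boundary cases of the same scheme. For $p>d$ one has $\tfrac1p-\tfrac1d<0$; Morrey's inequality gives the scale-invariant estimate $\|f\|_\infty \lesssim \|f\|_q^{\mu}\|\nabla f\|_p^{1-\mu}$ with $\mu$ pinned down by homogeneity, and interpolating between $L^q$ and $L^\infty$ closes the argument. The genuinely delicate case is the critical exponent $p=d>1$, where $\tfrac1p-\tfrac1d=0$ forces $r=q/\theta\ge q$ and the interpolation-against-a-fixed-$L^s$ trick fails, because there is no homogeneous embedding $\dot W^{1,d}\hookrightarrow L^s$ (nonzero constants are a counterexample). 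I would instead bootstrap on the powers $v=|f|^{\beta}$, $\beta>1$: using $|\nabla v|\le \beta|f|^{\beta-1}|\nabla f|$, Hölder, and the already-proved case $p<d$ applied to $v$ with an auxiliary exponent $p_1\uparrow d$, one gets inequalities of the shape $\|f\|_{\beta p_1^\ast}^{\beta}\lesssim\|f\|_{\beta'}^{\beta-1}\|\nabla f\|_d$; choosing the exponents so the output exponent reproduces the input exponent lets this iterate to every $r=q/\theta$ with $\theta\in(0,1]$ — which is exactly where $\theta=0$ must be excluded — and this is the content of \cite[Theorem 12.83]{Leoni17}. Throughout, a density argument reduces to $f\in C_c^\infty(\R^d)$, where the chain rule for $|f|^\beta$ and the representation of $f$ through its gradient are unambiguous.

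The hard part is precisely this $p=d$ endpoint: the direct interpolation breaks down on scaling grounds, and the substitute power-function bootstrap requires care in tracking how the auxiliary Sobolev exponents $p_1^\ast$ blow up as $p_1\to d$, as well as uniform control of the resulting multiplicative constants along the iteration.
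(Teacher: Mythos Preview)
The paper does not actually prove this lemma: it is stated without proof and the reader is referred to \cite[Theorem~12.83]{Leoni17} for the details. Your sketch is the standard textbook route --- H\"older interpolation between $L^q$ and $L^{p^\ast}$ combined with the homogeneous Sobolev embedding for $p<d$, and the power-function bootstrap for the critical case $p=d$ --- and indeed you invoke the very same reference for the delicate endpoint. So there is nothing to compare against; your argument supplies more than the paper does, and is sound for the case $p<d$ (which is the only case the thesis actually uses, namely $d=3$, $p=q=2$, $\theta=1/2$).

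One small quibble: your treatment of $\theta=0$ when $p>d$ is not right as written. The claimed ``Morrey bound $\|f\|_\infty\lesssim\|\nabla f\|_p$'' is false on $\R^d$ (nonzero constants lie in $\dot W^{1,p}$), and in fact $1/r=1/p-1/d<0$ at that endpoint, so the inequality is simply vacuous there rather than a consequence of Morrey. The correct scale-invariant endpoint for $p>d$ mixes in the $L^q$ norm and never reaches $\theta=0$. This does not affect anything downstream in the thesis.
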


An important application of the Sobolev inequalities that will be of relevance to us in the proof of Theorem \ref{thm:local_exist_MBMP_epsilon} in \S\ref{sec:proof_main_result} is the Rellich-Kondrachov Compactness Theorem. For a more complete discussion of the Rellich-Kondrachov Theorem we refer the reader to \cite[Theorem 8.9]{LL01} and \cite[Chapter 5]{Evans10}. Before stating the Rellich-Kondrachov Theorem we recall the fundamental notion of compact embeddings of Banach spaces. For two Banach spaces $X,Y$ with $X \subset Y$, we say $X$ is compactly embedded in $Y$, written $X \hookrightarrow Y$, provided $\| u \|_Y \lesssim \| u \|_X$ for all $u \in X$ and each bounded sequence in $X$ is precompact in $Y$. Precompact means that every bounded sequence in $X$ has a subsequence converging in $Y$. 
\begin{lem}[Rellich-Kondrachov Compactness Theorem]\label{lem:rellich_kon}
Let $\Omega \subset \R^d$ be bounded and open. Assume $\partial \Omega$ is $C^1$ and $1 \leq p < d$. Then $W^{1 ,p} (\Omega) \hookrightarrow L^q (\Omega)$ for each $1 \leq q < p^* = dp / (d-p)$.
\end{lem}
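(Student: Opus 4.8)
The plan is to prove the equivalent sequential statement directly: if $\{u_m\}_{m\geq1}$ is bounded in $W^{1,p}(\Omega)$, then for a fixed $q\in[1,p^*)$ some subsequence converges in $L^q(\Omega)$; the norm inequality $\|u\|_q\lesssim\|u\|_{1,p}$ is then immediate from Lemma \ref{lem:sobolev}(1) together with $|\Omega|<\infty$. First I would reduce to the whole space. Since $\Omega$ is bounded with $C^1$ boundary, there is a bounded extension operator $E:W^{1,p}(\Omega)\to W^{1,p}(\R^d)$ and a fixed bounded open set $V$ with $\overline{\Omega}\subset V$ such that $\supp(Eu)\subset V$ for every $u$. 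Replacing $u_m$ by $Eu_m$ (still uniformly bounded in $W^{1,p}(\R^d)$, with supports in $V$), it suffices to extract an $L^q(V)$-convergent subsequence. By Lemma \ref{lem:sobolev}(1) the sequence is also uniformly bounded in $L^{p^*}(\R^d)$, hence uniformly bounded in $L^r(V)$ for every $r\in[1,p^*]$ because $V$ has finite measure.

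Next, mollification. Let $\eta_\varepsilon$ be a standard mollifier and set $u_m^\varepsilon=\eta_\varepsilon*u_m$. Two facts drive the argument. (i) For each fixed $\varepsilon>0$ the family $\{u_m^\varepsilon\}_m$ is uniformly bounded in $C(\overline{V})$ and uniformly Lipschitz there: by Young/Hölder the bounds on $\|u_m^\varepsilon\|_\infty$ and $\|\nabla u_m^\varepsilon\|_\infty$ are controlled by $\|\eta_\varepsilon\|_{p'}$, $\|\nabla\eta_\varepsilon\|_{p'}$ and the uniform $W^{1,p}$ bound, hence depend on $\varepsilon$ but not on $m$. By Arzel\`a--Ascoli $\{u_m^\varepsilon\}_m$ is precompact in $C(\overline{V})$, and therefore also precompact in $L^q(V)$. (ii) From the representation $u_m^\varepsilon(x)-u_m(x)=-\varepsilon\int\eta(y)\int_0^1 y\cdot\nabla u_m(x-\varepsilon t y)\,dt\,dy$ one gets $\|u_m^\varepsilon-u_m\|_{L^1(V)}\le\varepsilon\|\nabla u_m\|_{L^1(V)}\lesssim\varepsilon$, uniformly in $m$. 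Interpolating (ii) against the uniform $L^{p^*}$ bound, $\|u_m^\varepsilon-u_m\|_{L^q(V)}\le\|u_m^\varepsilon-u_m\|_{L^1(V)}^{\theta}\,\|u_m^\varepsilon-u_m\|_{L^{p^*}(V)}^{1-\theta}$ with $\tfrac1q=\theta+\tfrac{1-\theta}{p^*}$, so $\theta\in(0,1]$ precisely because $q<p^*$, yields $\|u_m^\varepsilon-u_m\|_{L^q(V)}\lesssim\varepsilon^{\theta}$, uniformly in $m$.

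Finally, a Cauchy/diagonal argument. Fix $\delta>0$, choose $\varepsilon$ with $\|u_m^\varepsilon-u_m\|_{L^q(V)}\le\delta/3$ for all $m$; by precompactness of $\{u_m^\varepsilon\}_m$ in $L^q(V)$, pass to a subsequence along which $\|u_j^\varepsilon-u_k^\varepsilon\|_{L^q(V)}\le\delta/3$ for large $j,k$, whence $\|u_j-u_k\|_{L^q(V)}\le\delta$ for large $j,k$ along that subsequence. Running this with $\delta=1/n$ and diagonalizing produces one subsequence that is Cauchy, hence convergent, in $L^q(V)$; restricting to $\Omega$ gives the claim.

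The main obstacle is making step (i) and estimate (ii) genuinely uniform in $m$: one must verify that the sup and Lipschitz bounds on $u_m^\varepsilon$ involve only $\varepsilon$ and the fixed $W^{1,p}$ bound, and that the interpolation exponent $\theta$ stays strictly positive — which is exactly the point at which the strict inequality $q<p^*$ enters. Once these uniform-in-$m$ estimates are secured, the Arzel\`a--Ascoli and diagonalization steps are routine.
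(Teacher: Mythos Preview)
Your proof is correct. Note, however, that the paper does not actually supply a proof of this lemma: it states the result and refers the reader to \cite[Theorem 8.9]{LL01} and \cite[Chapter 5]{Evans10}. Your argument --- extend to $\R^d$, mollify, use Arzel\`a--Ascoli on the mollified family, control the mollification error in $L^1$ and interpolate against the uniform $L^{p^*}$ bound, then diagonalize --- is precisely the standard proof in Evans, so you have reproduced the content of one of the cited references rather than an argument the paper omits for some other reason.
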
 

It is important to note that $q$ is assumed to be strictly less than $p^*$ in the Rellich-Kondrachov Theorem. Indeed, even though the Sobolev inequality guarantees $\| f \|_{L^{p^*} (U)} \lesssim \| f \|_{W^{1,p} (U)}$, it is not the case that every bounded sequence in $W^{1,p} (U)$ has a subsequence that converges in $L^{p^*} (U)$. This is a simple consequence of scaling. Indeed, suppose $f \in W^{1,p} (U)$ for some open set $U \subset \R^d$ and suppose (without loss of generality) $\supp{(f)} \subset B_r (0)$ for some $0 < r \leq 1$. Let $f_{\lambda} (\v{x}) = \lambda^{- d / p^*} f (\v{x} / \lambda)$ and note $\| f_{\lambda} \|_{L^{p^*} (U)} = \| f \|_{L^{p^*} (U)}$, $\| f_{\lambda} \|_{L^p (U)} = \lambda \|f \|_{L^p (U)}$, and $\| \nabla f_{\lambda} \|_{L^p (U)} = \| \nabla f \|_{L^p (U)}$. Hence, for $0 < \lambda \leq r$, the family $f_{\lambda}$ is uniformly bounded in $W^{1,p} (U)$, converges pointwise a.e$.$ to $0$ at $\lambda \rightarrow 0$, but has constant $L^{p^*}$-norm. Consequently, there can be no subsequence of $f_{\lambda}$ converging in $L^{p^*} (U)$. 

On a different note, the Rellich-Kondrachov Theorem in the case $p = 2$ and $q = 4$ will allow us to verify the hypothesis of the Aubin-Lions Lemma, which we now state in generality. Our formulation of the Aubin-Lions Lemma is based on \cite[Theorem 1.20]{Barbu10} and we refer the reader to this text and the references therein for a discussion.
\begin{lem}[Aubin-Lions]\label{lem:aubin_lions}
Let $X_0 , X_1 , X_2$ be reflexive Banach spaces such that $X_0 \hookrightarrow X_1 \subset X_2$. Let $1 < p_i < \infty$, $i \in \{0, 1\}$, and $\I = [a,b] \subset \R$. Then $L_{\I^{\circ}}^{p_0} X_0 \cap W^{1 , p_1}_{\I} X_2 \hookrightarrow L^{p_0}_{\I} X_1$.
\end{lem}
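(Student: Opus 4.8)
The plan is to dispatch the continuous inclusion first and then the compactness, the latter being the only substantial point. Since $X_0 \hookrightarrow X_1$ is in particular a continuous embedding, $\| u \|_{L^{p_0}_\I X_1} \lesssim \| u \|_{L^{p_0}_\I X_0}$ pointwise in time, and after integration $L^{p_0}_{\I^\circ} X_0 \cap W^{1,p_1}_\I X_2 \subset L^{p_0}_\I X_1$ continuously. For compactness I would take a sequence $\{ u_n \}_{n \ge 1}$ bounded in $L^{p_0}_{\I^\circ} X_0 \cap W^{1,p_1}_\I X_2$ — equivalently, $\{ u_n \}$ bounded in $L^{p_0}_\I X_0$ and $\{ \partial_t u_n \}$ bounded in $L^{p_1}_\I X_2$ — and show it has a subsequence converging in $L^{p_0}_\I X_1$.

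The two workhorses are \emph{Ehrling's lemma} and a time-averaging operator. First I would record Ehrling's inequality: since $X_0 \hookrightarrow X_1$ is compact and $X_1 \subset X_2$ is continuous and injective, for every $\eta > 0$ there is $C_\eta$ with $\| v \|_{X_1} \le \eta \| v \|_{X_0} + C_\eta \| v \|_{X_2}$ for all $v \in X_0$; this is proved by contradiction, extracting from a hypothetical counterexample normalized in $X_1$ a subsequence that converges in $X_1$ by compactness and in $X_2$ to $0$, hence to $0$ in $X_1$ — a contradiction. Raising to the $p_0$-th power and integrating in $t$ yields $\| u \|_{L^{p_0}_\I X_1} \le \eta \| u \|_{L^{p_0}_\I X_0} + C_\eta \| u \|_{L^{p_0}_\I X_2}$. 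Composing the two embeddings also shows $X_0 \hookrightarrow X_2$ is compact. Applied to various differences, this inequality will convert smallness in the weak norm $L^{p_0}_\I X_2$ into smallness in $L^{p_0}_\I X_1$, at the harmless cost of an $\eta$-multiple of the uniformly bounded $L^{p_0}_\I X_0$ norm.

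Next I would introduce $w_n^h(t) = h^{-1} \int_t^{t+h} u_n(s)\, \dd s$, extending each $u_n$ a little past the right endpoint of $\I$ by a Sobolev extension so that $w_n^h$ is defined on all of $\I$. From $u_n(t) - w_n^h(t) = -h^{-1}\int_t^{t+h}\!\int_t^{s} \partial_\tau u_n(\tau)\, \dd\tau\, \dd s$ and H\"older's inequality (here $p_1 > 1$ enters) one gets $\sup_t \| u_n(t) - w_n^h(t) \|_{X_2} \lesssim h^{1 - 1/p_1} \| \partial_t u_n \|_{L^{p_1}_\I X_2}$, hence $\sup_n \| u_n - w_n^h \|_{L^{p_0}_\I X_2} \to 0$ and, via Ehrling and the uniform $L^{p_0}_\I X_0$ bound, $\sup_n \| u_n - w_n^h \|_{L^{p_0}_\I X_1} \to 0$ as $h \to 0$. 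For the second ingredient, I would show that for each fixed $h > 0$ the family $\{ w_n^h \}_n$ is relatively compact in $L^{p_0}_\I X_1$: H\"older gives $\sup_{n,t} \| w_n^h(t) \|_{X_0} < \infty$, and Young's convolution inequality gives a uniform bound on $\{ w_n^h \}_n$ in $W^{1,p_1}_\I X_2$, hence uniform equicontinuity of $t \mapsto w_n^h(t)$ in $X_2$; since $\{ w_n^h(t) : n, t \}$ is bounded in $X_0$ it is precompact in $X_1$, a fortiori in $X_2$, so the vector-valued Arzel\`a--Ascoli theorem makes $\{ w_n^h \}_n$ relatively compact in $C_\I X_2$; finally, along any $C_\I X_2$-convergent subsequence, Ehrling applied to the differences (with the $X_0$-norms uniformly controlled) upgrades the convergence to $C_\I X_1$, so $\{ w_n^h \}_n$ is relatively compact in $C_\I X_1 \subset L^{p_0}_\I X_1$.

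Combining the two ingredients finishes the argument: given $\delta > 0$, choose $h$ so that $\sup_n \| u_n - w_n^h \|_{L^{p_0}_\I X_1} < \delta / 2$, cover the relatively compact set $\{ w_n^h \}_n$ by finitely many $\delta/2$-balls in $L^{p_0}_\I X_1$, and conclude that $\{ u_n \}_n$ lies in finitely many $\delta$-balls; hence $\{ u_n \}_n$ is totally bounded, so relatively compact, in the Banach space $L^{p_0}_\I X_1$, and a convergent subsequence exists. I expect the main obstacle to be precisely the step that reconciles pointwise-in-time precompactness (which is available only in $X_1$, via the compact embedding $X_0 \hookrightarrow X_1$) with equicontinuity in time (which is available only in the weaker space $X_2$, via the derivative bound): a naive Arzel\`a--Ascoli argument directly in $X_1$ fails, and it is the joint use of the mollification $w_n^h$, the $L^{p_1}_\I X_2$ bound on $\partial_t u_n$, and Ehrling's inequality that makes it work.
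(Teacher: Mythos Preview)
The paper does not give its own proof of this lemma; it states the result and refers the reader to \cite[Theorem 1.20]{Barbu10} and the references therein. Your proposal is a correct and standard proof of the Aubin--Lions lemma: Ehrling's inequality is the right interpolation tool, the time-averaging $w_n^h$ together with the $L^{p_1}_\I X_2$ bound on $\partial_t u_n$ gives the uniform-in-$n$ approximation in $L^{p_0}_\I X_2$ (hence in $L^{p_0}_\I X_1$ via Ehrling), and the Arzel\`a--Ascoli step on the mollified family---first in $C_\I X_2$ using equicontinuity from the derivative bound and pointwise precompactness from the compact embedding $X_0 \hookrightarrow X_2$, then upgraded to $C_\I X_1$ by Ehrling---is exactly how one reconciles the mismatch you identify between where equicontinuity lives ($X_2$) and where pointwise compactness lives ($X_1$). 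This is essentially the argument one finds in standard references such as Simon's paper or Barbu's text, so you have supplied what the paper leaves to the literature.
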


Another result which we rely on heavily in the following pages are the so-called Kato-Ponce commutator estimate (also known as the fractional Leibniz rule). In essence, it makes precise the idea that one should be able to have a "product-rule" $L^p$-estimate on $\Lambda^s (f g)$. The following Lemma was first proved by Kato and Ponce \cite{KP88} as a special case in their study of the Euler and Navier-Stokes equations. A generalization involving the operator $\dot{\Lambda}^s$ was proved by \cite{KPV93}. For a textbook proof see \cite{Gra12}. 
\begin{lem}[Generalized Kato-Ponce Commutator Estimates] \label{lem:Kato-Ponce}
Suppose $1 < p < \infty$, $s, \alpha , \beta \geq 0$, and $1/p_i + 1/q_i  = 1/p$ with $i = 1,2$, $1 < q_1 \leq \infty$, $1 < p_2 \leq \infty$. If $f \in W^{s+\alpha , p_1} \cap W^{- \beta , p_2}$ and $g \in W^{s+ \beta , q_2} \cap W^{- \alpha , q_1}$, then
\begin{align*}
\| f g \|_{s,p} \lesssim \| f \|_{s + \alpha , p_1} \| g \|_{-\alpha , q_1}  + \| f \|_{- \beta , p_2} \| g \|_{s+ \beta , q_2} .
\end{align*}
The same conclusion holds for the Sobolev spaces $W^{s,p}$ replaced by their homogeneous counterparts $\dot{W}^{s,p}$.
\end{lem}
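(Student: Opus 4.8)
The plan is to establish this by Littlewood--Paley theory together with Bony's paraproduct decomposition, which is the classical route to the fractional Leibniz rule (cf.~\cite{Gra12}). It suffices to prove the homogeneous bound for $\|\dot{\Lambda}^s(fg)\|_p$; the inhomogeneous statement follows by running the identical argument with the inhomogeneous Littlewood--Paley decomposition, in which the lowest-frequency block is a single convolution operator bounded on every $L^r$, $1\le r\le\infty$. Fix homogeneous dyadic projections $\{\Delta_j\}_{j\in\mathbb{Z}}$, with $\widehat{\Delta_j u}$ supported in $\{|\xi|\sim 2^j\}$, and truncations $S_j=\sum_{k<j}\Delta_k$. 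Because $1<p<\infty$, three standard tools are available: (i) the square-function characterization $\|\dot{\Lambda}^\sigma u\|_p\simeq\big\|\,(\sum_j 2^{2\sigma j}|\Delta_j u|^2)^{1/2}\,\big\|_p$; (ii) Bernstein's inequalities for frequency-localized functions; and (iii) the Fefferman--Stein vector-valued maximal inequality, which controls the truncations $S_ju$ since $|S_ju|\lesssim Mu$ pointwise ($M$ the Hardy--Littlewood maximal operator). Decompose $fg=\Pi_1(f,g)+\Pi_2(f,g)+\Pi_3(f,g)$ with $\Pi_1=\sum_j S_{j-2}f\,\Delta_j g$ (low--high), $\Pi_2=\sum_j \Delta_j f\,S_{j-2}g$ (high--low), and $\Pi_3=\sum_{|j-k|\le 2}\Delta_j f\,\Delta_k g$ (high--high).

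For $\Pi_2$ the $j$-th summand has frequency $\sim 2^j$, so by (i) one has $\|\dot{\Lambda}^s\Pi_2\|_p\lesssim\big\|\,(\sum_j 2^{2sj}|\Delta_j f|^2|S_{j-2}g|^2)^{1/2}\,\big\|_p$. Writing $2^{sj}=2^{-\alpha j}2^{(s+\alpha)j}$, absorbing $2^{(s+\alpha)j}$ into $\Delta_j f$, and using $2^{-\alpha j}|S_{j-2}g|\lesssim M(\dot{\Lambda}^{-\alpha}g)$ --- obtained, for $\alpha>0$, by summing the geometric factors $2^{-\alpha(j-k)}$ over $k\le j-3$, and, for $\alpha=0$, being just $|S_{j-2}g|\lesssim Mg$ --- H\"older with $1/p_1+1/q_1=1/p$ together with (ii), (iii), (i) gives $\|\dot{\Lambda}^s\Pi_2\|_p\lesssim\|f\|_{\dot{W}^{s+\alpha,p_1}}\|g\|_{\dot{W}^{-\alpha,q_1}}$; the hypothesis $q_1>1$ is precisely what makes (iii) applicable on $L^{q_1}$. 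By symmetry (now placing $2^{(s+\beta)j}$ on $\Delta_j g$ and $2^{-\beta j}$ on $S_{j-2}f$) one gets $\|\dot{\Lambda}^s\Pi_1\|_p\lesssim\|f\|_{\dot{W}^{-\beta,p_2}}\|g\|_{\dot{W}^{s+\beta,q_2}}$, where the hypothesis $p_2>1$ is used.

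The main obstacle is the resonant term $\Pi_3$: there the two factors can cancel in frequency, so $\dot{\Lambda}^s$ acting on $\Delta_j f\,\Delta_k g$ with $|j-k|\le 2$ does not localize the output, and one must sum over output blocks $\Delta_n$ with $n\le j+C$. This is where $s\ge 0$ enters: on that range $2^{sn}\lesssim 2^{sj}$, so after applying (i) to $\Pi_3$ the lost output derivative may be replaced by $2^{sj}$ on the factor $\Delta_j f$; one then distributes $2^{sj}=2^{-\alpha j}2^{(s+\alpha)j}$ as before, using $2^{-\alpha j}\simeq 2^{-\alpha k}$ (legitimate since $|j-k|\le 2$) to transfer the remaining $-\alpha$ derivatives onto $\Delta_k g$. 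The resulting double dyadic sum is closed off by Cauchy--Schwarz in the indices followed by H\"older ($1/p_1+1/q_1=1/p$) and (i), producing the first term $\|f\|_{\dot{W}^{s+\alpha,p_1}}\|g\|_{\dot{W}^{-\alpha,q_1}}$; distributing the $s$ derivatives the other way ($2^{sk}=2^{-\beta k}2^{(s+\beta)k}$ on $\Delta_k g$, with $2^{-\beta k}\simeq 2^{-\beta j}$) produces the second term instead, and either choice is admissible. Adding the bounds for $\Pi_1,\Pi_2,\Pi_3$ yields $\|\dot{\Lambda}^s(fg)\|_p\lesssim\|f\|_{\dot{W}^{s+\alpha,p_1}}\|g\|_{\dot{W}^{-\alpha,q_1}}+\|f\|_{\dot{W}^{-\beta,p_2}}\|g\|_{\dot{W}^{s+\beta,q_2}}$, which is the homogeneous statement; the inhomogeneous one is obtained by the same argument with the inhomogeneous decomposition, a single low-frequency H\"older estimate disposing of the remaining bottom block.
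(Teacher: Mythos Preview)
The paper does not supply its own proof of this lemma; it simply states the result and refers the reader to \cite{KP88}, \cite{KPV93}, and the textbook \cite{Gra12}. Your sketch via Littlewood--Paley theory and Bony's paraproduct decomposition is precisely the standard route found in those references (in particular \cite{Gra12}), and the argument as you have outlined it --- including the identification of where the hypotheses $q_1>1$, $p_2>1$, and $s\ge 0$ enter --- is correct.
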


Very important to our proof strategy described in Chapter \ref{chap:intro} are the dispersive estimates for the heat contraction semigroup. To describe these estimates, first define the heat kernel $H_t : \R^d \rightarrow \R$ by
\begin{align*}
H_t (\v{x}) = \frac{1}{(4 \pi t)^{d/2}} \exp{ \left\lbrace - \frac{|\v{x}|^2}{4t} \right\rbrace } .
\end{align*}
The action of the heat semigroup $e^{t \Delta}$ on any function $f : \R^d \rightarrow \C$ is then defined as $e^{t \Delta} f := H_t \ast f$, where $*$ denotes convolution. This can be derived easily by taking the inverse Fourier transform of the identity
\begin{align*}
(\cF e^{t \Delta} f )(\v{k}) = e^{- 4 \pi^2 | \v{k} |^2 t } (\cF f)(\v{k}) . 
\end{align*}
Indeed, one can easily reduce to one-dimensional integrals to see that
\begin{align*}
\int_{\R^d} e^{- 4 \pi^2 |\v{k}|^2 t} e^{2 \pi i \v{k} \cdot \v{x}} \dd \v{k} = \prod_{j = 1}^d \int_{\R} e^{- 4 \pi^2 k_j^2 t} e^{2 \pi i k_j x_j} \dd k_j = \prod_{j=1}^d \frac{1}{\sqrt{4 \pi t}} \exp{ \left\lbrace - \frac{x_j^2}{4t} \right\rbrace } = H_t (\v{x}) 
\end{align*}
Moreover, it is a straightforward computation that when $f \in L^p (\R^d)$ with $p \in [1,\infty]$ the function $g(t,\v{x}) = (e^{t \Delta} f)(\v{x})$ is in $C^{\infty} ( (0 , \infty ) \times \R^d)$, and $g$ satisfies the heat equation:
\begin{align*}
\left\lbrace \begin{array}{l}
\partial_t g = \Delta g  \\[1.2ex]
\lim\limits_{t \rightarrow 0^+} g(t,\v{x}) = f(\v{x}) .
\end{array} \right.
\end{align*}
Dispersive estimates for $e^{t \Delta} f$ in $W^{m,p} (\R^d)$ are given by the following Lemma.
\begin{lem}[Dispersive Estimates for the Heat Kernel] \label{lem:Heat-Kernel}
For any $m \geq 0$, $1 \leq r \leq p \leq \infty$, and $f \in L^r (\R^d)$ we have
\begin{align*}
\| e^{t \Delta} f \|_{m , p} \lesssim t^{- \frac{d}{2} \left( \frac{1}{r} - \frac{1}{p} \right)} \left( 1 +  t^{- \frac{m}{2} } \right) \| f \|_{r} 
\end{align*}
\end{lem}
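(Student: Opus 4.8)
The plan is to reduce the general estimate to the cases $m=0$ and $m$ an even integer, where the convolution representation $e^{t\Delta}f = H_t \ast f$ makes everything explicit, and then to reach all $m\ge 0$ by an interpolation inequality in the order $m$. Throughout write $D := \tfrac d2\big(\tfrac1r - \tfrac1p\big)\ge 0$. For $m=0$: Young's convolution inequality gives $\|e^{t\Delta}f\|_p \le \|H_t\|_s\|f\|_r$ with $\tfrac1s = 1-(\tfrac1r-\tfrac1p)\in[0,1]$, while the scaling identity $H_t(\v x)=t^{-d/2}H_1(\v x/\sqrt t)$ gives $\|H_t\|_s = c_{s,d}\,t^{-D}$; this is the claim, the factor $1+t^{-m/2}$ being the constant $2$.

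For $m=2\ell$ an even integer, $\Lambda^{2\ell}=(1-\Delta)^{\ell}=\sum_{|\alpha|\le 2\ell}c_\alpha\partial^\alpha$ is a genuine differential operator, so $\Lambda^{2\ell}e^{t\Delta}f=\sum_{|\alpha|\le 2\ell}c_\alpha\,(\partial^\alpha H_t)\ast f$. I would apply Young's inequality term by term, use $\partial^\alpha H_t(\v x)=t^{-d/2-|\alpha|/2}(\partial^\alpha H_1)(\v x/\sqrt t)$ and hence $\|\partial^\alpha H_t\|_s = c_{\alpha,s,d}\,t^{-D-|\alpha|/2}$, and sum the finitely many contributions using the elementary bound $\sum_{0\le j\le 2\ell}t^{-j/2}\lesssim_\ell 1+t^{-\ell}$. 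This yields $\|e^{t\Delta}f\|_{2\ell,p}\lesssim t^{-D}(1+t^{-\ell})\|f\|_r$, i.e.\ exactly the claim for $m=2\ell$, and it is valid on the full range $1\le p\le\infty$ (at the endpoints Young forces $r=p$, so $D=0$, and everything is unchanged).

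For general $m\ge 0$, if $m$ is an even integer we are already done, so assume not and let $M=2\lceil m/2\rceil$ be the next even integer, so $m=\theta M$ with $\theta=m/M\in(0,1)$. I would invoke the classical interpolation inequality $\|\Lambda^m g\|_p\lesssim \|g\|_p^{1-\theta}\,\|\Lambda^M g\|_p^{\theta}$, valid for all $1\le p\le\infty$ — it follows, for instance, from the subordination identity $\Lambda^m = c_{m,M}\int_0^\infty \tau^{-1-m/M}\big(I-e^{-\tau\Lambda^M}\big)\,\dd\tau$ together with the uniform bounds $\|e^{-\tau\Lambda^M}\|_{\B(L^p)}\lesssim 1$ and $\|(I-e^{-\tau\Lambda^M})g\|_p\lesssim\tau\|\Lambda^M g\|_p$, splitting the $\tau$–integral at $\tau_0\sim\|g\|_p/\|\Lambda^M g\|_p$. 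Applying this with $g=e^{t\Delta}f$ and inserting the two cases already proved, $\|\Lambda^m e^{t\Delta}f\|_p\lesssim \big(t^{-D}\|f\|_r\big)^{1-\theta}\big(t^{-D}(1+t^{-M/2})\|f\|_r\big)^{\theta}=t^{-D}\|f\|_r\,(1+t^{-M/2})^{\theta}$, and since $(1+t^{-M/2})^{\theta}\lesssim 1+t^{-M\theta/2}=1+t^{-m/2}$, the estimate follows.

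The main obstacle, and the reason for routing through interpolation instead of estimating the kernel of $\Lambda^m e^{t\Delta}$ directly, is that this kernel equals $\cF^{-1}\big[(1+4\pi^2|\cdot|^2)^{m/2}e^{-2\pi^2 t|\cdot|^2}\big]$, and extracting the \emph{sharp} dependence $1+t^{-m/2}$ of its $L^1$-norm is delicate when $m$ is not an even integer: the high-frequency growth factor behaves like $|\v k|^m$, which is not smooth at the origin after the natural rescaling $\v k\mapsto \v k/\sqrt t$, and a crude "integrate by parts $N$ times with $2N>d$" bound only produces the non-optimal exponent $1+t^{-\lceil m/2\rceil}$. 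The interpolation step converts the sharp, fully explicit even-integer bounds (where $\Lambda^M$ is a polynomial in $\partial$) into the sharp fractional bound and simultaneously handles the endpoints $p\in\{1,\infty\}$, on which imaginary powers of $\Lambda$ are not bounded. A minor bookkeeping point along the way is that the paper's norm $\|\cdot\|_{m,p}=\|\Lambda^m\cdot\|_p$ coincides with the differential-operator expression only for even $m$, which is precisely why the even integers are the right base cases.
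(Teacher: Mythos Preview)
Your proof is correct, but it proceeds differently from the paper's. Both start identically at $m=0$ via Young's inequality and the explicit scaling $H_t(\v x)=t^{-d/2}H_1(\v x/\sqrt t)$. For $m>0$, however, the paper does not go through even integers and interpolation. Instead it works with the \emph{homogeneous} operator $\dot\Lambda^m=(-\Delta)^{m/2}$: the kernel of $\dot\Lambda^m e^{t\Delta}$ is the inverse Fourier transform of $|2\pi\v k|^m e^{-4\pi^2|\v k|^2 t}$, which scales \emph{exactly} as $t^{-d/2-m/2}$ times a fixed $t$-independent profile, so a single application of Young yields $\|\dot\Lambda^m e^{t\Delta}f\|_p\lesssim t^{-D-m/2}\|f\|_r$ for every $m\ge0$ in one stroke; adding this to the $m=0$ bound gives the stated $(1+t^{-m/2})$ factor.

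What each approach buys: the paper's route is shorter---one scaling computation covers all $m$---but it leaves two points implicit: that the fixed profile $\cF^{-1}[|\v k|^m e^{-|\v k|^2}]$ lies in the relevant $L^s$, and that one can pass from control of $\|g\|_p+\|\dot\Lambda^m g\|_p$ back to $\|\Lambda^m g\|_p$, which for non-integer $m$ is a Mikhlin-type statement requiring $1<p<\infty$. Your route is longer but sidesteps both issues: at even integers $\Lambda^{2\ell}$ is a genuine differential operator so the kernels are explicit Gaussian derivatives, and your interpolation via the semigroup $e^{-\tau\Lambda^M}$ (whose $L^1$-kernel bound is uniform in $\tau$, since $e^{-\tau(1+|\v k|^2)^{M/2}}$ factors as a product of terms $e^{-c_j\tau|\v k|^{2j}}$ each with $\tau$-independent $L^1$-kernel norm by scaling) is valid on the full range $1\le p\le\infty$. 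The concern you raise about the inhomogeneous kernel $\cF^{-1}[(1+|\v k|^2)^{m/2}e^{-t|\v k|^2}]$ not scaling cleanly is exactly what the paper avoids by switching to $\dot\Lambda^m$. For the applications later in the paper only $p=2$ and integer $m$ are actually used, so the endpoint refinements your argument secures are not needed there.
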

\begin{proof}
The proof relies on the Young's inequality for convolutions: Let $p,r,q \in [1 , \infty]$ and $1/q + 1/r = 1 + 1/p$. If $g \in L^q (\R^d)$ and $h \in L^r (\R^d)$, then $g \ast h \in L^p (\R^d)$ and
\begin{align*}
\| g \ast h \|_p \lesssim \| g \|_q \| h \|_r .
\end{align*}
Using Young's inequality, we first prove this result for $m = 0$. For the case $p = r$, to see that $e^{t \Delta} : L^r (\R^d) \rightarrow L^r (\R^d)$ is a straightforward application of Minkowski's integral inequality and scaling. For the $p = \infty$ case we rely on Young's inequality to find
\begin{align*}
\| e^{t \Delta} f \|_{\infty} \lesssim \| H_t \|_{r'} \| f \|_r \lesssim t^{- \frac{d}{2} \left( \frac{1}{r'} - 1 \right) } \| f \|_r .
\end{align*}
Since $1/r' + 1/r = 1$, we arrive at $\| e^{t \Delta} f \|_{\infty} \lesssim t^{- d / (2r)} \| f \|_r$. Let $r \leq p$. Then, using the $p = r$ and the $p = \infty$ case shown previously, together with H\"{o}lder's inequality, we conclude
\begin{align*}
\| e^{t \Delta} f \|_{p} \leq \| e^{t \Delta} f \|_{\infty}^{1 - r/p} \| e^{t \Delta} f \|_{r}^{r/p} \lesssim t^{- \frac{d}{2} \left( \frac{1}{r} - \frac{1}{p} \right)} \| f \|_{r} .
\end{align*}

Next, note that
\begin{align*}
\nabla H_t (\v{x}) = - \frac{\v{x}}{2 t} H_t (\v{x}) ,
\end{align*}
and, in the same fashion as the $m = 0$-case, we have
\begin{align*}
\| \nabla e^{t \Delta} f \|_p \lesssim t^{- \frac{d}{2} \left( \frac{1}{r} - \frac{1}{p} \right) - \frac{1}{2}} \| f \|_{r}
\end{align*}
More generally, $\dot{\Lambda}^m e^{t \Delta}$ where $\dot{\Lambda}^m = (- \Delta)^{m/2}$ may be defined as convolution against
\begin{align*}
\int_{\R^3} | 2 \pi \v{k} |^m e^{- 4 \pi^2 |\v{k}|^2 t} e^{2 \pi i \v{k} \cdot \v{x}} \dd \v{k} .
\end{align*}
One may proceed as before to show 
\begin{align*}
\| \dot{\Lambda}^m e^{t \Delta} f \|_{p} \lesssim t^{- \frac{d}{2} \left( \frac{1}{r} - \frac{1}{p} \right) - \frac{m}{2}} \| f \|_{r} .
\end{align*}

\end{proof}

The real utility of Lemma \ref{lem:Heat-Kernel} in our proof of Theorem \ref{thm:local_exist_MBMP_epsilon} is that one may exchange derivatives of the heat semigroup into a decay in time. When the choices of the exponents are made carefully this trade-off allows one to control a solution to (\ref{eq:MBMP_epsilon}) in $W^{m,p} (\R^{3N} ; \C^{2^N})$ much better than if one were just working with the Schr\"{o}dinger unitary group $e^{i t \Delta}$. There are the widely used dispersive estimates for the Schr\"{o}dinger unitary group on $\R^d$, namely
\begin{align*}
\| e^{i t \Delta} f \|_{p} \lesssim t^{- d \left( \frac{1}{2} - \frac{1}{p} \right)} \| f \|_{p'} , \hspace{1cm} p \in [2 , \infty ] .
\end{align*}
From these dispersive estimates, together with some duality arguments, we get the well-known Strichartz estimates for the Schr\"{o}dinger unitary group on $\R^d$:
\begin{align*}
\| e^{i t \Delta} f \|_{q ; p} \lesssim \| f \|_2 , \hspace{1cm} q , p \in [2 , \infty] , \hspace{1cm} \frac{2}{q} + \frac{d}{p} = \frac{d}{2} , 
\end{align*}
where $(q,p,d) \neq (2 , \infty , 2)$ \cite[Theorem 2.3]{taodispersive2006}. These estimates are extremely useful in many contexts. However, these are not as useful to us as it is impossible for such Strichartz estimates to gain regularity \cite[Exercise 2.36]{taodispersive2006}.

The last set of estimates which we discuss in this section are the Strichartz estimates for the wave equation. Similar Strichartz estimates for the Klein-Gordon equation are discussed in Chapter \ref{chap:othermodels}. We will state these estimates for the wave equation in complete generality. However, for our purposes we will only need a special case known as the energy estimate. The main point for us here is that we gain a spatial $L^2$-derivative when estimating in $L^{\infty}_T H^m (\R^d ; \R^d)$ a solution $\v{B}$ of the wave equation $\square \v{B} = \v{F}$ in terms of the inhomogeneity $\v{F}$. Before stating the estimates and proving this special case, we need several definitions (similar definitions appear in Chapter \ref{chap:othermodels}). 

We say the indices $m \geq 0$, $q , \tilde{q} \in [2,\infty]$, and $p , \tilde{p} \in [2 , \infty )$ are wave admissible if
\begin{align*}
\frac{1}{q} + \frac{d}{p} = \frac{d}{2} - m = \frac{1}{\tilde{q}'} + \frac{d}{\tilde{p}'} - 2 
\end{align*}
and
\begin{align*}
\max{ \left\lbrace \frac{1}{q} + \frac{d-1}{2p} , \frac{1}{\tilde{q}} + \frac{d-1}{2 \tilde{p}'} \right\rbrace } \leq \frac{d-1}{4} .
\end{align*}
For $T > 0$ and $(\v{a}_0 , \dot{\v{a}}_0) \in \dot{H}^m (\R^d ; \R^d) \times \dot{H}^{m-1} (\R^d ; \R^d)$, consider $\v{F} \in L_T^{\tilde{q}'} L^{\tilde{p}'} (\R^d ; \R^d)$ and define the $C_T \dot{H}^m (\R^d ; \R^d) \cap C^1_T \dot{H}^{m - 1} (\R^d ; \R^d)$-function
\begin{align}\label{eq:wave_soln}
\U_{\v{F}} [ \v{a}_0 , \dot{\v{a}}_0 ] (t) = \dot{\s}( t / \alpha ) \v{a}_0 + \alpha ~ \s(t/\alpha) \dot{\v{a}}_0 + \frac{1}{\alpha} \int_0^t \s \left( \frac{t - \tau}{\alpha} \right) \v{F} (\tau ) \dd \tau , 
\end{align}
where
\begin{align*}
\dot{\s} (t) = \cos{\left(\sqrt{ - \Delta} t \right)} : \dot{H}^m (\R^d ; \R^d) \rightarrow \dot{H}^m (\R^d ; \R^d)
\end{align*}
and
\begin{align*}
\s (t) = \frac{\sin{\left(\sqrt{ - \Delta} t \right)}}{\sqrt{ - \Delta}} : \dot{H}^{m-1} (\R^d ; \R^d) \rightarrow \dot{H}^m (\R^d ; \R^d) .
\end{align*}
It is straightforward to check that $\v{B} (t) = \U_{\v{F}} [ \v{a}_0 , \dot{\v{a}}_0 ] (t)$ satisfies the wave equation $\square \v{B} = \v{F}$ with initial conditions $( \v{B} (0) , \partial_t \v{B} (0) ) = ( \v{a}_0 , \dot{\v{a}}_0 )$. The following lemma is stated in the spirit of \cite[Theorem 2.6]{taodispersive2006}, and for original proofs of the general result see the discussion therein. We will only sketch the proof for the energy estimate we need in this thesis.  
  
\begin{lem}[Strichartz Estimates for the Wave Equation] \label{lem:Strichartz_wave_general}
Let $m \geq 0$, $2 \leq q , \tilde{q} \leq \infty$ and $2 \leq p , \tilde{p} < \infty$ be wave admissible. Further, let $T > 0$, $(\v{a}_0 , \dot{\v{a}}_0) \in \dot{H}^m (\R^d ; \R^d) \times \dot{H}^{m - 1} (\R^d ; \R^d)$, and $\v{F} \in L^{\tilde{q}'}_T L^{\tilde{p}'} (\R^d ; \R^d)$. Then, the function $\v{B} (t) = \U_{\v{F}} [ \v{a}_0 , \dot{\v{a}}_0 ] (t)$ defined by (\ref{eq:wave_soln}) is contained in $C_T \dot{H}^m (\R^d ; \R^d) \cap C^1_T \dot{H}^{m - 1} (\R^d ; \R^d)$ and satisfies the Strichartz estimate
\begin{align*}
\| \v{B} \|_{q ; p} + \max_{k \in \{0,1\}} \| \dot{\Lambda}^{m-k} \partial_t^k \v{B} \|_{\infty ; 2} \lesssim \| (\dot{\Lambda}^m \v{a}_0 ,  \dot{\Lambda}^{m-1} \dot{\v{a}}_0) \|_{2 \oplus 2} + \| \v{F} \|_{\tilde{q}' ; \tilde{p}'} .
\end{align*}
In particular, if $(\v{a}_0 , \dot{\v{a}}_0) \in H^m (\R^d ; \R^d) \times H^{m - 1} (\R^d ; \R^d)$ and $\v{F} \in L_T^1 H^{m - 1} (\R^d ; \R^d)$, then $\v{B} \in C_T H^m (\R^d ; \R^d) \cap C^1_T H^{m - 1} (\R^d ; \R^d)$ and
\begin{align*}
\max_{k \in \{0,1\}} \| \partial_t^k \v{B} \|_{\infty ; m - k , 2} \lesssim \| (\v{a}_0 , \dot{\v{a}}_0) \|_{m,2 \oplus m-1 , 2} + \| \v{F} \|_{1 ; m-1 , 2} ,
\end{align*}
which is referred to as the energy estimate for the wave equation. 
\end{lem}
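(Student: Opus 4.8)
The plan is to treat the two halves of the statement separately: the general space--time bound is obtained from the standard machinery, and only the energy estimate — which is all that is needed in the sequel — is proved by hand, via Plancherel.

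\textbf{General Strichartz estimate.} The two analytic inputs are (i) the $L^2$ conservation law for the free wave flow, which is the multiplier bound recorded below, and (ii) the dispersive decay of the half-wave propagators $e^{\pm i t\sqrt{-\Delta}}$ on $L^p(\R^d)$, proved by stationary phase. Feeding these into the abstract $TT^\ast$ argument of Keel--Tao yields the homogeneous and dual-homogeneous estimates for $\dot{\s}(t/\alpha)\v{a}_0 + \alpha\,\s(t/\alpha)\dot{\v{a}}_0$, and the Christ--Kiselev lemma upgrades the Duhamel term in (\ref{eq:wave_soln}) to the retarded estimate for all admissible exponents away from the forbidden endpoint. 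Since this is precisely the content of \cite[Theorem 2.6]{taodispersive2006} and the references therein, I would cite it rather than reprove it; the only thing to check is that the operator-theoretic formulation there matches $\U_{\v{F}}[\v{a}_0,\dot{\v{a}}_0]$, which is immediate from the definitions of $\s(t)$ and $\dot{\s}(t)$ as Fourier multipliers.

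\textbf{Energy estimate (homogeneous part).} I would argue directly on the Fourier side, where $\dot{\s}(s)$ and $\s(s)$ act by multiplication by $\cos(2\pi|\v{k}|s)$ and $\sin(2\pi|\v{k}|s)/(2\pi|\v{k}|)$; these satisfy $|\cos(2\pi|\v{k}|s)|\le 1$ and $2\pi|\v{k}|\,|\sin(2\pi|\v{k}|s)/(2\pi|\v{k}|)|\le 1$. Using $\partial_s\cos(2\pi|\v{k}|s) = -2\pi|\v{k}|\sin(2\pi|\v{k}|s)$, $\dot{\Lambda}^m\s(s) = \dot{\Lambda}^{m-1}\sin(\sqrt{-\Delta}s)$, and $\s(0)=0$, one reads off from (\ref{eq:wave_soln}) and Plancherel the four bounds $\|\dot{\Lambda}^m\dot{\s}(t/\alpha)\v{a}_0\|_2\le\|\dot{\Lambda}^m\v{a}_0\|_2$, $\|\dot{\Lambda}^{m-1}\partial_t[\dot{\s}(t/\alpha)\v{a}_0]\|_2\lesssim_\alpha\|\dot{\Lambda}^m\v{a}_0\|_2$, $\|\dot{\Lambda}^m[\alpha\,\s(t/\alpha)\dot{\v{a}}_0]\|_2 + \|\dot{\Lambda}^{m-1}\partial_t[\alpha\,\s(t/\alpha)\dot{\v{a}}_0]\|_2\lesssim_\alpha\|\dot{\Lambda}^{m-1}\dot{\v{a}}_0\|_2$, and, since $\partial_t\big[\tfrac1\alpha\int_0^t\s((t-\tau)/\alpha)\v{F}(\tau)\,\dd\tau\big] = \tfrac1{\alpha^2}\int_0^t\dot{\s}((t-\tau)/\alpha)\v{F}(\tau)\,\dd\tau$, an application of Minkowski's integral inequality gives $\max_{k\in\{0,1\}}\big\|\dot{\Lambda}^{m-k}\partial_t^k\big[\tfrac1\alpha\int_0^t\s((t-\tau)/\alpha)\v{F}(\tau)\,\dd\tau\big]\big\|_2\lesssim_\alpha\int_0^t\|\dot{\Lambda}^{m-1}\v{F}(\tau)\|_2\,\dd\tau$. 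Adding these three contributions and taking $\sup_{t\in[0,T]}$ yields the $(q,p)=(\infty,2)$ slot of the homogeneous estimate (the $\|\v{B}\|_{q;p}$ term being supplied by the general estimate above). Strong continuity $t\mapsto\v{B}(t)\in\dot H^m$ and $t\mapsto\partial_t\v{B}(t)\in\dot H^{m-1}$ follows from dominated convergence on the Fourier side together with continuity of the Duhamel integral in its upper limit.

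\textbf{Inhomogeneous version and the main obstacle.} To pass from the homogeneous spaces to $H^m,H^{m-1}$ I would invoke the norm equivalence $\|g\|_{H^s}\simeq\|g\|_2+\|\dot{\Lambda}^s g\|_2$ for $s\ge0$ (since $(1+4\pi^2|\v{k}|^2)^{s/2}\simeq1+(2\pi|\v{k}|)^s$), so that it only remains to bound the plain $L^2$-norms of $\v{B}$ and $\partial_t\v{B}$; these come from the same multipliers, the one point deserving care being that the symbol of $\s(s)$ degenerates like $s$ at $\v{k}=0$, making its $L^2\to L^2$ norm $O(T)$ on $\I=[0,T]$ rather than $O(1)$ — harmless here, as it is absorbed into $\lesssim$ and into the constants allowed in the energy estimate. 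This gives $\v{B}\in C_T H^m\cap C^1_T H^{m-1}$ together with the asserted bound $\max_{k\in\{0,1\}}\|\partial_t^k\v{B}\|_{\infty;m-k,2}\lesssim\|(\v{a}_0,\dot{\v{a}}_0)\|_{m,2\oplus m-1,2}+\|\v{F}\|_{1;m-1,2}$. The only genuinely hard ingredient in the whole lemma is the dispersive decay of the wave propagator, and that is exactly the part we do not reprove; the energy estimate itself is essentially bookkeeping with elementary multiplier bounds, its lone subtlety being the low-frequency behavior of $\s$ just noted.
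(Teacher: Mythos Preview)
Your approach is essentially identical to the paper's: the general Strichartz estimate is deferred to \cite[Theorem 2.6]{taodispersive2006} and the references therein, while the energy estimate is obtained by writing out the Fourier transform of $\U_{\v{F}}[\v{a}_0,\dot{\v{a}}_0](t)$, multiplying by $(1+4\pi^2|\v{k}|^2)^{m/2}$, and invoking Plancherel. Your write-up is in fact more careful than the paper's sketch, in particular your remark on the low-frequency behavior of $\s$ and the resulting $T$-dependence of the implicit constant, which the paper passes over silently.
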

\begin{proof}
A typical proof of the energy estimate argues via the Fourier transform. Indeed, the Fourier transform of $\v{B} (t)$ is given by
\begin{align*}
( \cF \v{B} (t) )(\v{k}) = \cos{\left(2 \pi |\v{k}| t / \alpha \right)} \v{a}_0 + \frac{\alpha \sin{(2 \pi |\v{k}|t / \alpha)}}{2 \pi | \v{k} |} + \frac{1}{\alpha} \int_0^t \frac{\sin{(2 \pi |\v{k}| (t - \tau) / \alpha)}}{2 \pi | \v{k} |} \v{F} (\tau) \dd \tau . .
\end{align*} 
Multiplying the above expression through by $(1 + 4 \pi^2 |\v{k}|^2)^{m/2}$, taking the $L^2$-norm and using Plancharel's Theorem, we immediately arrive at the energy estimate. 
\end{proof}

\chapter{Stability of Matter}\label{chap:stability}

As mentioned throughout Chapter \ref{chap:intro}, our study of the Cauchy problem associated with (\ref{eq:MBMP}) is primarily motivated by results concerning the energetic stability of matter in magnetic fields. Moreover, such results will play an important role in our proof of \textit{global} existence of solutions to (\ref{eq:MBMP_epsilon}) (see \S\ref{sec:proof_strat}). Therefore, this Chapter is devoted to reviewing the concept of energetic stability and the developments surrounding it. We will restrict ourselves to discussing energetic stability in the non-relativistic case and we will be especially concerned with the work of Fefferman (unpublished), Fr\"{o}hlich, Lieb, Loss, and Yau circa 1986-1995 \cite{FLL86, LL86, LY86, F95, LLS95}. For far more complete treatments see \cite{Lieb05, Loss07, LS10}. 

In its most general form the question of energetic stability asks: Is the absolute ground state $E^{\mr{G}}$ of a given Hamiltonian $H$ for a many-body quantum system is finite? Roughly, the absolute ground state $E^{\mr{G}}$ is the minimum of $\langle \psi , H \psi \rangle_{L^2}$ over all $\psi$ such that $\| \psi \|_2 = 1$. If electromagnetic fields and/or nuclei are included in the model, then we also minimize over all fields and all nuclear positions (c.f.~equation (\ref{def:abs_grnd_state_energy})). Following \cite{Lieb90} one may refine the notion of stability into two separate kinds.
\begin{itemize}
\item A system is stable of the \textit{first kind} if $E^{\mr{G}} > - \infty$ is satisfied. If $E^{\mr{G}} = - \infty$, then we say the system is \textit{unstable}.
\item An $N$-particle system is stable of the \textit{second kind} if $E^{\mr{G}} \geq - C N$ where $C > 0$ is a universal constant.
\end{itemize}
Stability of the second kind is an important notion, for if $E^{\mr{G}}$ is super-linear in the particle number, then the assembly of any two macroscopic objects would release an energy comparable to that of an atomic bomb \cite{Dyson67}. Typically, stability of the second kind is a more delicate question than stability of the first kind. We note that for the results in this thesis all we will be concerned with is whether the absolute ground state energy is finite, not its dependence on the total particle number.

\section{Stability in the Absence of Magnetic Fields}

A typical example of a system where energetic stability becomes a nontrivial question is a non-relativistic Hydrogenic atom in the absence of an electromagnetic field. That is, a single electron bound to an infinitely heavy nucleus of charge $Z > 0$. For this system, we take the Hamiltonian to be $H_{\text{Hy}} = - \Delta - Z |\v{x}|^{-1}$ and the total energy is
\begin{align*}
\langle \psi , H_{\text{Hy}} \psi \rangle_{L^2} = \int_{\R^3} |\nabla \psi (\v{x})|^2 \dd \v{x} - Z \int_{\R^3} \frac{|\psi (\v{x})|^2}{|\v{x}|} \dd \v{x} . 
\end{align*}
It is important to note that the kinetic energy and the potential energy scale like "square of an inverse length" and "inverse length", respectively. Stability will follow from the Sobolev inequality 
\begin{align*}
S_3 \left( \int_{\R^3} |\psi|^6 \right)^{1/3} \leq \int_{\R^3} |\nabla \psi|^2 ,
\end{align*}
where $S_3 = 3 \left( \pi / 2 \right)^{4/3}$ is the sharp constant (see Lemma \ref{lem:sobolev}). Indeed, if we let $\rho = |\psi|^2$ and demand $\int \rho = 1$, then, for any $R > 0$, H\"{o}lder's inequality gives us
\begin{align*}
\int_{\R^3} \frac{\rho (\v{x})}{|\v{x}|} \dd \v{x} \leq \frac{1}{R} + \left( \frac{8 \pi}{3} \right)^{2/3} R \| \rho \|_3 .
\end{align*}
Optimizing the right hand side over $R$ gives
\begin{align*}
\int_{\R^3} \frac{\rho (\v{x})}{|\v{x}|} \dd \v{x} \leq 2 \left( \frac{8 \pi}{3} \right)^{1/3} \| \rho \|_3^{1/2} .
\end{align*}
Therefore,
\begin{align*}
\langle \psi , H_{\mr{Hy}} \psi \rangle_{L^2} \geq S_3 \| \rho \|_3 - 2 Z \left( \frac{8 \pi}{3} \right)^{1/3} \| \rho \|_3^{1/2} \geq - C Z^2 . 
\end{align*}
Note that the constant $C$ in front of $Z^2$ above is approximately $0.753$. This is not entirely far from the exact truth. It is a well-known result from elementary quantum mechanics that the ground state energy $E^{\mr{G}}$ is
\begin{align*}
E^{\mr{G}} \equiv \inf{ \left\lbrace \langle \psi , H_{\text{Hy}} \psi \rangle_{L^2} ~ : ~ \| \psi \|_{2} = 1 \right\rbrace } = - \frac{Z^2}{4} 
\end{align*}
and the unique, normalized ground state wavefunction $\psi_0 \in L^2 (\R^3 ; \C)$ is the radial function
\begin{align*}
\psi_0 (|\v{x}|) = \frac{e^{- Z |\v{x}| / 2}}{2\sqrt{2 \pi}} .
\end{align*}
The point of Sobolev's inequality here is that it provides us a precise bound on the potential energy in terms of the kinetic energy, something that we will see is not possible when magnetic fields and spin-interactions are introduced. See \cite[Chapter 11]{LL01} for a more detailed discussion regarding the calculus of variations for Hamiltonians of the form $-\Delta + U$ with $U : \R^d \rightarrow \R$ being some potential.

Generalizing the above example to a system of $N$ negatively charged particles and $K$ fixed nuclei with charges $\Z = (Z_1 , \cdots , Z_K) \in [0, \infty)^K$ at distinct positions $\ul{\v{R}} = (\v{R}_1 , \cdots , \v{R}_K) \in \R^{3K}$ the Hamiltonian will become
\begin{align}\label{def:manybody_ham_nofield}
H = - \sum_{j = 1}^N \Delta_{\v{x}_j} + V (\ul{\v{R}} , \Z)
\end{align}
where $\Delta_{\v{x}_j}$ is the Laplacian acting on the $j^{\text{th}}$ variable and $V (\ul{\v{R}} , \Z)$ is given by (\ref{def:totelectrostatpot}). In this case the total energy is
\begin{align*}
\langle \psi , H \psi \rangle_{L^2} = \sum_{j = 1}^N \| \nabla_{\v{x}_j} \psi \|_2^2 + \langle \psi , V(\ul{\v{R}} , \Z) \psi \rangle_{L^2} .
\end{align*}
If the negatively charge particles are Fermions (i.e., the Pauli exclusion principle is satisfied), then the appropriate ground state energy to consider is
\begin{align}\label{eq:many_body_ground_state_asy}
E^{\mr{G}}_{\mr{asy}} = \inf{ \left\lbrace \langle \psi , H \psi \rangle_{L^2} ~ : ~ \psi \in \BWN{H^1 (\R^3 ; \C^2)} , ~ \| \psi \|_2 = 1 , ~ \ul{\v{R}}  \right\rbrace } .
\end{align}
If the negatively charged particles are Bosons (i.e., identical particles may occupy the same state), then the appropriate ground state energy to consider is 
\begin{align}\label{eq:many_body_ground_state_sym}
E^{\mr{G}}_{\mr{sym}} = \inf{ \left\lbrace \langle \psi , H \psi \rangle_{L^2} ~ : ~ \psi \in \bigotimes^N_{\mr{sym}} H^1 (\R^3 ; \C^2) , ~ \| \psi \|_2 = 1 , ~ \ul{\v{R}} \right\rbrace } .
\end{align}
It is not difficult to show that the unrestricted minimization of $\langle \psi , H \psi \rangle_{L^2}$ over normalized $\psi$, i.e., no symmetry condition required on the $\psi$, gives the Bosonic ground state $E^{\mr{G}}_{\mr{sym}}$ (see, for example, \cite[Chapter 3]{LS10}). This implies $E^{\mr{G}}_{\mr{asy}} \geq E^{\mr{G}}_{\mr{sym}}$. 

Stability of the first kind, namely $E^{\mr{G}}_{\mr{sym}} > - \infty$, was originally shown by Kato in 1951 \cite{Kato51}, whereas stability of the second kind for $E^{\mr{G}}_{\mr{asy}}$ was first considered by Dyson and Lenard in 1967-1968 \cite{DL67, DL68}. It is important to note that the Pauli exclusion principle is crucial for stability of the second kind. Dyson in 1967  \cite{Dyson67} showed that the ground state for a system of $N$ negative charges and $N$ positive charges in the absence of the exclusion principle is bounded below by a constant times $N^{7/5}$. More generally, Lieb in 1978 \cite{Lieb78} showed $E^{\mr{G}}_{\mr{sym}} \lesssim - N^{5/3}$, complementing early results showing $E^{\mr{G}}_{\mr{sym}} \gtrsim - N^{5/3}$ (see, for example, \cite{Lieb76}).

New proofs for stability of the second kind were discovered independently by Federbush, and Lieb and Thirring around 1975 \cite{Federbush75, LT75}. In particular, Lieb and Thirring discovered a completely new way to prove the stability of (\ref{eq:many_body_ground_state_asy}). Their method uses the now famous Lieb-Thirring bounds, which in their original article read 
\begin{align*}
\sum_{j \geq 1} |\lambda_j| \leq \frac{4}{15 \pi} \int_{\R^3} U_- (\v{x})^{\frac{5}{2}} \dd \v{x}
\end{align*}
where $\lambda_j$ are the negative eigenvalues of the operator $- \Delta + U$ and $U_- (\v{x}) = \max{\{- U(\v{x}) , 0 \}}$. To be precise, the stability estimate in \cite{LT75} is
\begin{align}\label{eq:stability_no_field}
E^{\mr{G}}_{\mr{asy}} \geq - 2.08 N \left[ 1 + \sqrt{ \frac{\sum_{j=1}^K Z_j^{7/3}}{N} } \right]^2  .
\end{align}

\section{Stability in the Presence of Magnetic Fields}

The introduction of an external magnetic field adds significant complexity to the stability question, especially if one takes spin-magnetic field interactions into account. If one ignores such interactions, however, then the kinetic energies $- \Delta_j$ in the many-body Hamiltonian (\ref{def:manybody_ham_nofield}) get replaced by $(\v{p}_j + \v{A}_j)^2$ where $\v{B} = \curl{\v{A}}$ is the applied magnetic field. In this situation, the Lieb-Thirring inequalities mentioned in the previous section continue to be true with the same constant when $-\Delta + U$ is replaced by $(\v{p} + \v{A})^2 + U$. Therefore, the estimate \ref{eq:stability_no_field} continues to be true with the same constant. This fact was proved by Avron, Herbst, and Simon \cite{AHS78} and independently by Combes, Schrader, and Seiler \cite{CSS78} in 1978.

The situation changes dramatically if spin-magnetic field interactions are taken into account. In the case of a Hydrogenic atom (single electron, fixed nucleus of charge $Z > 0$), we take the total energy to be $E_{\mr{P}} [\psi , \v{A}] \equiv E_{\mr{P}} [\psi , \v{A} , \v{0}]$ in the ($N=K=1$)-case, namely
\begin{align}\label{def:onebody_pauli_energy}
E_{\mr{P}} [\psi , \v{A}] = \|  \gvsig \cdot (\v{p} + \v{A}) \psi \|_{2}^2 - Z \langle \psi , |\cdot |^{-1} \psi \rangle_{L^2} + F [ \v{A} , \v{0} ] ,
\end{align}
where $F [ \v{A} , \v{0} ] = (8 \pi \alpha^2)^{-1} \int_{\R^3} |\curl{\v{A}}|^2$. The energy (\ref{def:onebody_pauli_energy}) is well-defined for any pair $(\psi , \v{A}) \in \fC_1$, where $\fC_N$ is defined by (\ref{def:function_space_C}). Whether 
\begin{align*}
E_{\mr{P}}^{\mr{G}} (Z) = \inf{ \left\lbrace E_{\mr{P}} [ \psi , \v{A} ] ~ : ~ (\psi , \v{A}) \in \fC_1 \right\rbrace }
\end{align*}
is finite or not depends crucially on the fact that there exist non-trivial, zero-energy states $(\psi , \v{A}) \in \fC_1$ to the three-dimensional\footnote{The situation in two-dimensions is completely understood and was originally worked out by Aharonov and Casher in 1979 \cite{AC79}} Dirac equation $\gvsig \cdot (\v{p} + \v{A}) \psi = 0$. The first solutions $(\psi , \v{A}) \in \fC_1$ to $\gvsig \cdot (\v{p} + \v{A}) \psi = 0$, so-called \textit{zero modes}, were discovered by Loss and Yau in 1986 \cite{LY86}. An explicit example is
\begin{align}\label{eq:LossYau_zeromode}
( \psi (\v{x}) , \v{A} (\v{x}) ) = \left( \frac{1 + i \gvsig \cdot \v{x}}{\pi (1 + |\v{x}|^2)^{3/2}} \phi_0 , 3 \frac{(|\v{x}|^2 - 1) \v{w} - 2 (\v{w} \cdot \v{x}) \v{x} - 2 \v{w} \wedge \v{x}}{(1 + |\v{x}|^2)^2}  \right) ,
\end{align} 
where $\phi_0$ is any normalized $\C^2$-vector and $\v{w} = \langle \phi_0 , \gvsig \phi_0 \rangle_{\C^2}$. One may notice that $\diver{\v{A}} \neq 0$ for the vector potential in (\ref{eq:LossYau_zeromode}). However, this can be mended by a gauge transform $(\psi , \v{A}) \mapsto (e^{i \zeta} \psi , \v{A} - \nabla \zeta)$ with the gauge function given explicitly by
\begin{align*}
\zeta (\v{x}) = 3 (\v{w} \cdot \v{x}) \frac{ |\v{x}| - \arctan(|\v{x}|)}{|\v{x}|^3} .
\end{align*}
The magnetic field $\v{B} = \curl{\v{A}}$ for the zero mode (\ref{eq:LossYau_zeromode}) is $\v{B} = - 12 \langle \psi , \gvsig  \psi \rangle_{\C^2}$ since $\v{A} = - 3 (1 + |\v{x}|^2) \langle \psi , \gvsig \psi \rangle_{\C^2}$. The magnetic vector potential $\v{A}$ in (\ref{eq:LossYau_zeromode}) may be obtained by finding a solution $\gvsig \cdot \v{p} \psi = \lambda \psi$ for some $\lambda : \R^3 \rightarrow \R$ and then choosing $\v{A} = - \lambda \langle \psi , \gvsig \psi \rangle_{\C^2} / |\psi |^2$. The vector potential can be obtained by stereographic projection from a parallel basis vector field on the three-dimensional sphere and, therefore, the flow lines are circles on the Hopf tori (see \cite{LY86}).

One can see that the existence of zero modes causes collapse in this model if the nuclear charge $Z$ is too large by the following scaling argument. Let $(\psi , \v{A}) \in \fC_1$ satisfy $\gvsig \cdot ( \v{p} + \v{A} ) \psi = 0$ and let $(\psi_{\lambda} (\v{x}) , \v{A}_{\lambda} (\v{x}) ) = ( \lambda^{3/2} \psi (\lambda \v{x}) , \lambda \v{A} (\lambda \v{x}) )$ for $\lambda > 0$. Note $\gvsig \cdot (\v{p} + \v{A}_{\lambda})\psi_{\lambda} = 0$ and $\| \psi_{\lambda} \|_2 = \| \psi \|_2$. The energy $E_{\mr{P}}$ under this scaling becomes
\begin{align*}
E_{\mr{P}}	 [\psi_{\lambda} , \v{A}_{\lambda}] = \lambda \left( - Z \langle \psi , | \cdot |^{-1} \psi \rangle_{L^2} + F [ \v{A} , \v{0} ] \right) . 
\end{align*}
That is, the Coulomb energy and the field energy scale the \textit{same way} (whereas the kinetic energy scales like "square of an inverse length", but we consider zero modes). The previous scaling argument motivates the definition of a \textit{critical charge} $Z_c$ as
\begin{align}\label{def:Zc}
Z_c := \inf{\left\lbrace \frac{F[\v{A} , \v{0}]}{\langle \psi , | \cdot |^{-1} \psi \rangle_{L^2}} ~ : ~ (\psi , \v{A}) \in \fF  \right\rbrace } .
\end{align}
where $\fF$ is the space of zero modes
\begin{align*}
\fF = \left\lbrace (\psi , \v{A}) \in \fC_1 ~ : ~ \gvsig \cdot (\v{p} + \v{A}) \psi = 0 \right\rbrace . 
\end{align*}
We see that if $Z > Z_c$, then $\lim_{\lambda \rightarrow \infty} E_{\mr{P}} [\psi_{\lambda} , \v{A}_{\lambda}] = - \infty$. Moreover, if $Z < Z_c$, then $E_{\mr{P}}$ is uniformly bounded below as the next Theorem demonstrates.
\begin{thm}[\cite{FLL86}]\label{thm:one-electron_stability_estimate}
The absolute ground state energy $E_{\mr{P}}^{\mr{G}}$ satisfies
\begin{align*}
E_{\mr{P}}^{\mr{G}} (Z) = \inf{ \left\lbrace E_{\mr{P}} [ \psi , \v{A} ] ~ : ~ (\psi , \v{A}) \in \fC_1 \right\rbrace } = \left\lbrace \begin{array}{cc}
\mr{finite} & ~ Z < Z_c \\
- \infty & ~ Z > Z_c
\end{array} \right. .
\end{align*}
\end{thm}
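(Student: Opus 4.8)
I would handle the two regimes $Z > Z_c$ and $Z < Z_c$ separately. For $Z > Z_c$ --- the easy half, already foreshadowed by the scaling computation just above --- the set of zero modes $\fF$ in (\ref{def:Zc}) is nonempty (after the indicated gauge transformation the Loss--Yau pair (\ref{eq:LossYau_zeromode}) lies in $\fC_1$), so since $Z$ exceeds the infimum I can pick $(\psi,\v{A}) \in \fF$ with $F[\v{A},\v{0}] < Z\langle\psi,|\cdot|^{-1}\psi\rangle_{L^2}$ and rescale: for $(\psi_\lambda,\v{A}_\lambda) = (\lambda^{3/2}\psi(\lambda\,\cdot),\lambda\v{A}(\lambda\,\cdot)) \in \fC_1$ the kinetic energy stays zero while the field and Coulomb energies scale by $\lambda$, so $E_{\mr{P}}[\psi_\lambda,\v{A}_\lambda] = \lambda\big(F[\v{A},\v{0}] - Z\langle\psi,|\cdot|^{-1}\psi\rangle_{L^2}\big) \to -\infty$ as $\lambda\to\infty$, giving $E_{\mr{P}}^{\mr{G}}(Z) = -\infty$. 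Everything else is devoted to showing $E_{\mr{P}}^{\mr{G}}(Z) > -\infty$ when $Z < Z_c$, which I would do by contradiction: from a minimizing sequence I plan to extract, after the right rescaling, an exact zero mode whose ratio of field to Coulomb energy is strictly below $Z_c$.

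The a priori input is an estimate obtained just as in the field-free Hydrogenic computation above, but keeping $\v{A}$: combining the diamagnetic inequality $\|\nabla|\psi|\|_2 \le \|(\v{p}+\v{A})\psi\|_2$, the pointwise bound $|\langle\psi,\gvsig\psi\rangle_{\C^2}| \le |\psi|^2$ in the identity $\|(\v{p}+\v{A})\psi\|_2^2 = \|\gvsig\cdot(\v{p}+\v{A})\psi\|_2^2 - \langle\psi,\gvsig\cdot\v{B}\psi\rangle_{L^2}$, the sharp Sobolev inequality $S_3\|\psi\|_6^2 \le \|\nabla|\psi|\|_2^2$ (Lemma \ref{lem:sobolev}), a Gagliardo--Nirenberg bound for $\|\psi\|_4$ (Lemma \ref{lem:GN_inequality}), and the Hardy-type bound $\langle\psi,|\cdot|^{-1}\psi\rangle_{L^2} \le 2(8\pi/3)^{1/3}\|\,|\psi|^2\,\|_3^{1/2}$ used above, one obtains, for all $(\psi,\v{A}) \in \fC_1$,
\begin{align*}
\langle\psi,|\cdot|^{-1}\psi\rangle_{L^2} \ \lesssim \ \|\gvsig\cdot(\v{p}+\v{A})\psi\|_2 + F[\v{A},\v{0}] \, ,
\end{align*}
together with the fact that bounded kinetic energy and bounded field energy force $\psi$ bounded in $H^1(\R^3;\C^2)$ and $\v{A}$ bounded in $\dot{H}^1(\R^3;\R^3)$ (use $\diver\v{A}=0$ to pass from $\|\curl\v{A}\|_2$ to $\|\nabla\v{A}\|_2$, then $\dot{H}^1 \hookrightarrow L^6$ to control $\|\v{A}\psi\|_2$).

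Now assume $E_{\mr{P}}^{\mr{G}}(Z_0) = -\infty$ for some $Z_0 \in (0,Z_c)$ and take $(\psi_n,\v{A}_n) \in \fC_1$ with $E_{\mr{P}}[\psi_n,\v{A}_n] \to -\infty$. I would normalize by a two-stage rescaling with $(\psi,\v{A}) \mapsto (\lambda^{3/2}\psi(\lambda\,\cdot),\lambda\v{A}(\lambda\,\cdot))$, which preserves $\fC_1$ and weights the kinetic, field, and Coulomb parts of $E_{\mr{P}}$ by $\lambda^2$, $\lambda$, $\lambda$. First I rescale so the kinetic energy equals $1$; the displayed bound then forces the field energies to tend to $\infty$, because otherwise the Coulomb energies would stay bounded and $E_{\mr{P}}$ could not diverge. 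Then I rescale again so the field energy equals $1$: this sends the kinetic energy to $0$, keeps the field energy equal to $1$, and keeps the Coulomb energies bounded below by $1/Z_0$ (the upper control comes from the displayed bound, the lower from $E_{\mr{P}}\to-\infty$). For the resulting sequence the a priori bounds apply, so after passing to a subsequence $\psi_n \rightharpoonup \psi_*$ in $H^1(\R^3;\C^2)$ and $\v{A}_n\rightharpoonup\v{A}_*$ in $\dot{H}^1(\R^3;\R^3)$ with $\diver\v{A}_*=0$, and, by Rellich--Kondrachov (Lemma \ref{lem:rellich_kon}) with a diagonal argument over balls, $\psi_n \to \psi_*$ in $L^p_{\mr{loc}}$ for every $p<6$.

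It remains to identify the limit. The functional $\psi \mapsto \langle\psi,|\cdot|^{-1}\psi\rangle_{L^2}$ is weakly continuous on $H^1$-bounded sets: splitting at $|x|=R$, the integral over $\{|x|\le R\}$ converges by the local strong convergence (since $|\cdot|^{-1} \in L^2(B_R)$ in three dimensions) while the integral over $\{|x|>R\}$ is $\le R^{-1}\|\psi_n\|_2^2 = R^{-1}$ uniformly; hence $\langle\psi_*,|\cdot|^{-1}\psi_*\rangle_{L^2} = \lim_n \langle\psi_n,|\cdot|^{-1}\psi_n\rangle_{L^2} \ge 1/Z_0 > 0$, so $\psi_* \neq 0$. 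Moreover $\gvsig\cdot(\v{p}+\v{A}_n)\psi_n \to 0$ in $L^2$, while $\v{p}\psi_n\rightharpoonup\v{p}\psi_*$ and $\v{A}_n\psi_n\to\v{A}_*\psi_*$ in the distributional sense (the latter from $\v{A}_n\rightharpoonup\v{A}_*$ in $L^6$ and $\psi_n\to\psi_*$ in $L^3_{\mr{loc}}$), so $\gvsig\cdot(\v{p}+\v{A}_*)\psi_* = 0$ as a distribution. Thus $(\psi_*/\|\psi_*\|_2,\v{A}_*) \in \fF$, and weak lower semicontinuity of the norm gives $F[\v{A}_*,\v{0}] \le 1$, whence
\begin{align*}
Z_c \ \le \ \frac{F[\v{A}_*,\v{0}]\,\|\psi_*\|_2^2}{\langle\psi_*,|\cdot|^{-1}\psi_*\rangle_{L^2}} \ \le \ Z_0 \ < \ Z_c \, ,
\end{align*}
a contradiction. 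The main obstacle is exactly the non-compactness of this variational problem --- it is invariant under translations and under the scaling just used, so a minimizing sequence may a priori vanish or run off to spatial infinity --- and the way around it is that a divergent $E_{\mr{P}}^{\mr{G}}(Z_0)$ forces the Coulomb energy to grow, so that after the two-stage rescaling the Coulomb lower bound simultaneously pins a nontrivial amount of mass near the origin (killing both vanishing and escape) and survives to the limit to produce the forbidden zero mode.
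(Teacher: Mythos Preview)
Your instability argument for $Z>Z_c$ is precisely the scaling computation the paper gives just before stating the theorem. For the stability direction $Z<Z_c$, the paper gives no proof at all --- it simply attributes the result to \cite{FLL86} --- so there is nothing to compare against in the thesis itself. Your compactness--contradiction strategy (rescale a minimizing sequence so that the Pauli kinetic energy vanishes while the field energy stays normalized, then extract a weak limit that is a nontrivial zero mode whose field-to-Coulomb ratio is $\le Z_0<Z_c$) is exactly the method of the original Fr\"ohlich--Lieb--Loss paper, so you have in effect reconstructed the cited argument.

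One point worth tightening: after the first rescaling to $T_n=1$ you assert that the rescaled energies still diverge to $-\infty$ (``$E_{\mr{P}}$ could not diverge''). This is true but not automatic from ``take a minimizing sequence and rescale,'' since rescaling changes the energy. The clean justification is that if $\lambda_n=T_n^{-1/2}$ then the rescaled energy is $E_n'=1-(Z_0C_n-F_n)/\sqrt{T_n}$; were this bounded below by $-M$ one would get $Z_0C_n-F_n\le(M+1)\sqrt{T_n}$ and hence $E_n=T_n-(Z_0C_n-F_n)\ge T_n-(M+1)\sqrt{T_n}\ge-(M+1)^2/4$, contradicting $E_n\to-\infty$. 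With that line inserted, the two-stage rescaling and the weak-limit identification go through as you describe.
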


It is important to note that if one omits the field energy in the definition of $E_{\mr{P}}$, then instability always follows for any $Z > 0$ (this is obvious from the scaling argument given above). We also mention that if one replaces $[\gvsig \cdot (\v{p} + \v{A})]^2$ with $(\v{p} + \v{A})^2 + \frac{g}{2} \gvsig \cdot \v{B}$, where $g$ is the electron $g$-factor, then stability always holds if $g < 2$ and stability never holds when $g > 2$, \textit{even} when $Z = 0$ (c.f \cite{FLL86}). Since physically $g \simeq 2.0023$ as a result of Quantum Electrodynamics, we run into an inconsistency in the theory with $(\v{p} + \v{A})^2 + \frac{g}{2} \gvsig \cdot \v{B}$. The only way to mend this is to include all Quantum Electrodynamical effects, but this appears to be a very difficult unresolved problem. 
  
The authors in \cite{FLL86} give the lower bound $Z_c > Z_c^L := 3/(\pi \alpha^2) \simeq 17,932$. An upper bound is $Z_c \leq (3 \pi)^2 / (8 \alpha^2) \simeq 208,398$ which follows from plugging in the Loss-Yau zero mode (\ref{eq:LossYau_zeromode}) into the definition of $Z_c$. Therefore, $Z_c$ is far larger than the charges encountered on the periodic table. For $Z < Z_c^L$, a lower bound for $E_{\mr{P}}^{\mr{G}} (Z)$ is given by
\begin{align*}
E_{\mr{P}}^{\mr{G}} (Z) \geq - \frac{1}{4} Z^2 - \frac{Z^3}{32 Z_c^L} \left( 1 - \frac{3}{4} \frac{Z}{Z_c^L} \right)^{3/2} . 
\end{align*}
We would like to point out that the exact computation of $Z_c$ and $E_{\mr{P}}^{\mr{G}} (Z)$, as well as whether $E_{\mr{P}}^{\mr{G}} (Z)$ diverges as $Z \rightarrow Z_c$, remain interesting open problems.

In the same year, Lieb and Loss \cite{LL86} proved a generalization of Theorem \ref{thm:one-electron_stability_estimate} for the many-electron atom and the one-electron molecule. In the case of the many-electron atom\footnote{By many-electron atom we mean $N > 1$ electrons and a single fixed nucleus of charge $Z > 0$} stability occurs if $Z \alpha^{12/7} < \text{constant}$. Whereas, in the case of the one-electron molecule\footnote{By one-electron molecule we mean a single electron interacting with $K > 1$ static nuclei at distinct positions $\ul{\v{R}} = (\v{R}_1 , \cdots , \v{R}_K) \in \R^{3K}$ and charges $\Z = (Z_1 , \cdots , Z_K) \in [0, \infty)^K$} one requires $\alpha^2 \max{\Z} < \text{constant}$ \textit{and} $\alpha < \alpha_c$ where $0.32 < \alpha_c < 6.67$. It is important to note that for stability of the one-electron molecule one requires a bound on $\alpha$, regardless of the size of the charges. To demonstrate why this is the case, assume, for simplicity, that all the $K > 1$ nuclei possess the same charge $Z > 0$ that is not required to be an integer. If $(\psi , \v{A}) \in \fC_1$ is a zero mode pair, then, in this case, the many-body Pauli energy (\ref{eq:mbmp_energy}) is
\begin{align*}
E_{\mr{P}} [\psi , \v{A}] = E_{\mr{P}} [\psi , \v{A} , \v{0}] = - Z \sum_{j=1}^K \langle \psi , |\v{x} - \v{R}_j|^{-1} \psi \rangle_{L^2} + Z^2 \sum_{1 \leq i < j \leq K} |\v{R}_i - \v{R}_j|^{-1} + F [\v{A} , \v{0}] .
\end{align*}
Multiplying $E_{\mr{P}} [\psi , \v{A}]$ on the left by the $\C^{2^K}$-valued function $\Psi (\ul{\v{R}}) = \otimes_{j = 1}^K \psi (\v{R}_j)$, and on the right by its conjugate transpose, and then integrating, one finds
\begin{align}\label{eq:stability_bound_alpha_1}
\int_{\R^{3K}} \langle \Psi (\ul{\v{R}}) , E_{\mr{P}} [\psi , \v{A}] \Psi (\ul{\v{R}}) \rangle_{\C^{2^K}} \dd \ul{\v{R}} = 2 \left( - Z K + \frac{1}{2} Z^2 K (K-1) \right) I [\psi] + F[\v{A} , \v{0}] , 
\end{align}
where 
\begin{align*}
I [\psi] = \frac{1}{2} \iint_{\R^3 \times \R^3} \frac{| \psi  (\v{x}) |^2 |\psi (\v{y})|^2 }{ |\v{x} - \v{y} |} \dd \v{x} \dd \v{y} .
\end{align*}
Choosing $Z = 1/K$ and taking the $K \rightarrow \infty$ limit, the right hand side of (\ref{eq:stability_bound_alpha_1}) becomes $F[\v{A} , \v{0}] - I[\psi]$. Hence, if we choose $\alpha$ so that
\begin{align*}
8 \pi \alpha^2 > \inf{ \left\lbrace \| \v{B} \|^2 / I[\psi] ~ : ~  (\psi , \v{A}) \in \fF \right\rbrace } ,
\end{align*}
then instability follows. Using the explicit Loss-Yau zero mode (\ref{eq:LossYau_zeromode}), we may actually compute the upper bound $\alpha_c < 3 \pi / \sqrt{2} \simeq 6.67$. 

Stability of the full many-body Pauli Hamiltonian (\ref{def:manybody_Pauli_Hamiltonian}) was first resolved independently by Fefferman (unpublished; see \cite{F95} for an announcement), and Lieb, Loss, and Solovej \cite{LLS95} in 1995. The main Theorem of \cite{LLS95} reads as follows.
\begin{thm}[\cite{LLS95}]\label{thm:many-body_stability_estimate}
If $\alpha \leq 0.06$ and $\alpha^2 \max{\Z} \leq 0.041$, then the ground state energy $E_{\mr{P}}^{\mr{G}}$ defined in (\ref{def:abs_grnd_state_energy}) satisfies
\begin{align}\label{eq:general_stability_estimate}
E_{\mr{P}}^{\mr{G}} \geq - C(\alpha , \Z) N^{1/3} K^{2/3}  ,
\end{align}
where $C (\alpha , \Z) > 0$ is a constant depending only on $\alpha$ and $\Z$. 
\end{thm}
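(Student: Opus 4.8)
The approach I would take is the one introduced in \cite{LLS95}: convert the Pauli kinetic energy into the \emph{relativistic} kinetic energy $|\v{p}| = \sqrt{-\Delta}$, paying for the conversion with a controlled portion of the magnetic field energy, and then invoke the stability of relativistic matter, which is known to hold when the relevant coupling constants are small. The crucial new ingredient is a one-body magnetic inequality: there are constants $c_\alpha, C_\alpha > 0$ depending only on $\alpha$, with $c_\alpha \sim \alpha^{-2}$ as $\alpha \to 0$, such that for every $\v{A}$ with $\diver \v{A} = 0$ and every $\psi \in H^1(\R^3;\C^2)$,
\[ \| \gvsig \cdot ( \v{p} + \v{A} ) \psi \|_2^2 \;+\; \tfrac{1}{2}\, F[\v{A} , \v{0}] \;\geq\; c_\alpha \, \langle \psi , |\v{p}|\, \psi \rangle \;-\; C_\alpha \, \| \psi \|_2^2 . \]
The Loss-Yau zero modes \cite{LY86} make the field-energy term indispensable: for them $\| \gvsig \cdot ( \v{p} + \v{A} ) \psi \|_2^2 = 0$, so the Pauli energy alone controls nothing, and balancing the remaining terms for a zero mode along the scaling $\psi_\lambda$ discussed in Chapter \ref{chap:stability} is what pins down the size of $c_\alpha$. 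I would prove this inequality by localization: cover $\R^3$ by balls of bounded overlap, bound below on a ball of radius $\ell$ the localized Pauli energy plus a proportionate share of the \emph{total} field energy by roughly $\ell^{-1}$ times a power of the local mass, and then integrate against $\dd \ell / \ell$ over all length scales --- the ``running energy scale'' device --- to reconstitute $\langle \psi , |\v{p}|\, \psi \rangle$. Applied on the antisymmetric subspace $\BWN{H^1(\R^3;\C^2)}$ together with the relativistic Lieb-Thirring inequality, this yields $T_{\mr{P}}[\psi,\v{A}] + \tfrac{1}{2} F[\v{A},\v{0}] \gtrsim c_\alpha \int_{\R^3} \rho_\psi^{4/3} - C_\alpha N$, with $\rho_\psi$ the one-particle density.

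Granting the magnetic inequality, the rest is a reduction to relativistic matter. Using $E_{\mr{P}}[\psi,\v{A},\v{0}] = T_{\mr{P}}[\psi,\v{A}] + V[\psi] + F[\v{A},\v{0}]$ and the summed one-body inequality, the leftover $\tfrac{1}{2} F[\v{A},\v{0}] \geq 0$ may be discarded, so it suffices to bound below $c_\alpha \langle \psi , \sum_{j=1}^N |\v{p}_j|\, \psi \rangle + V[\psi] - C_\alpha N$ on $\BWN{H^1(\R^3;\C^2)}$. Here I would invoke the Lieb-Yau electrostatic inequality to bound $V(\ul{\v{R}},\Z)$ below by a sum of one-body potentials --- each electron feeling only its nearest nucleus --- plus a strictly positive nucleus-nucleus term governed by nearest-neighbour nuclear distances; this surplus is exactly what absorbs the short-distance self-energy generated when one applies the relativistic hydrogenic bound $|\v{p}| - \mu |\v{x}|^{-1} \geq 0$ (valid for $\mu \leq 2/\pi$) near each nucleus. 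After rescaling lengths so that the coefficient of $\sum_j |\v{p}_j|$ is normalized to $1$, the effective electron-nucleus coupling becomes $\sim \max \Z / c_\alpha \sim \alpha^2 \max \Z$ and the effective electron-electron coupling becomes $\sim 1/c_\alpha \sim \alpha^2$, both small by hypothesis; stability of relativistic matter (Conlon; Fefferman-de la Llave; Lieb-Yau) then produces a lower bound of the form $-C(\alpha,\Z)(N+K)$, and a more careful joint bookkeeping of the electrostatic and relativistic Lieb-Thirring estimates sharpens this to the asserted $N^{1/3}K^{2/3}$ dependence.

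The heart of the difficulty is the magnetic inequality. The field energy $\|\curl \v{A}\|_2^2$ is a single nonlocal functional, yet it must be apportioned --- with fully explicit, uniform constants --- first among the infinitely many balls of the covering and then among the $K$ nuclear regions, all uniformly over \emph{every} divergence-free $\v{A}$; and because of the zero modes this apportionment cannot be wasteful at any length scale, which is precisely what forces the delicate integration over scales. Once that inequality is in hand, the electrostatic reduction and the appeal to stability of relativistic matter are, by comparison, routine.
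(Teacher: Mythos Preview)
The paper does not prove this theorem: it is stated as a citation of \cite{LLS95} with no argument given, only a brief discussion of its meaning and the remark that optimal constants remain open. Your proposal is therefore not comparable to anything in the paper itself; rather, you have correctly sketched the strategy of the original Lieb--Loss--Solovej paper --- the key magnetic inequality that trades Pauli kinetic energy plus a share of the field energy for the relativistic kinetic energy $|\v{p}|$, followed by the reduction to stability of relativistic matter via the Lieb--Yau electrostatic inequality and the relativistic Lieb--Thirring bound. That is indeed the heart of \cite{LLS95}, and your identification of the zero modes as the obstruction forcing the field energy to participate, and of the localization over length scales as the main technical device, is accurate.
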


That is, for small enough $\max{\Z}$ and $\alpha$, the total energy $E_{\mr{P}}$ associated with the many-body Pauli Hamiltonian (\ref{def:manybody_Pauli_Hamiltonian}) is bounded below with lower bound linear in the total particle number with a constant independent of the magnetic field $\v{B} = \curl{\v{A}}$ \textit{and} the positions of the nuclei $\ul{\v{R}}$. We note the antisymmetry condition in the definition of $\fC_N$ (\ref{def:function_space_C}) is crucial for this result, as minimizing with respect to Bosonic (completely symmetric) wavefunctions results in collapse \cite[Theorem 9.3]{LS10}. Optimal values for $\alpha$ and $\alpha^2 \max{\Z}$ appears to be a difficult open problem. 

To conclude this section and Chapter, we provide a brief discussion addressing the natural and interesting question of zero mode solutions to the Maxwell-Pauli equations (\ref{eq:MP}). For this, we recall some formulas regarding zero modes. Consider a zero mode pair $(\psi , \v{A}) \in \fC_1$. Since $\gvsig \cdot (\v{p} + \v{A}) \psi = 0$, we can use the relation $\sigma^i \sigma^j = \delta_{ij} I + i \epsilon_{ijk} \sigma^k$ among the Pauli matrices $\sigma^j$ to deduce
\begin{align}\label{eq:zeromode_comp_1}
0 = \langle \gvsig \psi , \gvsig \cdot (\v{p} + \v{A}) \psi \rangle_{\C^2} = \langle \psi , (\v{p} + \v{A}) \psi \rangle_{\C^2} + i \langle \psi , (\v{p} + \v{A}) \wedge \gvsig \psi \rangle_{\C^2} .
\end{align}
Taking the real part of both sides of (\ref{eq:zeromode_comp_1}) and solving for $\v{A}$ we find
\begin{align}\label{eq:zeromode_A_potential}
\v{A} = - \frac{ \curl{ \langle \psi  , \gv{\sigma} \psi \rangle_{\C^2} } + 2 \im{ \langle \psi , \nabla \psi \rangle_{\C^2} } }{ 2 \langle \psi , \psi \rangle_{\C^2}  } .
\end{align}
The same formula (\ref{eq:zeromode_A_potential}) appears in \cite{LY86}. Interestingly, the imaginary part gives $\v{A} \wedge \langle \psi , \gvsig \psi \rangle_{\C^2} + \im{ \langle \psi , \nabla \wedge \gvsig \psi \rangle_{\C^2} } = \frac{1}{2} \nabla |\psi|^2$. 

Let $V (t) \in C^1 (\R^3)$ be a (possibly time-dependent) potential. Suppose there exists a sufficiently smooth (to justify the following computations) solution $(\psi (t) , \v{A} (t)) \in \fC_1$, with $\langle \psi , \psi \rangle_{\C^2} > 0$, to the system
\begin{align*}
\left\lbrace \begin{array}{l}
i \partial_t \psi = V \psi   \\
\square \v{A} = 0  \\
\gvsig \cdot (\v{p} + \v{A}) \psi = 0 .
\end{array}  \right.
\end{align*}
Using $i \partial_t \psi = V \psi$ and (\ref{eq:zeromode_A_potential}) we compute $\partial_t \v{A} = - \nabla V$. Therefore, $\partial_t \curl{\v{A}} = 0$ and, hence, the corresponding magnetic field $\v{B} = \curl{\v{A}}$ is time-independent. However, by taking the curl of $\square \v{A} = 0$ we see that the magnetic field also satisfies $\square \v{B} = 0$. Hence, $\Delta \v{B} = 0$ and, since $\v{B} \in L^2 (\R^3 ; \R^3)$, this forces $\v{B} \equiv 0$. Consequently,   $(\psi , \v{A})$ must be the trivial solution and either the solution $(\psi , \v{A})$ is not smooth enough to justify the previous computations or there are \textit{no} non-trivial zero mode solutions to the MP equations. A more careful argument would also cover the case when $V$ has singularities (e.g., $V (|\v{x}|) = - |\v{x}|^{-1}$).

\chapter{Other Models}\label{chap:othermodels}

To our knowledge, there is currently no well-posedness theory for solutions to the MBMP equations (\ref{eq:MBMP}) for any kind of initial data, \textit{even} in the single electron case with no nuclei present, i.e., equations (\ref{eq:MP}). To contrast this observation and to further motivate our study of the MBMP equations, this chapter is devoted to reviewing two similar models for a charged quantum particle interacting with its self-generated electromagnetic field: the \textbf{Maxwell-Schr\"{o}dinger equations} (abbr.~\textbf{MS equations}) and \textbf{Maxwell-Dirac equations}  (abbr.~\textbf{MD equations}). At the end of this chapter we also mention another possible model based on the Brown-Ravenhall operator that to our knowledge has not yet be considered in the literature.

\section{The Maxwell-Schr\"{o}dinger System}\label{sec:MS}

For the case of a single electron and no nuclei in $d$ dimensions, the MS equations considered in the literature read
\begin{align}\label{eq:MS}
\left\lbrace \begin{array}{l}
i \partial_t \psi = \left( (\v{p} + \v{A})^2 - \frac{1}{\alpha} \varphi \right) \psi  \\
\square \v{A} + \nabla ( \diver{\v{A}} + \alpha \partial_t \varphi ) = 4 \pi \alpha \v{J}_{\mr{S}} [\psi , \v{A}]  \\
- \Delta \varphi - \alpha \partial_t \diver{\v{A}} = - 4 \pi \alpha | \psi |^2 ,
\end{array} \right.  
\end{align}
where $\psi (t) : \R^{d} \rightarrow \C$ is the single-particle wave function without spin, $(\varphi (t) , \v{A} (t)) : \R^{d} \times \R^d \rightarrow \R \times \R^d$ are the electromagnetic potentials, and
\begin{align}\label{def:schr_prob_current}
\v{J}_{\mr{S}} [\psi , \v{A}] = - 2 \alpha \re{ \langle \psi , (\v{p} + \v{A}) \psi \rangle_{\C} }
\end{align}
is the Schr\"{o}dinger probability current. We note that, at least formally, there are two conserved quantities associated to (\ref{eq:MS}): $\| \psi \|_{2}$ and the total energy
\begin{align}\label{def:schr_energy}
E_{\mr{S}} [\psi , \v{A}] = \| (\v{p} + \v{A}) \psi \|_{2}^2 + F [ \v{B} , \v{E} ] ,
\end{align}
where $F[\v{B} , \v{E}] = F [ \curl{ \v{A} } , - \nabla \varphi - \alpha \partial_t \v{A} ]$ is given by (\ref{def:field_energy_potentials}). 

Equations (\ref{eq:MS}) may be considered as the classical approximation to the quantum field equations for an electrodynamical non-relativistic many body system, and these equations also appear as a model for laser physics \cite{Healy1982}. Formal justification for the model (\ref{eq:MS}) follows a standard quantization procedure and such a derivation may be found in \cite[Section 2]{Petersen2014}\footnote{The derivation in \cite{Petersen2014} actually gives the MS system \textit{without} the Coulomb self-interaction $\varphi$ included.}. Furthermore, equations (\ref{eq:MS}) are invariant under the gauge transformation $(\psi , \varphi , \v{A} ) \longmapsto (\psi e^{-i \zeta} , \varphi - \alpha \partial_t \zeta , \v{A} + \nabla \zeta)$ where $\zeta : \R^d \rightarrow \R$ is some gauge function. As with the MBMP equations, we will focus on the Coulomb gauge $\diver{\v{A}} = 0$, and in this gauge equations (\ref{eq:MS}) become 
\begin{align}\label{eq:MS_Coulomb}
\left\lbrace \begin{array}{l}
i \partial_t \psi = \left( (\v{p} + \v{A})^2 - \frac{1}{\alpha} \varphi \right) \psi  \\
\square \v{A} = 4 \pi \alpha \Hproj{ \v{J}_{\mr{S}} [\psi , \v{A}] }  \\
- \Delta \varphi = - 4 \pi \alpha |\psi |^2 \\
\diver{\v{A}} = 0 ,
\end{array} \right.  
\end{align}
As we argued in \S\ref{sec:MP} it doesn't make physical sense to include the term corresponding to the Coulomb self-interaction $\varphi = - \alpha |\cdot|^{-1} * |\psi|^2$ in (\ref{eq:MS_Coulomb}). However, this term is present in the mathematical literature studying (\ref{eq:MS}) and so we leave it in (\ref{eq:MS_Coulomb}) for clarity. 

The three dimensional MS equations (\ref{eq:MS_Coulomb}) (without the electrostatic potential $\varphi$) should be compared to the MP equations, which we rerecord here for convenience:
\begin{align*}
\left\lbrace \begin{array}{l}
i \partial_t \psi = [\gvsig \cdot (\v{p} + \v{A})]^2 \psi  \\
\square \v{A} = 4 \pi \alpha \Hproj{ \v{J}_{\mr{P}} [\psi , \v{A}] }  \\
\diver{\v{A}} = 0 .
\end{array} \right. 
\end{align*}
Using the relation $\sigma^i \sigma^j = \delta_{ij} I + i \epsilon_{ijk} \sigma^k$ for Pauli matrices it is straightforward to show that 
\begin{align}\label{eq:expandedPauli}
[ \gvsig \cdot (\v{p} + \v{A}) ]^2 = (\v{p} + \v{A})^2 + \gvsig \cdot \curl{\v{A}} ,
\end{align}
and
\begin{align}\label{eq:expandedCurrent}
\v{J}_{\mr{P}} [\psi , \v{A}] = \v{J}_{\mr{S}} [\psi , \v{A}] - \alpha \curl{\langle \psi , \gvsig \psi \rangle_{\C^2}} .
\end{align}
From (\ref{eq:expandedPauli}) we see that the difference between the Pauli operator $[\gvsig \cdot (\v{p} + \v{A})]^2$ and the magnetic Schr\"{o}dinger operator $(\v{p} + \v{A})^2$ is the coupling between the spin of the electron and the magnetic field $\v{B} = \curl{\v{A}}$, namely $\gvsig \cdot \v{B}$. Likewise, from (\ref{eq:expandedCurrent}), the difference between the Pauli probability current $\v{J}_{\mr{P}} [\psi , \v{A}]$, defined by (\ref{def:Pauli_prob_current}), and the Schr\"{o}dinger probability current $\v{J}_{\mr{S}} [\psi , \v{A}]$, defined by (\ref{def:schr_prob_current}), is the appearance of the \textit{spin current} $\curl{\langle \psi , \gvsig \psi \rangle_{\C^2}}$. Therefore, we may view (\ref{eq:MS_Coulomb}) (without the electrostatic potential $\varphi$) as the physical approximation to (\ref{eq:MP}) where the spin of the electron is neglected.

There is an extensive research literature studying the MS system (\ref{eq:MS}). For literature concerning the Cauchy problem see \cite{Nakamitsu1985, Nakamitsu1986, tsutsumi1993, guo1995, NW05, NW07, Bejenaru2009, WADA2012, Petersen2014, PjP14}, for scattering theory \cite{tsutsumi1993, Ginibre2003, Shimomura2003, Ginibre2006, Ginibre2007, ginibre2008}, for numerics studies \cite{Chupeng2018, Chupeng2019_1, Chupeng2019_2}, and for nonlinear extensions see \cite{ADM17, AMS19}. To our knowledge, Nakamitsu and Tsutsumi in \cite{Nakamitsu1985, Nakamitsu1986} provide the first mathematical treatment of the Cauchy problem associated with (\ref{eq:MS}). In \cite{Nakamitsu1986} the authors treat the system (\ref{eq:MS}) in the Lorenz gauge $\diver{\v{A}} + \alpha \partial_t \varphi = 0$ and show local (in time) well-posedness for initial data  
\begin{align*}
(\psi (0) , \varphi (0) , \dot{\varphi} (0) , \v{A} (0) , \dot{\v{A}} (0)) \in ~ & H^m (\R^d ; \C) \oplus H^{m+1} (\R^d ; \R) \oplus H^m (\R^d ; \R) \\
& \oplus H^m (\R^d ; \R^d) \oplus H^{m-1} (\R^d ; \R^d)  ,
\end{align*}
where $m$ is a positive integer satisfying $m \geq \frac{d}{2} + 2$. In one and two space dimensions, the local solutions obtained are shown to be global using the conservation of energy and the $L^2$-norm. 

Tsutsumi in \cite{tsutsumi1993} studies global existence and asymptotics for the three dimensional problem (\ref{eq:MS}) in the Coulomb gauge. For a certain set of scattered states, i.e., data given at $t = + \infty$, Tsutsumi constructs the modified wave operator and shows (\ref{eq:MS}) admits a global solutions for initial data in the range of this wave operator. Further investigations into the scattering theory of (\ref{eq:MS}) may be found in the paper of \cite{Shimomura2003} and the papers of Ginibre and Velo \cite{Ginibre2003, Ginibre2006, Ginibre2007, ginibre2008}

Using a strategy similar to the one employed in this thesis to study (\ref{eq:MBMP}), Guo, Nakamitsu, and Strauss in \cite{guo1995} prove the three dimensional problem (\ref{eq:MS}) in the Coulomb gauge admits global weak solutions for finite-energy initial data 
\begin{align*}
(\psi (0) , \v{A} (0) , \dot{\v{A}} (0) ) \in H^1 (\R^3 ; \C) \oplus H^1 (\R^3 ; \R^3) \oplus L^2 (\R^3 ; \R^3) 
\end{align*}
satisfying $\diver{\v{A}} (0) = \diver{\dot{\v{A}}} (0) = 0$. There the authors consider an $\varepsilon$-modified version of the MS equations (\ref{eq:MS}) that, in the Coulomb gauge, read 
\begin{align}\label{eq:MS_epsilon}
\left\lbrace \begin{array}{l}
\partial_t \psi = - ( i + \epsilon ) (\v{p} + \v{A})^2 \psi - i \frac{1}{\alpha} \varphi \psi  \\
\square \v{A} = 4 \pi \alpha \Hproj{ \v{J}_{\mr{S}} [\psi , \v{A}] }  \\
- \Delta \varphi = - 4 \pi \alpha | \psi |^2  \\
\diver{\v{A}} = 0 .
\end{array} \right.  
\end{align}
By taking advantage of the regularity-improving, dispersive properties of the heat kernel $e^{\varepsilon t \Delta}$ (see Lemma \ref{lem:Heat-Kernel}) and the dissipative $L^2$-norm and energy associated with (\ref{eq:MS_epsilon}), the authors in are able to prove the existence of low regularity time global solutions to (\ref{eq:MS_epsilon}). Then, by using a compactness argument to consider the $\varepsilon \rightarrow 0$ limit, they prove these low regularity global solutions to (\ref{eq:MS_epsilon}) converge to time global, finite-energy, weak solutions of (\ref{eq:MS}). The existence of global solutions in the Lorenz gauge and the \textit{temporal gauge} $\varphi = 0$ is then shown by choosing appropriate gauge transformations. 
 
Significant progress concerning the Cauchy problem for the three dimensional MS equations (\ref{eq:MS}) was made by Nakamura and Wada in \cite{NW05, NW07}. Specifically, Nakamura and Wada in \cite{NW05} develop the following local well-posedness theory for solutions to the three dimensional MS equations (\ref{eq:MS}) in the Coulomb gauge (a similar result is also obtained in the Lorenz and temporal gauges). Define 
\begin{align*}
\sR = \left\lbrace (s,r) ~ : ~   s \geq 5/3, ~ \max{\left\lbrace 4/3 , s-2 , (2s-1)/4\right\rbrace} \leq \sigma \leq \min{\left\lbrace s+1 ,(5s-2)/3\right\rbrace} \right\rbrace ,
\end{align*}
and
\begin{align}\label{eq:MS_initial_conditions}
\sZ_{0,\mr{S}}^{s,r} := \left\lbrace  ( \psi_0, \v{a}_0 , \dot{\v{a}}_0) \in H^s (\R^3 ; \C) \oplus (H^{r} \oplus H^{r-1}) (\R^3 ; \R^3) : \diver{\v{a}_0} = \diver{\dot{\v{a}}_0} = 0 \right\rbrace .
\end{align}
Consider, again, the equations (\ref{eq:MS}) in the Coulomb gauge
\begin{align}\label{eq:MS_Coulomb_nophi}
\left\lbrace \begin{array}{l}
i \partial_t \psi = (\v{p} + \v{A})^2 \psi  \\
\square \v{A} = 4 \pi \alpha \Hproj{ \v{J}_{\mr{S}} [\psi , \v{A}] }  \\
\diver{\v{A}} = 0 ,
\end{array} \right.  
\end{align}
where for simplicity we are ignoring the term corresponding to the Coulomb self-interaction $\varphi = - \alpha |\cdot|^{-1} * |\psi|^2$. The following Theorem and discussion does not change in any crucial way whether one chooses to include or not include this nonlinear term. 

\begin{thm}[\cite{NW05}]\label{thm:nakamura05}
Fix $(s,r) \in \sR$ with $(s , r) \neq ( 5/2, 7/2 ), (7/2, 3/2)$. Then for any $(\psi_0 , \v{a}_0 , \dot{\v{a}}_0 ) \in \sZ_{0,\mr{S}}^{s,r}$, there exists a $T > 0$  such that (\ref{eq:MS_Coulomb_nophi}) with initial condition 
\begin{align*}
(\psi (0) , \v{A} (0) , \partial_t \v{A} (0) ) = (\psi_0 , \v{a}_0 , \dot{\v{a}}_0 )
\end{align*}
has a unique solution $(\psi , \v{A})$ satisfying $(\psi , \v{A} , \partial_t \v{A}) \in C_T \sZ_{0,\mr{S}}^{s,r}$. Moreover, if
\begin{align*}
r \geq \max{ \left\lbrace (s-1) , (2s+1)/4 \right\rbrace }
\end{align*}
with $(s , r) \neq (5/2 , 3/2)$, then the map $(\psi_0 , \v{a}_0 , \dot{\v{a}}_0 ) \mapsto (\psi , \v{A} , \partial_t \v{A})$ is continuous as a map from $\sZ_{0,\mr{S}}^{s,r}$ to $C_T \sZ_{0,\mr{S}}^{s,r}$. 
\end{thm}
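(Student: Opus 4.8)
The plan is to pass to the Duhamel formulation of (\ref{eq:MS_Coulomb_nophi}) and solve it by a fixed-point argument in a function space built from Strichartz and smoothing norms. Since $\diver\v{A}=0$, one has $(\v{p}+\v{A})^2 = -\Delta + 2\,\v{A}\cdot\v{p} + |\v{A}|^2$, so the system is equivalent to
\[
\psi(t) = e^{it\Delta}\psi_0 - i\int_0^t e^{i(t-s)\Delta}\big(2\,\v{A}(s)\cdot\v{p}\,\psi(s) + |\v{A}(s)|^2\psi(s)\big)\,ds ,
\]
\[
\v{A}(t) = \dot{\s}(t/\alpha)\v{a}_0 + \alpha\,\s(t/\alpha)\dot{\v{a}}_0 + 4\pi\int_0^t \s\big((t-s)/\alpha\big)\,\Hproj \v{J}_{\mr{S}}[\psi(s),\v{A}(s)]\,ds ,
\]
and, because $\Hproj$ projects onto divergence-free fields, the gauge constraint $\diver\v{A}=0$ is automatically propagated. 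For small $T,R$ I would seek a fixed point of the associated map in the closed ball of radius $R$ in
\[
\big(C_T H^s \cap X_T^s\big)\times\big(C_T H^r \cap C_T^1 H^{r-1}\cap W_T^r\big),
\]
where $X_T^s$ and $W_T^r$ collect the auxiliary space-time norms needed for the multilinear estimates: for the Schr\"odinger factor these include the admissible Strichartz norms together with the Kato local-smoothing norm (which buys half a spatial derivative on the inhomogeneity) and a weighted maximal-function norm, and for the wave factor the wave-admissible Strichartz norms of Lemma \ref{lem:Strichartz_wave_general}. Completeness of the ball and the contraction property are arranged by equipping it with the metric induced by the lowest-regularity pair of norms.

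The genuine obstruction — and the source of the intricate admissible region $\sR$ — is that a derivative of $\psi$ appears in two places at once: the magnetic term $2\,\v{A}\cdot\v{p}\,\psi$ differentiates $\psi$ inside the Schr\"odinger equation, while the Maxwell source $\Hproj \v{J}_{\mr{S}}[\psi,\v{A}] = -2\alpha\,\Hproj\,\re\langle\psi,(\v{p}+\v{A})\psi\rangle$ again contains $\v{p}\psi$. The first and worse loss I would handle exactly as for derivative-nonlinear Schr\"odinger equations: bound $\int_0^t e^{i(t-s)\Delta}(\v{A}\cdot\v{p}\,\psi)\,ds$ by pairing the inhomogeneous local-smoothing estimate (which absorbs the $\v{p}$ on $\psi$ at the cost of a $|\nabla|^{1/2}$) against a weighted maximal-function bound for $\v{A}$, supplemented by ordinary Strichartz estimates and by the fractional Leibniz rule of Lemma \ref{lem:Kato-Ponce} to distribute the $\Lambda^s$ and $\Lambda^r$ derivatives across the products $\v{A}\cdot\v{p}\,\psi$ and $\psi\,\v{p}\,\psi$; Sobolev embeddings (Lemma \ref{lem:sobolev}) dispatch the lower-order term $|\v{A}|^2\psi$ and the undifferentiated pieces. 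The second loss is milder: by the energy estimate in Lemma \ref{lem:Strichartz_wave_general}, the Duhamel part of $\v{A}$ gains a full spatial derivative over its source, hence lands at regularity $\sim s$, comparable to $\psi$; only the free wave evolution $\dot{\s}(t/\alpha)\v{a}_0 + \alpha\,\s(t/\alpha)\dot{\v{a}}_0$ retains the data regularity $r$, and one checks that $\v{A}$ at regularity $\min\{r,s\}$ together with its Strichartz norms suffices to close the Schr\"odinger estimate. Tracking which H\"older/Sobolev exponent combinations keep every Strichartz pair admissible and away from its forbidden endpoint is precisely what produces the constraints defining $\sR$ and forces the exclusions $(s,r)\neq(5/2,7/2),(7/2,3/2)$.

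Granting these nonlinear estimates, the Duhamel map is a contraction on the ball for $T=T(R)$ small, which yields a unique solution $(\psi,\v{A},\partial_t\v{A})\in C_T\sZ_{0,\mr{S}}^{s,r}$; conservation of $\|\psi\|_2$ then follows a posteriori from the equation. For the continuous-dependence statement, the same multilinear estimates give a Lipschitz bound on the difference of two solutions measured in a slightly weaker norm $H^{s'}\times H^{r'}$; under the extra hypothesis $r\geq\max\{s-1,(2s+1)/4\}$ (with $(s,r)\neq(5/2,3/2)$) the regularity budget is comfortable enough that this weak-norm Lipschitz estimate can be upgraded to continuity of $(\psi_0,\v{a}_0,\dot{\v{a}}_0)\mapsto(\psi,\v{A},\partial_t\v{A})$ in the strong $\sZ_{0,\mr{S}}^{s,r}$-topology by a Bona--Smith-type argument: approximate the data by smoother data, use the uniform bounds coming from the contraction together with the weak-norm estimate to control high-frequency tails, and pass to the limit. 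I expect the multilinear smoothing estimate for the magnetic term $2\,\v{A}\cdot\v{p}\,\psi$ to be the true heart of the proof — it cannot be closed with Strichartz estimates alone, and bookkeeping exactly which exponent combinations survive in the local-smoothing/maximal-function duality is what makes the hypotheses of the theorem as delicate as they are.
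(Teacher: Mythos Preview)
Your approach is legitimate but differs from the one the paper (and \cite{NW05}) actually takes. You linearize around the \emph{free} Schr\"odinger flow $e^{it\Delta}$ and treat $2\v{A}\cdot\v{p}\psi$ as a derivative-nonlinear source, compensating the lost derivative via Kato local smoothing paired with a maximal-function bound on $\v{A}$. The paper instead linearizes around the \emph{magnetic} Schr\"odinger operator: for a given $\v{A}$ it builds the two-parameter unitary propagator $U_{\v{A}}^{\mr{S}}(t,s)$ for $i\partial_t\xi=(\v{p}+\v{A})^2\xi$ via Kato's abstract evolution theory, and the solution map is $(\psi,\v{A})\mapsto\bigl(U_{\v{A}}^{\mr{S}}(t,0)\psi_0,\,\v{K}\bigr)$ with $\v{K}$ solving the Klein--Gordon equation $(\square+1)\v{K}=4\pi\alpha\Hproj\v{J}_{\mr{S}}[\psi,\v{A}]+\v{A}$. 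The derivative in $\v{A}\cdot\v{p}\psi$ is thus absorbed into the propagator rather than handled by smoothing; the key analytic input is the energy-type bound
\[
\|U_{\v{A}}^{\mr{S}}\|_{L^\infty(\I^2;\B(H^2))}\lesssim (1+\|\v{A}\|_{\infty;1,2})^4\, e^{C\|\partial_t\v{A}\|_{1;3}},
\]
obtained by differentiating $\partial_t\xi$ in time, estimating $\|\partial_t\v{A}\cdot(\v{p}+\v{A})\xi\|_2$ via the diamagnetic inequality $|\nabla|f||\le|(\v{p}+\v{A})f|$, and closing with Gronwall. No smoothing or maximal-function norms appear; the auxiliary norm on $\v{A}$ is $\|\partial_t\v{A}\|_{6;3}$ (hence $W^{1,6}_TL^3$ in the metric space), sourced from the Klein--Gordon Strichartz estimates. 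Your route is closer in spirit to the later refinements in \cite{NW07} and \cite{Bejenaru2009}, and trades the evolution-operator machinery for harder harmonic analysis; the paper's route is more elementary at $(s,r)=(2,3/2)$ but leans on Kato's theory and the diamagnetic inequality in an essential way.
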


In \cite{NW07} Nakamura and Wada improve the local well-posedness theory in Theorem \ref{thm:nakamura05} down to $s \geq 11/8$ and $r > 1$ using a variation of Strichartz estimates for the Schr\"{o}dinger unitary group $e^{i t \Delta}$. The Strichartz estimates we refer to were first developed by Koch-Tzvetkov \cite{KT03} and Kenig-Koenig \cite{KK03} for the Benjamin-Ono equation, and the adapted to Schr\"{o}dinger equations by J.~Kato \cite{KatoJ05}. With these very low-regularity local solutions, Nakamura and Wada manage to use conservation of energy $E_{\mr{S}} [\psi , \v{A}]$, together with a blow-up alternative, to show that these solutions in fact exist for all time, thereby proving global well-posedness of (\ref{eq:MS_Coulomb_nophi}). Going even further, in \cite{Bejenaru2009} Bejenaru and Tataru manage to show (\ref{eq:MS_Coulomb}) is globally well-posedness in the energy class $(\psi_0 , \v{a}_0 , \dot{\v{a}}_0 ) \in \sZ_{0,\mr{S}}^{1,1}$ by using the analysis of a short time wave packet parametrix for the magnetic Schr\"{o}dinger equation and the related linear, bilinear, and trilinear estimates. 

Petersen in \cite{Petersen2014}, as part of his PhD thesis, was the first to study the Cauchy problem for the many-body Maxwell-Schr\"{o}dinger system, 
\begin{align}\label{eq:MBMS_Coulomb}
\left\lbrace \begin{array}{l}
i \partial_t \psi = \left( \sum\limits_{j = 1}^N (\v{p}_j + \v{A}_j)^2 + \sum\limits_{1 \leq j < k \leq N} \dfrac{Q_j Q_k}{|\v{x}_j - \v{x}_k|}  \right) \psi  \\[1.2ex]
\square \v{A} = 4 \pi \alpha \Hproj{ \J_{\!\mr{S}} [\psi , \v{A}] }  \\
\diver{\v{A}} = 0 ,
\end{array} \right.  
\end{align}
where $Q_j \in \R$ for $j \in \{1 , \cdots , N\}$ and $\v{J}_{\mr{S}} [\psi , \v{A}] (t) : \R^3 \rightarrow \R^3$ is the total Schr\"{o}dinger probability current density and is given by
\begin{align*}
\J_{\!\mr{S}} [\psi , \v{A}] = - 2 \alpha \sum_{j = 1}^N \re{ \int_{\R^{3 (N-1)}} \langle \psi (\ul{\v{x}}) , (\v{p}_j + \v{A}_j) \psi (\ul{\v{x}}) \rangle_{\C} \dd \ul{\v{x}}_j' } ,
\end{align*}
where $\ul{\v{x}}_j' = ( \v{x}_1 , \cdots , \v{x}_{j-1} , \v{x}_{j+1} , \cdots , \v{x}_N )$. Using some of the techniques of Nakamura and Wada \cite{NW05}, Petersen shows local existence and uniqueness of solutions to (\ref{eq:MBMS_Coulomb}) for initial data $(\psi_0 , \v{a}_0 , \dot{\v{a}}_0) \in \sZ_{0,\mr{S}}^{2 , \frac{3}{2}}$. It is an interesting open problem whether such local solutions to (\ref{eq:MBMS_Coulomb}) can be made global, as we expect to be the case. In this regard, it is not clear that the Koch-Tzvetkov Strichartz estimates for the Schr\"{o}dinger kernel used by Nakamura and Wada in \cite{NW07} can be used to overcome regularity issues posed by the Coulomb potential singularity $|\v{x}|^{-1}$. A very careful analysis that is highly adapted to deal with the Coulomb potential seems to be needed. Lastly, we note that in \cite{PjP14} Petersen and Solovej show the existence of traveling wave solutions, i.e., solutions of the form $(\psi (t) (\v{x}) , \v{A} (t) (\v{x}) ) = (e^{-i \omega t} \psi (\v{x} - \v{v}t) , \v{A} (\v{x} - \v{v} t))$, to both (\ref{eq:MS_Coulomb_nophi}) \textit{and} (\ref{eq:MP}) provided that the speed $|\v{v}|$ of the wave is not too large.

At the present moment, and to the best of our knowledge, Petersen is the only other author in the literature who has brought attention to the lack of a well-posedness theory for the MP equations (\ref{eq:MP}). One possible reason for this is that the Cauchy problem for the MP equations presents unexpected difficulties when attempting to directly adapt any single strategy mentioned previously used to study the MS system. To give one example where difficulties arise, we will give a proof sketch of the existence part of Theorem \ref{thm:nakamura05} in the special case $(s , r) = (2 , 3/2)$. The benefit of doing so is two-fold. First, this will allow us to properly explain precisely the difficulties that arise in a direct adaptation of the strategy used to prove Theorem \ref{thm:nakamura05} to produce a local well-posedness result for (\ref{eq:MP}). Second, the work \cite{NW05} provides us with a set of techniques which will nevertheless be useful in our strategy to prove Theorem \ref{thm:weak_solns_MBMP} (similar techniques may also be found in \cite{Petersen2014, ADM17, AMS19}). For the rest of this section we will use to notation introduced in \S\ref{sec:notation}. 

To solve (\ref{eq:MS_Coulomb_nophi}), one considers the Schr\"{o}dinger equation and the wave equation separately; specifically,
\begin{align}\label{eq:Schrodinger_Equation}
\left\lbrace \begin{array}{l}
i \partial_t \xi = (\v{p} + \v{A})^2 \xi  \\
\xi (0) = \psi_0 ,
\end{array} \right.
\end{align}
and 
\begin{align}\label{eq:Klein_Gordon_Equation}
\left\lbrace \begin{array}{l}
(\square + 1) \v{K} =  4 \pi \alpha \Hproj{ \v{J}_{\mr{S}} [ \psi , \v{A} ] } + \v{A}  \\
(\v{K} (0) , \partial_t \v{K} (0) ) = ( \v{a}_0 , \dot{\v{a}}_0 ) .
\end{array} \right.
\end{align}
We will always assume $\diver{\v{A}} = 0$ and $(\psi_0 , \v{a}_0 , \dot{\v{a}}_0 ) \in \sZ_{0,\mr{S}}^{2,\frac{3}{2}}$. In (\ref{eq:Schrodinger_Equation}) and (\ref{eq:Klein_Gordon_Equation}) we regard $\psi$ and $\v{A}$ as known, time-dependent functions. As is typically the case with existence theorems for PDEs, the proof of Theorem \ref{thm:nakamura05} employs a fixed point argument on a carefully chosen metric space. Therefore, we define the space
\begin{align*}
\sZ_{\mr{S}}^{s,r} (T) = & \left\lbrace (\phi , \v{A} ) \in L^{\infty}_T H^s (\R^3 ; \C) \oplus ( L^{\infty}_T H^{r} (\R^3 ; \R^3) \cap W^{1,6}_T L^3 (\R^3 ; \R^3) ) \right. \\
& ~~ \left. \text{s.t.} ~ \| \psi \|_{\infty; s,2} \leq R_1, ~ \max{ \left\lbrace \| \v{A} \|_{\infty; r,2} , \| \partial_t \v{A} \|_{6 ; 3} \right\rbrace } \leq R_2 , ~ \diver{\v{A}} = 0 \right\rbrace \numberthis \label{eq:MS_metric_space} 
\end{align*}
where $T , R_1 , R_2 > 0$ will be chosen later. By standard functional analysis arguments, endowing $\sZ_{\mr{S}}^{s,r} (T)$ with the metric
\begin{align}\label{eq:MS_metric}
d ( (\psi , \v{A} ) , (\psi' , \v{A}^{\!\prime}) ) = \max{ \left\lbrace \| \psi - \psi' \|_{\infty; 2} , \| \v{A} - \v{A}^{\!\prime} \|_{\infty ; \frac{1}{2} , 2} , \| \v{A} - \v{A}^{\!\prime} \|_{4;4} \right\rbrace } ,
\end{align}
gives us a complete metric space for each $T,R_1,R_2 >0$ (see, for example, \cite[Lemma 14]{Petersen2014} or Lemma \ref{lem:completeness} in \S\ref{sec:metric_space_linearization}). 

The key idea behind the proof of Theorem \ref{thm:nakamura05} is to then consider the solution map
\begin{align*}
\Phi : (\psi , \v{A}) \in \sZ_{\mr{S}}^{2,\frac{3}{2}} (T) \longmapsto (\xi , \v{K})
\end{align*}
where $\xi$ and $\v{K}$ satisfy (\ref{eq:Schrodinger_Equation}) and (\ref{eq:Klein_Gordon_Equation}), respectively, and show that $\Phi$ is a contraction map on $(\sZ_{\mr{S}}^{2,\frac{3}{2}} (T) , d)$ for some choice of $T , R_1 , R_2 > 0$. By the Banach fixed point theorem we will then have the existence of a unique local solution to (\ref{eq:MS_Coulomb_nophi}). To show $\Phi$ is a contraction on $\sZ_{\mr{S}}^{2,\frac{3}{2}} (T)$, we need estimates on the solutions $\xi$ and $\v{K}$ to (\ref{eq:Schrodinger_Equation}) and (\ref{eq:Klein_Gordon_Equation}), respectively.

For the Klein-Gordon equation (\ref{eq:Klein_Gordon_Equation}) the desired estimates are the Strichartz estimates proved by Strichartz \cite{Stri77}, Brenner \cite{Bren84}, and Ginibre and Velo \cite{GV85, GV95}. Following the formulation in \cite[Lemma 4.1]{NW05} (see also Lemma \ref{lem:Strichartz_wave_general} in \S\ref{sec:fundamental_results} of this thesis), let $r \in \R$, $(\v{a}_0 , \dot{\v{a}}_0 ) \in H^r \times H^{r-1}$ and for $\v{F} \in L^1_T H^{r - 1}$ define the continuous function
\begin{align}\label{eq:Klein-Gordon_soln}
\V_{\v{F}} [ \v{a}_0 , \dot{\v{a}}_0 ] (t) = \dot{S} \left( \frac{t}{\alpha} \right) \v{a}_0 + \alpha S \left( \frac{t}{\alpha} \right) \dot{\v{a}}_0 + \frac{1}{\alpha} \int_0^t S \left( \frac{t - \tau}{\alpha} \right) \v{F} (\tau ) \dd \tau ,
\end{align}
where
\begin{align*}
\dot{S} (t) = \cos{\left(\sqrt{1 - \Delta} t \right)} : H^r \rightarrow H^r
\end{align*}
and
\begin{align*}
S (t) = \frac{\sin{\left(\sqrt{1 - \Delta} t \right)}}{\sqrt{1 - \Delta}} : H^{r-1} \rightarrow H^r.
\end{align*}
It is straightforward to check that $\v{K} (t) = \V_{\v{F}} [ \v{a}_0 , \dot{\v{a}}_0 ] (t)$ is contained in $C_T H^r (\R^3 ; \R^3) \cap C^1_T H^{r - 1} (\R^3 ; \R^3)$ and satisfies the Klein-Gordon equation $(\square + 1) \v{K} = \v{F}$ with initial conditions $\v{K} (0) = \v{a}_0$ and $\partial_t \v{K} (0) = \dot{\v{a}}_0$. Moreover, the following Strichartz estimates hold:
\begin{align}\label{eq:Strichartz_KG_general}
\max_{k \in \left\lbrace0,1\right\rbrace} \| \partial_t^k \v{K} \|_{q_0 ; r - k - \frac{2}{q_0} , r_0} \lesssim \| (\v{a}_0 , \dot{\v{a}}_0 ) \|_{r , 2 \oplus r-1 , 2} + \| \v{F} \|_{q_1' ; r - 1 + \frac{2}{q_1}  , r_1'} 
\end{align}
where $0 \leq \frac{2}{q_k} = 1 - \frac{2}{r_k} < 1$ for $k \in \left\lbrace0,1\right\rbrace$, and $(q_1' , r_1')$ are the H\"{o}lder dual indices to $(q_1 , r_1)$. 

To solve the Schr\"{o}dinger equation (\ref{eq:Schrodinger_Equation}) and provide a useful estimate on its solution, one approach is to construct the two-parameter unitary group $U_{\v{A}}^{\mr{S}}$ associated to (\ref{eq:Schrodinger_Equation}). For this, one may use the abstract results of Kato \cite{Kato70, Kato73} concerning general linear evolution equations of the type
\begin{align*}
\left\lbrace \begin{array}{l}
\partial_t \xi + \mb{A} (t) \xi = \mb{F} (t)  \\
\xi (s) = \psi_0
\end{array} \right.
\end{align*}
in a Banach space $X$. Following the formulation of \cite[Theorem 1]{Kato73} as given in \cite[Theorem 6 \& Remark 7]{Petersen2014}, we have the following result concerning the initial value problem
\begin{align}\label{eq:homo_magnetic_schr}
\left\lbrace \begin{array}{l}
i \partial_t \xi = ( \v{p} + \v{A} )^2 \xi \\ 
\xi (s) = \psi_0 \in H^2 (\R^3 ; \C) .
\end{array} \right.
\end{align}
Let $T > 0$, $\I = [0,T]$, and fix a divergence-free vector potential $\v{A} \in L_T^{\infty} H^1 \cap W^{1,1} L^3$. It follows from the Kato-Rellich Theorem on perturbations of self-adjoint operators that $(\v{p} + \v{A})^2$ is an essentially self-adjoint operator with domain $H^2 (\R^3 ; \C)$. Moreover, there exists a unique unitary two-parameter group $U_{\v{A}}^{\mr{S}}$ defined on $\I^2$ with the following properties:
\begin{itemize}
\item $U_{\v{A}}^{\mr{S}}: \I^2 \rightarrow \B (L^2)$ is strongly continuous with $U_{\v{A}}^{\mr{S}}(t,t) = 1$ for $t \in \I$.
\item $U_{\v{A}}^{\mr{S}}(t, \tau) U_{\v{A}}^{\mr{S}}(\tau , s) = U (t,s)$ for all $(t, \tau , s) \in \I^3$. 
\item $U_{\v{A}}^{\mr{S}}(t, s) H^2 \subset H^2$ for $(t, s) \in \I^2$ and $U_{\v{A}}^{\mr{S}}: \I^2 \rightarrow \B (H^2)$ is strongly continuous.
\item The strong partial derivatives $\partial_t U_{\v{A}}^{\mr{S}}(t, s) \psi_0$ and $\partial_{\tau} U_{\v{A}}^{\mr{S}}(t, s) \psi_0$ exist in $L^2$ for all $(t, s) \in \I^2$ and $\psi_0 \in H^2$ and are given by $i \partial_t U_{\v{A}}^{\mr{S}}(t, s) \psi_0 = (\v{p} + \v{A} (t) )^2 U_{\v{A}}^{\mr{S}}(t, s) \psi_0$ and $\partial_s U_{\v{A}}^{\mr{S}}(t, s) \psi_0 = i U_{\v{A}}^{\mr{S}}(t, s) (\v{p} + \v{A} (s))^2 \psi_0$. Moreover, $\partial_t U_{\v{A}}^{\mr{S}}(t,s) , \partial_s U_{\v{A}}^{\mr{S}}(t,s) : \I^2 \rightarrow \B (H^2 , L^2)$ are strongly continuous. 
\end{itemize}

Using the unitarity of $U_{\v{A}}^{\mr{S}}(t,s)$ in $L^2$, namely, $\| U_{\v{A}}^{\mr{S}}(t,s) \psi_0 \|_{2} = \| \psi_0 \|_{2}$, and by interpolating, one can infer $U_{\v{A}}^{\mr{S}}(t,s) H^s \subseteq H^s$ for all $s \in [0,2]$. Further, by a duality argument, one can extend $U_{\v{A}}^{\mr{S}}(t,s) H^s \subseteq H^s$ for all $s \in [-2,2]$. Moreover, $\| U_{\v{A}}^{\mr{S}}(t,s) \|_{\B (H^2)}$ is bounded for each $(t,s) \in \I^2$ and, therefore, by the uniform boundedness principle,
\begin{align*}
\| U_{\v{A}}^{\mr{S}} (t,s) \|_{L^{\infty} (\I^2 ; \B (H^{-2}))} \leq \| U_{\v{A}}^{\mr{S}}(t,s) \|_{L^{\infty}  (\I^2 ; \B (H^2))} < \infty .
\end{align*}
These observations allow us to obtain a solution to
\begin{align}\label{eq:inhomo_magnetic_schr}
\left\lbrace \begin{array}{l}
i \partial_t \xi = ( \v{p} + \v{A} )^2  \xi +  f \\ 
\xi (s) = \psi_0  
\end{array} \right.
\end{align}
using an application of Duhamel's formula. Indeed, if $f \in L^1_T H^{-2}$, then
\begin{align}\label{eq:inhomo_mag_schr_duhamel}
\xi (t) = U_{\v{A}}^{\mr{S}}(t,s) \psi_0 - i \int_s^t U_{\v{A}}^{\mr{S}}(t, \tau) f( \tau ) \dd \tau
\end{align}
is a $C_T L^2 \cap W^{1,1}_T H^{-2}$-solution to (\ref{eq:inhomo_magnetic_schr}).

Using the formula (\ref{eq:inhomo_mag_schr_duhamel}) we can give an explicit estimate on $\| U_{\v{A}}^{\mr{S}}(t,s) \|_{L_{\I^2}^{\infty} \B (H^2)}$ as follows. Observe that, in the sense of distributions, $\xi (t) = U_{\v{A}}^{\mr{S}}(t,s) \psi_0$ satisfies 
\begin{align*}
i \partial_t^2 \xi = (\v{p} + \v{A})^2 \partial_t \xi + 2 \partial_t \v{A} \cdot (\v{p} + \v{A}) \xi 
\end{align*}
since $\v{A}$ is assumed to be divergence-free. Hence, by (\ref{eq:inhomo_mag_schr_duhamel}), we conclude
\begin{align}\label{eq:proof_two_parameter_esti_H2_1} 
\partial_t \xi (t) = U_{\v{A}}^{\mr{S}}(t,s) \partial_t \xi (s) - 2 i \int_s^t U_{\v{A}}^{\mr{S}}(t,\tau) \partial_t \v{A} (\tau) \cdot (\v{p} + \v{A} (\tau)) \xi (\tau) \dd \tau .
\end{align}
Consider taking the spatial $L^2$-norm on both sides of (\ref{eq:proof_two_parameter_esti_H2_1}). Since $U_{\v{A}}^{\mr{S}}$ is unitary, we need to estimate
\begin{align}\label{eq:proof_two_parameter_esti_H2_2}
\| \partial_t \xi (t) \|_{2} \leq \| \partial_t \xi (s) \|_{2} + 2 \int_s^t \| \partial_t \v{A} (\tau ) \cdot (\v{p} + \v{A} (\tau ) \xi (\tau ) \|_{2} \dd \tau .
\end{align}
We recall that $\v{A} \in L_T^{\infty} H^1 \cap W^{1,1} L^3$. From (\ref{eq:proof_two_parameter_esti_H2_2}) we see that we need to estimate $\| \partial_t \v{A} \cdot (\v{p} + \v{A}) \xi \|_{2}$. Using H\"{o}lder's inequality, the Sobolev inequality $\dot{H}^1 (\R^3) \subset L^6 (\R^3)$, and the diamagnetic inequality $| \nabla | f | (\v{x}) | \leq | (\v{p} + \v{A}) f (\v{x}) |$ for a.e.~$\v{x} \in \R^3$, we have the estimate
\begin{align}\label{eq:proof_two_parameter_esti_H2_3}
\| \partial_t \v{A} \cdot (\v{p} + \v{A}) \xi \|_{2} \leq \| \partial_t \v{A} \|_{3} \| (\v{p} + \v{A}) \xi \|_{6} \lesssim  \| \partial_t \v{A} \|_{3} \| (\v{p} + \v{A})^2 \xi \|_{2} .
\end{align}
Estimate (\ref{eq:proof_two_parameter_esti_H2_3}) together with $\partial_t \xi = - i (\v{p} + \v{A})^2 \xi$ allows us to go back to (\ref{eq:proof_two_parameter_esti_H2_2}) and conclude
\begin{align*}
\| (\v{p} + \v{A}(t))^2 \xi (t)\|_{2} \leq \| (\v{p} + \v{A}(s))^2 \xi (s) \|_{2} + C \int_s^t  \| \partial_t \v{A} (\tau ) \|_{3} \| (\v{p} + \v{A} (\tau ))^2 \xi (\tau) \|_{2} \dd \tau .
\end{align*}
Then, by Gronwall's inequality,
\begin{align}\label{eq:proof_two_parameter_esti_H2_4}
\| (\v{p} + \v{A} (t))^2 \xi (t) \|_{2} \leq \| (\v{p} + \v{A} (s))^2 \xi (s) \|_{2} \exp{ \left\lbrace
C \int_s^t \| \partial_t \v{A} (\tau) \|_{3} \dd \tau \right\rbrace } ,
\end{align}
where $C > 0$ is some universal constant independent of $\v{A}$ and $\xi$. 

From the assumption that $\v{A}$ is divergence-free we have the identity 
\begin{align*}
(\v{p} + \v{A})^2 = - \Delta + 2 \v{A} \cdot \v{p} + |\v{A}|^2 ,
\end{align*}
and thus
\begin{align}\label{eq:proof_two_parameter_esti_H2_5}
\| \Delta \xi \|_{2} \leq \| (\v{p} + \v{A})^2 \xi \|_{2} + 2 \| \v{A} \cdot \v{p} \xi \|_{2} + \| |\v{A}|^2 \xi \|_{2} .
\end{align}
The last term on the right hand side of (\ref{eq:proof_two_parameter_esti_H2_5}) may be dealt with by an application of H\"{o}lder's inequality, the Sobolev inequality $\dot{H}^1 (\R^3) \subset L^6 (\R^3)$, and the diamagnetic inequality:
\begin{align}\label{eq:proof_two_parameter_esti_H2_6}
\| |\v{A}|^2 \xi \|_{2} \leq \| \v{A} \|_{6}^2 \| \xi \|_{6} \lesssim \| \v{A} \|_{1,2}^2 \| (\v{p} + \v{A})^2 \xi \|_{2} .  
\end{align}
The middle term on the right hand side of (\ref{eq:proof_two_parameter_esti_H2_5}) may be dealt in a similar fashion, except with an additional use of the Sobolev inequality $H^1 (\R^3) \subset H^{\frac{1}{2}} (\R^3) \subset L^3 (\R^3)$. We find
\begin{align*}
\| \v{A} \cdot \v{p} \xi \|_{2} & \leq \| \v{A} \cdot (\v{p} + \v{A}) \xi \|_{2} + \| |\v{A}|^2 \xi \|_{2} \\
& \leq \| \v{A} \|_{3} \| (\v{p} + \v{A}) \xi \|_{6} + \| |\v{A}|^2 \xi \|_{2} \\
& \lesssim \left( 1 + \| \v{A} \|_{1,2} \right) \| \v{A} \|_{1,2} \| (\v{p} + \v{A})^2 \xi \|_{2} . \numberthis \label{eq:proof_two_parameter_esti_H2_7}
\end{align*}
Combining (\ref{eq:proof_two_parameter_esti_H2_4}) with (\ref{eq:proof_two_parameter_esti_H2_5}), (\ref{eq:proof_two_parameter_esti_H2_6}), and (\ref{eq:proof_two_parameter_esti_H2_7}) we arrive at
\begin{align}\label{eq:two_parameter_esti_H2}
\| U_{\v{A}}^{\mr{S}}(t,s) \|_{L^{\infty} (\I ; \B (H^2))} \lesssim \left( 1 + \| \v{A} \|_{\infty; 1, 2} \right)^4  e^{C \| \partial_t \v{A} \|_{1;3}} .
\end{align}
With estimate (\ref{eq:two_parameter_esti_H2}) at our disposal we are in a position to sketch the proof of Theorem \ref{thm:nakamura05}. 

\begin{proof}[Sketch Proof of Theorem \ref{thm:nakamura05}]\label{proof:NW05}
Let $(\psi_0 , \v{a}_0 , \dot{\v{a}}_0 ) \in \sZ_{0,\mr{S}}^{2,\frac{3}{2}}$ and $(\psi , \v{A}) \in \sZ_{\mr{S}}^{2,\frac{3}{2}} (T)$. We first show that, for some $T,R_1 , R_2 > 0$, the map 
\begin{align*}
\Phi : (\psi , \v{A}) \in \sZ_{\mr{S}}^{2,\frac{3}{2}} (T) \mapsto (\xi , \v{K}) \in \sZ_{\mr{S}}^{2,\frac{3}{2}} (T) ,
\end{align*}
where
\begin{align*}
( \xi (t) , \v{K} (t) ) = \left( U_{\v{A}}^{\mr{S}} (t,0) \psi_0 , \V_{4 \pi \alpha \Hproj{ \v{J}_{\mr{S}} [\psi, \v{A}] } + \v{A}} [ \v{a}_0 , \dot{\v{a}}_0 ] (t) \right) ,
\end{align*}
with $\V$ being defined by (\ref{eq:Klein-Gordon_soln}). That is, for some choice of $T , R_1 , R_2 > 0$, $\Phi$ maps $\sZ_{\mr{S}}^{2,\frac{3}{2}} (T)$ into itself.

Note that $\v{K}$ is divergence-free by virtue of the formula (\ref{eq:Klein-Gordon_soln}). Estimate (\ref{eq:two_parameter_esti_H2}) together with $\| \partial_t \v{A} \|_{1;3} \lesssim T \| \partial_t \v{A} \|_{\infty ; \frac{1}{2} , 2}$ implies that
\begin{align}\label{eq:proof_thm2_1}
\| \xi (t) \|_{2,2} \leq C_1 (1 + R_2 + R_2^2)^2 e^{C T R_2} \| \psi_0 \|_{2,2} ,
\end{align}
where $C_1 > 0$ is a universal constant. The Sobolev inequality $L_T^{\infty} \dot{H}^{1/2} (\R^3) \subset L_T^6 L^3 (\R^3)$ and the estimate (\ref{eq:Strichartz_KG_general}) yields
\begin{align}\label{eq:proof_thm2_2}
 \| ( \v{K} , \partial_t \v{K} ) \|_{\infty ; \frac{3}{2} , 2 \oplus 6 ; 3} \lesssim  \| (\v{a}_0 , \dot{\v{a}}_0 ) \|_{ \frac{3}{2} , 2 \oplus \frac{1}{2} , 2} + \| \Hproj{ \v{J}_{\mr{S}} [\psi , \v{A}] } + \v{A} \|_{1 ; \frac{1}{2} , 2} .
\end{align}
We need an $H^{1/2} (\R^3)$-estimate on $\v{J}_{\mr{S}} [\psi , \v{A}]$. This may be achieved using, for example, Lemma \ref{lem:Kato-Ponce} together with the Sobolev inequalities $H^{\frac{1}{2}} (\R^3 ; \C) \subset L^3 (\R^3 ; \C)$ and $\dot{H}^1 (\R^3 ; \C) \subset L^6 (\R^3 ; \C)$. Indeed, we have the estimate
\begin{align}\label{eq:proof_thm2_4}
\| (1 - \Delta)^{\frac{1}{4}} (\psi \nabla \psi) \|_{2} \lesssim \| (1 - \Delta)^{\frac{1}{4}} \psi \|_{6} \| \nabla \psi \|_{3} + \| \psi \|_{6} \| (1 - \Delta)^{\frac{1}{4}} \nabla \psi \|_{3} \lesssim  \| \psi \|_{2,2}^2 . 
\end{align}
Recalling that $H^s (\R^d)$ is a Banach algebra for $s > d/2$, we have $|\psi|^2 \in H^2 (\R^3)$ and, hence,
\begin{align}\label{eq:proof_thm2_5}
\| (1 - \Delta)^{\frac{1}{4}} (\v{A} |\psi|^2) \|_{2} \lesssim \| (1 - \Delta)^{\frac{1}{4}} \v{A} \|_{6} \| \psi \|_{6}^2 + \| \v{A} \|_{6} \| (1 - \Delta)^{\frac{1}{4}} |\psi|^2 \|_{3}  \lesssim \| \v{A} \|_{\frac{3}{2} , 2}  \| \psi \|_{2,2}^2 .
\end{align}
With (\ref{eq:proof_thm2_4}) and (\ref{eq:proof_thm2_5}) at our disposal, (\ref{eq:proof_thm2_2}) becomes
\begin{align}\label{eq:proof_thm2_6}
\max_{k \in \left\lbrace 0, 1 \right\rbrace} \| \partial_t^k \v{K} \|_{\infty ; 3/2 - k , 2} \leq C_2 \left( \| (\v{a}_0 , \dot{\v{a}}_0 ) \|_{ \frac{3}{2} , 2 \oplus \frac{1}{2} , 2} + (R_1^2 + R_1^2 R_2 + R_2) T \right) ,
\end{align}
where $C_2 > 0$ is a universal constant. Without loss of generality we can assume that $T < 1$. Combining (\ref{eq:proof_thm2_1}) and (\ref{eq:proof_thm2_6}), we choose $R_1 , R_2 > 0$ so that
\begin{align*}
& 2 C_2 \| (\v{a}_0 , \dot{\v{a}}_0 ) \|_{ \frac{3}{2} , 2 \oplus \frac{1}{2} , 2} \leq R_2 , \\
& 2 C_1 \| \psi_0 \|_{2,2} (1 + R_2 + R_2^2 )^2 e^{C R_2} \leq R_1 ,
\end{align*}
and $T \in (0,1)$ so that $T \leq \left( 2 C_2 (R_1^2 + R_2 R_1^2 + R_2) \right)^{-1}$. With these choices of $T , R_1 , R_2 > 0$ we ensure that $\Phi$ maps $\sZ_{\mr{S}}^{2,\frac{3}{2}} (T)$ into itself. 

The final step in our sketch is to show that we may further choose $T \in (0,1)$ so that $\Phi$ is a contraction with respect to the metric (\ref{eq:MS_metric}). Let $(\psi' , \v{A}^{\!\prime}) \in \sZ_{\mr{S}}^{2,\frac{3}{2}} (T)$, and let $\xi' (t) = U_{\v{A}^{\!\prime}} (t , 0) \psi_0$ and $\v{K}' (t) =  \V_{4 \pi \alpha \Hproj{ \v{J}_{\mr{S}} [\psi' , \v{A}^{\!\prime}] } + \v{A}^{\!\prime}} [ \v{a}_0 , \dot{\v{a}}_0 ] (t)$. Note that $\xi - \xi'$ satisfies the equation
\begin{align}\label{eq:xi_minus_xiprime}
\left\lbrace \begin{array}{l}
i \partial_t (\xi - \xi') = (\v{p} + \v{A})^2 (\xi - \xi') + 2 (\v{A} - \v{A}^{\!\prime}) \cdot \v{p} \xi' + (|\v{A}|^2 - |\v{A}^{\!\prime}|^2) \xi' \\
(\xi - \xi') (0) = 0 .
\end{array} \right. 
\end{align}
Using the Duhamel formula (\ref{eq:inhomo_mag_schr_duhamel}) on (\ref{eq:xi_minus_xiprime}) we have
\begin{align}
(\xi - \xi') (t) = - i \int_0^t U_{\v{A}}^{\mr{S}}(t , \tau ) \left( 2 (\v{A} - \v{A}^{\!\prime}) \cdot \v{p} \xi' + (|\v{A}|^2 - |\v{A}^{\!\prime}|^2) \xi' \right) (\tau) \dd \tau .
\end{align}
Using that $U_{\v{A}}^{\mr{S}}(t , \tau)$ is unitary, $|\v{A}|^2 - | \v{A}^{\!\prime} |^2 = (\v{A} + \v{A}^{\!\prime}) \cdot (\v{A} - \v{A}^{\!\prime})$, Cauchy-Schwartz, and Sobolev $H^{3/4} (\R^3) \subset L^4 (\R^3)$ we have
\begin{align*}
\| \xi - \xi' \|_{\infty; 2} & \leq \int_0^T \left( 2 \| (\v{A} - \v{A}^{\!\prime}) \cdot \v{p} \xi' \|_{2} (\tau) + \| (|\v{A}|^2 - |\v{A}^{\!\prime}|^2) \xi' \|_{2} (\tau) \right)  \dd \tau \\
& \lesssim \int_0^T (1 + \| \v{A} + \v{A}^{\!\prime} \|_{L^4} (\tau) ) \| \v{A} - \v{A}^{\!\prime} \|_{L^4} (\tau) \| \xi' \|_{2,2} (\tau) \dd \tau \\
& \lesssim R_1 ( T^{\frac{3}{4}} + 2 R_2 T^{\frac{1}{2}} ) \| \v{A} - \v{A}^{\!\prime} \|_{4;4} . \numberthis \label{eq:proof_thm2_7} 
\end{align*}

Using
\begin{align*}
(\v{K} - \v{K}' )(t) = \V_{4 \pi \alpha \Hproj{ ( \v{J}_{\mr{S}} [\psi , \v{A}] - \v{J}_{\mr{S}} [\psi' , \v{A}^{\!\prime}] )} + \v{A} - \v{A}^{\!\prime}} [ \v{0} , \v{0} ] (t) 
\end{align*}
together with the Strichartz estimates (\ref{eq:Strichartz_KG_general}), we have
\begin{align*}
& \max{ \left\lbrace \| \v{K} - \v{K}' \|_{\infty ; \frac{1}{2} , 2} , \| \v{K} - \v{K}' \|_{4;4} \right\rbrace } \\
& \hspace{2cm} \lesssim \| \Hproj{ \left( \v{J}_{\mr{S}} [\psi , \v{A}] - \v{J}_{\mr{S}} [\psi' , \v{A}^{\!\prime}] \right) } \|_{\frac{4}{3} ; \frac{4}{3}} + \| \v{A} - \v{A}^{\!\prime} \|_{1 ; - \frac{1}{2} , 2} . \numberthis \label{eq:proof_thm2_8} 
\end{align*}
Write $\v{J}_{\mr{S}} [\psi , \v{A}] - \v{J}_{\mr{S}} [\psi' , \v{A}^{\!\prime}] = - 2 \alpha (g_1 + g_2 + g_3) $ where
\begin{align*}
g_1 = \re{ \langle \psi - \psi' , (\v{p} + \v{A}) \psi \rangle_{\C} } , ~~ g_2 = \re{ \langle \psi' , (\v{p} + \v{A}^{\!\prime}) (\psi - \psi') \rangle_{\C} } , ~~ g_3 = \re{ \langle \psi' , (\v{A} - \v{A}^{\!\prime}) \psi \rangle_{\C} }  .
\end{align*}
Therefore, we want to estimate $\| \Hproj{ g_k } \|_{\frac{4}{3} ; \frac{4}{3}}$ for $k \in \{1,2,3\}$ by $d ( (\psi , \v{A} ) , (\psi' , \v{A}^{\!\prime}) )$. The only term that displays a challenge is $g_2$ because we have a gradient applied to $\psi - \psi'$, but in the metric $d ( (\psi , \v{A} ) , (\psi' , \v{A}^{\!\prime}) )$ we have $\| \psi - \psi' \|_{\infty; 2}$. This difficulty is circumvented by noticing that the projection $\Hproj$ satisfies $\Hproj{ \nabla f } = 0$ and, hence, 
\begin{align*}
\Hproj{ \langle \psi' , \v{p} (\psi - \psi') \rangle_{\C} } = - \Hproj{ \langle ( \psi - \psi') , \v{p} \psi' \rangle_{\C} } .
\end{align*} 
Using this, together with H\"{o}lder, Sobolev, and the boundedness of $\Hproj$ on $L^p$ for $p \in (1, \infty)$, it is possible to show
\begin{align*}
& \max{ \left\lbrace \| \v{K} - \v{K}' \|_{\infty ; \frac{1}{2} , 2} , \| \v{K} - \v{K}' \|_{4;4} \right\rbrace } \\
& \hspace{5mm} \lesssim R_1 ( T^{\frac{3}{4}} + R_2 T^{\frac{1}{2}} ) \| \psi - \psi' \|_{\infty; 2} + R_1^2 T^{\frac{1}{2}} \| \v{A} - \v{A}^{\!\prime} \|_{4;4} + T \| \v{A} - \v{A}^{\!\prime} \|_{\infty ; \frac{1}{2} , 2}  . \numberthis \label{eq:proof_thm2_9} 
\end{align*}
Combining (\ref{eq:proof_thm2_7}) and (\ref{eq:proof_thm2_9}) we can bound $d ( ( \xi , \v{K} ) , (\xi' , \v{K}') )$ by $d ( (\psi , \v{A} ) , (\psi' , \v{A}^{\!\prime}))$ with a constant depending on $R_1 , R_2 , T > 0$. We may then choose $T \in (0,1)$ so that this constant is smaller than $1$, proving $\Phi$ is a contraction with respect to $d$. By the Banach fixed point theorem, this gives us a solution to (\ref{eq:MS_Coulomb_nophi}). 
\end{proof}

With this proof sketch of Theorem \ref{thm:nakamura05} in the $(s,r) = (2 , 3/2)$ case completed, let us try to adapt it to the MP equations (\ref{eq:MP}) and see precisely where the difficulties are encountered. The space of initial conditions will be 
\begin{align}\label{eq:MP_initial_conditions}
\sZ_{0,\mr{P}}^{s,r} := \left\lbrace  ( \psi_0, \v{a}_0 , \dot{\v{a}}_0) \in H^s (\R^3 ; \C^2) \oplus (H^{r} \oplus H^{r-1}) (\R^3 ; \R^3) : \diver{\v{a}_0} = \diver{\dot{\v{a}}_0} = 0 \right\rbrace ,
\end{align}
and the metric space will be
\begin{align*}
\sZ_{\mr{P}}^{s,r} (T) = & \left\lbrace (\phi , \v{A} ) \in L^{\infty}_T H^s (\R^3 ; \C^2) \oplus ( L^{\infty}_T H^{r} (\R^3 ; \R^3) \cap W^{1, \infty }_T H^{r-1} (\R^3 ; \R^3) ) \right. \\
& ~~ \left. \text{s.t.} ~ \| \psi \|_{\infty; s,2} \leq R_1, ~ \max{ \left\lbrace \| \v{A} \|_{\infty; r,2} , \| \partial_t \v{A} \|_{\infty ; 1,2} \right\rbrace } \leq R_2 , ~ \diver{\v{A}} = 0 \right\rbrace , \numberthis \label{eq:MP_metric_space} 
\end{align*}
with the metric $d$ having the same definition as before, namely via (\ref{eq:MS_metric}). We begin by considering the Pauli equation,
\begin{align}\label{eq:Pauli_equation_local_exist}
\left\lbrace \begin{array}{l}
i \partial_t \xi = [ \gvsig \cdot (\v{p} + \v{A})]^2 \xi  \\
\xi (0) = \psi_0 ,
\end{array} \right.
\end{align}
and the Klein-Gordon equation,
\begin{align}\label{eq:Klein_Gordon_Equation_Pauli}
\left\lbrace \begin{array}{l}
(\square + 1) \v{K} =  4 \pi \alpha \Hproj{ \v{J}_{\mr{P}} [ \psi , \v{A} ] } + \v{A}  \\
(\v{K} (0) , \partial_t \v{K} (0) ) = ( \v{a}_0 , \dot{\v{a}}_0 ) .
\end{array} \right.
\end{align}
As before, we will always assume $\diver{\v{A}} = 0$. We mention that the Klein-Gordon equation is treated exactly as it was before, with $\v{K} (t) = \V_{4 \pi \alpha \Hproj{ \v{J}_{\mr{P}} [\psi , \v{A}] } + \v{A}} [ \v{a}_0 , \dot{\v{a}}_0 ]$ where $\V$ is defined by (\ref{eq:Klein-Gordon_soln}). 

In our study of (\ref{eq:Pauli_equation_local_exist}) and (\ref{eq:Klein_Gordon_Equation_Pauli}) we will specialize to $(\psi_0 , \v{a}_0 , \dot{\v{a}}_0 ) \in \sZ_{0,\mr{P}}^{2,2}$. We consider the solution map
\begin{align*}
\Phi_P : (\psi , \v{A}) \in \sZ_{\mr{P}}^{2,2} (T) \longmapsto (\xi , \v{K})
\end{align*}
where $\xi$ and $\v{K}$ satisfy (\ref{eq:Pauli_equation_local_exist}) and (\ref{eq:Klein_Gordon_Equation_Pauli}), respectively. The higher regularity on the vector potential $\v{A}$ in $\sZ_{\mr{P}}^{2,2} (T)$ seems to be needed to construct the two parameter unitary group for the Pauli equation (\ref{eq:Pauli_equation_local_exist}) as we shall see below. 

To solve the Pauli equation (\ref{eq:Pauli_equation_local_exist}) it is natural to construct the two parameter unitary group $U_{\v{A}}^{\mr{P}}$ using Kato's abstract method as we did for (\ref{eq:homo_magnetic_schr}). In this regard, we can prove the following. Let $T > 0$, $\I = [0,T]$, and consider a divergence-free vector potential $\v{A} \in L_T^{\infty} H^1 \cap W^{1,1} H^1$. By Kato-Rellich for perturbations of self-adjoint operators $[\gvsig \cdot (\v{p} + \v{A})]^2$ is a essentially self-adjoint operator with domain $H^2 (\R^3 ; \C^2)$. Moreover, there exists a unique unitary two-parameter group $U_{\v{A}}^{\mr{P}}$ defined on $\I^2$ with the following properties:
\begin{itemize}
\item $U_{\v{A}}^{\mr{P}} : \I^2 \rightarrow \B (L^2)$ is strongly continuous with $U_{\v{A}}^{\mr{P}} (t,t) = 1$ for $t \in \I$.
\item $U_{\v{A}}^{\mr{P}} (t, \tau) U_{\v{A}}^{\mr{P}} (\tau , s) = U_{\v{A}}^{\mr{P}} (t,s)$ for all $(t, \tau , s) \in \I^3$. 
\item $U_{\v{A}}^{\mr{P}} (t, s) H^2 \subset H^2$ for $(t, s) \in \I^2$ and $U_{\v{A}}^{\mr{P}} : \I^2 \rightarrow \B (H^2)$ is strongly continuous.
\item The strong partial derivatives $\partial_t U_{\v{A}}^{\mr{P}} (t, s) \psi_0$ and $\partial_{\tau} U_{\v{A}}^{\mr{P}} (t, s) \psi_0$ exist in $L^2$ for all $(t, s) \in \I^2$ and $\psi_0 \in H^2$ and are given by $i \partial_t U_{\v{A}}^{\mr{P}} (t, s) \psi_0 = [\gvsig \cdot (\v{p} + \v{A} (t) )]^2 U_{\v{A}}^{\mr{P}} (t, s) \psi_0$ and $\partial_s U_{\v{A}}^{\mr{P}} (t, s) \psi_0 = i U_{\v{A}}^{\mr{P}} (t, s) [\gvsig \cdot (\v{p} + \v{A}(s))]^2 \psi_0$. Moreover, $\partial_t U_{\v{A}}^{\mr{P}} (t,s) , \partial_s U_{\v{A}}^{\mr{P}} (t,s) : \I^2 \rightarrow \B (H^2 ; L^2)$ are strongly continuous.
\end{itemize}

Again, we may extend $U_{\v{A}}^P$ to $H^{s} (\R^3 ; \C^2)$ for $s \in [-2,2]$ and if $f \in L^1_T H^{-2}$, then
\begin{align}\label{eq:inhomo_pauli_duhamel}
\xi (t) = U_{\v{A}}^{\mr{P}} (t,s) \psi_0 - i \int_s^t U_{\v{A}}^{\mr{P}} (t, \tau) f( \tau ) \dd \tau
\end{align}
is a $C_T L^2 \cap W^{1,1}_T H^{-2}$-solution to
\begin{align*}
\left\lbrace \begin{array}{l}
i \partial_t \xi = [\gvsig \cdot (\v{p} + \v{A})]^2 \xi + f \\
\xi (s) = \psi_0  \in H^2 ( \R^3 ; \C^2) .
\end{array} \right.
\end{align*}
It is possible to give an explicit bound on $\| U_{\v{A}}^{\mr{P}} \|_{L^{\infty} (\I^2 ; \B (H^2))}$ using a similar strategy as the proof of (\ref{eq:two_parameter_esti_H2}). Indeed, observe that, in the sense of distributions, $\xi (t) = U_{\v{A}}^{\mr{P}} (t,s) \psi_0$ satisfies 
\begin{align*}
i \partial_t^2 \xi = [\gvsig \cdot (\v{p} + \v{A})]^2 \partial_t \xi + \left( 2 \partial_t \v{A} \cdot (\v{p} + \v{A}) + \gvsig \cdot \partial_t \v{B} \right) \xi ,
\end{align*}
where we've used the identity $[\gvsig \cdot (\v{p} + \v{A}) ]^2 = (\v{p} + \v{A})^2 + \gvsig \cdot \v{B}$, $\v{B} = \curl{\v{A}}$. Hence, by (\ref{eq:inhomo_pauli_duhamel}), we conclude
\begin{align}\label{eq:proof_two_parameter_esti_pauli_1} 
\partial_t \xi (t) = U_{\v{A}}^{\mr{P}} (t,s) \partial_t \xi (s) - i \int_s^t U_{\v{A}}^{\mr{P}} (t,\tau) \left( 2 \partial_t \v{A} (\tau) \cdot (\v{p} + \v{A} (\tau)) + \gvsig \cdot \partial_t \v{B} (\tau) \right) \xi (\tau) \dd \tau .
\end{align}
Consider taking the spatial $L^2$-norm on both sides of (\ref{eq:proof_two_parameter_esti_pauli_1}). Since $U_{\v{A}}$ is unitary, we arrive at the estimate
\begin{align}\label{eq:proof_two_parameter_esti_pauli_2}
\| \partial_t \xi (t) \|_{2} \leq \| \partial_t \xi (s) \|_{2} + \int_s^t \| \left( 2 \partial_t \v{A} (\tau ) \cdot (\v{p} + \v{A} (\tau)) + \gvsig \cdot \partial_t \v{B} (\tau) \right) \xi (\tau ) \|_{2} \dd \tau .
\end{align}
We recall that $\v{A}$ is assumed to be divergence-free and $\v{A} \in L_T^{\infty} H^1 \cap W^{1,1} H^1$. The term $\| \partial_t \v{A} \cdot (\v{p} + \v{A}) \xi \|_{2}$ is estimated using H\"{o}lder's inequality and the Sobolev inequality $H^s (\R^3) \subset L^{\infty} (\R^3)$, $s > 3/2$. One finds
\begin{align*}
\| \partial_t \v{A} \cdot (\v{p} + \v{A}) \xi \|_{2} \lesssim \left( 1 + \| \v{A} \|_{1,2} \right) \| \partial_t \v{A} \|_{1,2} \| \xi \|_{2,2} .
\end{align*}
To estimate $\| \gvsig \cdot \partial_t \v{B} \xi \|_{2}$ we again use the Sobolev inequality $H^s (\R^3) \subset L^{\infty} (\R^3)$, $s > 3/2$, to find
\begin{align*}
\| \gvsig \cdot \partial_t \v{B} \xi \|_{2} \lesssim \| \partial_t \v{B} \|_{2} \| \xi \|_{2,2} \lesssim \| \partial_t \v{A} \|_{1,2} \| \xi \|_{2,2} . 
\end{align*}
Feeding these estimates back into (\ref{eq:proof_two_parameter_esti_pauli_2}) and using that $\partial_t \xi  = - i [ \gvsig \cdot (\v{p} + \v{A}) ]^2 \xi$  we find
\begin{align*}
& \| [ \gvsig \cdot (\v{p} + \v{A} (t)) ]^2 \xi (t) \|_{2} \\
& \lesssim \| [ \gvsig \cdot (\v{p} + \v{A} (s)) ]^2 \xi (s) \|_{2} + \int_s^t \left[ \| \partial_t \v{A} (\tau ) \|_{1,2} (2 + \| \v{A} (\tau ) \|_{1,2}) \right] \| \xi (\tau ) \|_{2,2} \dd \tau . \numberthis \label{eq:proof_two_parameter_esti_pauli_3}
\end{align*}

At this point we do not immediately proceed with Gronwall's inequality as we did in the Schr\"{o}dinger case because an estimate of the form $\| \xi \|_{2,2} \lesssim \| [\gvsig \cdot (\v{p} +\v{A})]^2 \xi \|_{2}$ is impossible due to the existence of zero modes (see Chapter \ref{chap:stability}). Instead, using that $\partial_t \xi = - i [ \gvsig \cdot (\v{p} + \v{A}) ]^2 \xi$ together with the reverse triangle inequality, H\"{o}lder's inequality, Sobolev's inequalities $\dot{H}^{\frac{1}{2}} (\R^3) \subset L^3 (\R^3)$, $\dot{H}^1 (\R^3) \subset L^6 (\R^3)$ and $H^{\frac{3}{2} + \delta} (\R^3) \subset L^{\infty} (\R^3)$, $\delta > 0$, Plancherel's theorem, and the Young's inequalities $|\v{k}|^{3/2} \lesssim \epsilon^{-3} + \epsilon |\v{k}|^2$, $|\v{k}| \lesssim \epsilon + \epsilon |\v{k}|^2$, and $|\v{k}|^{3/2 + \delta} \lesssim \epsilon^{- (3-2\delta)/ (1 - 2 \delta)} + \epsilon |\v{k}|^2$, with $\epsilon > 0$ and $\delta \in (0, 1/2)$, we find
\begin{align*} 
& \|  [ \gvsig \cdot (\v{p} + \v{A}) ]^2 \xi \|_{2} \\
& \geq \| \Delta \xi \|_{2} - C \left[ \| \v{A} \|_{\infty; 1, 2} \left( \frac{1}{\epsilon_1^3} \| \psi_0 \|_{2} + \epsilon_1 \| \Delta \xi \|_{2} \right) \right. \\
& \hspace{1cm} \left. + \| \v{A} \|_{\infty; 1, 2}^2 \left( \frac{1}{\epsilon_2} \| \psi_0 \|_{2} + \epsilon_2 \| \Delta \xi \|_{2} \right)  + \| \v{B} \|_{\infty; 2} \left( \epsilon_3^{- \frac{3-2\delta}{1-2\delta}} \| \psi_0 \|_{2} + \epsilon_3 \| \Delta \xi \|_{2} \right) \right] , \numberthis \label{eq:proof_two_parameter_esti_pauli_4}
\end{align*}
for some $\epsilon_1 , \epsilon_2 , \epsilon_3 > 0$, $\delta \in (0 , 1/2)$, and a universal constant $C > 0$. Choosing $\epsilon_1 = (4 C \| \v{A} \|_{\infty; 1, 2})^{-1}$, $\epsilon_2 = (4 C \| \v{A} \|_{\infty; 1, 2}^2 )^{-1}$, and $\epsilon_3 = (4 C \| \v{B} \|_{\infty; 2})^{-1}$, (\ref{eq:proof_two_parameter_esti_pauli_4}) becomes
\begin{align}\label{eq:proof_two_parameter_esti_pauli_5} 
\|  [ \gvsig \cdot (\v{p} + \v{A}) ]^2 \xi \|_{2} \geq  \frac{1}{4} \| \xi \|_{2,2} - C \left[ 1 +  \| \v{A} \|_{\infty; 1, 2}^4 + \| \v{A} \|_{\infty; 1, 2}^{\frac{4 - 4 \delta}{1 - 2 \delta}} \right] \| \psi_0 \|_{2} ,
\end{align}
where $C > 1$ is a different universal constant. In an identical fashion, we find
\begin{align}\label{eq:proof_two_parameter_esti_pauli_6} 
\|  [ \gvsig \cdot (\v{p} + \v{A}) ]^2 \xi \|_{2} \leq \frac{7}{4} \| \xi \|_{2,2} + C \left[  \| \v{A} \|_{\infty; 1, 2}^4 + \| \v{A} \|_{\infty; 1, 2}^{\frac{4 - 4 \delta}{1 - 2 \delta}} \right] \| \psi_0 \|_{2} .
\end{align}
Plugging (\ref{eq:proof_two_parameter_esti_pauli_5}) and (\ref{eq:proof_two_parameter_esti_pauli_6}) into (\ref{eq:proof_two_parameter_esti_pauli_3}), take $\delta = 1/3$, and now applying Gronwall's inequality, we conclude
\begin{align}\label{eq:two_parameter_esti_pauli}
\| U_{\v{A}}^{\mr{P}} \|_{L^{\infty} (\I^2 ; \B (H^2))} \leq C \left[ 1 + \| \v{A} \|_{\infty; 1 , 2}^4 \right]^{2} \exp{ \left( C T \| \partial_t \v{A} \|_{\infty ; 1,2} (2 + \| \v{A} \|_{\infty ; 1,2}) \right] }
\end{align}
where $C > 0$ is some universal constant independent of $\v{A}$ and $\xi$. 

From the proof of (\ref{eq:two_parameter_esti_pauli}) we see that it is the presence of the full magnetic field $\v{B} = \curl{\v{A}}$ implicit in the Pauli operator $[\gvsig \cdot (\v{p} + \v{A})]^2$ that forces us to consider $\partial_t \v{A} \in L_T^1 H^1$. It is for this reason we choose to consider initial data in $\sZ_{0,\mr{P}}^{2,2}$ and work with the metric space $\sZ_{\mr{P}}^{2,2} (T)$ as $\v{A}$ and $\partial_t \v{A}$ are generally assumed to be one spatial $L^2$-derivative apart. In the same fashion as in the sketch proof of Theorem \ref{thm:nakamura05}, we may use the estimate (\ref{eq:two_parameter_esti_pauli}) and the Strichartz estimates for the Klein-Gordon equation (\ref{eq:Strichartz_KG_general}) to prove that $\Phi_P$ maps $\sZ^{2,2}_{T,P}$ into itself for appropriately chosen $T , R_1 , R_2 > 0$. 

The difficulties begin when trying to prove $d ( (\xi , \v{K} ) , (\xi' , \v{K}) ) \lesssim_{T,R_1,R_2} d ( (\psi , \v{A}) , (\psi' , \v{A}^{\!\prime}) )$ where $\Phi_P (\psi , \v{A}) = (\xi , \v{K})$ and likewise for the primed variables. Similar to (\ref{eq:xi_minus_xiprime}) in the Schr\"{o}dinger case, we use the Duhamel formula (\ref{eq:inhomo_pauli_duhamel}) and find
\begin{align*}
& (\xi - \xi') (t) \\
& \hspace{5mm} = - i \int_0^t U_{\v{A}}^{\mr{P}} (t , \tau ) \left( 2 (\v{A} - \v{A}^{\!\prime}) \cdot \v{p} \xi' + (|\v{A}|^2 - |\v{A}^{\!\prime}|^2) \xi' + \gvsig \cdot (\v{B} - \v{B}') \xi' \right) (\tau) \dd \tau . \numberthis \label{eq:xi_minus_xiprime_pauli}
\end{align*}
Taking the spatial $L^2$-norm on both sides of (\ref{eq:xi_minus_xiprime_pauli}) we immediately see that we are stuck with $\| \v{B} - \v{B}' \|_2 = \| \v{A} - \v{A}^{\!\prime} \|_{\dot{H}^1}$ in making an upper bound. We can try to cope with this by changing the metric to
\begin{align*}
d ( (\psi , \v{A} ) , (\psi' , \v{A}^{\!\prime}) ) = \max{ \left\lbrace \| \psi - \psi' \|_{\infty ; 2} , \| \v{A} - \v{A}^{\!\prime} \|_{\infty ; 1 , 2} , \| \v{A} - \v{A}^{\!\prime} \|_{4 ; 4} \right\rbrace } . 
\end{align*}
However, this will cause a problem in our estimates of $\v{K} - \v{K}'$ as the best we could do using the Strichartz estimates (\ref{eq:Strichartz_KG_general}) would then be
\begin{align*}
\| \v{K} - \v{K}' \|_{\infty ; 1 , 2} \lesssim  \| \Hproj{ \left( \v{J}_{\mr{P}} [\psi , \v{A}] - \v{J}_{\mr{P}} [\psi' , \v{A}^{\!\prime}] \right) } \|_{q_1' ; \frac{2}{q_1} , r_1'} + \| \v{A} - \v{A}^{\!\prime} \|_{q_1' ; \frac{2}{q_1} , r_1'} .
\end{align*}
We can write $\v{J}_{\mr{P}} [\psi , \v{A}] - \v{J}_{\mr{P}} [\psi' , \v{A}^{\!\prime}] = - 2 \alpha \sum_{k = 1}^5 g_k$ where
\begin{align*}
& g_1 = \re{ \langle \psi - \psi' , (\v{p} + \v{A}) \psi \rangle_{\C^2} } , \hspace{1cm} g_2 = \re{ \langle \psi' , (\v{p} + \v{A}^{\!\prime}) (\psi - \psi') \rangle_{\C^2} } , \\
& g_3 = \re{ \langle \psi' , (\v{A} - \v{A}^{\!\prime}) \psi \rangle_{\C^2} } , ~~ g_4 =  \frac{1}{2} \curl{\langle \psi - \psi' , \gvsig \psi \rangle_{\C^2} } , ~~ g_5 = \frac{1}{2} \curl{\langle \psi' , \gvsig (\psi - \psi') \rangle_{\C^2}} .
\end{align*}
However, estimating $\| \Hproj{ g_k } \|_{q_1' ; \frac{2}{q_1} , r_1'}$ for $k \in \{ 4 , 5 \}$ and $2/q_1 = 1 - 2/r_1$ by $d ( (\psi , \v{A} ) , (\psi' , \v{A}^{\!\prime}) )$ is not possible. Indeed, the projection $\Hproj$ acts as the identity on a purl curl: $\Hproj{ g_k } = g_k$ for $k \in \{ 4 , 5 \}$. Therefore, we are stuck with trying to estimate, for example, 
\begin{align*}
\| \curl{ \langle \psi - \psi' , \gvsig \psi \rangle_{\C^2} } \|_{q_1' ; \frac{2}{q_1} , r_1'}
\end{align*}
by $\| \psi - \psi' \|_{L_T^{\infty} L^2} \| \psi \|_{H^2}$, and such an estimate seems very unlikely.

One could try to mend this approach by considering a more sophisticated metric space than $(\sZ_{\mr{P}}^{2,2} (T) , d)$. For example, one could increase the number of spatial derivatives on $\psi - \psi'$ in the definition of the metric $d$. However, one would then have to estimate $\| \xi - \xi' \|_{\infty; s , 2}$ for some $s > 0$, and this would serve to increase the number of derivatives we need on $\v{A} - \v{A}^{\!\prime}$ in the metric. Hence, we will fall into a negative feedback loop in trying to "close the metric". We expect that to overcome this problem and prove local well-posedness of the MP system (\ref{eq:MP}) one needs a very careful analysis of the spin current $\curl{\langle \psi , \gvsig \psi \rangle_{\C^2}}$ and perhaps sharper results concerning the two-parameter unitary group $U_{\v{A}}^{\mr{P}}$ such a Strichartz estimates. We admit the possibility of a clever choice of metric space involving of classical Sobolev spaces $L_T^q W^{m , p}$ that allows one to "close the metric" in the way just mentioned. However, our attempts at constructing such a metric space have been unsuccessful. We remark that if one ignores the spin current term and considers the initial value problem
\begin{align*}
\left\lbrace \begin{array}{l}
i \partial_t \psi = [ \gvsig \cdot (\v{p} + \v{A}) ]^2 \psi \\
\square \v{A} = - 8 \pi \alpha^2 \Hproj{ \re{\langle \psi , (\v{p} + \v{A}) \psi \rangle_{\C^2}} } \\
\diver{\v{A}} = 0 \\
( \psi (0) , \v{A} (0) , \partial_t \v{A} (0) ) \in H^s \oplus H^s \oplus H^{s-1} ,
\end{array} \right.
\end{align*}
then it does seem possible to prove local well-posedness of this system using the methods described in this section for regularities $s > 5/2$ (this range of regularities comes from the need to estimate the term $\| \v{p} \psi \|_{\infty}$).

\section{The Maxwell-Dirac System}\label{sec:MD}

Another direction one may pursue is to consider relativistic effects. In this regard, the MD equations may serve as another possible model for an electron interacting with its classical self-generated electromagnetic field. For an electron moving in three dimensions with no nuclei present, the MD equations considered in the mathematical literature read
\begin{align}\label{eq:MD}
\left\lbrace \begin{array}{l}
i \partial_t \psi = \left(  \dfrac{1}{\alpha} \gvalp \cdot \left( \v{p} + \v{A} \right) + \dfrac{1}{2 \alpha^2} \beta - \dfrac{1}{\alpha} \varphi \right) \psi  \\[1.2ex]
\square \v{A} + \nabla ( \diver{\v{A}} + \alpha \partial_t \varphi ) = 4 \pi \alpha \v{J}_{\mr{D}} [\psi] \\
- \Delta \varphi - \alpha \partial_t \diver{\v{A}} = - 4 \pi \alpha |\psi |^2 , \\
\end{array} \right. 
\end{align}
where $\psi (t) : \R^3 \rightarrow \C^4$ is the single-particle $4$-component spinor, and
\begin{align}\label{def:dirac_prob_current}
\v{J}_{\mr{D}} [\psi]  = - \langle \psi , \gvalp \psi \rangle_{\C^4}
\end{align}
is the Dirac probability current. In (\ref{eq:MD}), $( \beta , \gvalp ) \in \R^4 \otimes M_{4 \times 4} (\C)$ is the vector of Dirac matrices $(\beta , \gvalp) = ( \beta , \alpha^1 , \alpha^2 , \alpha^3)$, which are $4\times 4$ Hermitian matrices assumed to anti-commute and have square equal to $I$. A commonly chosen representation of these matrices is
\begin{align*}
\beta = \left( \begin{array}{cc}
I & 0 \\
0 & -I 
\end{array} \right) , \hspace{1cm} \gvalp = \left( \begin{array}{cc}
0 & \gvsig \\
\gvsig & 0
\end{array} \right) ,
\end{align*}
where $\gvsig \in \R^3 \otimes M_{2 \times 2} (\C)$ is the vector of Pauli matrices. The matrices $\alpha^j$ should not be confused with the fine structure constant $\alpha$. Associated with the MD equations (\ref{eq:MD}) are several conserved quantities \cite{Glassey1979} and, in particular, $\| \psi \|_2$ and the total energy
\begin{align}\label{eq:Dirac_Energy}
E_{\mr{D}} [\psi , \v{A}]= \frac{1}{\alpha} \langle \psi , \gvalp \cdot \left( \v{p} + \v{A} \right) \psi \rangle_{L^2} + F [ \curl{ \v{A} } , - \nabla \varphi - \alpha \partial_t \v{A} ] ,
\end{align}
where $F$ is given by (\ref{def:field_energy_potentials}), are both conserved. The total energy (\ref{eq:Dirac_Energy}) does not have a definite sign (see, for example, \cite[Chapter 10]{Schwabl08}) and, therefore, is not as useful a quantity as it is in the case of the MS system (\ref{eq:MS}). As was the case with the MS system and the MP system, the presence of the electrostatic potential in (\ref{eq:MD}) is not physical. However, it is included in most of the mathematical literature concerning the MD system and so we include it for consistency. 

The relationship between the MP system (\ref{eq:MP}) and the MD system (\ref{eq:MD}) comes from the well-known fact that the non-relativistic limit of the Dirac equation,
\begin{align}\label{eq:Dirac_eqn}
i \partial_t \psi = \left(  \dfrac{1}{\alpha} \gvalp \cdot \left( \v{p} + \v{A} \right) + \dfrac{1}{2 \alpha^2} \beta - \dfrac{1}{\alpha} \varphi \right) \psi ,
\end{align}
produces the Pauli equation. In what follows we give a rough, non-rigorous argument as to why this is the case following the presentation in \cite[\S 5.3]{Schwabl08}. For the mathematically rigorous result see \cite{BMP98}, and for a derivation using operator resolvents that shows the non-relativistic limit of the Dirac operator is the Pauli operator may be found in \cite[Chapter 6]{Thaller92}. Let $\psi (t) : \R^3 \rightarrow \C^4$ satisfy (\ref{eq:Dirac_eqn}) and write $\psi (t) = ( \chi_1 (t) , \chi_2 (t))$ with $\chi_j (t) : \R^3 \rightarrow \C^2$, $j \in \{1,2\}$. Introduce $\tilde{\psi} (t) : \R^3 \rightarrow \C^4$ as
\begin{align*}
\tilde{\psi} (t) = \left( \begin{array}{c}
\tilde{\chi}_1 (t) \\
\tilde{\chi}_2 (t) 
\end{array} \right) = \exp{\left\lbrace \frac{i t}{2 \alpha^2} \right\rbrace} \left( \begin{array}{c}
\chi_1 (t) \\
\chi_2 (t) 
\end{array} \right) .
\end{align*}
In terms of $\tilde{\chi}_j$, $j \in \{1,2\}$, the Dirac equation (\ref{eq:Dirac_eqn}) becomes
\begin{align}\label{eq:symmetry_broken_dirac}
\left\lbrace \begin{array}{l}
i  \partial_t \tilde{\chi}_1 =  \frac{1}{\alpha} \gvsig \cdot (\v{p} + \v{A}) \tilde{\chi}_2 - \frac{1}{\alpha} \varphi \tilde{\chi}_1  \\
i  \partial_t \tilde{\chi}_2 = \frac{1}{\alpha} \gvsig \cdot (\v{p} + \v{A}) \tilde{\chi}_1 - \frac{1}{\alpha} \varphi \tilde{\chi}_2 - \frac{1}{\alpha^2} \tilde{\chi}_2 .
\end{array} \right.  
\end{align}
Taking the second equation in (\ref{eq:symmetry_broken_dirac}) and solving for $\tilde{\chi}_2$ we find
\begin{align*}
\tilde{\chi}_2 = \alpha \gvsig \cdot (\v{p} + \v{A}) \tilde{\chi}_1 - \alpha \varphi \tilde{\chi}_2 - i \alpha^2 \partial_t \tilde{\chi}_2 .
\end{align*}
Subsituting the previous expression into the first equation in (\ref{eq:symmetry_broken_dirac}) we arrive at
\begin{align}\label{eq:symmetry_boken_dirac_2}
i  \partial_t \tilde{\chi}_1 = \left( [ \gvsig \cdot (\v{p} + \v{A}) ]^2- \frac{1}{\alpha} \varphi \right) \tilde{\chi}_1 - \gvsig \cdot (\v{p} + \v{A}) \left( \varphi \tilde{\chi}_2 + i \alpha \partial_t \tilde{\chi}_2 \right) 
\end{align}
To consider the non-relativistic limit we assume all energies are very small compared to the rest energy $m c^2 = 1 / (2 \alpha^2)$ or, equivalently, $\alpha \gg \alpha^2$. Therefore, in (\ref{eq:symmetry_boken_dirac_2}) we consider the term $\gvsig \cdot (\v{p} + \v{A}) \left( \varphi \tilde{\chi}_2 + i \alpha \partial_t \tilde{\chi}_2 \right)$ to be very small relative to the other terms in (\ref{eq:symmetry_boken_dirac_2}). Hence, we neglect these terms and arrive at the Pauli equation for $\tilde{\chi}_1$:
\begin{align*}
i \partial_t \tilde{\chi}_1 = \left( [\gvsig \cdot (\v{p} +\v{A}) ]^2 - \frac{1}{\alpha} \varphi \right) \tilde{\chi}_1 . 
\end{align*}
Under these approximations, we also see that the Dirac probability current density becomes the Pauli probability current density. Indeed,
\begin{align*}
- \langle \psi , \gvalp \psi \rangle_{\C^4} = - 2 \re{ \langle \tilde{\chi}_1 , \gvsig \tilde{\chi}_2 \rangle_{\C^2} } = - 2 \alpha \re{ \langle \gvsig \tilde{\chi}_1 , \gvsig \cdot (\v{p} + \v{A}) \tilde{\chi}_1 \rangle_{\C^2} } .  
\end{align*}

Similar to the MS equations (\ref{eq:MS}), there is an extensive literature studying the MD equations (\ref{eq:MD}). To our knowledge, Gross in \cite{Gross1966} was the first to study the Cauchy problem associated with (\ref{eq:MD}). Gross shows the local existence and uniqueness of solutions to (\ref{eq:MD}) in the Lorenz gauge $\diver{\v{A}} + \alpha \partial_t \varphi = 0$ for initial data 
\begin{align*}
(\psi_0 , \v{a}_0 , \dot{\v{a}}_0 ) \in H^1 (\R^3 ; \C^4) \times ( \dot{H}^{\frac{3}{2}} (\R^3 ; \R^3) \cap \dot{H}^1 (\R^3 ; \R^3) ) \times H^{\frac{1}{2}} (\R^3 ; \R^3) . 
\end{align*}
Gross points out however that obtaining global solutions is a far more delicate matter since, in principle, one needs a local existence theory in the energy class $H^{\frac{1}{2}} (\R^3 ; \C^4) \times \dot{H}^1 (\R^3 ; \R^3) \times L^2 (\R^3 ; \R^3)$ where conservation of energy and momentum holds. However, the total energy for the MD system (\ref{eq:Dirac_Energy}) does not have a definite sign, and, therefore, conservation of energy together with a local result in the energy class cannot be used to construct a global solution.   

Significant work has gone into understanding the well-posedness of the MD system since the work of Gross in 1966. Chadam in 1972 \cite{Chadam1972} extended the work of Gross by showing that one can solve the Cauchy problem for (\ref{eq:MD}) in a arbitrary bounded region of space-time (provided that either the coupling between the electron and the field, or the initial data, was sufficiently small) and that for a cut-off version of (\ref{eq:MD}) admits a global solution. In the following years, Chadam \cite{Chadam1973} shows that a one dimensional version of (\ref{eq:MD}) admits a global solution, and Chadam-Glassey in \cite{Glassey1974} analyze the asymptotic behaviour of these one-dimensional, global solutions and show that they blow up at each spatial point as $|t| \rightarrow + \infty$. 
 
Much later, Flato, Simon and Taflin \cite{flato1987} argue global existence and uniqueness to the three dimension MD equations in the Lorenz gauge for initial data in the image of certain modified wave operators. Georgiev \cite{Geo1991} then proved global existence and uniqueness to the MD system in the Lorenz gauge for sufficiently small initial data with respect to weighted Sobolev norms. Using the \textit{null form} estimates of Klainerman and Machedon for nonlinear wave equations \cite{KM93}, Bournaveas \cite{Bournaveas1996} as part of his PhD thesis shows that the MD equations in the Coulomb gauge is locally well-posed in the space of initial data 
\begin{align*}
(\psi_0 , \v{a}_{0} , \dot{\v{a}}_{0} ) \in H^{1} (\R^3 ; \C^4) \times H^{2} (\R^3 ; \R^3)  \times H^1 (\R^3 ; \R^3) .
\end{align*}
Using space-time estimates for the wave equation, Bournaveas in the same paper shows that MD system in the Lorenz gauge is locally well-posed for initial data in Sobolev spaces which are an $\epsilon$ above the energy class.

To our knowledge the state-of-the-art local well-posedness result for the Cauchy problem of the MD system is from the work of D'Ancona, Foschi, and Selberg \cite{Ancona2010}. There the authors uncovered the full \textit{null structure} of the MD system in the Lorenz gauge and use it to prove local well-posedness for initial data
\begin{align*}
(\psi_0 , \v{E}_{0} , \v{B}_{0} ) \in H^s (\R^3 ; \C^4) \times H^{s-\frac{1}{2}} (\R^3 ; \R^3)  \times H^{s-\frac{1}{2}} (\R^3 ; \R^3) 
\end{align*}
where $s > 0$, and $\v{E}_0$ and $\v{B}_0$ are the initial values of the electric and magnetic fields, respectively. Note that global well-posedness could be deduced from the conservation of $\| \psi (t) \|_{2} = \| \psi_0 \|_{2}$ if we could take $s = 0$. This is currently an open problem. Furthermore, D'Ancona and Selberg \cite{Ancona2011} solve the \textit{two-dimensional} Cauchy problem for the MD system completely by proving global well-posedness in the $L^2$-class, as well as persistence of higher regularity. 

It is an interesting question whether the null structures in the MD system that have been exploited with success to study well-posedness can be generalized to the MP system. It is clear, however, that some strategies involving null form estimates cannot be adapted to the MP system. For example, those strategies introduced in \cite{Bournaveas1996} rely heavily on the fact that the square of the massive free Dirac operator, $\frac{1}{\alpha} \gvalp \cdot \v{p} + \frac{1}{2 \alpha^2} \beta$, is the operator $- \frac{1}{\alpha^2} \Delta + \frac{1}{4 \alpha^4}$. A potentially promising avenue for future research is to investigate the possibility that a null structure is present in the MP system which may be used to control the spin current $\curl{\langle \psi , \gvsig \psi \rangle_{\C^2}}$.

Another interesting question is whether it is possible to consider the non-relativistic limit of solutions to the MD system and produce solutions to the MP system. This question of a non-relativistic limit has been considered in the work of Masmoudi and Mauser \cite{MM01}, Masmoudi and Nakanishi \cite{MN03}, and Bechouche, Mauser, and Selberg \cite{BMS05}. To discuss the aforemention mentioned articles, we express the Coulomb gauged MD system in Gaussian units:
\begin{align}\label{eq:MD_Gaussian}
\left\lbrace \begin{array}{l}
i \hbar \partial_t \psi = \left(  c \gvalp \cdot \left( \v{p} - \dfrac{e_0}{c} \v{A} \right) + \beta m_e c^2 + e_0 \varphi \right) \psi \\[1.2ex]
\square \v{A} = - 4 \pi \alpha \Hproj{ \langle \psi , \gvalp \psi \rangle_{\C^4} } \\ 
- \Delta \varphi = 4 \pi e | \psi |^2 \\
\diver{\v{A}} = 0 ,
\end{array} \right. 
\end{align}
Considering the $c \rightarrow +\infty$-limit of (\ref{eq:MD_Gaussian}), the authors in \cite{MM01, MN03, BMS05} produce solutions to a Schr\"{o}dinger-Poisson system involving the upper and lower 2-spinors of $\psi$. However, in doing so, one has effectively "turned off the lights" and removed the electromagnetic field in the $c \rightarrow + \infty$-limit. To avoid this one should absorb the factor of $1/c$ into the definition of the vector potential in the operator $\v{p} - (e/c) \v{A}$. This factor of $c$ has nothing to do with relativistic effects and would also appear in the Schr\"{o}dinger equation in an external magnetic field. Bechouche, Mauser, and Selberg in \cite[Theorem 1.7]{BMS05} do give a result concerning the MP system as a semi-nonrelativistic limit of (\ref{eq:MD_Gaussian}). There, however, the authors assume one has apriori a solution to the MP system. 
 
For some concluding remarks to this section, we mention that the MD system has been looked at numerically by Shebalin \cite{SHEBALIN1997}, Bao and Li \cite{BL04}, and Li, Chan, and Hou \cite{LCH10}. Moreover, we point out that there are no results for the MD system with a Coulomb interaction added. Such a system would read 
\begin{align}\label{eq:MD_coulomb}
\left\lbrace \begin{array}{l}
i \partial_t \psi = \left(  \dfrac{1}{\alpha} \gvalp \cdot \left( \v{p} + \v{A} \right) + \dfrac{1}{2 \alpha^2} \beta - \dfrac{Z}{|\v{x}|} \right) \psi , \\[1.2ex]
\square \v{A} = 4 \pi \alpha \Hproj{ \v{J}_{\mr{D}} [\psi] } \\
\diver{\v{A}} = 0 ,
\end{array} \right. 
\end{align}
for some interaction strength $Z > 0$. Using that the massive Dirac-Coulomb operator $\gvalp \cdot \v{p} + \beta - \nu |\v{x}|^{-1}$ is essentially self-adjoint on $C_0^{\infty} (\R^3 ; \C^4)$ for $\nu \in [0 , \sqrt{3}/2)$, it is possible to directly adapt Gross's proof strategy \cite{Gross1966} to prove the local well-posedness of (\ref{eq:MD_coulomb}) for initial data in the same Sobolev spaces considered there. One could in principle extend this result to $\nu \geq \sqrt{3}/2$ by choosing a distinguished self-adjoint extension (c.f.~\cite{EL07}) An interesting open problem for (\ref{eq:MD_coulomb}) would be to improve this local well-posedness down to regularity considered in \cite{Ancona2010}. Another problem would be to study a two-dimensional analogue of (\ref{eq:MD_coulomb}) and proving a result analogous to that in \cite{Ancona2011}.

\section{The Brown-Ravenhall-Maxwell System}\label{sec:BR}

We conclude this chapter with a section entertaining another possible model for an electron interacting with its self-generated electromagnetic field. This model is based on the Brown-Ravenhall Hamiltonian as introduced by Brown and Ravenhall \cite{BR51} as an approximation to Quantum Electrodynamics (see also \cite{Sucher80}). Let $\D (\v{A})$ be the massive free Dirac operator, i.e.
\begin{align*}
\D (\v{A}) = \frac{1}{\alpha} \gvalp \cdot (\v{p} + \v{A}) + \frac{1}{2 \alpha^2} \beta ,
\end{align*}
and let $\Lambda_{+, \v{A}} = \chi_{(0,\infty)} (\D (\v{A}))$ be the spectral projection onto the positive spectral subspace of $\D (\v{A})$ (see \S\ref{sec:MD} for the definitions of $\beta$ and $\gvalp$). Here $\chi_{\Omega}$ is the indicator function of the set $\Omega$. We define the \textit{Brown-Ravenhall operator} as
\begin{align*}
\D_+ (\v{A}) = \Lambda_{+ , \v{A}} \D (\v{A}) = | \D (\v{A}) |
\end{align*}
and the \textbf{Brown-Ravenhall-Maxwell equations} as 
\begin{align}\label{eq:BRM}
\left\lbrace \begin{array}{l}
i \partial_t \psi = \D_+ (\v{A}) \psi \\
\square \v{A} = 4 \pi \alpha \Hproj{ \v{J}_{\mr{BR}} [ \psi , \v{A} ] }  \\
\diver{\v{A}} = 0 .
\end{array} \right.
\end{align}
The probability current associated with the operator $\D_+(\v{A})$ is denoted by $\v{J}_{\mr{BR}} [ \psi , \v{A} ]$ and takes the form
\begin{align}\label{eq:BR_Current}
\v{J}_{\mr{BR}} [\psi , \v{A}] = - \langle \psi , \Lambda_{+ , \v{A}} \gvalp \psi \rangle_{\C^4} - \frac{1}{2 \pi i} \oint_{\Gamma} \langle R_{z} (\v{A}) \psi , \gvalp R_{\ol{z}} (\v{A}) \D (\v{A}) \psi \rangle_{\C^4} \dd z ,
\end{align}
where $R_{z} (\v{A}) = (\D (\v{A}) - z)^{-1}$ and $\Gamma$ is any contour in the right half plane $\{ z \in \C : \re{z} > 0 \}$ that contains the spectrum of $\D_+ (\v{A})$ (note that the spectrum of $\D_+ (\v{A})$ is real, and stricly positive).

To our knowledge, the system (\ref{eq:BRM}) has not been considered in the literature. It would be interesting to study the local and global well-posedness of the system (\ref{eq:BRM}). Perhaps even more interesting is to consider the system
\begin{align}\label{eq:BRMC}
\left\lbrace \begin{array}{l}
i \partial_t \psi = \left( \D_+ (\v{A}) - \dfrac{Z}{|\v{x}|} \right) \psi \\[1.2ex]
\square \v{A} = 4 \pi \alpha \Hproj{ \v{J}_{\mr{BR}} [ \psi , \v{A} ] }  \\
\diver{\v{A}} = 0 .
\end{array} \right.
\end{align}
We speculate that global existence of (\ref{eq:BRMC}) will require assumptions on the size of $\alpha$ and $Z \alpha$. Indeed, the energetic stability of the Brown-Ravenhall, or "no-pair", model has been considered by several authors in the literature \cite{LSS97, HS99, LL02}. There one finds energetic stability if $\alpha$ and $Z \alpha$ are small enough, and instability if $\alpha$ or $Z \alpha$ are too large. We find it a very interesting problem to prove global well-posedness to (\ref{eq:BRMC}) under the assumption that $\alpha$ and $Z \alpha$ are small enough. 

We conclude this last section with a semi-formal derivation of the probability current (\ref{eq:BR_Current}). 
As usual, the probability current (\ref{eq:BR_Current}) is obtained by taking the variational derivative of the kinetic energy $\langle \psi , \D_+ (\v{A}) \psi \rangle_{L^2 (\R^3 ; \C^4)}$ with respect to $\v{A}$. Deriving $\v{J}_{\mr{BR}} [\psi , \v{A}]$ is nontrivial since we need to understand the variational derivative of the spectral projections $\Lambda_{+ , \v{A}}$ with respect to $\v{A}$. In the following we note the important fact that the spectrum of $\D (\v{A})$ does not contain the interval $(- (2 \alpha^2)^{-1} , (2 \alpha^2)^{-1} )$ (in particular, it does not contain $0$) for a large class of vector potentials $\v{A}$ (see, for example, \cite[Theorem 7.1]{Thaller92}). Also, we will always assume $\v{A}$ is some fixed, divergence-free vector potential that is smooth enough to justify the computations and $\psi (t) \in H^1 (\R^3 ; \C^4)$ is such that $\| \D (\v{A}) \psi \|_{\infty ; 2} < \infty$. 

The first term on the right hand side (\ref{eq:BR_Current}) comes from the variational derivative of $\D (\v{A})$. The second term on the right hand side of (\ref{eq:BR_Current}) is due to the $\v{A}$-dependence of the spectral projection, and its derivation is based on Stone's formula 
\begin{align}\label{eq:stone_formula_function}
\frac{1}{2 \pi i} \int_0^{\infty} \left( \frac{1}{x - \lambda - i \epsilon} - \frac{1}{x - \lambda + i \epsilon} \right) d \lambda \xrightarrow{~ \epsilon \rightarrow 0^+ ~} \left\lbrace \begin{array}{ll}
0, & x < 0 \\
1/2, & x = 0 \\
1, & x > 0 .
\end{array} \right. 
\end{align}
Therefore,
\begin{align}\label{eq:Stone_formula}
\Lambda_{+ , \v{A}} := \chi_{(0 , \infty)} ( \D (\v{A})) = \frac{1}{\pi} \slim_{~ \epsilon \rightarrow 0^+ ~} \int_0^{\infty} \frac{\epsilon}{(\D (\v{A}) - \lambda)^2 + \epsilon^2} \dd \lambda .
\end{align}
Observe that $\langle ( \Lambda_{+, \v{A} + \delta \v{A}^{\!\prime}} - \Lambda_{+, \v{A}} ) \psi , \D (\v{A}) \psi \rangle_{L^2}$ contains the term
\begin{align*}
& \frac{1}{\D (\v{A}) + \delta \gvalp \cdot \v{A}^{\!\prime} / \alpha - z} - \frac{1}{\D (\v{A}) - z} \\
& \hspace{2cm} = \frac{\delta}{\alpha} \left( \D (\v{A}) + \delta \gvalp \cdot \v{A}^{\!\prime} / \alpha - z \right)^{-1}  \gvalp \cdot \v{A}^{\!\prime}  \left( \D (\v{A}) - z \right)^{-1} .
\end{align*}
Hence, in the limit $\delta \rightarrow 0$ we find
\begin{align*}
& \lim_{\delta \rightarrow 0} \frac{ \langle \left( \Lambda_{+, \v{A} + \delta \v{A}^{\!\prime}} - \Lambda_{+, \v{A}} \right) \psi , \D (\v{A}) \psi \rangle_{L^2} }{\delta} \\
& = \frac{1}{2 \pi \alpha i} \lim_{\im{z} \rightarrow 0^+} \int_0^{\infty} \left[ \langle R_{z} (\v{A}) \gvalp \cdot \v{A}^{\!\prime} R_{z} (\v{A}) , \D (\v{A}) \psi \rangle_{L^2} \right. \\
& \hspace{5cm} \left. - \langle R_{\ol{z}} (\v{A}) \gvalp \cdot \v{A}^{\!\prime} R_{\ol{z}} (\v{A}) \psi , \D (\v{A}) \psi \rangle_{L^2} \right] \dd \re{z} . \numberthis \label{eq:BR_1} 
\end{align*}
To proceed, recall that for $f : \C \rightarrow \C$ we have formula
\begin{align}\label{eq:BR_2}
\int_{\nu}^{\infty} ( f(\lambda + i \epsilon) - f (\lambda - i \epsilon) ) \dd \lambda = \oint_{\Gamma} f(z) \dd z - i \int_{- \epsilon}^{\epsilon} f( \nu + i x ) \dd x ,
\end{align}
where $\Gamma$ is the rectangular contour of width $2 \epsilon$ with left side at $\nu \in (0 , (2 \alpha^2)^{-1} )$ which encloses the positive real axis and has counterclockwise orientation. We apply this formula to the function $f : \C \rightarrow \C$ given by
\begin{align}\label{eq:BR_3}
f(z) = \langle R_{z} (\v{A}) \gvalp \cdot \v{A}^{\!\prime} R_{z} (\v{A}) \psi , D (\v{A}) \psi \rangle_{L^2} .
\end{align}
Note that $f(z)$ analytic since $R_{z} (\v{A})$ is a strongly analytic function from the resolvent set $\rho (\D (\v{A})) \subset \C$ to $\B (L^2)$. With this, (\ref{eq:BR_1}) will to equal
\begin{align*}
\frac{1}{2 \pi \alpha i} \oint_{\Gamma} f (z) \dd z - \frac{1}{2 \pi \alpha} \slim_{\epsilon \rightarrow 0^+} \int_{- \epsilon}^{\epsilon} f(\nu + i x) \dd x  ,
\end{align*}
where $f(z)$ is given by (\ref{eq:BR_3}). We've used that the first term $\oint_{\Gamma} f(z) \dd z$ above does not depend on $\epsilon$. Moreover, the first term
\begin{align*}
\frac{1}{2 \pi \alpha i} \oint_{\Gamma} f (z) \dd z
\end{align*}
is real by Cauchy's integral formula, and only knows about the positive part of spectrum of $\D (\v{A})$. This gets us most of the way to the current (\ref{eq:BR_Current}). To finish the derivation, we need to show that
\begin{align}\label{eq:BR_4}
\lim_{\epsilon \rightarrow 0^+} \int_{- \epsilon}^{\epsilon} f(\nu + i x) \dd x  = 0 .
\end{align}
For this, recall the identity $\| (T - z)^{-1} \| = [ \dist{ ( z , \sigma (T) ) } ]^{-1}$ for any self-adjoint operator $T$ ($\sigma (T)$ denotes the spectrum of $T$). Then note that
\begin{align*}
\dist{ ( \nu + i x , \sigma ( \D (\v{A}) ) ) } \geq \left| \nu + i x - \frac{1}{2 \alpha^2} \right| = \sqrt{ \left( \frac{1}{2 \alpha^2} - \nu \right)^2 + x^2 } ,
\end{align*}
and, hence,
\begin{align*}
| f(\nu + i x) | \leq \| \v{A}^{\!\prime} \|_{\infty} \| \psi \|_2 \| \D (\v{A}) \psi \|_2 \left[ \left( \frac{1}{2 \alpha^2} - \nu \right)^2  + x^2 \right]^{-1} .
\end{align*} 
The previous to inequality implies
\begin{align*}
\int_{- \epsilon}^{\epsilon} |f(\nu + i x)| \dd x \leq 4 \alpha^2 \| \v{A}^{\!\prime} \|_{\infty} \| \psi \|_2 \| \D (\v{A}) \psi \|_2 \frac{\arctan{\left( \frac{2 \alpha^2 \epsilon}{ 1 - 2\alpha^2 \nu}\right)}}{ 1 - 2 \alpha^2 \nu} ,
\end{align*}
from which we conclude (\ref{eq:BR_4}). This completes the derivation of (\ref{eq:BR_Current}).

\chapter{The $\varepsilon$-Modified System}\label{chap:epsilon_system}

This Chapter is devoted to proving Theorem \ref{thm:local_exist_MBMP_epsilon}, Theorem \ref{thm:MBMP_epsilon_dissipation-laws}, and eventually Theorem \ref{thm:weak_solns_MBMP}. The first section \S\ref{sec:estimates} is a collection of technical estimates, stated as Lemmas \ref{lem:Estimates-Pauli}-\ref{lem:Estimates_Prob_current}, for the right hand side of (\ref{eq:MBMP_epsilon}) in various Sobolev spaces. The second section \S\ref{sec:metric_space_linearization} consists of setting up the contraction mapping scheme that we will use to prove Theorem \ref{thm:local_exist_MBMP_epsilon}. The following section \S\ref{sec:local} is devoted to actually proving Theorem \ref{thm:local_exist_MBMP_epsilon}. The fourth and fifth sections, \S\ref{sec:conserve} and \S\ref{sec:proof_main_result}, are devoted to proofs of Theorem \ref{lem:bound_on_coulomb} and \ref{thm:MBMP_epsilon_dissipation-laws}, and the main result, Theorem \ref{thm:weak_solns_MBMP}, respectively. 

\section{Technical Estimates}\label{sec:estimates}

This section is devoted to the derivation of several estimates, stated as Lemmas \ref{lem:Estimates-Pauli}-\ref{lem:Estimates_Prob_current}, for the right hand side of (\ref{eq:MBMP_epsilon}) in various Sobolev spaces. To obtain such estimates we will repeatedly make use of Lemmas \ref{lem:Kato-Ponce} and \ref{lem:Heat-Kernel}. The estimates will be crucial for our proof of Theorem \ref{thm:local_exist_MBMP_epsilon}. Some remarks on a particular notation used in this section are in order. Recall that $\Lambda = \sqrt{1 - \Delta}$ and $\dot{\Lambda} = \sqrt{- \Delta}$ (see \S\ref{sec:notation}). For $k \in \{ 1 , \cdots , N \}$, we will use the notation $\Lambda_k^s = (1 - \Delta_{\v{x}_k})^{s/2}$ (likewise for $\dot{\Lambda}_k^s$), where $\Delta_{\v{x}_k}$ is the Laplacian acting on the $k^{\mr{th}}$ electron coordinates, $\Delta_{\v{x}_k} = \sum_{j=1}^3 \partial_{x^j_k}^2$. We emphasize that if no subscript $k$ is present on $\Lambda$, then the Laplacian in the definition of $\Lambda$ is taken to be the full Laplacian acting on all the coordinates in the given context. 

\begin{lem}[Estimates for the Pauli Term] \label{lem:Estimates-Pauli}
Let $m \in [1 , \infty)$ and $N \geq 1$. For all $(\phi , \v{A}) \in [H^m (\R^{3N}) ]^{2^N} \times H^m (\R^3 ; \R^3)$, with $\diver{\v{A}} = 0$ and $\v{B} = \curl{\v{A}}$, and for each $j \in \{ 1, \cdots , N \}$, the operator $\cL_j (\v{A})$ given by 
\begin{align}\label{def:D_j}
\cL_j (\v{A}) = 2 \v{A}_j \cdot \v{p}_j + | \v{A}_j |^2 + \gvsig_j \cdot \v{B}_j 
\end{align}
satisfies the estimates
\begin{align} \label{eq:Estimate-Pauli-1}
\| \cL_j (\v{A}) \phi \|_{m-1,\frac{3}{2}} \lesssim (1 + \|\v{A}\|_{m,2}) \| \v{A} \|_{m,2} \| \phi \|_{m,2} ,
\end{align}
and 
\begin{align} \label{eq:Estimate-Pauli-2}
\| e^{t \Delta} \cL_j ( \v{A} ) \phi \|_{m,2} \lesssim t^{- \frac{1}{4}} \left[ 1 + t^{-\frac{1}{2}} \right] \left( 1 + \| \v{A} \|_{m,2} \right) \| \v{A} \|_{m,2} \| \phi \|_{m,2} ,
\end{align}
for all $t > 0$. Furthermore, for $(\phi , \v{A}) , (\phi' , \v{A}') \in [H^1 (\R^{3N})]^{2^N} \times H^1 (\R^3 ; \R^3)$, with $\diver{\v{A}} = \diver{\v{A}'} = 0$, and each $j \in \{1 , \cdots , N\}$, we have, for all $t > 0$,
\begin{align*}
\| e^{t \Delta} \left[ \cL_j ( \v{A} ) \phi - \cL_j ( \v{A}' ) \phi' \right] \|_{1,2} & \lesssim t^{- \frac{1}{4}} \left( 1 + t^{-\frac{1}{2}} \right) [ ( 1 + \| \v{A} \|_{1,2} + \| \v{A}' \|_{1,2} ) \| \phi' \|_{1,2} \\
& + (1  + \| \v{A} \|_{1,2} ) \| \v{A} \|_{1,2} ] \max{\{ \| \phi - \phi' \|_{1,2} , \| \v{A} - \v{A}' \|_{1,2} \}} . \numberthis \label{eq:Estimate-Pauli-3}
\end{align*}
\end{lem}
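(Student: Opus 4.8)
The plan is to prove the three estimates in the order stated, since each builds on the ideas of the previous one, and the third (the difference estimate) is really a bilinear version of the first two combined.

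\medskip

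\noindent\textbf{Step 1: The $W^{m-1,3/2}$ bound \eqref{eq:Estimate-Pauli-1}.} Expand $\cL_j(\v{A})\phi = 2\v{A}_j\cdot\v{p}_j\phi + |\v{A}_j|^2\phi + \gvsig_j\cdot\v{B}_j\phi$ and estimate each term in $W^{m-1,3/2}(\R^{3N})$ using the generalized Kato-Ponce/fractional Leibniz rule (Lemma \ref{lem:Kato-Ponce}) to distribute $\Lambda^{m-1}$ across the products. For the first term, I would pair $W^{m-1,p_1}\times L^{q_1}$ with $p_1,q_1$ chosen by Hölder ($1/p_1+1/q_1 = 2/3$) so that $\v{A}_j\in W^{m-1+\alpha,p_1}$ and $\v{p}_j\phi\in W^{-\alpha,q_1}$ — concretely using $\v{A}\in H^m(\R^3)\hookrightarrow$ appropriate $L^{p}$ spaces via Lemma \ref{lem:sobolev}, and noting $\v{A}$ only depends on $\v{x}_j$ so the $\R^3$-Sobolev embeddings feed into the $\R^{3N}$ estimates. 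The quadratic term $|\v{A}_j|^2\phi$ uses that $H^m(\R^3)$ is an algebra for $m>3/2$ (and for $m\in[1,3/2]$ one uses Kato-Ponce with a loss absorbed into $\|\v{A}\|_{m,2}^2$), giving the $\|\v{A}\|_{m,2}^2\|\phi\|_{m,2}$ contribution; the spin term $\gvsig_j\cdot\v{B}_j\phi$ is handled exactly like the first term since $\|\v{B}\|_{m-1,2} = \|\nabla\v{A}\|_{m-1,2}\le\|\v{A}\|_{m,2}$ (here $\diver\v{A}=0$ is used via $\|\curl\v{A}\|_2 = \|\nabla\v{A}\|_2$). Collecting terms yields the factor $(1+\|\v{A}\|_{m,2})\|\v{A}\|_{m,2}$.

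\medskip

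\noindent\textbf{Step 2: The heat-smoothed $H^m$ bound \eqref{eq:Estimate-Pauli-2}.} This follows immediately by combining Step 1 with the dispersive estimate for the heat kernel (Lemma \ref{lem:Heat-Kernel}) applied with $d=3N$, $r=3/2$, $p=2$: then $\frac{d}{2}(\frac1r-\frac1p) = \frac{3N}{2}\cdot\frac16$... — here I need to be careful, since the naive exponent depends on $N$. The resolution is that $\cL_j(\v{A})\phi$, while a function on $\R^{3N}$, has its ``bad'' factor $\v{A}_j$ depending only on the three coordinates $\v{x}_j$; so I would apply the heat estimate only in the $\v{x}_j$ variables (using $e^{t\Delta} = e^{t\Delta_{\v{x}_j}}\prod_{k\neq j}e^{t\Delta_{\v{x}_k}}$ and boundedness of $e^{t\Delta_{\v{x}_k}}$ on $L^2$), converting the $L^{3/2}_{\v{x}_j}$ norm to $L^2_{\v{x}_j}$ at the cost of $t^{-3/2\cdot(2/3-1/2)} = t^{-1/4}$, and the $m$ derivatives at the cost of the extra $t^{-1/2}$ factor, giving $t^{-1/4}(1+t^{-1/2})$. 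Estimate \eqref{eq:Estimate-Pauli-1} then supplies the spatial norms on the right-hand side.

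\medskip

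\noindent\textbf{Step 3: The difference estimate \eqref{eq:Estimate-Pauli-3}.} I would write the telescoping identity
\begin{align*}
\cL_j(\v{A})\phi - \cL_j(\v{A}')\phi' &= \cL_j(\v{A})(\phi-\phi') + \big[\cL_j(\v{A}) - \cL_j(\v{A}')\big]\phi' \\
&= \cL_j(\v{A})(\phi-\phi') + 2(\v{A}_j-\v{A}'_j)\cdot\v{p}_j\phi' + (|\v{A}_j|^2 - |\v{A}'_j|^2)\phi' + \gvsig_j\cdot(\v{B}_j-\v{B}'_j)\phi',
\end{align*}
use $|\v{A}_j|^2 - |\v{A}'_j|^2 = (\v{A}_j+\v{A}'_j)\cdot(\v{A}_j-\v{A}'_j)$, and apply the machinery of Steps 1--2 with $m=1$ to each term: the first term is controlled by \eqref{eq:Estimate-Pauli-2} with $\phi\mapsto\phi-\phi'$, contributing $(1+\|\v{A}\|_{1,2})\|\v{A}\|_{1,2}\|\phi-\phi'\|_{1,2}$; the remaining three are ``off-diagonal'' versions of the three pieces of $\cL_j$ with the Kato-Ponce/Hölder/Sobolev bookkeeping identical, each producing a factor $(1+\|\v{A}\|_{1,2}+\|\v{A}'\|_{1,2})\|\phi'\|_{1,2}$ times $\|\v{A}-\v{A}'\|_{1,2}$. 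Bounding $\|\v{A}-\v{A}'\|_{1,2}$ and $\|\phi-\phi'\|_{1,2}$ by $\max\{\|\phi-\phi'\|_{1,2},\|\v{A}-\v{A}'\|_{1,2}\}$ gives the claimed form.

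\medskip

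\noindent\textbf{Main obstacle.} The genuine subtlety is the $N$-dependence of the heat-kernel exponent in Step 2: one must exploit that the vector potential is a single-particle operator (depends only on $\v{x}_j$) to localize the heat smoothing to those three coordinates, otherwise the time exponent $t^{-3N/12}$ would be useless (non-integrable near $t=0$ already for $N\ge 2$ in the Duhamel iteration). A secondary care point is the borderline range $m\in[1,3/2]$ where $H^m(\R^3)$ just fails to be an algebra, so the quadratic term $|\v{A}|^2$ must be estimated with Kato-Ponce using a fractional derivative split rather than the algebra property — but the loss is harmless since the right-hand side already carries $\|\v{A}\|_{m,2}^2$. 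Everything else is routine Hölder/Sobolev/Kato-Ponce bookkeeping.
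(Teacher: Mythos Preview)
Your proposal is correct and follows essentially the same approach as the paper: Kato--Ponce plus Sobolev for \eqref{eq:Estimate-Pauli-1}, heat-kernel smoothing localized to the three $\v{x}_j$-coordinates for \eqref{eq:Estimate-Pauli-2}, and the same telescoping decomposition for \eqref{eq:Estimate-Pauli-3}. The one refinement worth noting is that in Step~2 the paper handles the $H^m(\R^{3N})$ norm via the equivalence $\|f\|_{m,2}\sim\sum_{k=1}^N\|\Lambda_k^m f\|_2$ and treats the cases $k=j$ and $k\neq j$ separately (for $k\neq j$, $\Lambda_k^{m-1}$ commutes past $\cL_j(\v{A})$ onto $\phi$); this makes the many-body bookkeeping cleaner than trying to feed a full $W^{m-1,3/2}(\R^{3N})$ bound directly through a heat estimate that acts only in three variables.
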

\begin{proof}
To show (\ref{eq:Estimate-Pauli-1}) it suffices to consider the case $N = 1$, as the general case follows in a similar fashion. We use Lemma \ref{lem:Kato-Ponce} and the Sobolev inequality $H^1 (\R^3) \subset L^r (\R^3)$, $2 \leq r \leq 6$, to prove (\ref{eq:Estimate-Pauli-1})
\begin{align*}
\| \cL (\v{A}) f \|_{m-1 ,\frac{3}{2}} & \lesssim \| \Lambda^{m-1} \v{A} \|_{6} \| \v{p} f \|_{2} + \| \v{A} \|_{6} \| \Lambda^{m-1} \v{p} f \|_{2} + \| \Lambda^{m-1} \v{A} \|_{6} \| \v{A} f \|_{2} \\
& \hspace{0.2cm}   + \| \v{A} \|_{6} \| \Lambda^{m-1} (\v{A} f) \|_{2} + \| \Lambda^{m-1} \v{B} \|_{2} \| f \|_{6} + \| \v{B} \|_{2} \| \Lambda^{m-1} f \|_{6} \\
& \lesssim \| \v{A} \|_{m,2} \| f \|_{1,2} + \| \v{A} \|_{1,2} \| f \|_{m,2} + \| \v{A} \|_{m,2} \| \v{A} \|_{6} \| f \|_{3}  + \| \v{A} \|_{m,2} \| f \|_{1,2} \\
& \hspace{0.2cm} + \| \v{A} \|_{1,2} \| f \|_{m,2}  + \| \v{A} \|_{1,2} \left( \| \Lambda^{m-1} \v{A} \|_{6} \| f \|_{3}  + \| \v{A} \|_{6} \| \Lambda^{m-1} f \|_{3} \right) \\
& \lesssim (1 + \|\v{A}\|_{m,2}) \| \v{A} \|_{m,2} \| f \|_{m,2} . 
\end{align*} 

To prove (\ref{eq:Estimate-Pauli-2}), fix $j \in \{1 , \cdots , N\}$, and note that (with hopefully obvious notation)
\begin{align}\label{eq:Estimate-Pauli-4}
\| e^{t \Delta} \cL_j ( \v{A} ) \phi \|_{m,2} \lesssim \sum_{k = 1}^N \| \Lambda^{m}_k e^{t \Delta} \cL_j (\v{A}) \phi \|_2 .
\end{align}
We separate into two cases: (a) $k \neq j$ and (b) $k = j$. For case (a) we use Lemma \ref{lem:Heat-Kernel} and (\ref{eq:Estimate-Pauli-1}) to find
\begin{align*}
\| \Lambda^{m}_k e^{t \Delta} \cL_j (\v{A}) \phi \|_2 & \leq \| \Lambda_k e^{t \Delta_{\v{x}_k}} \cL_j (\v{A}) \Lambda_k^{m-1} \phi \|_2 \\
& \lesssim t^{- \frac{1}{4}} \left[ 1 + t^{- \frac{1}{2}} \right] \|  \cL_j (\v{A}) \Lambda_k^{m-1} \phi \|_{\frac{3}{2}} \\
& \lesssim t^{- \frac{1}{4}} \left[ 1 + t^{- \frac{1}{2}} \right]  (1 + \|\v{A}\|_{1,2}) \| \v{A} \|_{1,2} \| \Lambda_k^{m-1} \phi \|_{1,2} \\
& \lesssim t^{- \frac{1}{4}} \left[ 1 + t^{- \frac{1}{2}} \right]  (1 + \|\v{A}\|_{1,2}) \| \v{A} \|_{1,2} \| \phi \|_{m,2} . \numberthis \label{eq:Estimate-Pauli-5}
\end{align*}  
For case (b) we use Lemma \ref{lem:Kato-Ponce} and Lemma \ref{lem:Heat-Kernel}, and the estimate (\ref{eq:Estimate-Pauli-1}), to find
\begin{align*}
\| \Lambda_j^{m} e^{t \Delta} \cL_j (\v{A}) \phi \|_2 & = \| \Lambda_j e^{t \Delta_{\v{x}_j}} \Lambda_j^{m-1} ( \cL_j (\v{A}) \phi ) \|_2 \\
& \lesssim t^{- \frac{1}{4}} \left[ 1 + t^{- \frac{1}{2}} \right] \| \Lambda_j^{m-1} ( \cL_j (\v{A}) \phi ) \|_{\frac{3}{2}} \\
& \lesssim t^{- \frac{1}{4}} \left[ 1 + t^{- \frac{1}{2}} \right] (1 + \|\v{A}\|_{m,2}) \| \v{A} \|_{m,2} \| \phi \|_{m,2} . \numberthis \label{eq:Estimate-Pauli-6}
\end{align*}  
Combining (\ref{eq:Estimate-Pauli-4}) through (\ref{eq:Estimate-Pauli-6}) we arrive at (\ref{eq:Estimate-Pauli-3}). 

To prove (\ref{eq:Estimate-Pauli-4}) we write 
\begin{align*}
\cL_j ( \v{A} ) \phi - \cL_j ( \v{A}' ) \phi' = L_{1,j} [\phi - \phi' , \v{A} ] + L_{2,j} [ \phi , \v{A} - \v{A}']
\end{align*}
where
\begin{align*}
& L_{1,j} [\phi - \phi' , \v{A} ] =  2 \v{A}_j \cdot \v{p}_j (\phi - \phi') + |\v{A}_j|^2 (\phi - \phi' ) + \gvsig_j \cdot \v{B}_j (\phi - \phi' )  , \\
& L_{2,j} [ \phi' , \v{A} - \v{A}'] = 2 ( \v{A}_j - \v{A}'_j ) \cdot \v{p}_j \phi' + (|\v{A}|^2_j - |\v{A}'|^2_j) \phi' + \gvsig_j \cdot (\v{B}_j - \v{B}'_j) \phi' . 
\end{align*}
Using H\"{o}lder's inequality and the Sobolev inequality $H^1 (\R^3) \subset L^r (\R^3)$, $2 \leq r \leq 6$, to find 
\begin{align*}
\| L_{1,j} [\phi - \phi' , \v{A} ] \|_{\frac{3}{2}} & \lesssim \| \v{A} \|_6 \| \phi - \phi' \|_{1, 2} + \| \v{A} \|_4^2 \| \phi - \phi' \|_{6}   + \| \v{B} \|_2 \| (\phi - \phi' ) \|_{6} \\
& \lesssim \left( 2  + \| \v{A} \|_{1,2} \right) \| \v{A} \|_{1,2} \| (\phi - \phi' ) \|_{1,2} . \numberthis \label{eq:Estimate-Pauli-7}
\end{align*}
and
\begin{align*}
& \| L_{2,j} [ \phi' , \v{A} - \v{A}'] \|_{\frac{3}{2}} \\
& \hspace{2mm} \lesssim \| \v{A} - \v{A}' \|_6 \| \phi' \|_{1,2} + \| \v{A} - \v{A}' \|_3 ( \| \v{A} \|_6 + \| \v{A}' \|_6 ) \| \phi' \|_6 + \| \v{B} - \v{B}' \|_2 \| \phi' \|_{6}  \\
& \hspace{2mm} \lesssim \left( 2 + ( \| \v{A} \|_{1,2} + \| \v{A}' \|_{1,2} ) \right) \| \phi' \|_{1,2} \| \v{A} - \v{A}' \|_{1,2}  . \numberthis \label{eq:Estimate-Pauli-8}
\end{align*}
Lemma \ref{lem:Heat-Kernel} gives
\begin{align*}
\| e^{t \Delta} \left[ \cL_j ( \v{A} ) \phi - \cL_j ( \v{A}' ) \phi' \right] \|_{1,2} \lesssim  t^{- \frac{1}{4}} \left[ 1 + t^{-\frac{1}{2}} \right] \|  \cL_j ( \v{A} ) \phi - \cL_j ( \v{A}' ) \phi' \|_{\frac{3}{2}}  ,
\end{align*}
which, together with (\ref{eq:Estimate-Pauli-7}) and (\ref{eq:Estimate-Pauli-8}) allows us to conclude (\ref{eq:Estimate-Pauli-3}).
\end{proof}

\begin{lem}[Estimates for the Coulomb Term] \label{lem:Estimates-Coulomb}
Fix $m \in [1,2]$ and let $N, K \geq 1$, $\Z \in [0 , \infty )^K$, and $\ul{\v{R}} = (\v{R}_1 , \cdots , \v{R}_K) \in \R^{3K}$, with $\v{R}_i \neq \v{R}_j$ for all $i \neq j$. Then, for all $\phi \in H^m (\R^{3N} ; \C)$, the operator $V (\ul{\v{R}} , \Z)$ given by (\ref{def:totelectrostatpot}), satisfies the estimate
\begin{align}\label{eq:Estimate-Coulomb}
\| e^{t \Delta} V (\ul{\v{R}} , \Z) \phi \|_{m , 2} \lesssim \left[ 1 +  \left(1 + t^{- \frac{1}{2} }\right) \left(t^{- \frac{9}{20}} + t^{-\frac{1}{4}} \right) \right] \| \phi \|_{m,2} , 
\end{align}
for all $t > 0$.
\end{lem}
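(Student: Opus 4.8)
The plan is to bound $\| e^{t\Delta} V(\ul{\v{R}},\Z)\phi \|_{m,2}$ by first using the dispersive estimate of Lemma~\ref{lem:Heat-Kernel} to trade a fractional derivative of the heat semigroup for a power of $t$, and then controlling a fixed-regularity $L^p$-norm of the Coulomb product $V(\ul{\v{R}},\Z)\phi$. Since $m\in[1,2]$, it suffices to produce an $H^m\to L^r$-type estimate on multiplication by $V(\ul{\v{R}},\Z)$; by translation invariance and the triangle inequality (the sum in (\ref{def:totelectrostatpot}) has finitely many terms, and the nucleus–nucleus term is a harmless constant), everything reduces to estimating multiplication by a single inverse-power potential $|\v{x}_i - \v{y}|^{-1}$ acting on the $i$-th electron variable, uniformly in the fixed point $\v{y}$.

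First I would record the splitting $|\v{x}|^{-1} = W_{\mathrm{near}} + W_{\mathrm{far}}$ with $W_{\mathrm{near}} = |\v{x}|^{-1}\chi_{\{|\v{x}|\le 1\}} \in L^q(\R^3)$ for all $q < 3$ and $W_{\mathrm{far}} = |\v{x}|^{-1}\chi_{\{|\v{x}|>1\}} \in L^\infty \cap L^q$ for $q>3$; in particular $|\v{x}|^{-1} \in L^2 + L^\infty$, which handles the $L^2$-operator-norm contribution producing the $1+\cdots$ term in (\ref{eq:Estimate-Coulomb}). For the regularity-gaining part: apply Lemma~\ref{lem:Heat-Kernel} with $d = 3N$, choosing the exponents $(r,p)$ and the derivative count so that $e^{t\Delta}: L^r(\R^{3N}) \to W^{m,2}(\R^{3N})$ costs $t^{-\frac{3N}{2}(\frac1r - \frac12)}(1 + t^{-\frac{m'}{2}})$ for a suitable $m'$, and then estimate $\| V(\ul{\v{R}},\Z)\phi \|_r$ by Hölder in the $\v{x}_i$-variable together with the Sobolev embeddings $H^1(\R^3)\subset L^s(\R^3)$ for $2\le s\le 6$ applied slicewise (recalling $\phi\in H^m$, $m\ge1$, so $\phi$ is in $H^1$ in each electron variable with the remaining variables giving an $L^2$-valued function). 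Concretely, writing $\frac1r = \frac1q + \frac1s$ with $|\v{x}|^{-1}\chi_{\{|\v{x}|\le1\}}\in L^q$, $q$ slightly below $3$, and $s$ slightly above $2$, one lands on $r$ slightly below $2$, and the exponent $\frac{3N}{2}(\frac1r-\frac12)$ is governed by the single-variable gain — the other $N-1$ variables contribute nothing because the potential is a function of $\v{x}_i$ alone and $e^{t\Delta}$ factors as a product of heat semigroups. This is where the specific exponents $t^{-9/20}$ and $t^{-1/4}$ in (\ref{eq:Estimate-Coulomb}) come from: $9/20$ from the near-singular $L^{q}$, $q$ near $3$, piece combined with how many derivatives must be split off, and $1/4$ from the complementary regime; the factor $(1+t^{-1/2})$ is the generic $(1+t^{-m/2})$ loss from Lemma~\ref{lem:Heat-Kernel} with $m$ up to $2$, estimated crudely by $m=1$ after interpolation.

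The main obstacle I expect is the bookkeeping of exponents: one must simultaneously respect the constraint $sp<d$ (or the borderline) in the slicewise Sobolev inequality, keep $r\ge1$ and $r\le 2$ so that Lemma~\ref{lem:Heat-Kernel} applies with $1\le r\le p=2$, and arrange that the power of $t$ is integrable near $0$ after the eventual time-integration in the Duhamel formula of Theorem~\ref{thm:local_exist_MBMP_epsilon} (i.e. the exponent must be strictly less than $1$), all while covering the whole range $m\in[1,2]$ — which forces a small interpolation argument between the $m=1$ and $m=2$ endpoints, or alternatively a direct Kato--Ponce (Lemma~\ref{lem:Kato-Ponce}) estimate on $\Lambda^m(|\v{x}_i-\v{y}|^{-1}\phi)$ distributing the derivatives between the potential (which lies in negative-regularity spaces $W^{-\beta,p_2}$ for appropriate $\beta,p_2$, since $|\v{x}|^{-1}\in \dot{W}^{-1,\infty}$-type scaling) and $\phi$. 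I would present the argument via the Kato--Ponce route since it handles non-integer $m$ uniformly and makes the $9/20$ exponent transparent: pair $\Lambda^{m-1}$ with a derivative landing on $\phi$ to get $\|\phi\|_{m,2}$, use $|\v{x}|^{-1}\chi_{\{|\v{x}|\le1\}}\in L^q$ with $q\uparrow 3$ for the low-regularity factor, then feed the resulting fixed-regularity $L^r$-bound, $r$ slightly below $2$, into Lemma~\ref{lem:Heat-Kernel}. The remaining steps — Hölder, slicewise Sobolev, and collecting the two time-power regimes — are routine.
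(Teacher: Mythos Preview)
Your overall strategy---trade derivatives of $e^{t\Delta}$ for powers of $t$ via Lemma~\ref{lem:Heat-Kernel}, then control a fixed-regularity norm of $|\v{x}_i|^{-1}\phi$ using H\"older and Sobolev slicewise in $\v{x}_i$---is the paper's strategy. But your execution has a genuine gap at $m=2$, and you misidentify where the two exponents $t^{-9/20}$ and $t^{-1/4}$ come from.

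\textbf{The gap.} Your first concrete plan is to apply Lemma~\ref{lem:Heat-Kernel} so that $e^{t\Delta}:L^r(\R^{3N})\to W^{m,2}(\R^{3N})$, gaining all $m$ derivatives from the heat kernel. At $m=2$ this costs a factor $(1+t^{-1})$, and any $L^r\to L^2$ smoothing with $r<2$ adds a further positive power of $t^{-1}$; the total exceeds $t^{-1}$, which is neither the claimed bound nor integrable in the subsequent Duhamel argument. The paper avoids this by gaining only \emph{one} derivative from the heat semigroup---hence the universal factor $(1+t^{-1/2})$---and placing the remaining $m-1$ derivatives directly on the Coulomb product via the preliminary single-variable estimates
\[
\|\,|\cdot|^{-1}\psi\,\|_{3/2}\lesssim\|\psi\|_{1,2}\qquad\text{and}\qquad \|\,|\cdot|^{-1}\psi\,\|_{m-1,\,5/4}\lesssim\|\psi\|_{m,2},
\]
the second proved by splitting into $B_1$ and $B_1^c$ and applying Kato--Ponce on each piece. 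Your last paragraph gestures at putting $\Lambda^{m-1}$ on the product, but it is not clear from what you write that you would keep the heat-kernel derivative count at one and thus land on an integrable power.

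\textbf{The misattribution.} The pair $(t^{-9/20},t^{-1/4})$ does \emph{not} come from the near/far decomposition of $|\v{x}|^{-1}$; nor does the constant $1$ in the bracket come from the $L^\infty$ far piece (it comes from the nucleus--nucleus constant $V_3$). The organizing device you are missing is the coordinate-wise derivative decomposition
\[
\|e^{t\Delta}|\v{x}_i|^{-1}\phi\|_{m,2}\ \lesssim\ \sum_{k=1}^N \|\Lambda_k^m\, e^{t\Delta}\,|\v{x}_i|^{-1}\phi\|_2,
\]
after which one factors $e^{t\Delta}=\prod_j e^{t\Delta_{\v{x}_j}}$ and treats two cases. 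When $k\neq i$, the operator $\Lambda_k^m$ commutes past $|\v{x}_i|^{-1}$; one uses the $L^{3/2}$ estimate in $\v{x}_i$ (heat smoothing $L^{3/2}_{\v{x}_i}\to L^2_{\v{x}_i}$ costs $t^{-1/4}$) and then recovers one $\v{x}_k$-derivative from $e^{t\Delta_{\v{x}_k}}$ at cost $(1+t^{-1/2})$. When $k=i$, one must place $\Lambda_i^{m-1}$ on the product $|\v{x}_i|^{-1}\phi$, which forces the $W^{m-1,5/4}$ estimate; heat smoothing $L^{5/4}_{\v{x}_i}\to H^1_{\v{x}_i}$ then costs $t^{-9/20}(1+t^{-1/2})$. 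So the two exponents record whether the $H^m$-derivatives fall on the singular variable or not, and the worst resulting power $t^{-19/20}$ stays integrable. Without this $k$-splitting the bookkeeping you flag at the end cannot be closed uniformly in $m\in[1,2]$.
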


\begin{proof}
To prove (\ref{eq:Estimate-Coulomb}) we need to first prove the following inequalities. Let $v : \R^3 \rightarrow \R^3$ be the function $v(\v{x}) = |\v{x}|^{-1}$. Then, for all $\psi \in H^m (\R^3 ; \C)$, we have 
\begin{align}\label{eq:Estimate-Coulomb-appen_1}
\| v \psi \|_{\frac{3}{2}} \lesssim \| \psi \|_{1,2}  
\end{align} 
and
\begin{align}\label{eq:Estimate-Coulomb-appen_2}
\| v \psi \|_{m-1 , \frac{5}{4}} \lesssim \| \psi \|_{m,2} .
\end{align} 
Moreover, for all $\psi \in H^m (\R^6 ; \C)$, we have
\begin{align}\label{eq:Estimate-Coulomb-appen_3}
\int_{\R^3} \left( \int_{\R^3} \left| \frac{ \psi (\v{x}_1 , \v{x}_2) }{| \v{x}_1 - \v{x}_2 |} \right|^{\frac{3}{2}} \dd \v{x}_1 \right)^{\frac{4}{3}} \dd \v{x}_2 \lesssim \| \Lambda_1\psi \|_{2}^2 
\end{align}
and
\begin{align}\label{eq:Estimate-Coulomb-appen_4}
\int_{\R^3} \left( \int_{\R^3} \left|\Lambda_1^{m-1}\frac{ \psi (\v{x}_1 , \v{x}_2) }{| \v{x}_1 - \v{x}_2 |} \right|^{\frac{5}{4}} \dd \v{x}_1 \right)^{\frac{8}{5}} \dd \v{x}_2 \lesssim \| \Lambda_1^{m} \psi \|_{2}^2  .
\end{align}

Let $B_1$ denote the unit ball in $\R^3$, and $B_1^c = \R^3 \backslash B_1$. Using H\"{o}lder's inequality we find
\begin{align*}
\| v \psi \|_{\frac{3}{2}}^{\frac{3}{2}} & = \int_{B_1} \frac{|\psi (\v{x})|^{\frac{3}{2}}}{|\v{x}|^{\frac{3}{2}}} \dd \v{x} +  \int_{B_1^c} \frac{|\psi (\v{x})|^{\frac{3}{2}}}{|\v{x}|^{\frac{3}{2}}} \dd \v{x} \\
& \leq \left( \int_{B_1} |\v{x}|^{ - 2 } \dd \v{x} \right)^{\frac{3}{4}} \| \psi \|_{L^6 (B_1)}^{\frac{3}{2}} + \left( \int_{B_1^c} |\v{x}|^{- 6} \dd \v{x} \right)^{\frac{1}{4}} \| \psi \|_{L^2 (B_1^c)}^{\frac{3}{2}} \\
& \lesssim \|\psi \|_{6}^{ \frac{3}{2} } + \| \psi \|_{2}^{\frac{3}{2}} . \numberthis  \label{eq:Estimate-Coulomb-appen_1_proof}
\end{align*}
The estimate (\ref{eq:Estimate-Coulomb-appen_1_proof}) and the Sobolev inequality $\| \psi \|_{6} \lesssim \| \nabla \psi \|_{2}$ imply (\ref{eq:Estimate-Coulomb-appen_1}). 

For estimate (\ref{eq:Estimate-Coulomb-appen_2}) we focus on the case $m = 2$, as the $m = 1$ case is proved in the same way as (\ref{eq:Estimate-Coulomb-appen_1}) and then general case $m \in (1,2)$ will follow similarly. Below we will make use of the homogeneous Sobolev space $\dot{W}^{1, \frac{5}{4}} (\R^3)$ defined through the seminorm $\| f \|_{\dot{W}^{1, 5/4} (\R^3)} = \| \dot{\Lambda} f \|_{\frac{5}{4}}$. As before, we write
\begin{align}\label{eq:Estimate-Coulomb-appen_5}
\| v \psi \|_{\dot{W}^{1 , \frac{5}{4}} (\R^3) }^{\frac{5}{4}} = \| v\psi \|_{\dot{W}^{1 , \frac{5}{4}}(B_1) }^{\frac{5}{4}} + \| v\psi \|_{\dot{W}^{1 , \frac{5}{4}}( B_1^c ) }^{\frac{5}{4}} .
\end{align}
We argue, separately, that both terms on the right hand side of (\ref{eq:Estimate-Coulomb-appen_3}) are bounded by $\| \psi \|_{2,2}$. For this it will be useful to remind ourselves of the identity $\dot{\Lambda} | \v{x} |^{-1} = C |\v{x}|^{-2}$ where $C$ is a nonessential constant. To show 
\begin{align}\label{eq:Estimate-Coulomb-appen_6}
\| v\psi \|_{\dot{W}^{1 , \frac{5}{4}}(B_1^c) } \lesssim \| \psi \|_{2,2} 
\end{align}
we use Lemma \ref{lem:Kato-Ponce} to find
\begin{align*}
\|  v\psi \|_{\dot{W}^{1 , \frac{5}{4}}(B_1^c) } & \lesssim \| \dot{\Lambda} v\|_{L^{\frac{10}{3}} (B_1^c)} \| \psi \|_{L^2 (B_1^c)} +  \| v\|_{L^{\frac{10}{3}} (B_1^c)} \| \dot{\Lambda} \psi \|_{L^{2} (B_1^c)}  \\
& \lesssim \| v^2 \|_{L^{\frac{10}{3}} (B_1^c)} \| \psi \|_2 +  \| v\|_{L^{\frac{10}{3}} (B_1^c)} \| \psi \|_{1 , 2} . \numberthis  \label{eq:Estimate-Coulomb-appen_6.1}
\end{align*}
Since $\| |\cdot|^{-k} \|_{L^{\frac{10}{3}} (B_1^c)} < \infty$ for $k \in \{ 1,2 \}$, (\ref{eq:Estimate-Coulomb-appen_6.1}) implies (\ref{eq:Estimate-Coulomb-appen_6}). 

Showing the inequality
\begin{align}\label{eq:Estimate-Coulomb-appen_7}
\| v\psi \|_{\dot{W}^{1 , \frac{5}{4}}(B_1) } \lesssim \| \psi \|_{2,2} .
\end{align}
follows in a similar fashion. Indeed, using Lemma \ref{lem:Kato-Ponce} we find
\begin{align*}
\| v\psi \|_{\dot{W}^{1 , \frac{5}{4}}(B_1) } & \lesssim \|\dot{\Lambda} v\|_{L^{\frac{5}{4}} (B_1)} \| \psi \|_{L^{\infty} (B_1)} + \| v\|_{L^{\frac{30}{19}} (B_1)} \|\dot{\Lambda} \psi \|_{L^{6} (B_1)} \\
& \lesssim \| v^2 \|_{L^{\frac{5}{4}} (B_1)} \| \psi \|_{L^{\infty} (B_1)} + \| v\|_{L^{\frac{30}{19}} (B_1)} \| \Lambda^2 \psi \|_{2} . \numberthis \label{eq:Estimate-Coulomb-appen_8}
\end{align*}
Estimate (\ref{eq:Estimate-Coulomb-appen_8}), together with the Sobolev inequality $\| \psi \|_{\infty} \lesssim \| \psi \|_{2,2}$ and  the observation that $\max{ \{ \| v^2 \|_{L^{\frac{5}{4}} (B_1)} , \| v\|_{L^{\frac{30}{19}} (B_1)} \} } < \infty$, implies (\ref{eq:Estimate-Coulomb-appen_7}). With (\ref{eq:Estimate-Coulomb-appen_5}), (\ref{eq:Estimate-Coulomb-appen_6}), and (\ref{eq:Estimate-Coulomb-appen_7}) we are able to conclude $\| v\psi \|_{\dot{W}^{1,\frac{5}{4}}} \lesssim \| \psi \|_{2,2}$.

Proving (\ref{eq:Estimate-Coulomb-appen_3}) is similar to showing (\ref{eq:Estimate-Coulomb-appen_1}). Indeed, using H\"{o}lder's inequality and the Sobolev inequality $\| f \|_6 \lesssim \| \nabla f \|_2$ we find
\begin{align*}
& \int_{\R^3} \left( \int_{\R^3} \left| \frac{ \psi (\v{x}_1 , \v{x}_2) }{| \v{x}_1 - \v{x}_2 |} \right|^{\frac{3}{2}} \dd \v{x}_1 \right)^{\frac{4}{3}} \dd \v{x}_2 \\
& =  \int_{\R^3} \left( \int_{B_1} \left| \frac{ \psi (\v{y} + \v{x}_2 , \v{x}_2) }{| \v{y} |} \right|^{\frac{3}{2}} \dd \v{y} + \int_{B_1^c} \left| \frac{ \psi (\v{y} + \v{x}_2 , \v{x}_2) }{| \v{y} |} \right|^{\frac{3}{2}} \dd \v{y}  \right)^{\frac{4}{3}} \dd \v{x}_2 \\
& \lesssim \int_{\R^3} \left( \left( \int_{\R^3} | \v{p}_1 \psi (\v{x}_1 , \v{x}_2) |^2 \dd \v{x}_1 \right)^{\frac{3}{4}} + \left( \int_{\R^3} | \psi (\v{x}_1 , \v{x}_2) |^2 \dd \v{x}_1 \right)^{\frac{3}{4}}  \right)^{\frac{4}{3}} \dd \v{x}_2 \\
& \lesssim \| \Lambda_1\psi \|_{2}^2 .
\end{align*}
To show estimate (\ref{eq:Estimate-Coulomb-appen_4}) one combines the strategy used to show (\ref{eq:Estimate-Coulomb-appen_2}) and (\ref{eq:Estimate-Coulomb-appen_3}).

With estimates (\ref{eq:Estimate-Coulomb-appen_1}) through (\ref{eq:Estimate-Coulomb-appen_4}) at our disposal we may prove (\ref{eq:Estimate-Coulomb}). We split $V(\ul{\v{R}} , \Z)$ into three pieces: $V (\ul{\v{R}} , \Z) = \sum_{n = 1}^3 V_n  (\ul{\v{R}} , \Z) $ where 
\begin{align*}
& V_1  (\ul{\v{R}} , \Z) = \sum_{1 \leq i < j \leq N} \frac{1}{|\v{x}_i - \v{x}_j|} , \\
& V_2  (\ul{\v{R}} , \Z) =  - \sum_{i = 1}^N \sum_{j = 1}^K \frac{Z_j}{| \v{x}_i - \v{R}_j |} , \\
& V_3  (\ul{\v{R}} , \Z) = \sum_{1 \leq i < j \leq K} \frac{Z_i Z_j}{|\v{R}_i - \v{R}_j|} .
\end{align*} 
We show (\ref{eq:Estimate-Coulomb}) with $V(\ul{\v{R}} , \Z)$ replaced by $V_n (\ul{\v{R}} , \Z)$, $n = 1 , 2 , 3$. The estimate is trivial for $V_3 (\ul{\v{R}} , \Z)$ since $\ul{\v{R}}$ is fixed. Indeed, we find
\begin{align}\label{eq:Estimate-Coulomb-1}
\| e^{t \Delta} V_3 (\ul{\v{R}} , \Z) \phi \|_{m , 2} \leq \left( \sum_{i,j = 1}^K \frac{Z_i Z_j}{|\v{R}_i - \v{R}_j|} \right) \| e^{t \Delta} \phi \|_{m,2} \lesssim \| \phi \|_{m,2} .
\end{align}
For $V_2 (\ul{\v{R}} , \Z)$, the desired estimate is equivalent to controlling $\| e^{t \Delta} | \v{x}_i |^{-1} \phi \|_{m,2}$ by $\| \phi \|_{m,2}$ for each $i = 1 , \cdots , N$. For this, fix $i \in \{ 1 , \cdots , N \}$ and note that 
\begin{align}\label{eq:Estimate-Coulomb-2}
\| e^{t \Delta} | \v{x}_i |^{-1} \phi \|_{m,2} & \lesssim \sum_{k = 1}^N \| \Lambda^{m}_k e^{t \Delta} | \v{x}_i |^{-1} \phi \|_{2} .
\end{align}
To estimate the right hand side of (\ref{eq:Estimate-Coulomb-2}) we consider two cases: (a) $k \neq i$ and (b) $k = i$. For case (a), we use Lemma \ref{lem:Heat-Kernel} and the estimate (\ref{eq:Estimate-Coulomb-appen_1}) to find 
\begin{align*}
\| \Lambda^{m}_k e^{t \Delta} | \v{x}_i |^{-1} \phi \|_{2} & \leq \| e^{t \Delta_{\v{x}_i}} | \v{x}_i |^{-1} \Lambda^{m}_k e^{t \Delta_{\v{x}_k}} \phi \|_2 \\
& \lesssim t^{- \frac{1}{4}} \left( \int_{\R^{3(N-1)}} \left( \int_{\R^3} \left| | \v{x}_i |^{-1} \Lambda^{m}_k e^{t \Delta_{\v{x}_k}} \phi ( \ul{\v{x}} )  \right|^{\frac{3}{2}} \dd \v{x}_i \right)^{\frac{4}{3}} \dd \ul{\v{x}}_i' \right)^{\frac{1}{2}} \\
& \lesssim t^{- \frac{1}{4}} \| \Lambda_k  e^{t \Delta_{\v{x}_k}} \Lambda_i \Lambda_k^{m-1}  \phi \|_2  \\
& \lesssim t^{- \frac{1}{4}} \left[1 + t^{-\frac{1}{2}} \right] \| \Lambda_i \Lambda_k^{m-1} \phi \|_2 \\
& \lesssim t^{- \frac{1}{4}} \left[1 + t^{-\frac{1}{2}} \right] \| \phi \|_{m,2} . \numberthis \label{eq:Estimate-Coulomb-3}
\end{align*}
For case (b) the estimating is similar to that of (\ref{eq:Estimate-Coulomb-3}). Using (\ref{eq:Estimate-Coulomb-appen_2}) we find
\begin{align*}
\| \Lambda_i^{m}e^{t \Delta} | \v{x}_i |^{-1} \phi \|_2 & \lesssim \| \Lambda_i e^{t \Delta_{\v{x}_i}}\Lambda_i^{m-1}| \v{x}_i |^{-1} \phi \|_2 \\
& \lesssim t^{- \frac{9}{20}} \left[1 + t^{- \frac{1}{2} } \right] \left( \int_{\R^{3(N-1)}} \left( \int_{\R^3} \left|\Lambda_i^{m-1}\frac{\phi ( \ul{\v{x}} )}{|\v{x}_i|}  \right|^{\frac{5}{4}} \dd \v{x}_i \right)^{\frac{8}{5}} \dd \ul{\v{x}}_i' \right)^{\frac{1}{2}} \\
& \lesssim  t^{- \frac{9}{20}} \left[1 + t^{- \frac{1}{2} } \right] \| \Lambda_i^m \phi \|_2 \\
& \lesssim t^{- \frac{9}{20}} \left[1 + t^{- \frac{1}{2} } \right] \| \phi \|_{m,2} . \numberthis \label{eq:Estimate-Coulomb-4}
\end{align*}
Combining estimates (\ref{eq:Estimate-Coulomb-3}) and (\ref{eq:Estimate-Coulomb-4}) we arrive at
\begin{align*}
\| e^{t \Delta} V_2 (\ul{\v{R}} , \Z) \phi \|_{m , 2} & \leq  \sum_{i = 1}^N \sum_{j = 1}^K Z_j \| e^{t \Delta} | \v{x}_i - \v{R}_j |^{-1} \phi \|_{m,2} \\
& \lesssim \left( 1 + t^{- \frac{1}{2} } \right) \left( t^{- \frac{9}{20}} + t^{-\frac{1}{4}} \right) \| \phi \|_{m,2} . \numberthis \label{eq:Estimate-Coulomb-5}
\end{align*}

Finally we need to control $\| e^{t \Delta} | \v{x}_i - \v{x}_j |^{-1} \phi \|_{m,2}$ by $\| \phi \|_{m,2}$ for each $i,j = 1 , \cdots , N$ with $i \neq j$. The estimates involved are similar to those involved with controlling $\| e^{t \Delta} V_2 (\ul{\v{R}} , \Z) \phi \|_{m , 2}$, and thus we choose to be brief with the computations. Fix $( i , j ) \in \{ 1 , \cdots , N \}^2$ with $i \neq j$. Note that
\begin{align}\label{eq:Estimate-Coulomb-6}
\| e^{t \Delta} | \v{x}_i - \v{x}_j |^{-1} \phi \|_{m,2} & \lesssim \sum_{k = 1}^N \| \Lambda^{m}_k e^{t \Delta} | \v{x}_i - \v{x}_j |^{-1} \phi \|_{2} .
\end{align}
Estimating the right hand side of (\ref{eq:Estimate-Coulomb-6}) is similar to estimating the right hand side of (\ref{eq:Estimate-Coulomb-2}). We again consider two cases: (a) $k \neq j, i$ and (b) $k = j, i$. For case (a) we use Lemma \ref{lem:Heat-Kernel} and (\ref{eq:Estimate-Coulomb-appen_3}) to find
\begin{align*}
\| \Lambda^{m}_k e^{t \Delta} \frac{\phi}{| \v{x}_i - \v{x}_j |} \|_{2} & \leq \| e^{t \Delta_{\v{x}_i}} \frac{\Lambda^{m}_k e^{t \Delta_{\v{x}_k}} \phi}{|\v{x}_i - \v{x}_j|} \|_{2} \\
& \lesssim t^{- \frac{1}{4}} \left( \int_{\R^{3(N-1)}} \left( \int_{\R^3} \left| \frac{\Lambda^{m}_k e^{t \Delta_{\v{x}_k}} \phi ( \ul{\v{x}} ) }{ | \v{x}_i - \v{x}_j | } \right|^{\frac{3}{2}} \dd \v{x}_i \right)^{\frac{4}{3}}  \dd \ul{\v{x}}'_{i} \right)^{\frac{1}{2}} \\
& \lesssim t^{- \frac{1}{4}} \| \Lambda_i \Lambda^{m}_k e^{t \Delta_{\v{x}_k}} \phi \|_2 \\
& \lesssim t^{- \frac{1}{4}} [1 + t^{-\frac{1}{2}}] \| \Lambda_i \Lambda_k^{m-1} \phi \|_2 \\
& \lesssim t^{- \frac{1}{4}} [1 + t^{-\frac{1}{2}}] \| \phi \|_{m,2} . \numberthis \label{eq:Estimate-Coulomb-7}
\end{align*}
For case (b) the estimating is similar. We choose $k = i$, and note that the case $k = j$ is identical by symmetry. Using Lemma \ref{lem:Heat-Kernel} and (\ref{eq:Estimate-Coulomb-appen_4}) we find
\begin{align*}
\|\Lambda_i^{m}e^{t \Delta} \frac{\phi}{| \v{x}_i - \v{x}_j |}  \|_{2} & \leq \| \Lambda_i  e^{t \Delta_{\v{x}_i}}\Lambda_i^{m-1} | \v{x}_i - \v{x}_j |^{-1} \phi \|_{2} \\
& \lesssim t^{-\frac{9}{20}} [ 1 + t^{-\frac{1}{2}} ] \left(  \int_{\R^{3(N-1)}} \left( \int_{\R^3} \left|\Lambda_i^{m-1}\frac{\phi ( \ul{\v{x}} ) }{ | \v{x}_i - \v{x}_j | } \right|^{\frac{5}{4}} \dd \v{x}_j \right)^{\frac{8}{5}} \dd \ul{\v{x}}_i' \right)^{\frac{1}{2}} \\
& \lesssim t^{- \frac{9}{20}} [1 + t^{- \frac{1}{2} }] \|\Lambda_i^{m}\phi \|_2 \\
& \lesssim t^{- \frac{9}{20}} [1 + t^{- \frac{1}{2} }] \| \phi \|_{m,2}  . \numberthis \label{eq:Estimate-Coulomb-8}
\end{align*}
Combining estimates (\ref{eq:Estimate-Coulomb-7}) and (\ref{eq:Estimate-Coulomb-8}) we arrive at
\begin{align*}
\| e^{t \Delta} V_3 (\ul{\v{R}} , \Z) \phi \|_{m , 2} & \leq  \sum_{1 \leq i < j \leq N} \| e^{t \Delta} | \v{x}_i - \v{x}_j |^{-1} \phi \|_{m,2} \\
& \lesssim \left( 1 + t^{- \frac{1}{2} } \right) \left( t^{- \frac{9}{20}} + t^{-\frac{1}{4}} \right) \| \phi \|_{m,2} . \numberthis \label{eq:Estimate-Coulomb-9}
\end{align*}
Collecting estimates (\ref{eq:Estimate-Coulomb-1}), (\ref{eq:Estimate-Coulomb-5}), and (\ref{eq:Estimate-Coulomb-9}) we arrive at (\ref{eq:Estimate-Coulomb}). 

\end{proof}

\begin{lem}[Estimates for the Energies] \label{lem:Estimates-Energy}
Fix $\varepsilon > 0$, $N , K \geq 1$, and let and $\Z \in [0 , \infty )^K$, $\ul{\v{R}} = (\v{R}_1 , \cdots , \v{R}_K) \in \R^{3K}$, with $\v{R}_i \neq \v{R}_j$ for all $i \neq j$. For all $(\phi , \v{A} ) \in [H^1 (\R^{3N}) ]^{2^N} \times \dot{H}^1 (\R^3 ; \R^3)$, with $\diver{\v{A}} = 0$, the kinetic energy $T_{\mr{P}} = T_{\mr{P}} [\phi , \v{A}]$, as defined in (\ref{def:total_kinetic}), and the potential energy $V = V[\phi]$, as defined in (\ref{def:tot_potential_energy}), satisfy the estimates
\begin{align}\label{eq:Estimate-Kinetic-Coulomb}
T_{\mr{P}} \lesssim ( 1 + \| \nabla \v{A} \|_2 )^2 \| \phi \|_{1,2}^2  \hspace{0.5cm} \mr{and} \hspace{0.5cm} V \lesssim \| \phi \|_{1,2}^2 ,
\end{align}
respectively. Moreover, for all $(\phi , \v{A} ) , (\phi' , \v{A}'  ) \in [H^1 (\R^{3N}) ]^{2^N} \times \dot{H}^1 (\R^3 ; \R^3)$, the difference of the total kinetic energies and potential energies $T_{\mr{P}} - T'_{\mr{P}} + V - V' \equiv T_{\mr{P}} [\phi , \v{A} ] - T_{\mr{P}} [\phi' , \v{A}' ] + V[\phi] - V[\phi']$ satisfies the estimate 
\begin{align*}
& | T_{\mr{P}} - T'_{\mr{P}} + V - V' | \\
& \hspace{2mm} \lesssim \omega ( \| \phi \|_{1,2} , \| \phi' \|_{1,2} , \| \nabla \v{A} \|_{2} , \| \nabla \v{A}' \|_{2} ) \max{ \{ \| \phi - \phi' \|_{1,2} , \|\nabla ( \v{A} - \v{A}' ) \|_{2}  \} } , \numberthis \label{eq:Estimate-Energy-diff} 
\end{align*}
where 
\begin{align}\label{eq:def_omega}
\omega (x_1 , x_2 , x_3 , x_4 ) & =  \left( 1 + x_2  + x_3 \right) \left[ ( 1 + x_3 ) x_1 + ( 1 + x_4 ) x_2 \right] + ( x_1 + x_2 ) .
\end{align}
\end{lem}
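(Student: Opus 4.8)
\textbf{Proof proposal for Lemma \ref{lem:Estimates-Energy}.}

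The plan is to treat the two estimates in (\ref{eq:Estimate-Kinetic-Coulomb}) first, since they are straightforward consequences of Sobolev embedding and the diamagnetic-type expansion of the Pauli operator, and then bootstrap the difference estimate (\ref{eq:Estimate-Energy-diff}) by a systematic ``add and subtract'' argument. For the kinetic energy, I would expand $T_{\mr{P}}[\phi,\v{A}] = \sum_j \| \gvsig_j \cdot (\v{p}_j + \v{A}_j) \phi \|_2^2$ and bound each summand using the triangle inequality: $\| \gvsig_j \cdot (\v{p}_j + \v{A}_j)\phi\|_2 \leq \|\v{p}_j \phi\|_2 + \|\v{A}_j \phi\|_2$, then control $\|\v{A}_j \phi\|_2 \leq \|\v{A}\|_6 \|\phi\|_3$ by H\"older, and finish with the Sobolev inequalities $\|\v{A}\|_6 \lesssim \|\nabla \v{A}\|_2$ (Lemma \ref{lem:sobolev} with $d=3$, $s=1$, $p=2$) and $H^1(\R^{3N}) \subset L^3$ in the $\v{x}_j$-variable, yielding $T_{\mr{P}} \lesssim (1 + \|\nabla\v{A}\|_2)^2 \|\phi\|_{1,2}^2$. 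For the potential energy, I would split $V(\ul{\v{R}},\Z)$ into its three pieces as in the proof of Lemma \ref{lem:Estimates-Coulomb}; the nucleus-nucleus term is a fixed constant, and the electron-electron and electron-nucleus terms are handled by the Hardy-type inequality $\langle \psi, |\v{x}|^{-1}\psi\rangle \lesssim \|\psi\|_{1,2}^2$ (which follows from (\ref{eq:Estimate-Coulomb-appen_1}) and Cauchy-Schwarz, or directly from Hardy's inequality), applied in the relevant pair of variables and integrated over the rest.

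For the difference estimate (\ref{eq:Estimate-Energy-diff}), the key step is to write
\begin{align*}
T_{\mr{P}}[\phi,\v{A}] - T_{\mr{P}}[\phi',\v{A}'] = \sum_{j=1}^N \left( \| \gvsig_j\cdot(\v{p}_j+\v{A}_j)\phi\|_2^2 - \|\gvsig_j\cdot(\v{p}_j+\v{A}'_j)\phi'\|_2^2 \right)
\end{align*}
and use the polarization identity $\|u\|^2 - \|v\|^2 = 2\re\langle u-v, v\rangle + \|u-v\|^2$ with $u = \gvsig_j\cdot(\v{p}_j+\v{A}_j)\phi$ and $v = \gvsig_j\cdot(\v{p}_j+\v{A}'_j)\phi'$. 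The difference $u - v = \gvsig_j\cdot\v{p}_j(\phi-\phi') + \gvsig_j\cdot(\v{A}_j\phi - \v{A}'_j\phi')$, and I further split $\v{A}_j\phi - \v{A}'_j\phi' = \v{A}_j(\phi-\phi') + (\v{A}_j - \v{A}'_j)\phi'$. Each resulting term is estimated in $L^2$ by H\"older and the Sobolev inequalities $H^1 \subset L^r$, $2\le r\le 6$, and $\dot{H}^1(\R^3)\subset L^6(\R^3)$, picking up factors like $\|\phi-\phi'\|_{1,2}$, $\|\nabla(\v{A}-\v{A}')\|_2$, together with ``amplitude'' factors $\|\phi\|_{1,2}$, $\|\phi'\|_{1,2}$, $\|\nabla\v{A}\|_2$, $\|\nabla\v{A}'\|_2$; collecting these and matching powers produces the polynomial $\omega$ in (\ref{eq:def_omega}). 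The Coulomb difference $V[\phi] - V[\phi'] = \langle \phi - \phi', V(\ul{\v{R}},\Z)\phi\rangle + \langle\phi', V(\ul{\v{R}},\Z)(\phi-\phi')\rangle$ is handled identically to the first part, contributing the linear term $(\|\phi\|_{1,2} + \|\phi'\|_{1,2})$ bounding $|V - V'| \lesssim (\|\phi\|_{1,2} + \|\phi'\|_{1,2})\|\phi-\phi'\|_{1,2}$, which accounts for the $(x_1 + x_2)$ summand in $\omega$.

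The main obstacle I anticipate is purely bookkeeping rather than conceptual: one must be careful that the cross term $2\re\langle u-v,v\rangle$ and the square term $\|u-v\|^2$ together produce \emph{exactly} the claimed $\omega$, including the precise placement of the ``$+1$''s and which of $x_3, x_4$ multiplies which of $x_1, x_2$. In particular, the asymmetry in (\ref{eq:def_omega}) — the factor $(1+x_3)$ attached to $x_1$ but $(1+x_4)$ to $x_2$ — traces back to which vector potential ($\v{A}$ or $\v{A}'$) appears in $v$ versus $u$, so I would fix the choice $v = \gvsig_j\cdot(\v{p}_j + \v{A}'_j)\phi'$ at the outset and carry it consistently. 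A minor technical point is that $\dot{H}^1(\R^3;\R^3)$ only controls $\v{A}$ through $\|\nabla\v{A}\|_2$, so every appearance of $\v{A}$ must be routed through $\|\v{A}\|_6 \lesssim \|\nabla\v{A}\|_2$ and never through a full $H^1$-norm; the same applies to $\v{A}-\v{A}'$. Once these are handled, the estimate assembles by elementary inequalities with no delicate cancellation required.
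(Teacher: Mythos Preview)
Your approach is correct and essentially matches the paper's proof: the kinetic and potential bounds are handled exactly as you describe (triangle inequality, H\"older, Sobolev $\dot H^1(\R^3)\subset L^6$ for $\v A$, and Hardy/uncertainty for the Coulomb terms), and the Coulomb difference $V-V'$ is split precisely as $\langle\phi-\phi',V\phi\rangle+\langle\phi',V(\phi-\phi')\rangle$. The only cosmetic difference is in the kinetic difference: the paper telescopes $\|u\|_2^2-\|v\|_2^2$ directly into six bilinear terms (equivalently $\re\langle u-v,u+v\rangle$), whereas you use $\|u\|_2^2-\|v\|_2^2=2\re\langle u-v,v\rangle+\|u-v\|_2^2$. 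Just be aware that the square term $\|u-v\|_2^2$ is \emph{quadratic} in the difference, so to land on a bound linear in $\max\{\|\phi-\phi'\|_{1,2},\|\nabla(\v A-\v A')\|_2\}$ you must estimate one factor of $\|u-v\|_2$ by $\|u\|_2+\|v\|_2\lesssim(1+x_3)x_1+(1+x_4)x_2$ (or equivalently rewrite $\|u-v\|_2^2=\re\langle u-v,u\rangle-\re\langle u-v,v\rangle$, which recovers the paper's six terms exactly); once you do that the bookkeeping produces $\omega$ as claimed.
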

\begin{proof}
To show the first estimate in (\ref{eq:Estimate-Kinetic-Coulomb}) it suffices to prove the $N = 1$ case, as for general $N \geq 1$ the estimating goes in a similar fashion. Using H\"{o}lder's inequality and Sobolev's inequality $H^1 (\R^3) \subset L^r (\R^3)$, $1 \leq r \leq 6$, we find
\begin{align*}
\| \gvsig \cdot (\v{p} + \v{A}) \phi \|_2 & \leq \| \v{p} \phi \|_2 + \| \v{A} \phi \|_2 \\
& \lesssim \| \phi \|_{1,2} + \| \v{A} \|_6 \| \phi \|_3 \\
& \lesssim ( 1 + \| \nabla \v{A} \|_2 ) \| \phi \|_{1,2} .
\end{align*}

To show the second estimate in (\ref{eq:Estimate-Kinetic-Coulomb}), first note that
\begin{align}\label{eq:Estimate-Energy-4} 
V[\phi] \leq \sum_{1 \leq i < j \leq N} \langle \phi , |\v{x}_i - \v{x}_j|^{-1} \phi \rangle_{L^2} + \left( \sum_{1 \leq i < j \leq K} \frac{Z_i Z_j}{|\v{R}_i - \v{R}_j|} \right) \| \phi \|_{2}^2 .
\end{align}
Considering (\ref{eq:Estimate-Energy-4}) we focus on controlling the electron-electron repulsion energy since the nuclei-nuclei repulsion energy is trivially bounded by $\| \phi \|_{1,2}$. The desired estimate on the electron-electron repulsion energy follows from the uncertainty principle for Hydrogen, namely $\langle \psi , |\v{x}|^{-1} \psi \rangle \leq \| \psi \|_2 \| \nabla \psi \|_2$. It suffices to consider the case $N = 2$. Using H\"{o}lder's inequality and Sobolev's inequality we find
\begin{align*}
& \langle \phi , |\v{x}_1 - \v{x}_2|^{-1} \phi \rangle_{L^2} \\
& \hspace{0.1cm} = \int_{\R^3} \int_{\R^3} \frac{|\phi (\v{y} + \v{x}_2 , \v{x}_2)|^2}{|\v{y}|} \dd \v{y} \dd \v{x}_2 \\
& \hspace{0.1cm} \leq \int_{\R^3} \left( \int_{\R^3} | \phi (\v{y} + \v{x}_2 , \v{x}_2) |^2 \dd \v{y} \right)^{\frac{1}{2}} \left( \int_{\R^3} | \v{p}_1 \phi (\v{y} + \v{x}_2 , \v{x}_2) |^2 \dd \v{y} \right)^{\frac{1}{2}} \dd \v{x}_2  \\
& \hspace{0.1cm} \leq \frac{1}{2} \left( \| \phi \|_2^2 + \| \v{p}_1 \phi \|_2^2 \right) . \numberthis \label{eq:Estimate-Energy-4.1}
\end{align*}
Estimates (\ref{eq:Estimate-Energy-4}) and (\ref{eq:Estimate-Energy-4.1}) imply the second estimate in (\ref{eq:Estimate-Kinetic-Coulomb}).

To estimate $T_{\mr{P}} - T'_{\mr{P}}$ it suffices to consider the $N = 1$ case. Write $T_{\mr{P}} - T'_{\mr{P}} = \sum_{k = 1}^6 T_k$ where
\begin{align*}
& T_1 [ \phi , \phi' , \v{A} , \v{A}' ] = \langle \gvsig \cdot \v{p} ( \phi - \phi' ) ,  \gvsig \cdot (\v{p} + \v{A}) \phi \rangle , \\
& T_2 [ \phi , \phi' , \v{A} , \v{A}' ] = \langle \gvsig \cdot (\v{A} - \v{A}') \phi' ,  \gvsig \cdot (\v{p} + \v{A}) \phi \rangle , \\
& T_3 [ \phi , \phi' , \v{A} , \v{A}' ] =  \langle \gvsig \cdot \v{A} (\phi -  \phi')  ,  \gvsig \cdot (\v{p} + \v{A}) \phi \rangle , \\
& T_4 [ \phi , \phi' , \v{A} , \v{A}' ] = \langle \gvsig \cdot (\v{p} + \v{A}') \phi' , \gvsig \cdot \v{p} ( \phi - \phi' )  \rangle  ,\\
& T_5 [ \phi , \phi' , \v{A} , \v{A}' ] = \langle  \gvsig \cdot (\v{p} + \v{A}') \phi' ,  \gvsig \cdot (\v{A} - \v{A}') \phi'  \rangle ,  \\
& T_6 [ \phi , \phi' ,\v{A} , \v{A}' ] =  \langle \gvsig \cdot (\v{p} + \v{A}') \phi'  ,  \gvsig \cdot \v{A} (\phi -  \phi') \rangle .
\end{align*}
Using Cauchy-Schwartz together with first estimate in (\ref{eq:Estimate-Kinetic-Coulomb}) we find
\begin{align}
& T_1 [ \phi , \phi' , \v{A} , \v{A}' ]  \lesssim ( 1 + \| \nabla \v{A} \|_2 ) \| \phi \|_{1,2} \| \phi - \phi' \|_{1,2} , \label{eq:Estimate-Energy-5}  \\
& T_2 [ \phi , \phi' , \v{A} , \v{A}' ]  \lesssim  ( 1 + \| \nabla \v{A} \|_2 ) \| \phi \|_{1,2} \| \phi' \|_{1,2} \| \nabla ( \v{A} - \v{A}' ) \|_2 , \label{eq:Estimate-Energy-6}  \\
&  T_3 [ \phi , \phi' , \v{A} , \v{A}' ]  \lesssim ( 1 + \| \nabla \v{A} \|_2 ) \| \phi \|_{1,2} \| \nabla \v{A} \|_2 \| \phi - \phi' \|_{1,2} , \label{eq:Estimate-Energy-7}  \\
& T_4 [ \phi , \phi' , \v{A} , \v{A}' ]   \lesssim ( 1 + \| \nabla \v{A}' \|_2 ) \| \phi' \|_{1,2} \| \phi - \phi' \|_{1,2} , \label{eq:Estimate-Energy-8}  \\
& T_5 [ \phi , \phi' , \v{A} , \v{A}' ]  \lesssim ( 1 + \| \nabla \v{A}' \|_2 ) \| \phi' \|_{1,2}^2 \| \nabla ( \v{A} - \v{A}' ) \|_2 , \label{eq:Estimate-Energy-9}  \\
& T_6 [ \phi , \phi' , \v{A} , \v{A}' ]  \lesssim ( 1 + \| \nabla \v{A}' \|_2 ) \| \phi' \|_{1,2} \| \nabla \v{A} \|_2 \| \phi - \phi' \|_{1,2} . \label{eq:Estimate-tot-Energy0}  
\end{align}
Collecting estimates (\ref{eq:Estimate-Energy-5}) through (\ref{eq:Estimate-tot-Energy0}) we conclude
\begin{align*}
| T_{\mr{P}} - T'_{\mr{P}} |  \lesssim \omega_1 ( \|\phi\|_{1,2} , \| \phi' \|_{1,2} , \| \nabla \v{A} \|_2 , \| \nabla \v{A}' \|_2 ) \max{\{ \| \phi - \phi' \|_{1,2} , \| \nabla ( \v{A} - \v{A}' ) \|_2 \}} \numberthis \label{eq:Estimate-tot-Energy1} 
\end{align*}
where $\omega_1$ function
\begin{align*}
\omega_1 ( x , y , z , w ) = \left( 1 + y  + z \right) \left[ ( 1 + z ) x + ( 1 + w ) y \right] .
\end{align*}

To estimate $V - V'$, write $V - V' = V_1 + V_2$ where
\begin{align*}
V_1 [ \phi , \phi' ] = \langle \phi - \phi' , V (\ul{\v{R}} , \Z) \phi \rangle_{L^2} , \hspace{1cm} V_2 [\phi , \phi'] = \langle \phi' , V (\ul{\v{R}} , \Z) (\phi - \phi') \rangle_{L^2} .
\end{align*}
We want to control $\max{ \{ V_1 , V_2 \} }$ by $\| \phi \|_{1,2}$, $\| \phi' \|_{1,2}$, and $\| \phi - \phi' \|_{1,2}$. Therefore, we show the inequality
\begin{align}\label{eq:Estimate-tot-Energy1.5}
| \langle h , V(\ul{\v{R}} , \Z) g \rangle | \lesssim \| h \|_{1,2} \| g \|_{1,2} , \hspace{1cm} \forall h,g \in H^1 (\R^{3N} , \C) .
\end{align}
Note that 
\begin{align*}
\langle h , V( \ul{\v{R}} , \Z ) g \rangle_{L^2} & = \sum_{i < j}^N \langle h , | \v{x}_i - \v{x}_j |^{-1} g \rangle_{L^2} - \sum_{i = 1}^N \sum_{j = 1}^K Z_j \langle h , | \v{x}_i - \v{R}_j |^{-1} g \rangle_{L^2} \\
& \hspace{0.25cm} + \sum_{i < j}^K \frac{Z_i Z_j}{|\v{R}_i - \v{R}_j|} \langle h , g \rangle_{L^2} . \numberthis  \label{eq:Estimate-tot-Energy2}
\end{align*}
The third term on the right hand side of (\ref{eq:Estimate-tot-Energy2}) is bounded by $\| g \|_2 \| h \|_2$ via Cauchy-Schwartz. To estimate the second term on the right hand side of (\ref{eq:Estimate-tot-Energy2}) it suffices to consider the case $N, K = 1$ and $\v{R}_1 = 0$. Indeed, in this situation $\langle h , |\v{x}|^{-1} g \rangle \lesssim \sqrt{ \| h \|_6 \| g \|_6 \| h \|_2 \| g \|_2 }$. This follows by writing $\langle h , |\v{x}|^{-1} g \rangle$ as the sum of an integral over the ball of radius $R$ and its complement, using H\"{o}lder's inequality, and then optimizing over $R$. The desired estimate (\ref{eq:Estimate-tot-Energy1.5}) then follows from the Sobolev inequality. Estimating the first term on the right hand side of (\ref{eq:Estimate-tot-Energy2}) by $\| h \|_{1,2} \| g \|_{1,2}$ follows the same proof as that of (\ref{eq:Estimate-Energy-4.1}). Hence (\ref{eq:Estimate-tot-Energy1.5}) holds, and therefore
\begin{align}\label{eq:Estimate-tot-Energy3}
|V - V'| \lesssim |V_1| + |V_2| \lesssim ( \| \phi \|_{1,2} + \| \phi' \|_{1,2} ) \| \phi - \phi' \|_{1,2} .
\end{align}
Collecting estimates (\ref{eq:Estimate-tot-Energy1}) and (\ref{eq:Estimate-tot-Energy3}), we arrive at (\ref{eq:Estimate-Energy-diff}).
\end{proof}

\begin{lem}[Estimates for the Probability Current Density] \label{lem:Estimates_Prob_current}
Fix $m \in [1 , \infty)$ and $N \geq 1$. For all $(\phi , \v{A}) \in [H^{m} (\R^{3N})]^{2^N} \times H^{m} (\R^3 ; \R^3)$, with $\diver{\v{A}} = 0$, and each $j \in \{1 , \cdots , N\}$, the probability current density $\J_{\!\mr{P}} [\phi , \v{A}]$ as given by (\ref{def:prob_current_compact}) is in the Sobolev space $H^{m - 2} (\R^3 ; \R^3)$ and satisfies the estimate
\begin{align} \label{eq:Estimate-KG-1}
\| \J_{\! \mr{P}} [\phi , \v{A} ] \|_{m-2,2} \lesssim (1 + \| \v{A} \|_{m,2} ) \| \phi \|_{m,2}^2 .
\end{align}
Moreover, for $(\phi , \v{A}) , (\phi' , \v{A}') \in [H^1 (\R^{3N})]^{2^N} \times H^1 (\R^3 ; \R^3)$, with $\diver{\v{A}} = \diver{\v{A}'} = 0$, and each $j \in \{1 , \cdots , N\}$, we have
\begin{align*} 
\| \J_{\!\mr{P}} [\phi , \v{A}] - \J_{\!\mr{P}} [\phi' , \v{A}'] \|_{-1 , 2} & \lesssim \left\lbrace \left[ (1 + \| \v{A} \|_{1,2}) \| \phi \|_{1,2} + (1 + \| \v{A}' \|_{1,2}) \|\phi'\|_{1,2} \right] \right. \\
& \hspace{1cm} \left. + \| \phi \|_{1,2} \| \phi' \|_{1,2} \right\rbrace \max{\{ \| \phi - \phi' \|_{1,2} \| \v{A} - \v{A}' \|_{1,2} \}} . \numberthis \label{eq:Estimate-KG-2}
\end{align*}
\end{lem}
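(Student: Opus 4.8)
\textbf{Proof proposal for Lemma \ref{lem:Estimates_Prob_current}.}

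The plan is to reduce everything to single-particle estimates by exploiting the structure of $\J_{\!\mr{P}} [\phi , \v{A}]$ in \eqref{def:prob_current_compact}, namely that it is a finite sum over $j$ of partial traces of quadratic expressions in $\phi$. Recall the pointwise identity (from \S\ref{sec:MP})
\begin{align*}
\langle \gvsig \psi , \gvsig \cdot (\v{p} + \v{A}) \psi \rangle_{\C^2} = \langle \psi , (\v{p} + \v{A}) \psi \rangle_{\C^2} + i \langle \psi , (\v{p} + \v{A}) \wedge \gvsig \psi \rangle_{\C^2} ,
\end{align*}
so that, after the partial integration over $\ul{\v{z}}_j'$, each summand of $\J_{\!\mr{P}}$ is built from terms of the form $\int \overline{\phi}\, \v{p}_j \phi \, \dd \ul{\v{z}}_j'$, $\int \overline{\phi}\, \v{A}_j \phi \, \dd \ul{\v{z}}_j'$, and $\curl{} \int \overline{\phi}\, \gvsig_j \phi \, \dd \ul{\v{z}}_j'$. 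For \eqref{eq:Estimate-KG-1} I would estimate the $H^{m-2}$-norm of each such term. The cleanest route is duality: pair against $\eta \in H^{2-m}(\R^3;\R^3)$ with $\|\eta\|_{2-m,2}\le 1$, move the $\v{p}_j$ (resp.\ the $\curl{}$) onto $\eta$ where helpful, and then apply the Kato--Ponce estimate (Lemma \ref{lem:Kato-Ponce}) together with the Sobolev embeddings $H^1(\R^3)\subset L^r(\R^3)$, $2\le r\le 6$, exactly as in the proof of Lemma \ref{lem:Estimates-Pauli}. The $\v{p}$-term and the spin-current term need two $L^2$-derivatives total between the two factors of $\phi$ and the test function, which is precisely what $H^{m-2}$ buys us; the $\v{A}$-term is lower order and costs the extra factor $\|\v{A}\|_{m,2}$, yielding the $(1+\|\v{A}\|_{m,2})\|\phi\|_{m,2}^2$ bound. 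One must be slightly careful that after integrating out $\ul{\v{z}}_j'$ one is genuinely estimating a function of $\v{x}\in\R^3$; here I would invoke Minkowski's integral inequality to pull the $\ul{\v{z}}_j'$-integration outside the spatial norm, reducing to the $N=1$-type computation fibre-by-fibre and then integrating the resulting bound in $\ul{\v{z}}_j'$.

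For the Lipschitz-type estimate \eqref{eq:Estimate-KG-2}, I would write $\J_{\!\mr{P}} [\phi , \v{A}] - \J_{\!\mr{P}} [\phi' , \v{A}']$ as a telescoping sum, splitting each bilinear/trilinear term into a piece with $\phi-\phi'$ in one slot and $\phi$ (or $\v{A}$) frozen, plus a piece with $\phi'$ frozen and $\v{A}-\v{A}'$ (or $\phi-\phi'$) in the remaining slot — the same bookkeeping as the decomposition $\v{J}_{\mr{P}}[\psi,\v{A}]-\v{J}_{\mr{P}}[\psi',\v{A}'] = -2\alpha\sum_{k=1}^5 g_k$ that appears in \S\ref{sec:MS}, now adapted to the many-body partial-trace setting. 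Each of these pieces is then estimated in $H^{-1}$ by duality against $\eta\in H^1$, again using Lemma \ref{lem:Kato-Ponce} and the Sobolev inequalities, and the worst term (the one with $\v{p}_j(\phi-\phi')$) is handled by moving the derivative onto $\eta$, so that only $\|\phi-\phi'\|_{1,2}$ rather than $\|\phi-\phi'\|_{2,2}$ is needed. Collecting the bounds and taking the max over the two ``directions of variation'' produces the stated estimate with coefficient $[(1+\|\v{A}\|_{1,2})\|\phi\|_{1,2} + (1+\|\v{A}'\|_{1,2})\|\phi'\|_{1,2}] + \|\phi\|_{1,2}\|\phi'\|_{1,2}$, where the last summand comes precisely from the spin-current/momentum quadratic terms and the first two from the $\v{A}$-dependent terms.

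The main obstacle I anticipate is the spin-current term $\curl{}\langle \phi , \gvsig_j \phi\rangle$ in \eqref{eq:Estimate-KG-2}: because $\Hproj$ acts as the identity on a pure curl, there is no way to ``spend'' the $\curl{}$ the way one spends $\nabla$ against the Leray projection, so the $H^{-1}$-pairing must absorb the derivative onto $\eta$ directly, and one has to check that $\langle \phi-\phi' , \gvsig_j \phi\rangle\in L^{3/2}$-type bounds combined with $\|\nabla\eta\|_2$ suffice — this is exactly the difficulty flagged in Chapter \ref{chap:othermodels} for the local well-posedness of \eqref{eq:MP}, but here it is benign because we only need an $H^{-1}$ (not $L^2_T L^p$) bound and we have two full $H^1$ factors to play with. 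A secondary technical point is bounding $\||\v{A}|^2_j - |\v{A}'|^2_j\|$-type quantities, handled by $|\v{A}|^2-|\v{A}'|^2 = (\v{A}+\v{A}')\cdot(\v{A}-\v{A}')$ and Hölder; and one should double-check the partial-trace manipulations commute with $\curl{}$ in $\v{x}$, which they do since the $\ul{\v{z}}_j'$-integration is in the other variables.
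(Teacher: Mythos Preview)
Your proposal is correct and would go through, but the paper's argument is simpler and sidesteps the issues you flag.

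\textbf{Main difference: no Schr\"odinger/spin-current split, and $L^{6/5}$ instead of duality.} The paper never expands $\langle \gvsig \phi , \gvsig \cdot (\v{p}+\v{A})\phi\rangle_{\C^2}$ into the Schr\"odinger current plus the spin current; it estimates the whole expression in one shot. For $m\in[1,2]$ it uses the dual Sobolev embedding $L^{6/5}(\R^3)\hookrightarrow H^{-1}(\R^3)$, so the task becomes bounding $\|\v{J}_j[\phi,\v{A}]\|_{6/5}$. After Minkowski in the $\ul{\v{z}}_j'$ variables (exactly as you propose), H\"older with exponents $3$ and $2$ gives
\[
\int \|\phi_{\ul{\v{z}}_j'}\|_3 \,\|(\v{p}+\v{A})\phi_{\ul{\v{z}}_j'}\|_2 \,\dd\ul{\v{z}}_j' \;\lesssim\; (1+\|\v{A}\|_{1,2})\|\phi\|_{1,2}^2,
\]
with no derivative-shifting onto a test function needed. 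For $m>2$ the paper estimates $\|\v{J}_j\|_{m-2,2}$ directly via Minkowski, Kato--Ponce, and Sobolev, much as you describe. Your duality approach works too, but the $L^{6/5}$ route makes the spin-current ``obstacle'' disappear: since the $\gvsig$-structure is never unpacked, there is no separate $\curl$-term to worry about.

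\textbf{Difference estimate.} The paper telescopes into \emph{four} pieces (not five): the split happens inside $\langle\gvsig\phi,\gvsig\cdot(\v{p}+\v{A})\phi\rangle$ without separating off the spin current, giving $\v{F}_j^1,\ldots,\v{F}_j^4$ corresponding to variation of the left $\phi$, of $\v{p}\phi$, of $\v{A}$, and of the right $\phi$ in the $\v{A}'$-term. Each is bounded in $L^{6/5}$ by the same H\"older $3$--$2$ trick; in particular $\|\v{F}_j^2\|_{6/5}\lesssim \|\phi'\|_{1,2}\|\v{p}(\phi-\phi')\|_2$ already sits at the $H^1$ level with no integration by parts.

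\textbf{Minor correction.} There is no $|\v{A}|^2$ term in $\J_{\!\mr{P}}$ --- the current is linear in $\v{A}$ --- so the $|\v{A}|^2-|\v{A}'|^2$ manipulation on your checklist is not needed here (you may be conflating with $\cL_j(\v{A})$ from Lemma~\ref{lem:Estimates-Pauli}).
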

\begin{proof}
To prove (\ref{eq:Estimate-KG-1}) we split into two cases: (a) $1 \leq m \leq 2$ and (b) $m > 2$. For (a), we specialize to $m = 1$ and note that the general case $1 \leq m \leq 2$ follows in a similar fashion. Since
\begin{align*}
\| \J_{\!\mr{P}} [\phi , \v{A}] \|_{-1 , 2} \lesssim \| \J_{\!\mr{P}} [\phi , \v{A}] \|_{\frac{6}{5}} 
\end{align*}
we need to estimate $\| \J_{\!\mr{P}} [\phi , \v{A}] \|_{\frac{6}{5}}$ by $(1 + \| \v{A} \|_{1,2}) \| \phi \|_{1,2}^2$. Write $\J_{\!\mr{P}} = \sum_{j = 1}^N \v{J}_j$. Using Minkowski's integral inequality, H\"{o}lder's inequality, and the Sobolev inequality $H^1 (\R^3) \subset L^r (\R^3)$, $2 \leq r \leq 6$, we have
\begin{align*}
\| \v{J}_j [\phi , \v{A}] \|_{\frac{6}{5}} & = \alpha \left( \int_{\R^3} \left| \int \langle \gvsig \phi_{\ul{\v{z}}_j'} , \gvsig \cdot (\v{p} + \v{A}_j) \phi_{\ul{\v{z}}_j'} \rangle_{\C^2} (\v{x}_j) \dd \ul{\v{z}}_j' \right|^{\frac{6}{5}} \dd \v{x}_j \right)^{\frac{5}{6}} \\
& \leq \alpha \int \left( \int_{\R^{3}} \left| \langle \gvsig \phi_{\ul{\v{z}}_j'} , \gvsig \cdot (\v{p} + \v{A}_j) \phi_{\ul{\v{z}}_j'} \rangle_{\C^2} (\v{x}_j) \right|^{\frac{6}{5}}  \dd \v{x}_j \right)^{\frac{5}{6}} \dd \ul{\v{z}}_j' \\
&  \lesssim \int \left[ \| \phi_{\ul{\v{z}}_j'}  \|_{3} \| (\v{p} + \v{A}) \phi_{\ul{\v{z}}_j'}  \|_{2} \right] \dd \ul{\v{z}}_j' \\
& \lesssim ( 1 + \| \v{A} \|_{1,2} ) \| \phi \|_{1,2}^2 . \numberthis \label{eq:Estimate-KG-3}
\end{align*}
The estimate (\ref{eq:Estimate-KG-3}) thus yields $\| \v{J}_j [\phi , \v{A}] \|_{-1 , 2} \lesssim (1 + \| \v{A} \|_{1,2}) \| \phi \|_{1,2}^2$ For case (b), we use Minkowski's integral inequality, Lemma \ref{lem:Kato-Ponce}, and the Sobolev inequality to find 
\begin{align*}
 \| \v{J}_j [\phi , \v{A}] \|_{m-2,2} & = \alpha \left( \int_{\R^3} \left| \int \Lambda_j^{m-2} \langle \gvsig \phi_{\ul{\v{z}}_j'} , \gvsig \cdot (\v{p} + \v{A}_j) \phi_{\ul{\v{z}}_j'} \rangle_{\C^2} (\v{x}_j) \dd \ul{\v{z}}_j' \right|^2 \dd \v{x}_j \right)^{\frac{1}{2}} \\
& \leq \alpha \int \left( \int_{\R^{3}} \left| \Lambda_j^{m-2} \langle \gvsig \phi_{\ul{\v{z}}_j'} , \gvsig \cdot (\v{p} + \v{A}_j) \phi_{\ul{\v{z}}_j'} \rangle_{\C^2} (\v{x}_j) \right|^2  \dd \v{x}_j \right)^{1/2} \dd \ul{\v{z}}_j' \\
& \lesssim \int \left[ \| \phi_{\ul{\v{z}}_j'} \|_{m-2,6} \| \phi_{\ul{\v{z}}_j'} \|_{1 , 3} + \| \phi_{\ul{\v{z}}_j'} \|_{3} \| \phi_{\ul{\v{z}}_j'} \|_{m-1,6} \right. \\
& \hspace{3cm} \left. + \| \v{A} \|_{m-2 , 6} \| \phi_{\ul{\v{z}}_j'} \|_{6}^2  + \| \v{A} \|_{6} \| \phi_{\ul{\v{z}}_j'} \|_{m-2 , 6} \| \phi_{\ul{\v{z}}_j'} \|_{3} \right] \dd \ul{\v{z}}_j' \\
& \lesssim (1 + \| \v{A} \|_{m,2} ) \int \| \phi_{\ul{\v{z}}_j'} \|_{m,2}^2 \dd \ul{\v{z}}_j' \lesssim (1 + \| \v{A} \|_{m,2}) \| \phi \|_{m,2}^2 . \numberthis \label{eq:Estimate-KG-4}
\end{align*}
Combining (\ref{eq:Estimate-KG-3}) and (\ref{eq:Estimate-KG-4}) we arrive at (\ref{eq:Estimate-KG-1}).

Arguing (\ref{eq:Estimate-KG-2}) in similar to the case $m = 1$ in proving (\ref{eq:Estimate-KG-1}). Specifically, we need to estimate $ \v{J}_j [\phi , \v{A}] - \v{J}_j [\phi' , \v{A}']$ in $L^{\frac{6}{5}}$-norm. We write
\begin{align}\label{eq:Estimate-KG-9}
\v{J}_j [\phi , \v{A}] - \v{J}_j [\phi' , \v{A}'] = - \alpha \re{ \sum_{\alpha = 1}^4 \v{F}_j^{\alpha} [\phi , \phi' , \v{A} , \v{A}']  } 
\end{align}
where
\begin{align*}
\v{F}_j^1 [\phi , \phi' , \v{A} , \v{A}'] ( \v{x}_j ) & = \int \langle \gvsig \left( \phi_{\ul{\v{z}}_j'} - \phi'_{\ul{\v{z}}_j'} \right) , \gvsig \cdot (\v{p} + \v{A}_j) \phi_{\ul{\v{z}}_j'} \rangle_{\C^2} (\v{x}_j) \dd \ul{\v{z}}_j' , \\
\v{F}^2_j [\phi , \phi' , \v{A} , \v{A}'] ( \v{x}_j ) & = \int \langle \gvsig \phi_{\ul{\v{z}}_j'}' , \gvsig \cdot \v{p} \left( \phi_{\ul{\v{z}}_j'} - \phi'_{\ul{\v{z}}_j'} \right) \rangle_{\C^2} (\v{x}_j) \dd \ul{\v{z}}_j' , \\
\v{F}^3_j [\phi , \phi' , \v{A} , \v{A}'] ( \v{x}_j ) & = \int \langle \gvsig \phi'_{\ul{\v{z}}_j'} , \gvsig \cdot \left( \v{A}_j - \v{A}'_j \right) \phi_{\ul{\v{z}}_j'} \rangle_{\C^2} (\v{x}_j) \dd \ul{\v{z}}_j' , \\
\v{F}^4_j [\phi , \phi' , \v{A} , \v{A}'] ( \v{x}_j ) & = \int \langle \gvsig \phi'_{\ul{\v{z}}_j'} , \gvsig \cdot \v{A}'_j \left( \phi_{\ul{\v{z}}_j'} - \phi'_{\ul{\v{z}}_j'} \right) \rangle_{\C^2} (\v{x}_j) \dd \ul{\v{z}}_j' .
\end{align*}
Estimating $\v{F}^{\alpha}_j$, for $\alpha = 1, \cdots ,4$, in $L^{\frac{6}{5}}$-norm is straightforward and involves the same strategy used to show (\ref{eq:Estimate-KG-3}). We find
\begin{align}
& \| \v{F}^1_j [\phi , \phi' , \v{A} , \v{A}'] \|_{\frac{6}{5}} \lesssim ( 1 + \| \v{A} \|_{1,2} ) \| \phi \|_{1,2} \| \phi - \phi' \|_{1,2} . \numberthis \label{eq:Estimate-KG-5} \\
& \| \v{F}_j^2 [\phi , \phi' , \v{A} , \v{A}'] \|_{\frac{6}{5}} \lesssim \| \phi' \|_{1,2} \| \phi - \phi' \|_{1,2} , \label{eq:Estimate-KG-6} \\
& \| \v{F}_j^3 [\phi , \phi' , \v{A} , \v{A}'] \|_{\frac{6}{5}} \lesssim \| \phi \|_{1,2} \|\phi' \|_{1,2} \| \v{A} - \v{A}' \|_{1,2} , \label{eq:Estimate-KG-7} \\
& \| \v{F}_j^4 [\phi , \phi' , \v{A} , \v{A}'] \|_{\frac{6}{5}} \lesssim \| \v{A}' \|_{1,2} \| \phi' \|_{1,2} \| \phi - \phi' \|_{1,2} . \label{eq:Estimate-KG-8} 
\end{align}
Estimates (\ref{eq:Estimate-KG-5}) through (\ref{eq:Estimate-KG-8}) imply (\ref{eq:Estimate-KG-2}).
\end{proof}

\section{The Contraction Mapping Argument}\label{sec:metric_space_linearization}

This section is devoted to setting the stage for a contraction mapping argument that will be used to prove the local well-posedness of the $\varepsilon$-modified MBMP equations (\ref{eq:MBMP_epsilon}) as stated in Theorem \ref{thm:local_exist_MBMP_epsilon}. As discussed in detail in \S\ref{sec:proof_strat}, we will employ a proof strategy inspired by the one used to study the MS equations (\ref{eq:MS}) in \cite{NW05}. This proof strategy is based on the well-known Banach fixed point theorem, namely, every contraction mapping on a complete metric space has a fixed point. Therefore, we need to carefully choose an appropriate solution map and a metric space on which this solution map is a contraction. Throughout this section we fix $N , K \geq 1$, $m \in [1 , \infty)$, $\varepsilon > 0$, $\Z \in [0 , \infty )^K$, and $\ul{\v{R}} = (\v{R}_1 , \cdots , \v{R}_K) \in \R^{3K}$, with $\v{R}_i \neq \v{R}_j$ for all $i \neq j$. 

We begin by specifying the metric space on which an appropriate solution map will be defined. Given $T , R \in (0, \infty)$, consider the $(T, R)$-dependent space
\begin{align*}
\X^m_{T} (R) = & \{ (\phi , \v{A} ) \in L_T^{\infty} [H^m (\R^{3N})]^{2^N} \times [L_T^{\infty} H^m (\R^3 ; \R^3) \cap W_T^{1 , \infty} H^{m-1} (\R^3 ; \R^3)]  \\
& ~~ \text{s.t.} ~ \max{ \{ \| \phi \|_{\infty ; m, 2 } , \| \v{A} \|_{\infty ; m, 2 } , \| \partial_t \v{A} \|_{\infty ; m-1 , 2 } \} } \leq R , ~ \diver{\v{A}} = 0 \} .
\end{align*}
We equip $\X^m_{T} (R)$ with the metric
\begin{align}
d( (\phi , \v{A}) , (\phi', \v{A}') ) = \max{ \{ \| \phi - \phi' \|_{\infty ; 1,2} , \| \v{A} - \v{A}' \|_{\infty ; 1,2} , \| \partial_t \v{A} - \partial_t \v{A}' \|_{\infty; 2} \} } .
\end{align}
We emphasize that the $L^{\infty}$-norm in the definition of $( \X^m_{T} (R) , d )$ is taken over the time interval $[0,T]$. When the radius $R > 0$ is understood we will simply write $\X^m_T$ for $\X_T^m (R)$. That $( \X^m_{T } , d )$ is complete is expressed by the following Lemma.
\begin{lem}\label{lem:completeness}
For all $T , R > 0$ and $m \in [ 1 , \infty )$, $(\X^m_{T}, d)$ is a complete metric space. 
\end{lem}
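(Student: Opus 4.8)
The plan is to verify directly that $(\X^m_T, d)$ is a metric space and then establish completeness by passing to the limit of a $d$-Cauchy sequence. First I would dispense with the metric axioms: positivity, symmetry, and the triangle inequality for $d$ all follow immediately from the corresponding properties of the constituent norms $\| \cdot \|_{\infty ; 1,2}$ and $\| \cdot \|_{\infty ; 2}$, together with the observation that $d((\phi,\v{A}),(\phi',\v{A}')) = 0$ forces $\phi = \phi'$ and $\v{A} = \v{A}'$ as elements of the relevant $L^{\infty}_T$ spaces. (One should note here that the smaller-regularity norms used to define $d$ are finite on $\X^m_T$ because $m \geq 1$ and $[0,T]$ is bounded, so $d$ is genuinely well-defined and finite-valued on $\X^m_T \times \X^m_T$.)

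The substance is completeness. Let $\{ (\phi^n, \v{A}^n) \}_{n \geq 1}$ be a $d$-Cauchy sequence in $\X^m_T$. Then $\{\phi^n\}$ is Cauchy in $L^{\infty}_T H^1(\R^{3N})$, $\{\v{A}^n\}$ is Cauchy in $L^{\infty}_T H^1(\R^3;\R^3)$, and $\{\partial_t \v{A}^n\}$ is Cauchy in $L^{\infty}_T L^2(\R^3;\R^3)$. Since these are complete spaces, there exist limits $\phi \in L^{\infty}_T H^1$, $\v{A} \in L^{\infty}_T H^1$, and $\dot{\v{A}} \in L^{\infty}_T L^2$ to which the respective sequences converge; a standard argument identifies $\dot{\v{A}} = \partial_t \v{A}$ in the sense of distributions (the convergence $\v{A}^n \to \v{A}$ in $L^{\infty}_T H^1$ and $\partial_t \v{A}^n \to \dot{\v{A}}$ in $L^{\infty}_T L^2$ together let one pass to the limit in $\int \v{A}^n \, \partial_t f = -\int \partial_t \v{A}^n \, f$ for test functions $f$). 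So $(\phi, \v{A})$ lies in $L^{\infty}_T H^1 \times [L^{\infty}_T H^1 \cap W^{1,\infty}_T L^2]$ and $(\phi^n, \v{A}^n) \to (\phi, \v{A})$ in $d$. The remaining point is to show $(\phi, \v{A}) \in \X^m_T$, i.e. that the higher-regularity bounds and the constraint survive the limit. For the divergence-free constraint: $\diver \v{A}^n = 0$ for all $n$ and $\v{A}^n \to \v{A}$ in $L^{\infty}_T H^1$, so $\diver \v{A}^n \to \diver \v{A}$ in $L^{\infty}_T L^2$, forcing $\diver \v{A} = 0$. For the $H^m$ bounds: the uniform bound $\| \phi^n \|_{\infty;m,2} \leq R$ means that, for a.e.\ $t$, $\{\phi^n(t)\}$ is bounded in $H^m(\R^{3N})$, a reflexive (in fact Hilbert) space; combined with $\phi^n(t) \to \phi(t)$ in $H^1$ (along a subsequence, after extracting so that the $L^\infty_T$ convergence upgrades to a.e.-$t$ convergence), weak-$*$ compactness and lower semicontinuity of the norm under weak limits give $\phi(t) \in H^m$ with $\| \phi(t) \|_{m,2} \leq \liminf_n \| \phi^n(t) \|_{m,2} \leq R$. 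Taking the essential supremum over $t$ yields $\| \phi \|_{\infty;m,2} \leq R$. The identical argument applied to $\v{A}^n$ in $H^m$ and $\partial_t \v{A}^n$ in $H^{m-1}$ gives $\| \v{A} \|_{\infty;m,2} \leq R$ and $\| \partial_t \v{A} \|_{\infty;m-1,2} \leq R$. Hence $(\phi, \v{A}) \in \X^m_T$, completing the proof.

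The main technical obstacle is the last step — verifying that the $H^m$, $H^{m-1}$ bounds are inherited by the limit even though the Cauchy condition only controls the \emph{weaker} $H^1$, $L^2$ norms. The clean way around this is exactly the weak lower-semicontinuity argument above: one does not need $\phi^n \to \phi$ in $H^m$, only that $\phi^n(t)$ is bounded in $H^m$ (hence has a weakly convergent subsequence) and converges to $\phi(t)$ in the weaker topology, which pins down the weak limit as $\phi(t)$; then $\| \phi(t) \|_{m,2} \leq \liminf \| \phi^n(t)\|_{m,2}$. A minor bookkeeping subtlety is the interchange between $L^\infty_T$-convergence and pointwise-in-$t$ statements: convergence in $L^\infty_T X$ implies, along a subsequence, convergence for a.e.\ $t$ in $X$, which is all that the pointwise weak-compactness argument requires, and the final esssup bound is unaffected by passing to a subsequence. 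No estimates from \S\ref{sec:estimates} are needed here; this lemma is purely functional-analytic.
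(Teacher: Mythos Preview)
Your proof is correct and follows essentially the same strategy as the paper: obtain the limit in the weak metric by completeness of $L^\infty_T H^1$ and $L^\infty_T L^2$, verify $\dot{\v{A}} = \partial_t \v{A}$ distributionally, check $\diver{\v{A}} = 0$, and recover the $H^m$/$H^{m-1}$ bounds via weak compactness. The one technical difference is in how that last step is executed: the paper applies Banach--Alaoglu at the space-time level, viewing $L^\infty_T H^m$ as the dual of the separable space $L^1_T H^{-m}$, extracting a weak-$\ast$ subsequential limit there and identifying it with $\phi$ by testing against $L^2_T L^2$; you instead work pointwise in $t$, using weak compactness in $H^m$ together with lower semicontinuity of the norm. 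Both are standard and valid. Your version is arguably more elementary (no need to identify the predual of $L^\infty_T H^m$), while the paper's avoids any bookkeeping about measurability of $t \mapsto \phi(t)$ as an $H^m$-valued map. One small inaccuracy: you do not need to pass to a subsequence to get a.e.-$t$ convergence from $L^\infty_T H^1$ convergence --- the essential-sup bound already gives $\|\phi^n(t) - \phi(t)\|_{1,2} \leq \|\phi^n - \phi\|_{\infty;1,2} \to 0$ for a.e.\ $t$ directly.
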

\begin{proof}
Fix $T , R > 0$. Let $(\phi^n , \v{A}^{\!n})_{n \geq 1}$ be a Cauchy sequence in $(\X^m_{T}, d)$. Then $(\phi^n)_{n \geq 1}$ is a Cauchy sequence in the Banach space $L^{\infty}_T H^1$. Therefore, there exists $\phi \in L_T^{\infty} H^1$ so that
\begin{align*}
\lim_{n \rightarrow \infty} \| \phi^n - \phi \|_{\infty ; 1 , 2} = 0 .
\end{align*}
Moreover, the sequence $(\phi^n)_{n \geq 1}$ is bounded by the constant $R > 0$ in $L_T^{\infty} H^m$, a space that can be considered the dual of the separable space $L_T^1 H^{-m}$ via the isometric anti-isomorphism
\begin{align*}
L^{\infty}_T H^m \ni F \longmapsto \int_0^T \langle F (s) , ( \cdot ) (s) \rangle_{H^m , H^{-m}} \dd s \in \left( L_T^1 H^{-m} \right)^{\ast} .
\end{align*}
By the Banach-Alaoglu Theorem, namely, the unit ball in $(L_T^1 H^{-m})^{\ast}$ is compact in the weak-$\ast$ topology, there exists $\psi \in L^{\infty}_T H^m$ and a subsequence $(\phi^{n_k})_{k \geq 1} \subset (\phi^n)_{n \geq 1}$ so that 
\begin{align*}
\phi^{n_k} \xrightarrow{~~ \mr{w}^{\ast} ~~} \psi \hspace{5mm} \mr{in} ~~ L_T^{\infty} H^m 
\end{align*}
as $k \rightarrow \infty$. However, for any $\eta \in L^2_T L^2$, the sequence $( \langle \eta , \phi^{n_k} \rangle_{L_T^2 L^2}  )_{k \geq 1}$ converges to both $\langle \eta , \phi \rangle_{L_T^2 L^2}$ and $\langle \eta , \psi \rangle_{L_T^2 L^2}$. Since this is true for all $\eta \in L_T^2 L^2$, we conclude $\phi = \psi$. Note that weak-$\ast$ convergence preserves the upper bound $\| \phi \|_{\infty ; m , 2} \leq R$. So, in total, $\phi \in L_T^{\infty} H^m$ and $\| \phi \|_{\infty ; m , 2} \leq R$.

Furthermore, the sequence $(\v{A}^{\!n} , \partial_t \v{A}^{\!n})_{n \geq 1}$ is a Cauchy sequence in the Banach space $L_T^{\infty} H^1 \oplus L_T^{\infty} L^2$, and is bounded by the constant $R > 0$ in the space $L_T^{\infty} H^m \oplus L_T^{\infty} H^{m-1}$. By a similar argument used above for the sequence $(\phi^n)_{n \geq 1}$, we conclude existence of a $( \v{A} , \dot{\v{A}} ) \in L_T^{\infty} H^m \times L_T^{\infty} H^{m-1}$ so that
\begin{align*}
\lim_{n \rightarrow \infty} \| (\v{A}^{\!n} , \partial_t \v{A}^{\!n}) - (\v{A} , \dot{\v{A}}) \|_{\infty ; 1 , 2 \oplus \infty ; 2} = 0
\end{align*}
and a subsequence $(\v{A}^{\!n_k} , \partial_t \v{A}^{\!n_k})_{k \geq 1}$ so that
\begin{align*}
(\v{A}^{\!n_k} , \partial_t \v{A}^{\!n_k})_{k \geq 1} \xrightarrow{~~ \mr{w}^{\ast} ~~} (\v{A} , \dot{\v{A}}) \hspace{5mm} \mr{in} ~~ L_T^{\infty} H^m \oplus L_T^{\infty} H^{m-1} . 
\end{align*}
Let $\vartheta \in C_c^{\infty} ( (0,T) ; \R )$ and $f \in H^{1-m} (\R^3)$, and note that
\begin{align*}
\int_0^T \int_{\R^3} \v{A}^{\!n} (t, \v{x}) f (\v{x}) \vartheta' (t) \dd \v{x} \dd t = - \int_0^T \int_{\R^3} \partial_t \v{A}^{\!n} (t, \v{x}) f (\v{x}) \vartheta (t) \dd \v{x} \dd t  .
\end{align*}
Taking the $n \rightarrow \infty$ limit of the previous identity we get that $\dot{\v{A}}$ is the distributional time derivative of $\v{A}$ in $\fD' ( [0,T] ; H^{1-m} (\R^3 ; \R^3) )$. Lastly, to see that $\diver{\v{A} (t)} = 0$ for a.e.~$t$ simply note
\begin{align*}
\| \diver{\v{A}} \|_{\infty ; 2} \leq \| \v{A} - \v{A}^{\!n} \|_{\infty ;  1 , 2} \xrightarrow{n \rightarrow \infty} 0 . 
\end{align*}
Putting everything together, $\exists (\phi , \v{A}) \in (\X_T^m , d)$ so that $d ( (\phi , \v{A}) , (\phi^n , \v{A}^{\!n}) ) \rightarrow 0$ as $n \rightarrow \infty$.
\end{proof}

Let $(\phi_0 , \v{a}_0 , \dot{\v{a}}_0) \in \X^m_0$, where $\X_0^m$ is defined by (\ref{def:initconds}). For a given pair $(\phi , \v{A}) \in \X^m_{T }$, consider the initial value problems for the unknown functions $(\xi , \v{K})$:
\begin{align}\label{eq:linearized_epsilon_system_1}
\left\lbrace \begin{array}{l}
\partial_t \xi - (i + \varepsilon) \sum_{j = 1}^N \Delta_{\v{x}_j} \xi = f[\phi , \tilde{\v{A}}]  \\
\xi (0) = \phi_0 ,
\end{array} \right. 
\end{align}
where
\begin{align}\label{eq:inhomo_linearized_epsilon_system_1}
 f[\phi , \tilde{\v{A}}]  = \left[ - (i + \varepsilon) \left( \fL ( \tilde{\v{A}} ) + V(\ul{\v{R}} , \Z) \right) + \varepsilon \left( T_{\mr{P}} [ \phi , \tilde{\v{A}} ] + V [\phi ] \right) \right] \phi , 
\end{align}
and 
\begin{align}\label{eq:linearized_epsilon_system_2}
\left\lbrace \begin{array}{l}
\square \v{K} = 4 \pi \alpha \Lambda_{\varepsilon}^{-1} \Hproj{ \J_{\!\mr{P}}[\phi , \tilde{\v{A}}] } \\
(\v{K} (0) , \partial_t \v{K} (0)) = (\v{a}_0 , \dot{\v{a}}_0) ,
\end{array} \right. 
\end{align}
In (\ref{eq:linearized_epsilon_system_1}), (\ref{eq:inhomo_linearized_epsilon_system_1}), and (\ref{eq:linearized_epsilon_system_2}), we recall that $\tilde{\v{A}} = \Lambda^{-1}_{\varepsilon} \v{A}$ is the regularized magnetic vector potential, $T_{\mr{P}} [\phi , \tilde{\v{A}}]$ is given by (\ref{def:total_kinetic}), $V[\phi]$ by (\ref{def:tot_potential_energy}), $\fL (\tilde{\v{A}}) = \sum_{j = 1}^N \cL_j (\tilde{\v{A}})$ where $\cL_j ( \tilde{\v{A}} )$ is given by (\ref{def:D_j}), and $\J_{\!\mr{P}}[\phi , \tilde{\v{A}} ]$ is given by (\ref{def:prob_current_compact}). Note that (\ref{eq:linearized_epsilon_system_1}) and (\ref{eq:linearized_epsilon_system_2}) are just inhomogeneous versions of the equations in (\ref{eq:MBMP_epsilon}). 

Equation (\ref{eq:linearized_epsilon_system_1}) is solved by the Schr\"{o}dinger-heat semigroup:
\begin{align}\label{eq:CA1.1}
\xi (t) & = e^{(i + \varepsilon) t \Delta} \phi_0 + \int_0^t e^{(i + \varepsilon) (t - \tau) \Delta} f[\phi (\tau) , \tilde{\v{A}} (\tau)] \dd \tau ,
\end{align}
where the action of $e^{(i + \varepsilon) t \Delta}$ is defined as the convolution with the kernel 
\begin{align*}
\frac{1}{(4 \pi (i + \epsilon) t)^{3N/2}} \exp{ \left\lbrace - \frac{| \ul{\v{x}} |^2}{4 (i + \varepsilon ) t} \right\rbrace } .
\end{align*}
Equation (\ref{eq:linearized_epsilon_system_2}) is solved by (\ref{eq:wave_soln}), namely,
\begin{align}\label{eq:CA1.2}
\v{K} (t) = \dot{\s}( t / \alpha ) \v{a}_0 + \alpha ~ \s(t/\alpha) \dot{\v{a}}_0 + 4 \pi \int_0^t \s ( (t-\tau) / \alpha ) \Lambda_{\varepsilon}^{-1} \Hproj{ \J_{\!\mr{P}} [ \phi (\tau) , \tilde{\v{A}} (\tau) ] } \dd \tau  ,
\end{align}
where $\s (t)$ and $\dot{\s} (t)$ are defined in (\ref{eq:wave_soln}). 

Consider the solution map
\begin{align}\label{eq:solution_map}
\Psi : \X^m_{T} \ni (\phi , \v{A}) \longmapsto (\xi , \v{K})
\end{align}
where $\xi (t)$ is given by (\ref{eq:CA1.1}) and $\v{K} (t)$ by (\ref{eq:CA1.2}).  At this point we observe that a unique fixed point of $\Psi$ would give us the existence of a unique local solution to (\ref{eq:MBMP_epsilon}) (the first part of Theorem \ref{thm:local_exist_MBMP_epsilon}). Therefore, the ultimate goal will be to choose $T , R > 0$ so that $\Psi$ is a contraction map on $( \X^m_{T } , d )$, and thereby prove that $\Psi$ has a unique fixed point via the Banach fixed point theorem. Before proving that $\Psi$ can be made into a contraction map on $( \X^m_{T } , d )$, we need to first argue that we can choose $T , R > 0$ so that $\Psi$ maps $\X^m (T,R)$ into itself. We conclude this section with a Lemma making this precise.

\begin{lem}\label{lem:maps_to_itself}
Let $(\phi_0 , \v{a}_0 , \dot{\v{a}}_0) \in \X^m_0$, where $\X_0^m$ is defined by (\ref{def:initconds}). There exists an $R > 0$ and $T_{\ast} = T_{\ast} (R) > 0$, both depending on $\varepsilon , N , K , \alpha , \Z, \ul{\v{R}}$, and $\| (\phi_0 , \v{a}_0 , \dot{\v{a}}_0 ) \|_{m,2 \oplus m,2 \oplus m-1 , 2}$, such that $\Psi : \X^m_T \rightarrow \X^m_T$ for all $T \in (0 , T_{\ast}]$. 
\end{lem}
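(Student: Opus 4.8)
\textbf{Proof proposal for Lemma \ref{lem:maps_to_itself}.}

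The plan is to take $(\phi,\v{A})\in\X^m_T(R)$, feed it through the solution map $\Psi$ to obtain $(\xi,\v{K})$ given by the Duhamel formulas \eqref{eq:CA1.1} and \eqref{eq:CA1.2}, and estimate $\|\xi\|_{\infty;m,2}$, $\|\v{K}\|_{\infty;m,2}$, and $\|\partial_t\v{K}\|_{\infty;m-1,2}$ in terms of $R$, $T$, and the size of the initial data, then choose $R$ and $T_\ast$ so that all three are bounded by $R$. Along the way we must also check that $\xi$ and $\v{K}$ have the required space--time regularity (so that $(\xi,\v{K})$ genuinely lies in $\X^m_T$) and that $\diver{\v{K}}=0$, the latter being immediate from the formula \eqref{eq:CA1.2} since $\Hproj$ projects onto divergence-free fields and $\dot{\s},\s$ are Fourier multipliers that commute with $\diver$. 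The key inputs are the heat-kernel dispersive estimates (Lemma \ref{lem:Heat-Kernel}), the energy estimate for the wave equation (Lemma \ref{lem:Strichartz_wave_general}), the regularizing bound $\|\Lambda_\varepsilon^{-1}g\|_{s+2,2}\lesssim_\varepsilon\|g\|_{s,2}$ together with $\|\tilde{\v{A}}\|_{m,2}=\|\Lambda_\varepsilon^{-1}\v{A}\|_{m,2}\le\|\v{A}\|_{m,2}\le R$, and the technical estimates of \S\ref{sec:estimates}, namely Lemmas \ref{lem:Estimates-Pauli}, \ref{lem:Estimates-Coulomb}, \ref{lem:Estimates-Energy}, and \ref{lem:Estimates_Prob_current}.

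For the Schr\"odinger--heat part, I would split $f[\phi,\tilde{\v{A}}]$ as in \eqref{eq:inhomo_linearized_epsilon_system_1} into the linear magnetic term $\fL(\tilde{\v{A}})\phi=\sum_j\cL_j(\tilde{\v{A}})\phi$, the Coulomb term $V(\ul{\v{R}},\Z)\phi$, and the scalar-multiple term $(T_{\mr{P}}[\phi,\tilde{\v{A}}]+V[\phi])\phi$. The free evolution contributes $\|e^{(i+\varepsilon)t\Delta}\phi_0\|_{m,2}\le\|\phi_0\|_{m,2}$. For the Duhamel integral, applying $e^{(i+\varepsilon)(t-\tau)\Delta}$ and using Lemma \ref{lem:Heat-Kernel} (with the complex time $(i+\varepsilon)t$, whose effect is only to introduce an $\varepsilon$-dependent constant) together with estimate \eqref{eq:Estimate-Pauli-2} gives a bound like $\int_0^t(t-\tau)^{-1/4}(1+(t-\tau)^{-1/2})(1+R)R\cdot R\,\dd\tau$, and the time integrand is integrable near $\tau=t$, producing a factor $C_\varepsilon(T^{3/4}+T^{1/4})(1+R)R^2$; similarly \eqref{eq:Estimate-Coulomb} yields $C_\varepsilon(T+T^{1/20}+\cdots)R$, and the scalar term, controlled by the first estimate in \eqref{eq:Estimate-Kinetic-Coulomb} and the bound $V\lesssim\|\phi\|_{1,2}^2$, gives $C_\varepsilon T(1+R)^2 R^2\cdot R$. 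Hence
\begin{align*}
\|\xi\|_{\infty;m,2}\le\|\phi_0\|_{m,2}+C_\varepsilon\,g(T)\,(1+R)^4 R^3,
\end{align*}
where $g(T)\to 0$ as $T\to 0$. For the wave part, the energy estimate of Lemma \ref{lem:Strichartz_wave_general} bounds $\max_{k\in\{0,1\}}\|\partial_t^k\v{K}\|_{\infty;m-k,2}$ by $\|(\v{a}_0,\dot{\v{a}}_0)\|_{m,2\oplus m-1,2}+\|4\pi\alpha\Lambda_\varepsilon^{-1}\Hproj\J_{\!\mr{P}}[\phi,\tilde{\v{A}}]\|_{1;m-1,2}$; using boundedness of $\Hproj$ and $\Lambda_\varepsilon^{-1}$ on $H^{m-1}$ (with the gain of two derivatives absorbing the loss in \eqref{eq:Estimate-KG-1}), we get the inhomogeneity bounded by $C\alpha\,T\,(1+R)R^2$, so $\max_k\|\partial_t^k\v{K}\|_{\infty;m-k,2}\le C(\|\v{a}_0\|_{m,2}+\|\dot{\v{a}}_0\|_{m-1,2})+C_\alpha T(1+R)R^2$.

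Finally I would close the argument in the standard way: set $R:=2C\big(\|\phi_0\|_{m,2}+\|\v{a}_0\|_{m,2}+\|\dot{\v{a}}_0\|_{m-1,2}\big)+1$ (absorbing the universal constants from the free-evolution bounds), so that each free term is at most $R/2$; then choose $T_\ast=T_\ast(R)\in(0,1]$ small enough that $C_\varepsilon g(T)(1+R)^4R^3\le R/2$ and $C_\alpha T(1+R)R^2\le R/2$ for all $T\le T_\ast$, which is possible since both perturbative terms vanish as $T\to 0$. This yields $\max\{\|\xi\|_{\infty;m,2},\|\v{K}\|_{\infty;m,2},\|\partial_t\v{K}\|_{\infty;m-1,2}\}\le R$, hence $\Psi(\X^m_T)\subset\X^m_T$ for all $T\le T_\ast$. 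The space--time continuity needed to place $(\xi,\v{K})$ in $L^\infty_T H^m\cap\cdots$ follows from strong continuity of the heat semigroup and of $\dot{\s},\s$ on the relevant Sobolev spaces, plus dominated convergence in the Duhamel integrals. I expect the main obstacle to be bookkeeping rather than conceptual: one must track the precise powers of $T$ coming from each singular time-integrand in Lemma \ref{lem:Heat-Kernel} (especially the $t^{-9/20}$ from the Coulomb estimate \eqref{eq:Estimate-Coulomb}) to be sure every integral converges near $\tau=t$, and one must be careful that all $\varepsilon$-dependent constants are harmless here since $\varepsilon>0$ is fixed — the delicacy of $\varepsilon$-uniformity is deferred to Theorem \ref{thm:MBMP_epsilon_dissipation-laws} and plays no role in this lemma.
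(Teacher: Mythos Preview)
Your proposal is correct and follows essentially the same approach as the paper's proof: split $f[\phi,\tilde{\v{A}}]$ into the Pauli, Coulomb, and scalar pieces, apply the heat-kernel bounds via Lemmas \ref{lem:Estimates-Pauli}, \ref{lem:Estimates-Coulomb}, \ref{lem:Estimates-Energy} to control $\xi$, use the energy estimate for the wave equation together with Lemma \ref{lem:Estimates_Prob_current} (regularized by $\Lambda_\varepsilon^{-1}$) to control $\v{K}$, and then close by choosing $R$ and $T_\ast$ so the perturbative terms are absorbed. One small slip: $\Lambda_\varepsilon^{-1}=(1-\varepsilon\Delta)^{-1/2}$ gains \emph{one} derivative, not two, so $\|\Lambda_\varepsilon^{-1}g\|_{m-1,2}\lesssim_\varepsilon\|g\|_{m-2,2}$; this is exactly what is needed since the wave energy estimate requires the source in $H^{m-1}$ and \eqref{eq:Estimate-KG-1} places $\J_{\!\mr{P}}$ in $H^{m-2}$, so your conclusion is unaffected.
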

\begin{proof}
Let $(\phi , \v{A}) \in \X^m_{T }$ and $\Psi (\phi , \v{A}) = (\xi , \v{K})$. Observe that $\v{K}$ is divergence-free using the formula (\ref{eq:CA1.2}). Fix $j \in \{1, \cdots , N\}$ and note that
\begin{align*}
\| \Lambda_{\varepsilon}^{-1} \Hproj{ \J_{\!\mr{P}} [\phi , \tilde{\v{A}}] } \|_{m-1 , 2} \leq \frac{1}{\varepsilon} \| \Hproj{ \J_{\!\mr{P}} [\phi , \tilde{\v{A}}] } \|_{\dot{H}^{m-2}} \lesssim \frac{1}{\varepsilon} \| \J_{\!\mr{P}} [\phi , \tilde{\v{A}}] \|_{m-2 , 2} ,
\end{align*}
where we've used the boundedness of $\Hproj : H^{m - 2} \rightarrow H^{m - 2}$. Therefore estimate (\ref{eq:Estimate-KG-1}) of Lemma \ref{lem:Estimates_Prob_current} (Estimates for the Probability Current Density) gives us
\begin{align}\label{eq:PhiXtoX0}
\| \Lambda_{\varepsilon}^{-1} \Hproj{ \J_{\!\mr{P}} [\phi , \tilde{\v{A}}] (t) } \|_{m-1 , 2} \lesssim (1 + R) R^2 , \hspace{1cm} \forall t \in [0,T],
\end{align}
and thus $\Lambda_{\varepsilon}^{-1} \Hproj{ \J_{\!\mr{P}} [\phi , \tilde{\v{A}}] } \in L_T^1 H^{m-1}$. With the previous conclusion we've satisfied the hypotheses in Lemma \ref{lem:Strichartz_wave_general} (Strichartz Estimates for the Wave Equation) and, as a consequence, we have $\v{K} \in C_T H^m \cap C_T^1 H^{m-1}$ and
\begin{align}\label{eq:PhiXtoX0.5}
\max_{k \in \{0,1\}} \| \partial_t^k \v{K} \|_{\infty ; m-k ,2} \lesssim \| (\v{a}_0 , \dot{\v{a}}_0) \|_{m,2 \oplus m-1 , 2} + \| \Lambda_{\varepsilon}^{-1} \Hproj{ \J_{\!\mr{P}}[\phi , \tilde{\v{A}} ] } \|_{1; m-1 , 2} 
\end{align}
Combining (\ref{eq:PhiXtoX0}) with (\ref{eq:PhiXtoX0.5}), we conclude the existence of a constant $C_1 > 0$ depending on $\varepsilon , m , N$, and $\alpha$ such that
\begin{align}\label{eq:PhiXtoX1}
\max_{k \in \{0,1\}} \| \partial_t^k \v{K} \|_{\infty ; m-k ,2} \leq C_1 \left[ \| (\v{a}_0 , \dot{\v{a}}_0) \|_{m , 2 \oplus m-1 , 2} + T (1 + R ) R^2 \right] .
\end{align}

We turn to estimating $\| \xi (t) \|_{m,2}$. To estimate $\| \xi (t) \|_{m,2}$, we take the $H^m$-norm of the defining formula (\ref{eq:CA1.1}) for $\xi (t)$ and apply (\ref{eq:Estimate-Kinetic-Coulomb}), (\ref{eq:Estimate-Pauli-3}), (\ref{eq:Estimate-Coulomb}) of Lemmas \ref{lem:Estimates-Energy}, \ref{lem:Estimates-Pauli}, and \ref{lem:Estimates-Coulomb}, respectively. This yields 
\begin{align*}
\| \xi (t) \|_{m, 2} & \lesssim \| \phi_0 \|_{m, 2} + \int_0^t \left( \left| T_{\mr{P}} [\phi , \tilde{\v{A}}] + V [\phi] \right|  \| \phi \|_{m,2}  + \| e^{(i + \varepsilon) (t - \tau) \Delta} [ \fL ( \tilde{\v{A}} ) \phi ] \|_{m,2}  \right. \\
& \hspace{7cm} \left.  + \| e^{(i + \varepsilon) (t - \tau) \Delta}  V (\ul{\v{R}} , \Z) \phi \|_{m ,2} \right) \dd \tau \\
& \lesssim \| \phi_0 \|_{m, 2} + \int_0^t \left[ ( 1 + \| \tilde{\v{A}}  \|_{1,2} )^2 + 1  \right] \| \phi \|_{1,2}^2  \| \phi \|_{m,2} \dd \tau \\
& \hspace{1cm} + \int_0^t (t-\tau)^{- \frac{1}{4}} \left[ 1 + (t-\tau)^{-\frac{1}{2}} \right] \left( 1 + \| \tilde{\v{A}}  \|_{m,2} \right) \| \tilde{\v{A}}  \|_{m,2} \| \phi \|_{m,2} \dd \tau \\
&  \hspace{1cm} + \int_0^t \left\lbrace 1 +  \left(1 + (t- \tau)^{- \frac{1}{2} }\right) \left((t-\tau)^{- \frac{9}{20}} + (t-\tau)^{-\frac{1}{4}} \right) \right\rbrace \| \phi \|_{m,2} \dd \tau  \numberthis \label{eq:PhiXtoX2} . 
\end{align*}
The last estimate (\ref{eq:PhiXtoX2}) allow us to conclude the existence of a constant $C_2 > 0$, depending on $\varepsilon$, $m$, $N$, $K$, $\alpha$, $\ul{\v{R}}$, and $\Z$, such that
\begin{align*}
\| \xi \|_{\infty ; m, 2} & \leq C_2 \left[ \| \phi_0 \|_{m,2} + T \left( 2 +  2 R + R^2 \right) R^3 + \left( T^{\frac{3}{4}} + T^{\frac{1}{4}} \right) (1 + R) R^2 \right. \\
& \left. \hspace{5cm} + \left( T + T^{\frac{3}{4}} + T^{\frac{11}{20}} + T^{\frac{1}{4}} + T^{\frac{1}{20}} \right) R \right] , \numberthis \label{eq:PhiXtoX3}
\end{align*}

Considering estimates (\ref{eq:PhiXtoX1}) and (\ref{eq:PhiXtoX3}) choose $R > 0$ such that
\begin{align}
& \| (\phi_0 , \v{a}_0 , \dot{\v{a}}_0) \|_{m,2 \oplus m,2 \oplus m-1 ,2}  \leq \frac{R}{2 \max{\{C_1 , C_2\}}} , \label{eq:PhiXtoX4} 
\end{align}
and choose $T_* > 0$ such that
\begin{align*} 
&  T_* (1 + 3 R + 2 R^2 + R^3) R + ( T_*^{\frac{3}{4}} + T_*^{\frac{1}{4}} ) (1+R)R   \\
& \hspace{3cm} + ( T_* + T_*^{\frac{3}{4}} + T_*^{\frac{11}{20}} + T_*^{\frac{1}{4}} + T_*^{\frac{1}{20}} ) ] \leq \frac{1}{2 \max{\{C_1 , C_2\}}} . \numberthis \label{eq:PhiXtoX5} 
\end{align*}
Equations (\ref{eq:PhiXtoX4}) and (\ref{eq:PhiXtoX5}) ensure that $\Psi$ maps $\X^m_{T }$ into itself for each $T \in (0 , T_*]$. 
\end{proof}

\section{Local Well-Posedness of the $\varepsilon$-Modified System.}\label{sec:local}

In this section is devoted to completing the proof of Theorem \ref{thm:local_exist_MBMP_epsilon}. We begin with a Lemma stating that the solution map $\Psi$ defined by (\ref{eq:solution_map}) on $(\X^m_T , d)$ can be made into a contraction map. 
\begin{lem}\label{lem:contraction}
Let $(\phi_0 , \v{a}_0 , \dot{\v{a}}_0) \in \X^m_0$, where $\X_0^m$ is defined by (\ref{def:initconds}). There exists an $R > 0$ and $T_{**} = T_{**} (R) > 0$, both depending on $\varepsilon , N , K , \alpha , \Z, \ul{\v{R}}$, and $\| (\phi_0 , \v{a}_0 , \dot{\v{a}}_0 ) \|_{m,2 \oplus m,2 \oplus m-1 , 2}$, such that, for all $T \in (0 , T_{**} ]$, $\Psi : (\X_T^m , d) \rightarrow (\X^m_T , d)$ is a contraction.
\end{lem}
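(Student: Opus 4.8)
\textbf{Proof proposal for Lemma \ref{lem:contraction}.}

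The plan is to show that, with the same radius $R>0$ chosen in Lemma \ref{lem:maps_to_itself}, one can further shrink the time $T$ so that $\Psi$ becomes a contraction on $(\X_T^m,d)$. Fix $(\phi,\v{A}),(\phi',\v{A}')\in\X_T^m$ and write $\Psi(\phi,\v{A})=(\xi,\v{K})$, $\Psi(\phi',\v{A}')=(\xi',\v{K}')$. Since the metric $d$ only measures differences in $H^1$, $H^1$, and $L^2$ (for $\xi$, $\v{K}$, $\partial_t\v{K}$ respectively), I only need to control those three quantities. The first step is to set up the two difference equations. Using the Duhamel formula (\ref{eq:CA1.1}), the function $\xi-\xi'$ solves an equation with zero initial data and inhomogeneity $f[\phi,\tilde{\v{A}}]-f[\phi',\tilde{\v{A}}']$, so
\begin{align*}
(\xi-\xi')(t) = \int_0^t e^{(i+\varepsilon)(t-\tau)\Delta}\big( f[\phi(\tau),\tilde{\v{A}}(\tau)]-f[\phi'(\tau),\tilde{\v{A}}'(\tau)]\big)\,\dd\tau,
\end{align*}
and similarly, by (\ref{eq:CA1.2}), $\v{K}-\v{K}'$ has zero Cauchy data and inhomogeneity $4\pi\alpha\Lambda_\varepsilon^{-1}\Hproj(\J_{\!\mr{P}}[\phi,\tilde{\v{A}}]-\J_{\!\mr{P}}[\phi',\tilde{\v{A}}'])$.

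The second step is to estimate the difference of the Schr\"odinger-heat term. I split $f[\phi,\tilde{\v{A}}]-f[\phi',\tilde{\v{A}}']$ into the Pauli piece $\fL(\tilde{\v{A}})\phi-\fL(\tilde{\v{A}}')\phi'$, the Coulomb piece $V(\ul{\v{R}},\Z)(\phi-\phi')$, and the energy-nonlinearity piece $(T_{\mr{P}}[\phi,\tilde{\v{A}}]+V[\phi])\phi-(T_{\mr{P}}[\phi',\tilde{\v{A}}']+V[\phi'])\phi'$, the last of which I further split as $(T_{\mr{P}}[\phi,\tilde{\v{A}}]+V[\phi])(\phi-\phi')+\big((T_{\mr{P}}-T_{\mr{P}}')+(V-V')\big)\phi'$. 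I then apply, term by term: the difference estimate (\ref{eq:Estimate-Pauli-3}) of Lemma \ref{lem:Estimates-Pauli} (which, after $e^{t\Delta}$, gives a time factor $t^{-1/4}(1+t^{-1/2})$ that is integrable on $[0,T]$), the Coulomb estimate (\ref{eq:Estimate-Coulomb}) of Lemma \ref{lem:Estimates-Coulomb} applied to $\phi-\phi'$ (again with integrable singular weight), and the energy estimates (\ref{eq:Estimate-Kinetic-Coulomb}) and (\ref{eq:Estimate-Energy-diff}) of Lemma \ref{lem:Estimates-Energy}. Using $\|\tilde{\v{A}}\|_{m,2}=\|\Lambda_\varepsilon^{-1}\v{A}\|_{m,2}\le\|\v{A}\|_{m,2}\le R$ and $\|\tilde{\v{A}}-\tilde{\v{A}}'\|_{1,2}\le\|\v{A}-\v{A}'\|_{1,2}$, every factor except the difference factor is bounded by a constant depending on $R$, and each summand carries an explicit positive power of $T$ (from $\int_0^t$, possibly against an integrable singularity). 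The outcome is an estimate $\|\xi-\xi'\|_{\infty;1,2}\le C(R,\varepsilon,N,K,\alpha,\Z,\ul{\v{R}})\,g(T)\,d((\phi,\v{A}),(\phi',\v{A}'))$ with $g(T)\to0$ as $T\to0^+$.

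The third step is the wave part: I apply the energy estimate of Lemma \ref{lem:Strichartz_wave_general} to the difference equation, giving
\begin{align*}
\max_{k\in\{0,1\}}\|\partial_t^k(\v{K}-\v{K}')\|_{\infty;1-k,2}\lesssim \|\Lambda_\varepsilon^{-1}\Hproj(\J_{\!\mr{P}}[\phi,\tilde{\v{A}}]-\J_{\!\mr{P}}[\phi',\tilde{\v{A}}'])\|_{1;0,2},
\end{align*}
and since $\Lambda_\varepsilon^{-1}$ and $\Hproj$ are bounded on the relevant spaces, the right side is controlled via the probability-current difference estimate (\ref{eq:Estimate-KG-2}) of Lemma \ref{lem:Estimates_Prob_current} by $T\cdot C(R)\,d((\phi,\v{A}),(\phi',\v{A}'))$; note the factor $1/\varepsilon$ from $\Lambda_\varepsilon^{-1}$ but this is harmless for fixed $\varepsilon$. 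Combining the two bounds, $d((\xi,\v{K}),(\xi',\v{K}'))\le C(R,\varepsilon,\dots)\,h(T)\,d((\phi,\v{A}),(\phi',\v{A}'))$ with $h(T)=g(T)+T\to0$; choosing $T_{**}\in(0,T_*]$ (with $T_*$, $R$ from Lemma \ref{lem:maps_to_itself}) small enough that $C\cdot h(T_{**})<1$ makes $\Psi$ a contraction for all $T\in(0,T_{**}]$, and the self-mapping property is already guaranteed by Lemma \ref{lem:maps_to_itself}. I expect the main obstacle to be organizing the many difference terms in $f[\phi,\tilde{\v{A}}]-f[\phi',\tilde{\v{A}}']$ so that the only $H^1$-level object appearing on the right is the metric $d$ itself — in particular verifying that the term $\|e^{(i+\varepsilon)(t-\tau)\Delta}\fL(\tilde{\v{A}})(\phi-\phi')\|_{1,2}$, where a gradient falls on $\phi-\phi'$, is still controlled by $\|\phi-\phi'\|_{1,2}$ via (\ref{eq:Estimate-Pauli-3}) rather than requiring $\|\phi-\phi'\|_{2,2}$ (this is exactly the mechanism that fails in the non-dissipative Maxwell-Pauli setting discussed in \S\ref{sec:MS}, and works here only because the heat kernel $e^{\varepsilon(t-\tau)\Delta}$ absorbs the extra derivative into an integrable time singularity).
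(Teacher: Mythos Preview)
Your proposal is correct and follows essentially the same approach as the paper's proof: the same difference equations, the same term-by-term decomposition of $f[\phi,\tilde{\v{A}}]-f[\phi',\tilde{\v{A}}']$ into Pauli, Coulomb, and energy pieces, and the same invocation of Lemmas \ref{lem:Estimates-Pauli}--\ref{lem:Estimates_Prob_current} together with the energy estimate for the wave equation. Your closing remark about the heat kernel absorbing the extra derivative on $\phi-\phi'$ is exactly the point, and the paper handles it the same way via (\ref{eq:Estimate-Pauli-3}).
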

\begin{proof}
Using Lemma \ref{lem:maps_to_itself} choose $R , T_* > 0$ so that $\Psi$ maps $\X^m_{T }$ to itself for $T \in (0, T_*]$. Fix $T \in (0, T_*]$, consider two pairs $(\phi , \v{A}) , (\phi' , \v{A}') \in \X^m_{T }$, and write $\Psi (\phi , \v{A}) = (\xi , \v{K})$ and $\Psi (\phi' , \v{A}' ) = (\xi' , \v{K}')$. Noting (\ref{eq:CA1.1}), (\ref{eq:CA1.2}), $( \xi (0) , \v{K} (0) , \partial_t \v{K} (0) ) = ( \phi_0 ,  \v{a}_0 , \dot{\v{a}}_0 )$, we observe that the difference $\xi - \xi'$ satisfies
\begin{align}\label{eq:Phicontract1}
(\xi - \xi') (t) = \int_0^t e^{(i + \varepsilon) (t-\tau) \Delta} \left( f[\phi (\tau) , \tilde{\v{A}} (\tau)] - f[\phi' (\tau) , \tilde{\v{A}}' (\tau)] \right) \dd \tau
\end{align}
and that the difference $\v{K} - \v{K}'$ satisfies
\begin{align}\label{eq:Phicontract1.25}
(\v{K} - \v{K}') (t) =  4 \pi \int_0^t \s ( (t - \tau) / \alpha ) \Lambda_{\varepsilon}^{-1} \Hproj{ \left( \J_{\!\mr{P}}[ \phi , \tilde{\v{A}} ] - \J_{\!\mr{P}}[\phi' , \tilde{\v{A}}'] \right) } (\tau) \dd \tau .
\end{align}

We need to control $d((\xi , \v{K}) , (\xi' , \v{K}'))$ by $d((\phi , \v{A}) , (\phi' , \v{A}'))$ to ultimately argue that $\Psi$ can be turned into a contraction. Estimating $\| \v{K} - \v{K}' \|_{\infty ; 1,2}$ and $\| \partial_t ( \v{K} - \v{K}') \|_{\infty;2}$ is a straightforward application of the energy estimate of Lemma \ref{lem:Strichartz_wave_general} (Strichartz Estimates for the Wave Equation) and estimate (\ref{eq:Estimate-KG-2}) of Lemma \ref{lem:Estimates_Prob_current} (Estimates for the Probability Current Density). We find
\begin{align*}
\max_{k = 0 ,1} \| \partial_t^k \left( \v{K} - \v{K}' \right) \|_{\infty ; 1 - k , 2} & \lesssim \| \J_{\!\mr{P}}[\phi , \tilde{\v{A}}] - \J_{\!\mr{P}}[\phi', \tilde{\v{A}}'] \|_{1 ; -1 , 2} \\
& \lesssim T R \left[ 2 + 3 R \right] d ( (\phi , \v{A}) , (\phi' , \v{A}') ) . \numberthis \label{eq:K-diff} 
\end{align*}

To estimate $\| \xi - \xi' \|_{\infty ; 1 , 2}$ we start with the formula (\ref{eq:Phicontract1}) for $\xi - \xi'$ and use the triangle inequality to find 
\begin{align*}
& \| (\xi - \xi') (t) \|_{1, 2} \\
& \lesssim \int_0^t \left( \left| T_{\mr{P}} [\phi , \tilde{\v{A}}]  -  T_{\mr{P}} [\phi' , \tilde{\v{A}}'] + V [\phi] - V [\phi'] \right| \| \phi' \|_{1,2} +  \left| T_{\mr{P}} [\phi , \tilde{\v{A}}] + V [\phi] \right| \| (\phi - \phi') \|_{1,2} \right. \\
& \left. + \|  e^{(i + \varepsilon) (t-\tau) \Delta} \left( [ \fL ( \tilde{\v{A}} ) \phi ] - [ \fL ( \tilde{\v{A}}' ) \phi' ] \right) \|_{1,2} + \| e^{(i + \varepsilon) (t-\tau) \Delta} V(\ul{\v{R}} , \Z) \left( \phi - \phi' \right) \|_{1,2} \right) \dd \tau . \numberthis \label{eq:Phicontract1.5}
\end{align*}
Using the same strategy that yielded (\ref{eq:PhiXtoX2}) and then (\ref{eq:PhiXtoX3}) of Lemma \ref{lem:maps_to_itself}, we apply (\ref{eq:Estimate-Energy-diff}), (\ref{eq:Estimate-Pauli-3}), (\ref{eq:Estimate-Coulomb}) of Lemmas \ref{lem:Estimates-Energy}, \ref{lem:Estimates-Pauli}, and \ref{lem:Estimates-Coulomb}, respectively, to find
\begin{align*}
\| (\xi - \xi') \|_{\infty ; 1,2} & \lesssim \{ T (4 + 8 R + 6 R^2 + R^3) R + ( T^{\frac{3}{4}} + T^{\frac{1}{4}}) ( 2 + 3 R )R  \\
& \hspace{1.5cm} + T + T^{\frac{3}{4}} + T^{\frac{11}{20}} + T^{\frac{1}{4}} + T^{\frac{1}{20}} \} d ( (\phi , \v{A}) , (\phi' , \v{A}') ) , \numberthis \label{eq:xi-diff} 
\end{align*}

Combining estimates (\ref{eq:K-diff}) through (\ref{eq:xi-diff}) we find
\begin{align}\label{eq:Phicontract2}
d( (\xi , \v{K}) , (\xi', \v{K}') ) \leq C g(T,R) d( (\psi , \v{A}) , (\psi', \v{A}') ) ,
\end{align}
where $C > 0$ is a constant depending on $\varepsilon$, $N$, $K$, $\alpha$, $\ul{\v{R}}$, and $\Z$, and
\begin{align*}
g(T , R) & = T (6 + 11 R + 6 R^2 + R^3) R + ( T^{\frac{3}{4}} + T^{\frac{1}{4}} ) (2 + 3 R)R \\
& \hspace{5cm} + T + T^{\frac{3}{4}} + T^{\frac{11}{20}} + T^{\frac{1}{4}} + T^{\frac{1}{20}} . \numberthis \label{eq:PhiContrac-def-g} 
\end{align*}
Choosing $0 < T_{**} < T_*$ so that $g(T_{**},R) = \frac{1}{2 C}$ ensures that $\Psi$, for example, satisfies
\begin{align*} 
d( \Psi (\psi , \v{A})  , \Psi (\psi' , \v{A}')  ) \leq \frac{1}{2} d( (\psi , \v{A}) , (\psi', \v{A}') ) .
\end{align*} 
Consequently, $\Psi$ is a contraction mapping on $(\X^m_{T } , d)$ for each $T \in (0, T_{**}]$.
\end{proof}

With Lemma \ref{lem:contraction} at our disposal, we are in a position to prove Theorem \ref{thm:local_exist_MBMP_epsilon}. 
\begin{proof}[Proof of Theorem \ref{thm:local_exist_MBMP_epsilon}]
Fix $(\phi_0 , \v{a}_0 , \dot{\v{a}}_0) \in \X^m_0$, where $\X_0^m$ is the space of initial conditions defined by (\ref{def:initconds}). Using Lemma \ref{lem:contraction} choose $R , T_{**} > 0$ so that, for each $T \in (0, T_{**}]$, $\Psi : (\X^m_{T} , d) \rightarrow (\X^m_{T} , d)$ is a contraction. Then, for each $T \in (0 , T_{**}]$, the Banach fixed point theorem allows us to conclude the existence a unique $(\phi , \v{A}) \in \X^m_{T} (R)$ that satisfies $\Psi (\phi , \v{A}) = (\phi , \v{A})$. Using the same estimates at produced the estimate (\ref{eq:xi-diff}) in the proof of Lemma \ref{lem:contraction}, we can show $\phi \in C_{T} H^m$. Moreover, $\v{A} \in C_{T} H^m \cap C_{T}^1 H^{m-1}$ by Lemma \ref{lem:Strichartz_wave_general}. In other words, the pair $(\phi , \v{A}) \in C_{T} H^m \times [C_{T} H^m \cap C_{T}^1 H^{m-1}]$ satisfies the equations
\begin{align}\label{eq:proof_thm6_1}
\left\lbrace \begin{array}{l}
\partial_t \phi =  - (i + \varepsilon) \Ham (\v{A}) \phi + \varepsilon \phi \left( T_{\mr{P}} [\phi , \tilde{\v{A}} ] + V [\phi ] \right) , \\
\square \v{A} = 4 \pi \alpha \Lambda^{-1}_{\varepsilon} \Hproj{ \J_{\!\mr{P}}[ \phi , \tilde{\v{A}} ] } , \\
\diver{\v{A}} = 0 , \\
(\phi , \v{A} , \partial_t \v{A}) |_{t = 0} = (\phi_0 , \v{a}_0 , \dot{\v{a}}_0) . 
\end{array} \right. 
\end{align}
 
Suppose there exists another pair $(u , \v{B}) \in C_{T_{**}} H^m \times [ C_{T_{**}} H^m \cap C_{T_{**}}^1 H^{m-1} ]$ solving the initial value problem (\ref{eq:proof_thm6_1}). Choose $R' > 0$ so that
\begin{align*}
\max{ \left\lbrace \| u \|_{\infty ; m , 2} , \| \v{B} \|_{\infty ; m , 2} , \| \partial_t \v{B} \|_{\infty ; m-1 , 2} \right\rbrace } \leq R'  . 
\end{align*}
From (\ref{eq:Phicontract2}) in the proof of Lemma \ref{lem:contraction} we see that the smaller we choose $T \in (0 , T_{**}]$, the larger $R > 0$ can be choosen to get a unique fixed point $(\phi , \v{A}) \in \X^m_T (R)$ of $\Psi$. Choose $T' \in (0, T_{**}]$ small enough so that we obtained a unique fixed point $(\phi , \v{A}) \in \X^m_{T'} (R')$. By uniqueness, $(\phi , \v{A}) = (u , \v{B})$ on $[0,T']$. Define
\begin{align*}
T_0 = \sup{ \left\lbrace t \in [ 0 , T_{**} ] ~ : ~ (\psi , \v{A}) = (u,\v{B}) ~ \text{on} ~ [0,t] \right\rbrace } .
\end{align*}
Note $0 < T' \leq T_0$. Suppose, to the contrary, that $T_0 < T_{**}$. Then the functions $(\phi ( \cdot + T_0 ) , \v{A} ( \cdot + T_0 ) )$ and $( u ( \cdot + T_0 ) , \v{B} ( \cdot + T_0 ) )$ are both $C_{T_{**} - T_0} H^m \times [ C_{T_{**} - T_0} H^m \cap C_{T_{**} - T_0}^1 H^{m-1} ]$ solutions to (\ref{eq:proof_thm6_1}) except with initial value $(\phi (T_0) , \v{A} (T_0) , \partial_t \v{A} (T_0) )$ instead of $(\phi_0 , \v{a}_0 , \dot{\v{a}}_0 )$. However, by the same reasoning that showed $(\phi , \v{A}) = (u , \v{B})$ on $[0,T']$, we conclude the existence of a $T^{\prime \prime} > 0$ so that $(\phi ( \cdot + T_0 ) , \v{A} ( \cdot + T_0 ) ) = ( u ( \cdot + T_0 ) , \v{B} ( \cdot + T_0 ) )$ on $[0 , T^{\prime \prime}]$, whereby $(\phi , \v{A}) = (u , \v{B})$ on $[T_0 , T_0 + T^{\prime \prime}]$. This contradicts the definition of $T_0$, and we must have $T_0 = T_{**}$. 

Define $T_{\mr{max}} \in (0 , \infty ]$ as
\begin{align*}
T_{\mr{max}} = \sup{ \left\lbrace T > 0 ~ : ~ \exists! (\phi , \v{A}) \in C_T H^m \times [C_T H^m \cap C_T^1 H^{m-1}] ~ \text{solution to} ~ (\ref{eq:proof_thm6_1}) \right\rbrace } .
\end{align*}
From the preceding discussion we know that $T_{\mr{max}}$ is well-defined. Let $\I = [0 , T_{\mr{max}} )$. Suppose $T_{\mr{max}} < \infty$ and $(\phi , \v{A}) \in C_{\I} H^m \times [C_{\I} H^m \cap C_{\I}^1 H^{m-1}]$ be the corresponding unique solution. Assume, to the contrary, that there exists a sequence $\{ t_k \}_{k \geq 1} \subset \I$ converging to $T_{\mr{max}}$ and satisfying
\begin{align*}
\max{ \left\lbrace \| \phi (t_k) \|_{m , 2} , \| \v{B} (t_k) \|_{m , 2} , \| \partial_t \v{B} (t_k) \|_{m-1 , 2} \right\rbrace } \leq R 
\end{align*}
for some $R > 0$. Using Lemma \ref{lem:contraction} we can find a $T(R) > 0$ and a solution to (\ref{eq:proof_thm6_1}) with the initial value $(\phi (t_k) , \v{A} (t_k) , \partial_t \v{A} (t_k))$ instead of $(\phi_0 , \v{a}_0 , \dot{\v{a}}_0 )$. So we extend our solution $(\phi , \v{A})$ to the interval $[t_k , t_k + T(R)]$. By choosing $k$ large enough, we can make $t_k + T(R) > T_{\mr{max}}$. This contradicts the definition of $T_{\mr{max}}$. Therefore, if $T_{\mr{max}} < \infty$, then
\begin{align*}
\limsup_{t \rightarrow T_{\mr{max}}^-} \| (\phi (t) , \v{A} (t) , \partial_t \v{A} (t) ) \|_{m,2 \oplus m,2 \oplus m-1 , 2} = \infty . 
\end{align*}

So far we have the existence of a maximal time interval $\I = [0 , T_{\mr{max}})$ for which we have a unique solution
\begin{align*}
(\phi , \v{A}) \in C_{\I} [H^m (\R^{3N}) ]^{2^N} \times [ C_{\I} H^m (\R^3 ; \R^3) \cap C^1_{\I} H^{m-1} (\R^3 ; \R^3) ]
\end{align*}
to (\ref{eq:proof_thm6_1}), and such that the blow-up alternative holds. This gives us the first portion of Theorem \ref{thm:local_exist_MBMP_epsilon}. What is left to show is the approximation portion of Theorem \ref{thm:local_exist_MBMP_epsilon}.

Let $(\phi_0 , \v{a}_0 , \dot{\v{a}}_0) \in \X_0^1$. Choose $R , T > 0$ so that
\begin{align*}
d ( \Psi ( \phi , \v{A} ) , \Psi (\phi' , \v{A}') ) \leq \frac{1}{2} d ( (\phi , \v{A}) , (\phi , \v{A}) ) , 
\end{align*}
for all $(\phi , \v{A}) , (\phi' , \v{A}') \in \X_T^1 (R)$, and let $(\phi , \v{A}) \in \X_T^1 (R)$ denoted the corresponding unique fixed point of $\Psi$. Consider a sequence of initial data $\{ (\phi_{0}^j , \v{a}^{j}_0 , \v{a}^{j}_1) \}_{j \geq 1}\subset \X_0^m$ and let $\{ ( \phi^j , \v{A}^j ) \}_{j \geq 1} \subset C_{T} H^m \times [C_{T} H^m \cap C_{T}^1 H^{m-1}]$ denote the corresponding sequence of solutions. Suppose that
\begin{align*}
\| ( \phi_0 - \phi_{0}^j , \v{a}_0 - \v{a}^{j}_0 , \dot{\v{a}}_0 - \v{a}^{j}_1)  \|_{1,2 \oplus 1,2 \oplus 2} \xrightarrow{~~ j \rightarrow \infty ~~} 0 .
\end{align*}
Observe that if $j$ is sufficiently large then (\ref{eq:PhiXtoX4}) holds with $(\phi_0 , \v{a}_0 , \dot{\v{a}}_0)$ replaced by $(\phi_0^j , \v{a}_0^j , \dot{\v{a}}_0^j)$, and therefore $(\phi^j , \v{A}^j) \in \X^m_{T} (R)$ when $j$ is sufficiently large. Using identical estimates that yielded (\ref{eq:K-diff}) and (\ref{eq:xi-diff}) in the proof of Lemma \ref{lem:contraction} we have the estimate
\begin{align*}
d ( (\phi , \v{A}) , (\phi^j , \v{A}^j) ) \leq C_1 \| ( \phi_0 - \psi_{0}^j , \v{a}_0 - \v{a}^{j}_0 , \dot{\v{a}}_0 - \v{a}^{j}_1) \|_{1,2 \oplus 1,2 \oplus 2} + C_2 g(T , R) d ( (\phi , \v{A}) , (\phi^j , \v{A}^j) ) ,
\end{align*}
where the function $g$ is defined by (\ref{eq:PhiContrac-def-g}) and $C_2$ is the \textit{same} constant appearing in (\ref{eq:Phicontract2}). Since $T$ was chosen so that $g(T , R) = 1 / (2 C_2)$, we conclude $d ( (\phi , \v{A}) , (\phi^j , \v{A}^j) ) \rightarrow 0$ as $j \rightarrow \infty$ on the time interval $[0 , T]$.

Consider as initial data $(\phi (T) , \v{A} (T) , \partial_t \v{A} (T)) \in \X^1_0$ and $\{ (\phi^j (T) , \v{A}^j (T) , \partial_t \v{A}^j (T)) \}_{j \geq 1} \subset \X_0^m$. By the preceding arguments, 
\begin{align*}
\| ( (\phi - \phi^j) (T) , (\v{A} - \v{A}^j) (T) , (\partial_t \v{A} - \partial_t \v{A}^j) (T) ) \|_{1,2 \oplus 1,2 \oplus 2} \xrightarrow{~~ j \rightarrow \infty ~~} 0  .
\end{align*}
Choose $R' , T' > 0$ so that
\begin{align*}
d ( \Psi ( \phi , \v{A} ) , \Psi (\phi' , \v{A}') ) \leq \frac{1}{2} d ( (\phi , \v{A}) , (\phi , \v{A}) ) , 
\end{align*}
for all $(\phi , \v{A}) , (\phi' , \v{A}') \in \X_{T'}^1 (R')$. Using the same notation, let $(\phi , \v{A}) \in \X_{T'}^1 (R')$ denoted the corresponding unique fixed point of $\Psi$ and let $\{ ( \phi^j , \v{A}^j ) \}_{j \geq 1} \subset C_{T'} H^m \times [C_{T'} H^m \cap C_{T'}^1 H^{m-1}]$ denote the sequence of solutions corresponding to the initial data $\{ (\phi^j (T) , \v{A}^j (T) , \partial_t \v{A}^j (T)) \}_{j \geq 1}$ in $\X_0^m$. As before, if $j$ is sufficiently large, then $(\phi^j , \v{A}^j) \in \X^m_{T'} (R')$. By the same reasoning as before, we can conclude $d ( (\phi , \v{A}) , (\phi^j , \v{A}^j) ) \rightarrow 0$ as $j \rightarrow \infty$ on the time interval $[0 , T']$ with $T' > T$. We can repeat this argument ad infinitum and conlude the desired convergence at each $t \in \I = [0 , T_{\mr{max}} )$.
\end{proof}

\section{Bound on the Coulomb Energy and Global Existence}\label{sec:conserve}

In this section we prove the dissipation of energy and global existence in $\X^1_0$ for the $\varepsilon$-modified system (\ref{eq:MBMP_epsilon}) as stated in Theorem \ref{thm:MBMP_epsilon_dissipation-laws}. As discussed in detail in \S\ref{sec:proof_strat}, the crucial result that is needed for the proof of global existence is the uniform bound on the Coulomb energy functional $V[\phi] = \langle \phi , V (\ul{\v{R}} , \Z) \phi \rangle_{L^2}$ on the space $\fC_N$, defined by (\ref{def:function_space_C}), as expressed in Theorem \ref{lem:bound_on_coulomb}. It is helpful to recall the definition
\begin{align*}
E^{\mr{G}}_{\mr{P}} (\alpha) = E^{\mr{G}}_{\mr{P}}  (\alpha , \Z , N , K) = \inf{ \left\lbrace  [\phi , \v{A} , \v{0}] ~ : ~ (\phi , \v{A}) \in \fC_N \right\rbrace } , 
\end{align*} 
where $E_{\mr{P}} [\phi , \v{A} , \v{0}] = T_{\mr{P}} [\phi , \v{A}] + V[\phi] + F[\v{A} , \v{0}]$ where $T_{\mr{P}}$, $V$, and $F$ are defined by (\ref{def:total_kinetic}), (\ref{def:tot_potential_energy}), and (\ref{def:field_energy_potentials}), respectively. We begin with a proof of Theorem \ref{lem:bound_on_coulomb}. 

\begin{proof}[Proof of Theorem \ref{lem:bound_on_coulomb}]
Fix $(\phi , \v{A}) \in \fC_N$. Throughout we abuse notation and abbreviate $E_{\mr{P}} [\phi , \v{A}] = E_{\mr{P}} [\phi , \v{A} , \v{0}]$ and $F [ \v{A} , \v{0} ] = F [ \v{A} ]$. Obviously, we have the lower bound $E_{\mr{P}} [\phi, \v{A}] \geq E_{\mr{P}}^{\mr{G}} (\alpha)$. We claim such a uniform lower bound implies
\begin{align}\label{eq:lem10_1}
\left( V[\phi] + F [\v{A}] \right)^2 \leq 4 |E^{\mr{G}}_{\mr{P}} (\alpha)| T_{\mr{P}} [\phi , \v{A}] .
\end{align}
Indeed, for $\lambda > 0$, consider the scaling $\phi_{\lambda} (\ul{\v{z}}) = \lambda^{3N/2} \phi (\lambda \ul{\v{z}})$ and $\v{A}_{\lambda} (\v{y}) = \lambda \v{A} (\lambda \v{y})$. Under this scaling
\begin{align*}
T_{\mr{P}} [\phi_{\lambda} , \v{A}_{\lambda}] + V[\phi_{\lambda}] + F [\v{A}_{\lambda}] = \lambda^2 T_{\mr{P}} [\phi , \v{A}] + \lambda \left( V[\phi] + F [\v{A}] \right) \geq E^{\mr{G}}_{\mr{P}} (\alpha) 
\end{align*}  
Minimizing over $\lambda$ in the previous expression yields (\ref{eq:lem10_1}).

Let $\{ (\phi^{n} , \v{A}^{\!n}) \}_{n \geq 1} \subset \fC_N$ be a sequence such that $E_n = T_n + V_n +  F_n \leq C (\alpha)$ where $E_n \equiv E_{\mr{P}} [\phi^n , \v{A}^{\!n}]$, $T_n \equiv T_{\mr{P}} [\phi^{n} , \v{A}^{\!n}]$, $V_n \equiv V[\phi^{n}]$, and $F_n \equiv F[\v{A}^{\!n}]$. Suppose, to the contrary, that $|V_n| \rightarrow \infty$ as $n \rightarrow \infty$. The condition $E_n \leq C(\alpha)$ implies that we necessarily have $V_n \rightarrow - \infty$. Set $\lambda_n = 1/ |V_n|$ and note $\lambda_n \rightarrow 0$ as $n \rightarrow \infty$. Consider the scaling $\Phi^{n} (\ul{\v{z}}) = \lambda_n^{3N/2} \phi^{n} ( \lambda_n \ul{\v{z}} )$ and $\v{a}_n (\v{y}) = \lambda_n \v{A}^{\!n} (\lambda_n \v{y})$. Moreover, from $E_{\mr{P}}^{\mr{G}} (\alpha) \leq E_n \leq C(\alpha)$ we have
\begin{align}\label{eq:lem10_2}
E_{\mr{P}}^{\mr{G}} (\alpha) \lambda_n \leq \frac{t_n}{\lambda_n} - 1 + \frac{1}{\alpha^2} f_n \leq C (\alpha) \lambda_n 
\end{align}
where $t_n = T [\Phi^{n} , \v{a}_n] = \lambda_n^2 T_n$ and $f_n = \| \v{a}_n \|_2^2 / (8 \pi) = \lambda_n \alpha^2 F_n$.

Pick $\nu$ with $\alpha > \nu$ and note that we have $E_{\mr{P}}^{\mr{G}} (\nu) > - \infty$. As before,  
\begin{align}\label{eq:lem10_4}
E_{\mr{P}}^{\mr{G}} (\nu) \lambda_n \leq \frac{t_n}{\lambda_n} - 1 + \frac{1}{\nu^2} f_n \leq C (\nu) \lambda_n .
\end{align}
Subtracting (\ref{eq:lem10_4}) from (\ref{eq:lem10_2}) we conclude
\begin{align}
( E_{\mr{P}}^{\mr{G}} (\alpha) - C(\nu) ) \lambda_n \leq \left( \frac{1}{\nu^2} - \frac{1}{\alpha^2} \right) f_n \leq ( C (\alpha) - E^{\mr{G}}_{\mr{P}} (\nu) ) \lambda_n ,
\end{align} 
and thus $f_n \rightarrow 0$ as $n \rightarrow \infty$. Feeding this back into (\ref{eq:lem10_2}) we conclude $\lim_{n \rightarrow \infty} (t_n / \lambda_n)  = 1$. Moreover, (\ref{eq:lem10_1}) implies
\begin{align*}
\left( \frac{f_n}{\alpha^2}  - 1 \right)^2 \leq 4 C(\alpha) t_n ,
\end{align*}
and as a consequence
\begin{align}\label{eq:lem10_3}
\liminf_{n \rightarrow \infty} t_n \geq \frac{1}{4 C(\alpha)} .
\end{align} 
However, (\ref{eq:lem10_3}) implies $\lim_{n \rightarrow \infty} (t_n / \lambda_n) = \infty$. Contradiction.
\end{proof}

With Theorem \ref{lem:bound_on_coulomb} at our disposal we are in a position to give a complete proof of Theorem  \ref{thm:MBMP_epsilon_dissipation-laws}. For the proof of Theorem \ref{thm:MBMP_epsilon_dissipation-laws} it will be useful to recall that if $( \phi , \v{A} ) \in \fC_N$ then the kinetic energy $T_{\mr{P}} [ \phi , \v{A} ]$, defined by (\ref{def:total_kinetic}), reduces to $T[\phi , \v{A}] = N \| \gvsig_1 \cdot (\v{p}_1 + \v{A}_1) \phi \|_2^2$. This is a consequence of the complete antisymmetry of $\phi$. Likewise, the total probability current density $\J_{\!\mr{P}}[ \phi , \v{A} ]$, defined by (\ref{def:prob_current_compact}), will reduce to
\begin{align*}
\J_{\!\mr{P}}[\phi , \v{A}] = - 2 \alpha N \re{ \int \langle \gvsig_1 \psi_{\ul{\v{z}}_1'} , \gvsig_1 \cdot (\v{p}_1 + \v{A}_1 ) \psi_{\ul{\v{z}}_1'}  \rangle_{\C^2} \dd \ul{\v{z}}_1' } .
\end{align*} 
We will abuse notation and abbreviate $\gvsig_1$, $\v{p}_1$, and $\v{A}_1$ by $\gvsig$, $\v{p}$, and $\v{A}$, respectively. 

\begin{proof}[Proof of Theorem \ref{thm:MBMP_epsilon_dissipation-laws}]
Fix $\varepsilon > 0$ and $m \in [1 , 2]$. Let $(\phi_0 , \v{a}_0 , \dot{\v{a}}_0) \in \X^m_0$ with $\phi_0 \in \BWN{H^m(\R^3;\C^2)}$ and $\| \phi_0 \|_2 = 1$. Let $(\phi , \v{A})$ be the corresponding solution on $\I$ to (\ref{eq:MBMP_epsilon}) as given by Theorem \ref{thm:local_exist_MBMP_epsilon}. It is straightforward to verify that $\partial_t \phi (t) \in H^{-m}$ since $\Ham (\v{A} (t)) \psi (t) \in H^{-m}$ for each $t \in \I$. Therefore, it makes sense to compute
\begin{align}\label{eq:proof_thm_3(0)}
\frac{\dd }{ \dd t} \| \phi \|_{2}^2 = 2 \re{ \langle \partial_t \phi , \phi \rangle_{H^{-m} , H^m} } = 2 \varepsilon ( \| \phi \|_{2}^2 - 1 ) \langle \Ham (\v{A}) \phi , \phi \rangle_{H^{-m} , H^m} .
\end{align}
Since $\| \phi_0 \|_{2} = 1$, (\ref{eq:proof_thm_3(0)}) implies $\| \phi (t) \|_{2} = 1$.

Consider the case $m = 2$. In this case, $\Ham (\v{A} (t)) \phi (t) \in L^2$ for each $t \in \I$ and, hence, we may take the time-derivaitve of the total energy $\E = \E [ \phi , \v{A} , \partial_t \v{A}]$, as defined in (\ref{def:epsilon_tot_energy}), to find
\begin{align*}
\frac{\dd \E}{ \dd t} & = 2 \re{ \langle \partial_t \phi , \Ham (\v{A}) \phi \rangle_{L^2} } + 2 N \re{ \langle \gvsig \cdot (\v{p} + \tilde{\v{A}}) \phi , ( \gvsig \cdot \partial_t \tilde{\v{A}} ) \phi \rangle_{L^2} } + \partial_t F [\v{A} , \partial_t \v{A}] \\
& = - 2 \varepsilon (  \| \Ham (\v{A}) \phi \|_{2}^2 - \langle \phi  , \Ham (\v{A}) \phi \rangle^2_{L^2} ) \\
& \hspace{2cm} + 2 N \re{ \langle \gvsig \cdot (\v{p} + \tilde{\v{A}}) \phi , ( \gvsig \cdot \partial_t \tilde{\v{A}} ) \phi \rangle_{L^2} } + \partial_t F [\v{A} , \partial_t \v{A}] \numberthis \label{eq:proof_thm_3(1)} .
\end{align*}
Using that $\v{A}$ satisfies the wave equation (\ref{eq:MBMP_epsilon}, second equation) we can show that the last two terms in (\ref{eq:proof_thm_3(1)}) cancel each other. From (\ref{eq:MBMP_epsilon}),
\begin{align*}
\partial_t F [\v{A} , \partial_t \v{A}] & = 2 \frac{1}{8 \pi \alpha^2} \langle \square \v{A} , \partial_t \v{A} \rangle_{L^2} \\
& = \frac{1}{\alpha} \langle \Lambda_{\varepsilon}^{-1} \Hproj{ \J_{\!\mr{P}}[\phi , \tilde{\v{A}}] } , \partial_t \v{A} \rangle_{L^2} \\
& = - 2 N \langle \re{ \int \langle \gvsig \phi_{\ul{\v{z}}_1'} , \gvsig \cdot (\v{p} + \tilde{\v{A}}) \phi_{\ul{\v{z}}_1'} \rangle_{\C^2} \dd \ul{\v{z}}_1' } , \partial_t \tilde{\v{A}} \rangle_{L^2} \\
& = - 2 N \re{ \langle \gvsig \cdot (\v{p} + \tilde{\v{A}}) \phi , ( \gvsig \cdot \partial_t \tilde{\v{A}} ) \phi \rangle_{L^2} } \numberthis \label{eq:proof_thm_3(2)} .
\end{align*}
Plugging (\ref{eq:proof_thm_3(2)}) into (\ref{eq:proof_thm_3(1)}) we arrive at
\begin{align*}
\frac{\dd \E}{\dd t} = - 2 \varepsilon ( \| \Ham (\v{A}) \phi \|_{2}^2 - \langle \phi  , \Ham (\v{A}) \phi \rangle^2_{L^2} ) ,
\end{align*}
which upon integrating yields (\ref{eq:Diss_Energy}). 

Continue assuming $m = 2$. Suppose $\alpha$ and $\alpha^2 \max{\Z}$ are sufficiently small to ensure $E^{\mr{G}}_{\mr{P}} > - \infty$. To prove the bounds (\ref{eq:uniform-bounds}), we first verify that hypothesis of Lemma \ref{lem:bound_on_coulomb}. For the moment we include the $\varepsilon$ and $t$ dependence of $\phi$ and $\v{A}$ for clarity. By previous results $\| \phi^{\varepsilon} (t) \|_{2} = 1$ (this, in fact, holds for any $m \in [1 , 2]$). Moreover, we note that
\begin{align*}
F[\tilde{\v{A}}^{\!\varepsilon} , \v{0}] \leq F[\v{A}^{\!\varepsilon} , \v{0}] \leq F[ \v{A}^{\!\varepsilon} , \partial_t \v{A}^{\!\varepsilon}] ,
\end{align*} 
and $\langle \phi^{\varepsilon} , \Ham^{\varepsilon} (\v{A}^{\!\varepsilon}) \phi^{\varepsilon} \rangle^2_{L^2} \leq \| \Ham^{\varepsilon} (\v{A}^{\!\varepsilon}) \phi^{\varepsilon} \|_{2}^2$ by Cauchy-Schwartz. Therefore, from $E_{\mr{P}}^{\mr{G}} > - \infty$ and the dissipation of energy (\ref{eq:Diss_Energy}), we arrve at
\begin{align*}
E_{\mr{P}}^{\mr{G}} \leq T_{\mr{P}} [\phi^{\varepsilon} (t) , \tilde{\v{A}}^{\!\varepsilon} (t)] - V[\phi^{\varepsilon} (t)] + F[ \tilde{\v{A}}^{\!\varepsilon} (t) , \v{0} ] \leq \E [ \phi_0 , \v{a}_0 , \dot{\v{a}}_0 ] .
\end{align*}
Consequently, Lemma \ref{lem:bound_on_coulomb} tells us that
\begin{align}\label{eq:proof_thm_3_3}
|V[\phi^{\varepsilon} (t)]| = |\langle \phi^{\varepsilon} (t) , V (\ul{\v{R}} , \Z) \phi^{\varepsilon} (t) \rangle_{L^2} | \leq C
\end{align}
where $C$ is a finite constant depending on $\alpha$, $\Z$, $N$, $K$, and the initial data $(\phi_0 , \v{a}_0 , \dot{\v{a}}_0 )$, but \textit{independent} of $\varepsilon$ and $t$. Proceeding we will drop the $\varepsilon$ and $t$ dependence.

The bound (\ref{eq:proof_thm_3_3}) immediately gives us the second estimate in (\ref{eq:uniform-bounds}). Indeed, using the bound on the Coulomb energy we find
\begin{align*}
F [\v{A} , \partial_t \v{A}] \leq |\E [ \phi_0 , \v{a}_0 , \dot{\v{a}}_0 ]| + |V[\phi]| \leq C_2 ,
\end{align*}
where $C_2 = |\E [ \phi_0 , \v{a}_0 , \dot{\v{a}}_0 ]| + C$. This, in turn, yields the third estimate in (\ref{eq:uniform-bounds}) by differentiation:
\begin{align*}
\frac{\dd}{\dd t} \| \v{A} \|_{2}^2 = 2 \langle \v{A} , \partial_t \v{A} \rangle_{L^2} \leq 2 \| \v{A} \|_{2} \| \partial_t \v{A} \|_{2} \leq 2 \| \v{A} \|_{2} \sqrt{C_2} .
\end{align*}
Hence,
\begin{align*}
\| \v{A} \|_{2} \leq C_3 (1 + t) , 
\end{align*}
where $C_3 = \max{\{ \|\v{a}_0 \|_2 , \sqrt{C_2} \} }$. Deriving the first estimate in (\ref{eq:uniform-bounds}) requires a more careful analysis. Consider $\delta > 0$ to be specified later. First, note that
\begin{align}\label{eq:proof_thm_3_4}
\| \v{p} \phi \|_2 = \sqrt{N} \| \v{p}_1 \phi \|_2 = \sqrt{N} \| \gvsig_1 \cdot \v{p}_1 \phi \|_2 \leq \sqrt{N} \left( \| \gvsig_1 \cdot ( \v{p}_1 + \tilde{\v{A}}_1 ) \phi \|_2 + \| \tilde{\v{A}}_1 \phi \|_2 \right)  .
\end{align}
The first term on the right hand side of (\ref{eq:proof_thm_3_4}) can be bounded in the same way as the field energy $F [ \v{A} , \partial_t \v{A}]$. Indeed, using the dissipation of energy (\ref{eq:Diss_Energy}) and the fact that $|V[\phi]|$ is uniformly bounded, we have
\begin{align}\label{eq:proof_thm_3_5}
\sqrt{N} \| \gvsig_1 \cdot ( \v{p}_1 + \tilde{\v{A}}_1 ) \phi \|_2 = \sqrt{T_{\mr{P}} [\phi , \tilde{\v{A}}]} \leq \sqrt{ |\E [ \phi_0 , \v{a}_0 , \dot{\v{a}}_0 ]| + |V[\phi]| } \leq \sqrt{C_2} .
\end{align}
To estimate $\| \tilde{\v{A}}_1 \phi \|_2$ we rely on the Gagliardo–Nirenberg inequality, Lemma \ref{lem:GN_inequality}, in the particular case $d = r = 3$ and $p = q = 2$, which reads $\| f \|_{3} \leq c \| f \|_2^{1/2} \| \v{p} f \|_{2}^{1/2}$ where $c > 0$ is some universal constant. Using this inequality, together with the Young's inequality for products: $2 ab \leq (\delta^{-1} a)^2 + (\delta b)^2$, we find
\begin{align*}
\| \tilde{\v{A}}_1 \phi \|_2 & \leq \| \tilde{\v{A}} \|_6  \left[ \int \left( \int | \phi_{\ul{\v{z}}_1'} (\v{z}_1) |^3 \dd \v{z}_1  \right)^{2/3} \dd \ul{\v{z}}_1'  \right]^{1/2} \\
& \leq \sqrt{ 8 \pi \alpha^2 S_3^{-1} F[\v{A} , \partial_t \v{A}] } \left[ \int \left( \int | \phi_{\ul{\v{z}}_1'} (\v{z}_1) |^2 \dd \v{z}_1  \right)^{1/2} \left( \int | \v{p}_1 \phi_{\ul{\v{z}}_1'} (\v{z}_1) |^2 \dd \v{z}_1 \right)^{1/2} \dd \ul{\v{z}}_1'  \right]^{1/2} \\
& \leq \sqrt{ 4 \pi \alpha^2 S_3^{-1} C_2 }  \left[ \delta^{-1} \| \phi \|_2 + \delta \| \v{p}_1 \|_2 \right] , \numberthis \label{eq:proof_thm_3_6}
\end{align*}
where $S_3$ is the sharp constant in Sobolev's inequality on $\R^3$: $S_3 \| f \|_6^2 \leq \| \nabla f \|_2^2$. Choosing $\delta$ so that
\begin{align*}
\sqrt{ 16 \pi \alpha^2 S_3^{-1} C_2 } \delta = \frac{1}{2}
\end{align*}
we can feed (\ref{eq:proof_thm_3_5}) and (\ref{eq:proof_thm_3_6}) back into (\ref{eq:proof_thm_3_4}) and arrive at a uniform bound on $\| \v{p} \phi \|_2$. Summarizing, we've derived the bounds (\ref{eq:uniform-bounds}) for $m = 2$. That these uniform estimates in (\ref{eq:uniform-bounds}) hold for $1 \leq m < 2$ follows immediately from the convergence result in Theorem \ref{thm:local_exist_MBMP_epsilon}. The last claim of Theorem \ref{thm:MBMP_epsilon_dissipation-laws} follows immediately from the uniform estimates in the energy class (\ref{eq:uniform-bounds}) and the blow-up alternative in Theorem \ref{thm:local_exist_MBMP_epsilon}. 
\end{proof}

\section{Proof of the Main Result}\label{sec:proof_main_result}

This last section completes the proof of the main result in this thesis, namely Theorem \ref{thm:weak_solns_MBMP}. The proof will use the Banach-Alaoglu Theorem: for a given Banach space $X$, the unit ball in $X^{\ast}$ is compact in the weak$^{\ast}$ topology (if $X$ is reflexive, we may replace weak$^{\ast}$ topology with the weak topology) \cite[Theorem IV.21]{RS72}. The $\varepsilon$-independent bounds (\ref{eq:uniform-bounds}) together with the Banach Alaoglu Theorem then allow us to extract a weak$^{\ast}$ converging subsequence of solutions to the $\varepsilon$-modified system (\ref{eq:MBMP_epsilon}). Proving that this weak$^{\ast}$ limit satisfies the MBMP equations (\ref{eq:MBMP}) requires the Rellich-Kondrachov Compactness Theorem \ref{lem:rellich_kon} and Aubin-Lions Lemma \ref{lem:aubin_lions}.

\begin{proof}[Proof of Theorem \ref{thm:weak_solns_MBMP}]
Consider
\begin{align*}
(\psi_0 , \v{a}_0 , \dot{\v{a}}_0) \in \BWN{H^1(\R^3;\C^2)}  \times H^1 (\R^3 ; \R^3) \times L^2 (\R^3 ; \R^3) ,
\end{align*}
with $\| \psi_0 \|_2 = 1$ and $\diver{\v{a}_0} = \diver{\dot{\v{a}}_0} = 0$. Let $\{ \varepsilon_n \}_{n \geq 1} \subset \R_+$ with $\varepsilon_n \rightarrow 0$. Combining Theorem \ref{thm:local_exist_MBMP_epsilon} and \ref{thm:MBMP_epsilon_dissipation-laws}, there exists a sequence of solutions
\begin{align*}
\{ (\phi^{n} , \v{A}^{\!n}) \}_{n \geq 1} \subset C ( \R_+ ; \BWN{H^1(\R^3;\C^2)} ) \times [C (\R_+ ; H^1 (\R^3 ; \R^3)) \cap C^1 (\R_+ ; L^2 (\R^3 ; \R^3) ] 
\end{align*}
of the modified equations
\begin{align}\label{eq:weaksolns4}
\left\lbrace \begin{array}{l}
 \partial_t \phi^{n} - (i + \varepsilon_n) \sum_{j = 1}^N \Delta_{\v{x}_j} \phi^{n} \\
 \hspace{1cm} = \varepsilon_n (T_n + V_n) \phi^{n} - (i + \varepsilon_n) \left( \fL ( \tilde{\v{A}}^{\!n} ) - V(\ul{\v{R}} , \Z) \right) \phi^{n} , \\
 \square \v{A}^{\!n} = 4 \pi \alpha \Lambda_{\varepsilon_n}^{-1} \Hproj{ \J [ \phi^{n} , \tilde{\v{A}}^{\!n} ] } , \\
 \diver{\v{A}^{\!n}} = 0 , ~~~~  \tilde{\v{A}}^{\!n} = \Lambda_{\varepsilon_n}^{-1} \v{A}^{\!n}  , \\
 ( \phi^{n} (0) , \v{A}^{\!n} (0) , \partial_t \v{A}^{\!n} (0) ) = (\psi_0 , \v{a}_0 , \dot{\v{a}}_0) .
\end{array} \right.
\end{align} 
where $T_n = T_{\mr{P}} [\phi^{n} , \v{A}^{\!n}]$, $V_n = V[ \phi^{n} ]$, and $\fL (\tilde{\v{A}}^{\!n}) = \sum_{j=1}^N \cL_j ( \tilde{\v{A}}^{\!n} )$ is given by (\ref{def:D_j}). Moreover, the bounds 
\begin{align}\label{eq:weaksolns5}
 \| \nabla \phi^{n} (t) \|_2 \leq C_1 , \hspace{1cm} F [\v{A}^{\!n} , \partial_t \v{A}^{\!n}] (t) \leq C_2 , \hspace{1cm} \| \v{A}^{\!n} (t) \|_2 \leq C_3 ( 1 +  t )  
\end{align}
are satisfied. The estimates (\ref{eq:Estimate-Pauli-1}) and (\ref{eq:Estimate-Coulomb-appen_1}) of Lemmas \ref{lem:Estimates-Pauli} and \ref{lem:Estimates-Coulomb}, respectively, yield
\begin{align}\label{eq:weaksolns6}
\| [ \fL ( \tilde{\v{A}}^{\!n} ) - V(\ul{\v{R}} , \Z) ] \phi^n \|_{\frac{3}{2}} \lesssim ( 1 + \| \v{A}^{\!n} \|_{1,2} ) \| \v{A}^{\!n} \|_{1,2} \| \phi^{n} \|_{1,2} + \| \phi^{n} \|_{1,2}  .
\end{align}
Furthermore, in the same way we estimated (\ref{eq:Estimate-KG-3}), we have
\begin{align}\label{eq:weaksolns7}
\| \J [ \phi^{n} , \tilde{\v{A}}^{\!n} ]  \|_{\frac{3}{2}} \lesssim (1 + \| \v{A}^{\!n} \|_{1,2} ) \| \phi^{n} \|_{1,2} .
\end{align}

The bounds (\ref{eq:weaksolns5}) allow us to apply the Banach-Alaoglu Theorem, and, thus, we may extract a subsequence, still denoted by $\{ (\phi^{n} , \v{A}^{\!n}) \}_{n \geq 1}$, such that
\begin{align}
& \v{A}^{\!n} \xrightarrow{~~\mr{w}^{\ast}~~} \v{A} ~~~~ \text{in} ~~~ L^{\infty} ([0,T] ; H^1 ) , \label{eq:weaksolns8} \\
& \partial_t \v{A}^{\!n} \xrightarrow{~~\mr{w}^{\ast}~~} \partial_t \v{A} ~~~~ \text{in} ~~~ L^{\infty} (\R_+ ; L^2  ) \label{eq:weaksolns9} \\
& \phi^{n} \xrightarrow{~~\mr{w}^{\ast}~~} \psi ~~~~ \text{in} ~~~ L^{\infty} (\R_+ ; H^1) , \label{eq:weaksolns10} \\
& \J [\phi^{n} , \tilde{\v{A}}^{\!n} ] \xrightarrow{~~\mr{w}^{\ast}~~} \beta ~~~~ \text{in} ~~~ L^{\infty} ([0,T] ; L^{\frac{3}{2}} ) \label{eq:weaksolns11} \\
& [ \fL ( \tilde{\v{A}}^{\!n} ) - V(\ul{\v{R}} , \Z) ] \phi^{n} \xrightarrow{~~\mr{w}^{\ast}~~} \gamma ~~~~ \text{in} ~~~ L^{\infty} ([0,T] ; L^{\frac{3}{2}}  ), \label{eq:weaksolns12} 
\end{align}
for all $0 < T < \infty$. Passing to the limit in (\ref{eq:weaksolns4}), and using (\ref{eq:weaksolns8}) through (\ref{eq:weaksolns12}), we find
\begin{align}\label{eq:weaksolns13}
\left\lbrace \begin{array}{l}
\partial_t \psi - i \sum_{j = 1}^N \Delta_{\v{x}_j} \psi = - i \gamma ,  \\
\square \v{A} = 4 \pi \alpha \Hproj{ \beta } , \\
\diver{\v{A}} = 0 . 
\end{array} \right. 
\end{align}
as equations in $\fD' (\R_+ ; \BWN{H^{-1}(\R^3;\C^2)} \times \fD' (\R_+ ; H^{-1} (\R^3 ; \R^3))$. We note that in passing to the limit we've used Theorem \ref{lem:bound_on_coulomb} and the dissipation of energy (\ref{eq:Diss_Energy}) to ensure $|T_n + V_n| \nrightarrow \infty$ as $\varepsilon_n \rightarrow 0$. Now, $\partial_t \v{A} \in L^{\infty} (\R_+ ; L^2 (\R^3 ; \R^3))$, $\partial_t^2 \v{A} \in L^{\infty} (\R_+ ; H^{-1} ( \R^3 ; \R^3))$, and $\partial_t \psi \in L^{\infty} (\R_+ ; H^{-1} (\R^{3N} ; \C^{2^N}))$ by (\ref{eq:weaksolns13}). Thus
\begin{align*}
(\psi , \v{A} , \partial_t \v{A}) \in L^{\infty}_{\mr{loc}} (\R_+ ; H^1 \oplus H^1 \oplus L^2) \cap C (\R_+ ; H^{-1} \oplus L^2 \oplus H^{-1}) ,
\end{align*}
and this implies the weak continuity $(\psi , \v{A} , \partial_t \v{A}) \in C^{\mr{w}} (\R_+ ; H^1 \oplus H^1 \oplus L^2)$. Indeed, let $u \in L^{\infty} (\I ; X) \cap C (\I^{\circ} ; Y)$ for a closed interval $\I$, a reflexive Banach space $X$, and normed space $Y$ with continuous embedding $X \hookrightarrow Y$. Take $\{ t_n \}_{n \geq 1} \subset \I$, $t_n \rightarrow t$. Then $\{ u (t_n) \}_{n \geq 1}$ is bounded in $X$, hence contains a weakly converging subsequence $\{ u(t_{n_k}) \}_{k \geq 1}$ converging weakly to some $u_0 \in X$. Due to the continuity in $Y$, we have $u_0 = u(t)$. This means the weak subsequential limit does not depend on the subsequence. Moreover, each subsequence of $\{ u (t_n) \}_{n \geq 1}$ contains another subsequence converging to $u(t)$ weakly in $X$. This implies $u(t_n) \rightharpoonup u(t)$ in $X$, which is the weak continuity.

Next we show that $\gamma = [ \fL ( \v{A} ) - V(\ul{\v{R}} , \Z) ] \psi$ and $\beta = \J [\psi , \v{A}]$. Let $I \subset \R_+$ be a bounded interval and $\Omega \subset \R^3$, $S \subset \R^{3N}$ be bounded and open, and assume $\partial \Omega , \partial S$ are both $C^1$. It suffices to show that $\gamma$ and $\beta$ coincide with $[ \fL ( \v{A} ) - V(\ul{\v{R}} , \Z) ] \psi$ and $\J [\psi , \v{A}]$ on $I \times S$ and $I \times \Omega$, respectively. Now, by (\ref{eq:weaksolns5}), $\{ ( \v{A}^{\!n} , \partial_t \v{A}^{\!n} ) \}_{n \geq 1}$ is a bounded sequence in $L^4 (I ; H^1 (\Omega ; \R^3) \times L^2 (\Omega ; \R^3))$. From the Rellich-Kondrachov Theorem \ref{lem:rellich_kon} we have $H^1 (\Omega ; \R^3) \hookrightarrow L^4 (\Omega ; \R^3) \subset L^2 (\Omega ; \R^3)$ and, hence, Lemma \ref{lem:aubin_lions} guarantees that there is a subsequence of $\{ \v{A}^{\!n} \}_{n \geq 1}$, still denoted by $\{ \v{A}^{\!n} \}_{n \geq 1}$, such that
\begin{align}\label{eq:weaksolns15}
\v{A}^{\!n} \xrightarrow{~~ n \rightarrow \infty ~~} \v{A} ~~~~ \text{in} ~~~ L^4 (I \times \Omega)
\end{align}
Further, note that $\{ \partial_t \phi^{n} \}_{n \geq 1}$ is bounded in $L^{\infty} (I ; H^{-1} (S ; \C^{2^N}))$ using (\ref{eq:weaksolns4}). This implies that $\{ (\phi^{n} , \partial_t \phi^{n} ) \}_{n \geq 1}$ is bounded in 
\begin{align*}
L^2 (I ; H^1 (S ; \C^{2^N}) \times H^{-1} (S ; \C^{2^N})).
\end{align*}
Again applying the Aubin-Lions Lemma \ref{lem:aubin_lions}, we conclude
\begin{align}\label{eq:weaksolns16}
\phi^{n} \xrightarrow{~~ n \rightarrow \infty ~~} \psi ~~~~ \text{in} ~~~ L^4 (I \times S )
\end{align}
From (\ref{eq:weaksolns8}), (\ref{eq:weaksolns10}), (\ref{eq:weaksolns15}), and (\ref{eq:weaksolns16}) it is straightforward to show that
\begin{align*}
& \Lambda_{\varepsilon_n}^{-1} \J [\phi^{n} , \tilde{\v{A}}^{\!n}] \rightharpoonup \J [\psi , \v{A}] ~~~~ \text{in} ~~~ L^{\frac{4}{3}} (I \times \Omega) , \\
& [ \fL ( \tilde{\v{A}}^{\!n} ) - V(\ul{\v{R}} , \Z) ] \phi^{n} \rightharpoonup [ \fL ( \v{A} ) - V(\ul{\v{R}} , \Z) ] \psi ~~~~ \text{in} ~~~ L^{\frac{4}{3}} (I \times S )  .
\end{align*}
Moreover (\ref{eq:weaksolns11}) through (\ref{eq:weaksolns12}) imply 
\begin{align*}
& \Lambda_{\varepsilon_n}^{-1} \v{J} [\phi^{n} , \tilde{\v{A}}^{\!n}] \rightharpoonup \beta ~~~~ \text{in} ~~~ L^{\frac{4}{3}} (I \times \Omega) , \\
& [ \fL ( \tilde{\v{A}}^{\!n} ) - V(\ul{\v{R}} , \Z) ] \phi^{n} \rightharpoonup \gamma ~~~~ \text{in} ~~~ L^{\frac{4}{3}} (I \times S )  .
\end{align*}
Since weak limits are unique we conclude $\gamma = [ \fL ( \v{A} ) - V(\ul{\v{R}} , \Z) ] \psi$ and $\beta = \J [\psi , \v{A}]$ on $I \times \Omega$ and $I \times S$, respectively. 

It remains to show that $(\psi , \v{A} , \partial_t \v{A})$ satisfies the initial conditions in (\ref{eq:weaksolns4}). Since
\begin{align*}
( \v{A}^{\!n} , \partial_t \v{A}^{\!n} ) \in L^2 ([0,T] ; H^1 (\R^3 ; \R^3) \times L^2 (\R^3 ; \R^3)) ,
\end{align*}
we may integrate by parts to find
\begin{align*}
\int_0^T \langle \v{A}^{\!n} (s) \partial_t f(s) + \partial_t \v{A}^{\!n} (s) f(s) , \phi \rangle_{H^1 , H^{-1}} ds = - \langle \v{a}_0 , \phi \rangle_{H^1 , H^{-1}} 
\end{align*}
for all $\phi \in L^2$ and $f \in C^{\infty} (\R)$ with $f(0) = 1$ and $f(T) = 0$. Passing to the limit $\varepsilon_n \rightarrow 0$ and using (\ref{eq:weaksolns9}) and (\ref{eq:weaksolns10}) we find
\begin{align*}
\int_0^T \left\lbrace \v{A} (s) \partial_t f(s) + \partial_t \v{A} (s) f(s) \right\rbrace ds = - \v{a}_0 
\end{align*}
in $L^2 (\R^3)$, which implies that $\v{A} (0) = \v{a}_0$. Likewise,
\begin{align*}
& - \langle \dot{\v{a}}_0 , \eta
\rangle_{H^{-1} , H^1} = \\
& \int_0^T \langle \partial_t \v{A}^{\!n} (s) \partial_t f(s) + ( \Delta \v{A}^{\!n} (s) + 4 \pi \alpha \Lambda_{\varepsilon_n}^{-1} \Hproj{ \J [\phi^{n} (s) , \tilde{\v{A}}_n (s)] } ) f(s) , \eta \rangle_{H^{-1} , H^1} ds
\end{align*}
for all $\eta \in H^1$  and $f \in C^{\infty} (\R)$ with $f(0) = 1$ and $f(T) = 0$. Again, passing to the limit as $n \rightarrow \infty$ and using (\ref{eq:weaksolns10}) and (\ref{eq:weaksolns13}), we arrive at
\begin{align*}
\int_0^T \left\lbrace \partial_t \v{A} (s) \partial_t f(s) + \partial_t^2 \v{A} (s) f(s) \right\rbrace ds = - \dot{\v{a}}_0
\end{align*}
in $H^{-1}$, which implies $\partial_t \v{A} (0) = \dot{\v{a}}_0$. An identical argument implies that $\phi (0) = \phi_0$. 
\end{proof}

\chapter{Open Problems}\label{chap:open}

This final chapter is devoted to a discussion of some interesting problems and questions regarding the MBMP equations which we have not managed to resolve at the time of writing. Many other open problems regarding related systems were discussed in Chapter \ref{chap:othermodels}. Some parts of this section constitute our own personal speculations on possible avenues to resolving some of these questions. We make no attempt to make any precise conjectures. 

\section{Local and Global Well-Posedness}  

An obvious problem that remains open is the local well-posedness in, for example, $H^2 \times H^2 \times H^1$ to (\ref{eq:MP}), (\ref{eq:MPC}), and (\ref{eq:MBMP}). As discussed in Chapter \ref{chap:othermodels} (in particular, \S\ref{sec:MS}) we have tried with no success several strategies to prove the existence and uniqueness of a local solution to (\ref{eq:MP}) for initial data $(\psi_0 , \v{a}_0 , \dot{\v{a}}_0) \in H^s ( \R^3 ; \C^2) \oplus H^{r} (\R^3 ; \R^3) \oplus H^{r - 1} (\R^3 ; \R^3)$ for various choices of $s$ and $r$ (especially, $s = r = 2$). We do not doubt that such a local well-posedness result holds for (\ref{eq:MP}), (\ref{eq:MPC}), and (\ref{eq:MBMP}). However, we do not know how to proceed, and it seems as though a new idea is needed. 

One possible way to obtain higher regularity global (and, hence, local) solutions to (\ref{eq:MBMP}) using some of the methods introduced in this thesis is to prove $\varepsilon$-independent bounds on the solutions to $(\phi , \v{A})$ to the $\varepsilon$-modified system (\ref{eq:MBMP_epsilon}). It is in fact possible to prove that for all regularities $m \in [1,2]$ the solutions in Theorem \ref{thm:local_exist_MBMP_epsilon} exist for all time. This is done by using the dispersive estimates for the heat kernel together with the Strichartz estimates for the wave equation to prove $t$-independent bounds on $(\phi , \v{A} , \partial_t \v{A})$ in $H^m \oplus H^m \oplus H^{m-1}$. However, such bounds depend on $\varepsilon^{-1}$   and therefore a compactness argument cannot be used. One potential resolution to this problem is to use the Strichartz estimates for the Schr\"{o}dinger unitary group $e^{i t \Delta}$ to control $(\phi , \v{A} , \partial_t \v{A})$ in $H^m \oplus H^m \oplus H^{m-1}$ in such a way that a compactness argument can be applied to take the limit $\varepsilon \rightarrow 0$. 

If one manages to develop a local well-posedness theory for solutions to (\ref{eq:MP}), (\ref{eq:MPC}), and (\ref{eq:MBMP}), then the next obvious question is can we extend to global solutions. For (\ref{eq:MP}) a sufficiently low regularity local well-posedness theory together with the conservation of energy should suffice to prove global existence. For (\ref{eq:MPC}) we expect the additional assumption $Z < Z_c$ is necessary to obtain a global theory from a sufficiently low regularity local theory. Indeed, to use conservation of energy, which in this case is
\begin{align*}
E_{\mr{P}}  [\psi , \v{A} , \partial_t \v{A}] = \| \gvsig \cdot (\v{p} + \v{A}) \psi \|_{L^2}^2 - Z \langle \psi , | \v{x} |^{-1} \psi \rangle_{L^2} + F [ \v{A} , \partial_t \v{A} ] ,
\end{align*}
together with a blow-up alternative to prove global existence one needs a uniform (in time) bound on the Coulomb energy $\langle \psi , | \v{x} |^{-1} \psi \rangle_{L^2}$. However, as we show in Lemma \ref{lem:bound_on_coulomb}, this is only possible if $E_{\mr{P}}  [\psi , \v{A} , \v{0} ]$ is bounded below with lower bound independent of the magnetic field. We expect the situation to be similar in the full many-body case. 

\section{Blow-up and Virial Identities}

One question that we find particularly interesting is the possibility of finite-energy solutions to (\ref{eq:MBMP}) which blow-up in finite time when $\alpha$ or $\alpha^2 \max{\Z}$ are large. Here we focus on the special case of the MPC Equations (\ref{eq:MPC}) and ask whether it is possible to have initial data $(\psi_0 , \v{a}_0 , \dot{\v{a}}_0)$ with finite energy so that the corresponding solution $(\psi , \v{A} , \partial_t \v{A})$ blows up in $H^1 \oplus H^1 \oplus L^2$-norm as $t \rightarrow T_{\mr{max}}^-$ when $Z > Z_c$. 

One possible route to try and construct blow-up solutions is to consider the second time derivative of $\langle \psi , |\v{x}|^2 \psi \rangle_{L^2}$ in the spirit of Glassey \cite{G77}. In that article, Glassey constructed blow-up solutions for the defocusing nonlinear Schr\"{o}dinger (NLS) equation, which reads 
\begin{align}\label{eq:nonlinear_schr}
\left\lbrace  \begin{array}{l}
i \partial_t \psi = - \Delta \psi - | \psi|^{p-1} \psi \\
\psi (0 , \v{x}) = \psi_0 (\v{x}) \in H^1 (\R^n ; \C) .
\end{array} \right. 
\end{align}
For a collection of results concerning the well-posedness of the NLS equation (\ref{eq:nonlinear_schr}), see \cite[Section 3.3]{taodispersive2006}. The key idea is to derive a virial identity for the second time derivative of the expectation value of $|\v{x}|^2$ - it reads
\begin{align}\label{eq:glassey_virial}
\frac{1}{4} \frac{\dd^2}{\dd t^2} \langle \psi , |\v{x}|^2 \psi \rangle_{L^2} = 2 \| \nabla \psi \|_2^2 - n \frac{p-1}{p+1} \| \psi \|_{p+1}^{p+1}  . 
\end{align}
We see from (\ref{eq:glassey_virial}) that, when $p \geq 1 + 4/n$, 
\begin{align*}
\frac{1}{4} \frac{\dd^2}{\dd t^2} \langle \psi , |\v{x}|^2 \psi \rangle_{L^2} \leq 2 E_{\mr{NS}} ,
\end{align*}
where $E_{\mr{NS}} (t) = \| \nabla \psi (t) \|_2^2 - \frac{2}{p+1} \| \psi (t) \|_{p+1}^{p+1}$ is the conserved energy under the flow generated by (\ref{eq:nonlinear_schr}). Hence, if the energy $E(0)$ of the initial data $\psi_0 \in H^1 (\R^n ; \C)$ is negative, then $\langle \psi , |\v{x}|^2 \psi \rangle_{L^2}$ is a strictly concave, positive function for the during of the time evolution. The corresponding solution $\psi$ to (\ref{eq:nonlinear_schr}), therefore, cannot exist in $H^1 (\R^n ;\C)$ for arbitrarily long times, otherwise expectation value of $|\v{x}|^2$ would be zero at some positive time and, by the uncertainty principle $\| \psi \|_{2}^2 \leq 2 \| \v{x} \psi \|_{2} \| \nabla \psi \|_{2}$, the kinetic energy would become $+ \infty$ (note the $L^2$-norm of $\psi$ is also preserved under the NLS time evolution). 

If we attempt to prove the existence of blow-up solutions to the MPC equations using this method, we arrive at an interesting virial identity for $\langle \psi , |\v{x}|^2 \psi \rangle_{L^2}$ for a variant of the MPC equations we considered in this thesis:
\begin{align}\label{eq:MPC_variant}
\left\lbrace \begin{array}{l}
i \partial_t \psi = \left( [\gvsig \cdot (\v{p} + \v{A})]^2 - \dfrac{Z}{|\v{x}|} - \dfrac{1}{\alpha} \varphi \right) \psi  \\[1.2ex]
\diver{\v{E}} = 4 \pi \rho , \hspace{5mm} \diver{\v{B}} = 0  \\
\curl{\v{E}} = - \alpha \partial_t \v{B} , \hspace{5mm} \curl{\v{B}} = 4 \pi \alpha \v{J}_{\mr{P}} [\psi , \v{A}] + \alpha \partial_t \v{E} \\
\v{E} = - \nabla \varphi - \alpha \partial_t \v{A} , \hspace{5mm} \v{B} = \curl{\v{A}} .
\end{array} \right.
\end{align}
where $\v{J}_{\mr{P}} [\psi , \v{A}]$ is the Pauli probability current (\ref{def:Pauli_prob_current}). Note that the charge continuity equation (\ref{eq:charge_continuity}) is satisfied. Formally, $\|\psi \|_{L^2}$ and the total energy 
\begin{align*}
E [\psi , \v{A} , \partial_t \v{A}] = \| \gvsig \cdot (\v{p} + \v{A}) \psi \|_{2}^2 - Z \langle \psi , | \v{x} |^{-1} \psi \rangle_{L^2} + F[ \v{B} , \v{E} ] ,
\end{align*}
where $F[\v{B} , \v{E}] = (\|\v{B}\|_{2}^2 + \|\v{E}\|_{2}^2) / (8 \pi \alpha^2)$, are conserved. We note that this system of PDEs was argued to be unphysical in Chapter \ref{chap:intro}. However, the full symmetry of Maxwell's equations allows us to derive the following virial identity. For any sufficiently smooth solution $(\psi , \v{A} , \varphi)$ to (\ref{eq:MPC_variant}) we have
\begin{align*}
& \frac{1}{2} \frac{\dd^2 }{\dd t^2} \left( \frac{1}{2} \| |\v{x}| \psi \|_2^2 + \| |\v{x}| \v{B} \|_2^2 + \| | \v{x} | \v{E} \|_2^2 \right) \\
& = 2 \| \gvsig \cdot (\v{p} + \v{A} ) \psi \|_{L^2}^2 - Z \langle \psi , |\v{x}|^{-1} \psi \rangle_{L^2} + F [ \v{B} , \v{E} ] - \frac{1}{2} \frac{\dd }{\dd t} \int_{\R^3} |\v{x}|^2 \v{J}_{\mr{P}} [ \psi , \v{A} ] \cdot \v{E} ~ \dd \v{x} , \numberthis \label{eq:virial_identity}
\end{align*} 
where $\v{S} = (\v{E} \wedge \v{B}) / (4 \pi \alpha)$ is the Poynting vector. We now give a formal derivation of this identity.

Consider the function $f(t) = \langle \psi (t) , \mu (\v{x}) \psi (t) \rangle_{L^2}$ where $\mu : \R^3 \rightarrow \R$ is a radial multipler to be specified later. Using the commutator identity $[A^2 , B] = A [A,B] + [A,B] A$ and taking the time derivative we find 
\begin{align*}
\frac{\dd f}{\dd t} & = \langle i [ (\v{p} + \v{A})^2 , \mu ] \psi , \psi \rangle_{L^2} \\
& = \langle i ( (\v{p} + \v{A}) \cdot [ \v{p} , \mu ] + [ \v{p} , \mu ] \cdot (\v{p} + \v{A}) ) \psi , \psi \rangle_{L^2} \\
& = 2 \re{ \langle \nabla \mu \cdot (\v{p} + \v{A}) \psi , \psi \rangle_{L^2} } .
\end{align*} 
Choosing $\mu (\v{x}) = |\v{x}|^2$, we arrive at
\begin{align*}
\frac{1}{4} \frac{\dd}{\dd t} \langle \psi , |\v{x}|^2 \psi \rangle_{L^2} = \re{ \langle \v{x} \cdot (\v{p} + \v{A}) \psi , \psi \rangle_{L^2} } ,
\end{align*}
Rewriting the previous equation in terms of the \textit{dilation operator} $D = (\v{x} \cdot \v{p} + \v{p} \cdot \v{x} ) / 2$ we conclude for the first time derivative 
\begin{align}\label{eq:virial_identity_calc_1}
\frac{1}{4} \frac{\dd}{\dd t} \langle \psi , |\v{x}|^2 \psi \rangle_{L^2} = \langle D \psi , \psi \rangle_{L^2} + \langle \v{x} \cdot \v{A} \psi , \psi \rangle_{L^2} .
\end{align}

Taking another time derivative of both sides of (\ref{eq:virial_identity_calc_1}) we find
\begin{align}\label{eq:virial_identity_calc_2}
\frac{1}{4} \frac{\dd^2}{\dd t^2} \langle \psi , |\v{x}|^2 \psi \rangle_{L^2} = \langle i [ H , D ] \psi , \psi \rangle_{L^2} + \langle \v{x} \cdot \partial_t \v{A} \psi , \psi \rangle_{L^2} + \langle i [H , \v{x} \cdot \v{A}] \psi , \psi \rangle_{L^2} ,
\end{align}
where 
\begin{align*}
H = [\gvsig \cdot (\v{p} + \v{A})]^2 - Z | \v{x} |^{-1} - \frac{1}{\alpha} \varphi .
\end{align*}
To proceed we must evaluate the two commutators $[H,D]$ and $[H , \v{x} \cdot \v{A}]$. To compute $[H,D]$ we note the identity
\begin{align*}
\frac{\dd }{\dd \theta} e^{- i \theta D} H e^{i \theta D} \Big|_{\theta = 0} = i [ H , D ] ,
\end{align*}
and $(e^{i \theta D} f)(\v{x}) = e^{3 \theta /2} f(e^{\theta} \v{x})$. Hence, our task is to compute $e^{-i \theta D} H e^{i \theta D}$. First we compute $e^{-i \theta D} (\v{p} + \v{A}) e^{i \theta D} f (\v{x})$ and find
\begin{align*}
e^{-i \theta D} (\v{p} + \v{A}) e^{i \theta D} f (\v{x}) & = e^{3 \theta / 2} \left( e^{-i \theta D} \v{p} f (e^{\theta} \v{x})  + e^{-i \theta D} (\v{A} (e^{-\theta} \cdot ) f ) ( e^{\theta} \v{x} ) \right) \\
& = e^{3 \theta / 2} \left( e^{\theta} e^{-i \theta D} (\v{p} f) (e^{\theta} \v{x}) + e^{-3 \theta / 2} \v{A} (e^{-\theta} \v{x}) f (\v{x}) \right) \\
& = e^{\theta} (\v{p} f) (\v{x}) + \v{A} (e^{-\theta} \v{x}) f (\v{x}) \\
& = ( (e^{\theta} \v{p} + \v{A} (e^{-\theta} \cdot)) f ) (\v{x}) .
\end{align*}
Hence, $e^{-i \theta D} [\gvsig \cdot (\v{p} + \v{A})]^2 e^{i \theta D} = e^{2 \theta} [ \gvsig \cdot ( \v{p} + \v{A}_{\theta} )]^2$ where $\v{A}_{\theta} (\v{x}) = e^{-\theta} \v{A} (e^{-\theta} \v{x})$. In total,
\begin{align}\label{eq:virial_identity_calc_3}
e^{-i \theta D} H e^{i \theta D} & = e^{2 \theta} [\gvsig \cdot ( \v{p} + \v{A}_{\theta} )]^2 - e^{\theta} Z |\cdot |^{-1} - \frac{1}{\alpha} \varphi (e^{- \theta} \cdot ) .
\end{align} 
Differentiating the right hand side of (\ref{eq:virial_identity_calc_3}) with respect to $\theta$ and evaluating at $\theta = 0$ one finds
\begin{align}\label{eq:virial_identity_calc_4}
i [H , D] & = 2 [ \gvsig \cdot (\v{p} + \v{A})]^2 - Z |\v{x}|^{-1} + \frac{1}{\alpha} (\v{x} \cdot \nabla) \varphi + \frac{\dd}{\dd \theta} [ \gvsig \cdot  (\v{p} + \v{A}_{\theta}) ]^2 \Big|_{\theta = 0} .
\end{align}

Next we must evaluate the commutator $[H , \v{x} \cdot \v{A}]$. This commutator immediately reduces to $[(\v{p} + \v{A})^2 , \v{x} \cdot \v{A}]$. Using the vector calculus identity
\begin{align*}
\nabla (\v{u} \cdot \v{v}) = (\v{u} \cdot \nabla )\v{v} + (\v{v} \cdot \nabla )\v{u} + \v{u} \wedge \curl{\v{v}} + \v{v} \wedge \curl{\v{u}} ,
\end{align*}
for $\v{u} , \v{v} : \R^3 \rightarrow \R^3$, we compute that
\begin{align*}
\nabla (\v{x} \cdot \v{A}) = - \frac{\dd }{\dd \theta} \Big|_{\theta = 0} \v{A}_{\theta} + \v{x} \wedge \v{B} .
\end{align*}
Consequently,
\begin{align}\label{eq:virial_identity_calc_5}
\langle i [H , \v{x} \cdot \v{A}] \psi , \psi \rangle_{L^2} = 2 \re{ \langle \left( - \frac{\dd }{\dd \theta} \Big|_{\theta = 0} \v{A}_{\theta} + ( \v{x} \wedge \v{B} ) \right) \cdot (\v{p} + \v{A}) \psi , \psi \rangle_{L^2} } .
\end{align}
Combining (\ref{eq:virial_identity_calc_4}) and (\ref{eq:virial_identity_calc_5}) we arrive at
\begin{align*}
\frac{1}{4} \frac{\dd^2 }{\dd t^2} \langle \psi , |\v{x}|^2 \psi \rangle_{L^2} & = 2 \| \gvsig \cdot (\v{p} + \v{A}) \psi \|_{L^2}^2 - Z \langle \psi , |\v{x}|^{-1} \psi \rangle_{L^2} + \frac{\dd }{\dd \theta} \langle \gvsig \cdot \v{B}_{\theta} \psi , \psi \rangle_2 \Big|_{\theta = 0} \\
& \hspace{1cm} - \frac{1}{\alpha} \langle \v{x} \cdot \v{E} \psi , \psi \rangle + 2 \re{ \langle \v{x} \wedge \v{B} \psi , (\v{p} + \v{A}) \psi \rangle_{L^2} } . \numberthis \label{eq:virial_identity_calc_6}
\end{align*}
Similar virial identities to (\ref{eq:virial_identity_calc_6}) may be found in \cite{FV09, G11}. 

We may simplify (\ref{eq:virial_identity_calc_6}) further by noting that
\begin{align*}
2 \re{ \langle \v{x} \wedge \v{B} \psi , (\v{p} - q \v{A}) \psi \rangle_2 } = - \frac{1}{\alpha} \int (\v{x} \wedge \v{B}) \cdot \v{J}_{\mr{P}} [\psi , \v{A}] - \int (\v{x} \wedge \v{B}) \cdot \curl{\langle \psi , \gvsig \psi \rangle_{\C^2}}  ,
\end{align*}
and
\begin{align*}
\int (\v{x} \wedge \v{B}) \cdot \curl{\langle \psi , \gvsig \psi \rangle_{\C^2}} & = \int \curl{(\v{x} \wedge \v{B})} \cdot \langle \psi , \gvsig \psi \rangle_{\C^2} \\
& = \int (- 2 \v{B} - (\v{x} \cdot \nabla) \v{B}) \cdot \langle \psi , \gvsig \psi \rangle \\
& = \frac{\dd }{\dd \theta} \Big|_{\theta = 0} \langle \gvsig \cdot \v{B}_{\theta} \psi , \psi \rangle_{L^2} .
\end{align*}
Therefore, (\ref{eq:virial_identity_calc_6}) becomes
\begin{align*}
\frac{1}{4} \frac{\dd^2 }{\dd t^2} \langle \psi , |\v{x}|^2 \psi \rangle_{L^2} & = 2 \| \gvsig \cdot (\v{p} + \v{A}) \psi \|_{L^2}^2 - Z \langle \psi , |\v{x}|^{-1} \psi \rangle_{L^2} \\
& \hspace{1cm} - \frac{1}{\alpha} \langle \v{x} \cdot \v{E} \psi , \psi \rangle - \frac{1}{\alpha} \int (\v{x} \wedge \v{B}) \cdot \v{J} [\psi , \v{A}] . \numberthis \label{eq:virial_identity_calc_7}
\end{align*}

To further reduce (\ref{eq:virial_identity_calc_7}) we may use the Maxwell equations in (\ref{eq:MPC}). Note that 
\begin{align}\label{eq:virial_identity_calc_8}
- \frac{1}{\alpha} \int (\v{x} \wedge \v{B}) \cdot \v{J}_{\mr{P}} [\psi , \v{A}] = - \frac{1}{4 \pi \alpha^2} \left( \int (\v{x} \wedge \v{B}) \cdot \curl{\v{B}} - \alpha \int (\v{x} \wedge \v{B}) \cdot \partial_t \v{E} \right) .
\end{align}
For the first term on the right hand side of (\ref{eq:virial_identity_calc_8}) we find
\begin{align*}
\int (\v{x} \wedge \v{B}) \cdot \curl{\v{B}} & = \int \curl{(\v{x} \wedge \v{B})} \cdot \v{B} = \int (- 2 \v{B} - (\v{x} \cdot \nabla) \v{B}) \cdot \v{B} \\
& = \int (- 2 |\v{B}|^2 - \frac{1}{2} (\v{x} \cdot \nabla) |\v{B}|^2) = \int (- 2 |\v{B}|^2 + \frac{3}{2} |\v{B}|^2) = - \frac{1}{2} \int |\v{B}|^2 .
\end{align*}
For the second term  on the right hand side of (\ref{eq:virial_identity_calc_8}) we find
\begin{align*}
\int (\v{x} \wedge \v{B}) \cdot \partial_t \v{E} & = \partial_t \int (\v{x} \wedge \v{B}) \cdot \v{E} - \int (\v{E} \wedge \v{x}) \cdot \partial_t \v{B} \\
& = \partial_t \int (\v{x} \wedge \v{B}) \cdot \v{E} + \frac{1}{\alpha} \int (\v{E} \wedge \v{x}) \cdot \curl{\v{E}} \\
& = \partial_t \int (\v{B} \wedge \v{E}) \cdot \v{x} + \frac{1}{\alpha} \int \curl{(\v{E} \wedge \v{x})} \cdot \v{E} \\
& = - 4\pi \alpha \partial_t \int \v{S} \cdot \v{x} + \frac{1}{\alpha} \int ( 2 \v{E} - \v{x} \diver{\v{E}} + (\v{x} \cdot \nabla) \v{E})  \cdot \v{E} \\
& = \frac{1}{2 \alpha} \int |\v{E}|^2 - \frac{1}{\alpha} \int (\v{x} \cdot \v{E}) \diver{\v{E}} - 4\pi \alpha \int \partial_t \v{S} \cdot \v{x} ,
\end{align*}
Putting the previous two calculations together, and using $\diver{\v{E}} = - 4 \pi \alpha |\psi|^2$, we discover 
\begin{align*}
\frac{1}{4} \frac{\dd^2 }{\dd t^2} \| \v{x} \psi \|_2^2 = 2 \| \gvsig \cdot (\v{p} + \v{A}) \psi \|_{L^2}^2 - Z \langle \psi , |\v{x}|^{-1} \psi \rangle_{L^2} + F [\v{B} , \v{E}] - \frac{\dd }{\dd t} \int_{\R^3} \v{S} \cdot \v{x} ~ \dd \v{x} .
\end{align*}
Finally, using the energy density conservation law $\partial_t u_{\mr{EM}} + \diver{\v{S}} = - \v{J}_{\mr{P}} \cdot \v{E}$, where $u_{\mr{EM}} = (|\v{B}|^2 + |\v{E}|^2)/(8 \pi)$ is the field energy density, we arrive at (\ref{eq:virial_identity}). Whether the identity (\ref{eq:virial_identity}) can be utilized further to say something about the time interval of existence for solutions to (\ref{eq:MPC_variant}) remains to be seen.

\begin{singlespace}  %
	\setlength\bibitemsep{0pt}  %
	\printbibliography[title={References}]
\end{singlespace}

\addcontentsline{toc}{chapter}{References}  %

\chapter*{Vita}
\addcontentsline{toc}{chapter}{Vita}  %
Thomas Forrest Kieffer was born in the December of 1993 in Albany, GA. He spent most of his youth in Navarre, FL where he attended Navarre High School from August 2008 to May 2012. He moved to Atlanta, GA in August 2012 to begin undergraduate studies at Georgia Institute of Technology. He graduated with Highest Honors from the Georgia Tech in May 2015 with a B.S. in Applied Mathematics and a B.S. in Physics. He started graduate school as a Physics PhD student at Georgia Tech in August 2016 and, after earning an M.S. in Physics, he transferred in May 2017 to the Mathematics PhD program at Georgia Tech to work with Professor Michael Loss. He earned his PhD in Mathematics in May 2020.

\end{document}